\setlist[itemize]{noitemsep, topsep=1pt} %
\newcounter{hposcnt}
\renewcommand*{\thehposcnt}{hpos\number\value{hposcnt}}
\newcommand*{\SP}{%
  \stepcounter{hposcnt}%
  \zsaveposx{\thehposcnt s}%
}
\newcommand*{\UP}{%
  \zsaveposx{\thehposcnt u}%
  \zref@refused{\thehposcnt s}%
  \zref@refused{\thehposcnt u}%
  \kern\zposx{\thehposcnt s}sp\relax
  \kern-\zposx{\thehposcnt u}sp\relax
}
\newcommand*{\TechReport}{}%
\newcommand{\tr}[2]{#1}
\newcommand{\thetr}{the technical report version of this paper\textcolor{orange}{TODO\cite{?}}}
\newcommand{\thet}{the technical report\textcolor{orange}{TODO\cite{?}}}
\newcommand{\tr}[2]{#2}
\newcommand{\thetr}{the technical report version of this paper\xspace}
\newcommand{\thet}{the technical report\xspace}
\newcommand{\System}{Lumberhack\xspace}
\newcommand{\Sec}[1]{\S\ref{#1}}
\newcommand{\Fig}[1]{Figure~\ref{#1}}
\newcommand{\App}[1]{Appendix~\ref{#1}}
\newcommand{\Def}[1]{Definition~\ref{#1}}
\newcommand{\Thm}[1]{Theorem~\ref{#1}}
\newcommand{\Lem}[1]{Lemma~\ref{#1}}
\newcommand{\aka}{a.k.a.\ }
\newcommand{\ie}{i.e.,\ }
\newcommand{\TODO}[1]{\noindent\textcolor{orange}{{TODO: #1}}}
\newcommand{\NOTE}[1]{\noindent\textcolor{teal}{{NOTE: #1}}}
\newcommand{\TODOac}[1]{\noindent\textcolor{orange}{{TODO[Anto]: #1}}}
\newcommand{\TODOlp}[1]{\noindent\textcolor{orange}{{TODO[Lionel]: #1}}}
\newcommand{\FIXME}[1]{\noindent\textcolor{red}{{FIXME: #1}}}
\newcommand{\TODOlater}[1]{\textcolor{gray}{[...]}\xspace}
\newcommand{\TODO}[1]{}
\newcommand{\NOTE}[1]{}
\newcommand{\TODOac}[1]{}
\newcommand{\TODOlp}[1]{}
\newcommand{\TODOlater}[1]{}
\newcommand{\FIXME}[1]{}
\newcommand{\TODOlaterlater}[1]{}
\newcommand{\cutForLater}[1]{}
\newcommand{\cutRedundant}[1]{}
\newcommand{\cut}[1]{}
\newcommand\al{\alpha}
\newcommand\be{\beta}
\newcommand\ga{\gamma}
\newcommand\de{\delta}
\newcommand\ep{\varepsilon}
\newcommand\lam{\lambda}
\newcommand{\kw}[1]{\textsf{\textbf{#1}}}
\newcommand{\kws}[1]{\textsf{\textbf{#1}}\ }
\newcommand{\skws}[1]{\ \textsf{\textbf{#1}}\ }
\newcommand{\id}[1]{\mathsf{#1}}
\newcommand{\fname}[1]{\mathit{#1}}
\newcommand{\tv}[1]{\fname{tv}(#1)}
\newcommand{\ntv}[1]{\fname{ntv}(#1)}
\newcommand\cat{{\cdot}}
\newcommand\catsp{{\,\cdot\,}}
\newcommand{\ty}{\tau}
\newcommand{\tz}{\sigma}
\newcommand{\Pos}{^{+}}
\newcommand{\Neg}{^{-}}
\newcommand{\conR}{\gg}
\newcommand{\SCtx}{\Delta}
\newcommand{\SDtx}{\nabla}
\newcommand{\CCtx}{\Xi}
\newcommand{\CDtx}{\Sigma}
\newcommand{\fresh}[1]{{#1}\,\ \textit{\textbf{fresh}}}
\newcommand{\ruleName}[1]{\textsc{#1}}
\newcommand{\mypara}[1]{\smallskip\textbf{\textit{#1.}}}
\lstdefinelanguage{Haskell}{%
  aboveskip=2pt,
  belowskip=3pt,
  basicstyle=\footnotesize\ttfamily,
  flexiblecolumns=false,
  lineskip=0pt,
  basewidth={0.5em,0.45em},
  morekeywords={let,in,case,of,build,foldr,if,then,else,class,instance,where,type,fun,rec,augment,unfoldr,destroy},
  moredelim=**[is][\color{\codeEmphasisColor}]{⌜}{⌝}, %
  morecomment=[f][\color{gray}][0]{--},
  showstringspaces=true,
  morestring=[b]',%
  literate={+}{{$+$}}1 {/}{{$/$}}1 {*}{{$*$}}1
      {€}{\$}1
      {=}{{$=$}}1 %
      {>}{{$>$}}1 {<}{{$<$}}1 {\\}{{$\lambda$}}1
      {->}{{$\rightarrow$}}2 
      {<-}{{$\leftarrow$}}2
      {<=}{{$\leq$}}2 {=>}{{$\Rightarrow$}}2
      {\ .\ }{{$\circ$}}2 {(.)}{{($\circ$)}}2
      {<<}{{$\ll$}}2
      {>>=}{{$\gg\!=$}}3
      {<<<}{{$\lll$}}2 {>>>}{{$\ggg$}}2 {-<}{{$\leftY$}}1 {^<<}{{$\hat{}\!\!\ll$}}2 {^>>}{{$\hat{}\!\!\gg$}}2
      {|}{{$\mid$}}1
      {undefined}{{$\bot$}}1
      {forall}{{$\forall$}}1
}[keywords,comments,strings]%
\small\color{gray},numbersep=10pt,
\begin{document}

\setlength{\abovedisplayskip}{2pt}
\setlength{\belowdisplayskip}{2pt}
\setlength{\abovedisplayshortskip}{2pt}
\setlength{\belowdisplayshortskip}{2pt}

\AtBeginEnvironment{theorem}{\makeatletter}
\AtBeginEnvironment{definition}{\makeatletter}

\lstMakeShortInline[
  columns=flexible,
  breaklines=true,
  basewidth=0.5em,
]@

\newcommand{\titleText}[1]{%
  The Long Way to Deforestation
  (Technical Report)%
}
\subtitle{%
  A Type Inference and Elaboration Technique for
  Removing Intermediate Data Structures%
}

\title[\titleText{}]{\titleText{\\}}
\titlenote{This is the full version of an
  ICFP 2024 paper \citet{Chen-icfp22:deforestation}.}

\author{Yijia Chen}
\orcid{0009-0006-1273-8055}
\affiliation{%
  \institution{HKUST, Hong Kong}
  \country{China}
}

\author{Lionel Parreaux}
\orcid{0000-0002-8805-0728}
\affiliation{%
  \institution{HKUST, Hong Kong}
  \country{China}
}

\begin{abstract}

Deforestation is %
a compiler optimization that removes
intermediate data structure allocations
from functional programs to improve their efficiency.
This is an old idea,
but previous approaches have %
proved limited or impractical:
they either only worked on compositions of predefined combinators (shortcut fusion)
or involved %
the aggressive unfolding of recursive definitions until a 
depth limit was reached
or a reoccurring pattern was found to tie the recursive knot,
resulting in
impractical algorithmic complexity and large amounts of code duplication.
We present \System,
a general-purpose deforestation approach
for purely functional 
call-by-need \emph{and}
call-by-value programs.
\System uses subtype inference to reason about data structure production and consumption
and uses an elaboration pass to fuse the corresponding recursive definitions.
It fuses large classes of
mutually recursive definitions 
while
avoiding much of the unproductive (and sometimes {counter-productive})
code duplication %
inherent in previous approaches.
We prove the soundness of \System using step-indexed logical relations
and experimentally demonstrate significant speedups
in the standard \emph{nofib} benchmark suite.
We manually adapted 
\emph{nofib}
programs to call-by-value semantics
and compiled them using the OCaml compiler.
The average speedup over the 38 benchmarked programs is 8.2\%
while the average code size increases by just about 1.79x.
In particular, 19 programs see their performance mostly unchanged,
17 programs improve significantly (by an average speedup of 16.6\%),
and only two programs visibly worsen (by an average slowdown of 1.8\%).
As a point of comparison, we measured that the well-proven but \emph{semi-manual}
list fusion technique of the Glasgow Haskell Compiler (GHC),
which only works for call-by-need programs,
had an average speedup of 6.5\%.
Our technique is still in its infancy and misses some deforestation opportunities.
We %
are confident that further refinements %
will yield greater performance
improvements in the future.

\end{abstract}

\begin{CCSXML}
<ccs2012>
   <concept>
       <concept_id>10011007.10010940.10011003.10011002</concept_id>
       <concept_desc>Software and its engineering~Software performance</concept_desc>
       <concept_significance>500</concept_significance>
       </concept>
   <concept>
       <concept_id>10011007.10011006.10011008.10011009.10011012</concept_id>
       <concept_desc>Software and its engineering~Functional languages</concept_desc>
       <concept_significance>500</concept_significance>
       </concept>
   <concept>
       <concept_id>10011007.10011006.10011041</concept_id>
       <concept_desc>Software and its engineering~Compilers</concept_desc>
       <concept_significance>300</concept_significance>
       </concept>
   <concept>
       <concept_id>10003752.10003790.10011740</concept_id>
       <concept_desc>Theory of computation~Type theory</concept_desc>
       <concept_significance>300</concept_significance>
       </concept>
 </ccs2012>
\end{CCSXML}

\ccsdesc[500]{Software and its engineering~Software performance}
\ccsdesc[500]{Software and its engineering~Functional languages}
\ccsdesc[300]{Software and its engineering~Compilers}
\ccsdesc[300]{Theory of computation~Type theory}

\keywords{deforestation, fusion, type inference, elaboration, optimization}  %

\maketitle

\section{Introduction}
\label{sec:intro}

In functional programming, \emph{deforestation} refers to
program transformation techniques that reduce the number of intermediate data structures
created during program execution.

Consider the %
example in \Fig{fig:map-of-map}, 
a poster child of deforestation.
\begin{figure}
\begin{lstlisting}
              let rec map f xs = case xs of { []       -> [];
                                              x :: xs  -> f x :: map f xs }
\end{lstlisting}
\vspace{-.6em}
\begin{lstlisting}
              let incr x = x + 1
              let double x = x * 2
              let main ls = map incr (map double ls)
\end{lstlisting}
\caption{The typical ``map-of-map'' motivating example for deforestation.}
\label{fig:map-of-map}
\end{figure}
It should be obvious to any functional programmer that the intermediate list created by @map double ls@
is immediately consumed by @map incr@ and could thus be shunned by \emph{fusing} the two
list traversals into a single one.
A natural way to do this is to rewrite the nested call to
@map@ @(fun@ @x@ @->@ @incr@ @(double@ @x))@ @ls@.
This only works because @map@'s behavior is parameterized
and because the two traversals have the same structure.
However, it would fail for other compositions of recursive functions,
such as @map@ and @sum@, for example.
A more general approach is to fold both @map@ calls
into a call to a single definition --- call it @map2@ --- that performs both transformations at once:

\begin{lstlisting}
  let rec map2 f g xs = case xs of { []      -> [];
                                     x :: xs -> f (g x) :: map2 f g xs }
  let main ls = map2 incr double ls
\end{lstlisting}
The intermediate list is no longer created,
drastically reducing the amount of memory allocated during program execution.
This results in a significant performance improvement.\!\footnote{
  In this program, the speedup can typically be around 40\%,
  depending on the language and compiler used to run the code.
}

The intuition for this %
transformation is so clear, and its manual application so natural, that
one may be surprised to learn that it is \emph{not} performed by
\emph{any} of the existing major compilers,
and in particular by any of the mainstream functional programming language compilers.
Indeed, performing this style of general recursive function deforestation automatically
in a way that is \emph{practical for real compilers}
turns out to be surprisingly difficult.

Many approaches to deforestation have been proposed over the years
\cite{%
wadler-tcs90:deforestation,%
chin-deforestation,%
hamilton-higher-order,%
turchin-supercompiler,%
mitchell-superohaskell,%
hamilton-distillation,%
Jonsson09:pos-supercomp-ho-cbv,%
Bolingbroke-hs10:supercomp-by-eval,%
shortcut-deforest,%
Gill-96:cheap-deforestation,%
Chitil-ifp99,%
coutts-icfp07:stream-fusion,%
warmfusion,%
hu-hylo,%
Onoue-hylo-haskell,%
Ohori-popl07:light-fusion%
},
but they all suffer from one or more of the following 
limitations:
\begin{enumerate}[topsep=3pt]
  \item\label{item:code-dup} they result in large amounts of code duplication,
    leading to code size %
    explosion and impractical compilation times;
  \item\label{item:complex} they are complex,
    making their soundness hard to justify and making them hard to implement correctly;
  \item\label{item:cbv} they cannot handle call-by-value semantics,
    being typically limited to nonstrict languages;
  \item\label{item:restricted} they are simple and practical but
      apply only to very limited classes of programs,
      such as programs that only use predefined combinators on lists instead of explicit recursion.
\end{enumerate}

The original algorithm presented by \citet{wadler-tcs90:deforestation}
suffered from both limited applicability (\ref{item:restricted})
and extensive code duplication (\ref{item:code-dup}).
Related approaches following the more general \emph{supercompilation} technique
are typically %
quite complex, making their implementations brittle 
(\ref{item:complex})
--- in fact, all the available implementations we could try were broken in some ways
(see \Sec{related-implementation}) ---
and most cannot handle call-by-value semantics (\ref{item:cbv}).
Finally, \emph{shortcut deforestation} \cite{shortcut-deforest},
one of the few of these techniques to see industrial use,
most notably in the \emph{Glasgow Haskell Compiler} (GHC),
only applies to the two primitive list combinators @build@ and @foldr@,
from which all other list operations must be redefined
(\ref{item:restricted}).
This restriction was somewhat relaxed by \citet{peytonjones2001playing}
with the introduction of \emph{user-defined rewrite rules},
which can be added for arbitrary combinations of predefined functions.
However, the approach is still severely limited (\ref{item:restricted}) in that
it still cannot deal with arbitrary recursive functions.
Moreover, it is %
error-prone, as there are no guarantees that
these rules preserve program semantics,
and proving that they do is a subtle and difficult task.

In this paper, we present \textbf{\emph{\System}},
the first approach to deforestation that is all of:
\textbf{\emph{generic}},
in that it can be applied to both call-by-need and call-by-value languages;
\textbf{\emph{general}},
in that it handles large classes of programs with mutually recursive definitions
and not just predefined combinators;
\textbf{\emph{practical}}, meaning that it does not result in runaway code duplication;
and relatively \textbf{\emph{simple}} to implement and to prove \textbf{\emph{sound}}.
\System
automatically synthesizes an equivalent of the @map2@ function presented above
and can fuse many %
transformations of %
general functional data structures,
such as binary trees,
not being limited to lists.

\System is a Curry-style system:
it does not require the base language to have an existing type system
and could %
be applied to
pure programs in
dynamic languages
like Scheme and JavaScript.
Indeed, \System infers its own descriptive equi-recursive types
in a best-effort fashion, falling back to leaving the program unchanged when
satisfactory types cannot be inferred.

Our specific contributions are as follows:
\begin{itemize}[topsep=3pt]
  \item We present the problem of deforestation and how %
    it is addressed
    in \System
    (\Sec{sec:presentation}).
  \item We formalize the declarative core of \System
    as a type-based program transformation,
    commonly called \emph{elaboration},
    and give examples of its applications
    (\Sec{sec:formal}).
  \item We formally prove the soundness of 
    this core %
    system
    (\Sec{sec:correctness}\tr{ and \Sec{app:proofs}}{})
    by
    adapting a standard step-indexed logical relation technique
    from \citet{amal-step-index,appel-equi-rec-type}.
  \item We formalize the main algorithmic aspects of
    \System as a subtype inference and unification system
    and give an overview of
    their correctness proofs
    (\Sec{sec:inference}\tr{ and \Sec{app:inference-proofs}}{}).
  \item We experimentally demonstrate 
    significant improvements in
    the running times of
    programs in the standard \emph{nofib}
    \cite{Partain93:nofib} benchmark suite
    (\Sec{sec:bench}),
    which we ported to
    idiomatic OCaml code in
    and ran %
    through the OCaml compiler.
    Out of 38 ported programs,
    17 showed a significant speedup
    averaging 16.6\%.
    The overall speedup across all programs was 8.2\%.
    Moreover, \System did
    not significantly degrade the
    performance of any of the benchmarked programs 
    (only two programs slowed down noticeably, by an average of 1.8\%)
    and only increased their compiled binary size by an average 
    of
    1.79x.
\end{itemize}

\section{Presentation}
\label{sec:presentation}

Let us first consider 
the 
toy example in \Fig{fig:toy-example}.
In this program, an optional value
(constructed either by @Some@ or @None@)
is returned by @producer@ and sent to be destructed in @consumer@.
How to transform this program so that
no intermediate optional value is allocated at runtime?
In effect, we want to fuse the @Some@ and @None@ constructors
with their corresponding branches in the @case@ destructor.

\begin{figure}[h]
\begin{lstlisting}
          let consumer x = foo (case x of { Some v -> v + 1;  None -> 0 })
          let producer y = if y then Some 123 else None
          let main = consumer (producer bar)
\end{lstlisting}
\caption{A simplistic example program to fuse.}
\label{fig:toy-example}
\end{figure}

\subsection{Inlining in Anger}

The obvious thing to do here is to \emph{inline} both functions:
\begin{lstlisting}
  let main = foo (case if bar then Some 123 else None of { Some v -> v + 1; None -> 0 })
\end{lstlisting}
and perform some local rewriting (here @case@-of-@case@ commuting \cite{peytonjones1996compiling})
to bring together the introduction form (constructor) and the elimination form (destructor):
\begin{lstlisting}
  let main = foo (if bar then case Some 123 of { Some v -> v + 1; None -> 0 }
                         else case None     of { Some v -> v + 1; None -> 0 })
\end{lstlisting}
so as to annihilate them by simple pattern matching reduction:
\begin{lstlisting}
  let main = foo (if bar then 123 + 1 else 0)
\end{lstlisting}
The intermediate @Some@ and @None@ constructors have now been eliminated,
resulting in fewer allocations and an overall more efficient program.

This is essentially the approach followed by
the vast majority of 
\emph{general} deforestation techniques,
modulo (crucially) some extra care around recursion.
Recursive functions can be handled by trying to detect repetitive patterns encountered
while %
unfolding/inlining definitions,
which is typically done by some form of
\emph{homeomorphic term embedding} technique \cite{leuschel-homeo}.
In the worst case, after a recursion depth threshold is reached,
these approaches typically
stop the process and insert calls to the original definitions,
ensuring that the process eventually terminates.

However, while this ``\emph{inlining in anger}'' approach works in theory,
it is not reasonable in practice,
which explains why no industry-strength compiler has integrated it so far.
First, this sort of inlining causes code explosion.
This is especially true when there are multiple recursive call paths in the definitions being transformed
(a situation that often occurs in practice),
which leads the number of %
unfolded recursion paths to grow exponentially.
This tends to produce counter-productively large residual programs.
Second, 
the embedding process trying to find recurrent patterns 
to tie the knot is prohibitively expensive:
trying to tie the knot usually takes linear time in the number of terms we have currently
unfolded,
which we need to match against,
so the process takes quadratic time along \emph{each} recursion path,
of which there can be exponentially many.
Third, such solutions are typically very hard to implement correctly.
Of all the implementations of general deforestation that we have tried so far,
none of them was able to successfully transform all basic examples we tried on them,
and many produced silently incorrect programs or simply crashed (see \Sec{related-implementation}).

\subsection{Preserving the Structure of Definitions with Strategies}
\label{sec:preserve-struct-defs}

In this paper, we strive to avoid the ``naive'' approach of aggressive inlining
and instead try to be more cautious and targeted.
Instead of blindly inlining definitions into each other and hoping for the best,
we use a \emph{type inference and elaboration} technique to
try and preserve the original structure of definitions as much as possible,
duplicating only those definitions whose duplication is necessary to enable fusion.

Our idea is to first \emph{infer} some descriptive information,
in the form of \emph{types}, from the definitions of the original program.
The ``types'' we infer actually include information about the terms of the program
and how they can be transformed;
thus, we call such types \emph{fusion strategies} or just \emph{strategies}.
Concretely, a fusion strategy $S$ is one of:
\begin{itemize}
  \item A \emph{function} strategy $S_1 -> S_2$,
    similar to a function type,
    describing how to transform 
    functions and their arguments.
    $S_1$ and $S_2$ are respectively the \emph{parameter} and \emph{result} sub-strategies.
  \item A \emph{constructor application} strategy
    $\left\{\,
        c_1\,\langle S_{11},\,\ldots,\,S_{1n} \rangle
        \,,\ c_2\,\langle S_{21},\,\ldots \rangle
        \,,\ \ldots
      \,\right\}$,
    similar to a structural sum-of-products type,
    describing how to transform the arguments of a constructor application
    without fusing the constructor itself.
    Each constructor $c_i$ in the sum
    is associated with one sub-strategy $S_{ij}$
    for each of its fields.
  \item A \emph{constructor fusion} strategy
    $\left\{\,
        c_1\,\langle x_{11} |-> S_{11},\,\ldots,\,x_{1n} |-> S_{1n} \rangle->t_1
        \,,\ \ldots
      \,\right\}$,
    which is similar to a constructor application strategy,
    except that it includes terms $t_i$
    describing how to rewrite each constructor application
    into fused code that no longer allocates said constructor.
    The fields of each constructor are now named
    and referred to in
    the corresponding terms $t_i$.
  \item A \emph{recursive} strategy
    $\mu F_S$,
    where $F_S$ is a meta-level function
    (we use higher-order abstract syntax),
    similar to an equirecursive type,
    and equivalent to its unfolding $F_S(\mu F_S)$.
  \item A \emph{top} strategy $\top$,
    somewhat similar to a top type,
    indicating that we are not interested in transforming the corresponding term.
  \item A \emph{bottom} strategy $\bot$,
    somewhat similar to a bottom type,
    representing non-termination and the absence of values,
    and used mainly in the context of recursive types.
\end{itemize}

In the
example of \Fig{fig:toy-example},
we see that @consumer@ destructs constructors of the form @Some v@ into @v@
and constructors of the form @None@ into @0@.
The extracted field @v@ is consumed as a primitive integer and is thus not fused;
we associate it with a primitive strategy $\textsf{Int}$.
The inferred strategy for the parameter @x@ of @consumer@ is:
$$S_{\textsf{x}} \, = \, 
  \{\, \textsf{Some} \langle v |-> \textsf{Int} \rangle -> v + 1\,,\ 
    \textsf{None} \langle \rangle -> 0 \,\}$$
On the other hand,
we see that @producer@ is a function which results in a {union} of possible constructors (@Some@ and @None@).
Its inferred strategy is $S_{\textsf{producer}} \, = \, 
  \textsf{Bool}
  -> \al$,
where $\al$ is a meta-variable, representing a yet undetermined strategy,
with an associated \emph{lower bound}
  $\al >= \{\, \textsf{Some} \langle \textsf{Int} \rangle\,,\ 
    \textsf{None} \langle \rangle \,\}$
  indicating the two constructors that flow into that position.
Finally, we see that the @consumer@ and @producer@ definitions are \emph{cheap to duplicate},
in that doing so does not incur any additional runtime work,
so we can %
transform these definitions even if they
happen to also be used elsewhere (in which case we duplicate them
before the transformation, as mentioned later in \Sec{sub-sec:duplication-to-resolve-conflict},
so that the rest of the program is unaffected).
When analyzing the call site @consumer@ @(producer@ @bar)@,
we discover that $\al$ can be substituted with fusion strategy
$\{\, \textsf{Some} \langle v |-> \textsf{Int} \rangle -> v + 1\,,\ 
    \textsf{None} \langle \rangle -> 0 \,\}$,
    which means that deforestation can proceed:
we can \emph{import} the relevant bits of the consumer \emph{into} the definition of the producer,
replacing the original pattern matching expression in the consumer
by %
the scrutinee itself (formerly @x@, now @x2@)
as follows:
\begin{lstlisting}
  let consumer2 x2 = foo x2
  let producer2 y = if y then let v = 123 in v + 1 else 0
  let e = consumer2 (producer2 bar)
\end{lstlisting}
Notice that in the rewritten program,
the type of @x@ (now @x2@) has \emph{changed}:
it used to hold \emph{an optional value}
(constructed with either @Some@ or @None@),
but in the rewritten program it now holds \emph{the result of consuming this optional value}.
Therefore, our approach is not \emph{locally type-preserving},
even though it {is} \emph{type-safe}:
it keeps programs well-typed
(in our type system)
although the types of individual function definitions and intermediate terms may change.

It turns out that performing type inference for such \emph{producer} and \emph{consumer} strategies
is most conveniently done through a \emph{polarized subtype inference} approach
similar to algebraic subtyping \cite{dolan_polymorphism_2017,dolan_algebraic_2017,Parreaux20:simple-essence-alg-subt}.
In this framework,
producers are \emph{positive} types and consumers are \emph{negative} types,
as explained in \Sec{sec:inference}.

\subsection{Preserving Call-By-Value Evaluation Order}
\label{sec:preserve-cbv}

The transformation demonstrated in the previous subsection
is perfectly fine in a pure call-by-name setting,
where the order of expressions in a program does not matter.
But in a call-by-value context,
even assuming \emph{pure} functional programming (\ie no mutation, I/O, etc.),
one cannot simply move expressions around 
in this manner
because doing so could change the program's termination behavior.
Consider the following program:
\begin{lstlisting}
  let foo x y = if x then case y of { () -> error () } else 0
  let main x = foo x ()
\end{lstlisting}
where @error ()@ crashes the program (for example by running into an infinite loop)
and where expression @()@ and pattern @()@ denote the unit value
(a data constructor with no fields).
If we fuse this program following the %
simple approach outlined %
above, we obtain the following result:
\begin{lstlisting}
  let foo2 x y2 = if x then y2 else 0
  let main x = foo2 x (error ())
\end{lstlisting}
We replaced the construction of a unit value @()@ by the result of consuming it
as was originally done in the corresponding branch of the @foo@ function.
This transformation is \emph{unsound} because the result has different call-by-value semantics:
calling @main@ in the new program will crash regardless of the value of its parameter @x@,
whereas the original program would only crash when @x = True@.

To solve this problem, we first \emph{thunk}
all computations from pattern matching branches
and execute these thunks outside the corresponding destructors,
yielding:
\begin{lstlisting}
  let foo x y = if x then (case y of { () -> fun () -> error () }) () else 0
  let main x = foo x ()
\end{lstlisting}
so that the actual rewritten program becomes:
\begin{lstlisting}
  let foo2 x y2 = if x then y2 () else 0
  let main x = foo2 x (fun () -> error ())
\end{lstlisting}
The astute reader will have noticed that
this transformation %
does not look like an improvement on its own:
we replace the use of a \emph{data constructor} value (here @()@)
with a \emph{function} value (here @fun () -> error ()@),
and function values are also be allocated on the heap!
In fact, the real performance gains we observe using our technique come from
its ability to \emph{reorganize} recursive functions by moving computations around
and \emph{then} performing some simplifications which actually reduce the number of overall allocations.
Our transformation does not \emph{always} lead to significant simplification opportunities
down the line, and thus does not necessarily lead to improved performance,
but
our experiments on the standard \emph{nofib} benchmark suite show that
(1) it can significantly improve the efficiency of many programs
in practice;
and (2) it 
rarely leads to noticeable performance degradations,
and when it does, those are very small.
Overall, it %
is a net positive.

\subsection{Fusing Recursive Functions}
\label{subsec:fuse-rec}

Let us now consider how to transform recursive programs.
As a simple example, consider the following program,
which builds a list and then immediately sums up its elements:
\begin{lstlisting}
  let rec enumerate n = if n >= 0 then n :: enumerate (n - 1) else []
  let rec sum xs = case xs of { [] -> 0; x :: xs -> x + sum xs }
  let main x = sum (enumerate x)
\end{lstlisting}
Our approach handles recursive functions by inferring \emph{recursive fusion strategies}.
There is nothing particularly difficult in doing so:
we use precisely the same %
technique as previous type inference %
approaches to inferring equi-recursive types
(like that of, for example,
\citet{dolan_polymorphism_2017,Parreaux20:simple-essence-alg-subt,Parreaux22:mlstruct}).
In the case above, the final inferred strategies are:
\begin{align*}
  S_{\textsf{sum}}
  \ &= \ 
    S_{\textit{fused}} -> \textsf{Int}
  &
  \quad
  \textit{where:}
  \quad
  &&
  \\
  S_{\textsf{enumerate}}
  \ &= \ 
    \textsf{Int} ->
    S_{\textit{fused}}
  &
  S_{\textit{fused}}
  \ &= \ 
    \mu(\lam \al.\ \{\, \textsf{::} \langle x |-> \textsf{Int},\, xs |-> \al \rangle -> x + \textsf{sum}\ xs
      \,,\ \textsf{[]} \langle \rangle -> 0 \,\})
\end{align*}
We can now
apply the same fusion technique as described previously in
\Sec{sec:preserve-struct-defs} and \Sec{sec:preserve-cbv},
matching the producer and consumer strategies and moving the consumer's computations to the producer,
finally arriving at the program below:
\begin{lstlisting}
  let rec enumerate2 n = if n >= 0 then let x = n in let xs = enumerate2 (n - 1) in
                                        fun () -> x + sum2 xs
                                   else fun () -> 0
\end{lstlisting}
\vspace{-1em}
\begin{lstlisting}
  let rec sum2 xs = xs ()
  let main x = sum2 (enumerate2 x)
\end{lstlisting}
Note that it is crucial to extract
the @enumerate2 (n - 1)@ computation into an @xs@ binding
which occurs \emph{outside} of the thunking function literal.
This is a necessary precaution to avoid changing
the evaluation order and amount of runtime work performed by the original program,
which could have disastrous effects
such as changing its algorithmic complexity.

Again, at this point, we have not yet truly improved the efficiency of our program
---
we have simply replaced the allocation of intermediate list cells with the allocation of function values.
But we can now improve this program by
(1) inlining @x@ and the @sum2@ function, which %
are now trivial:
\begin{lstlisting}
  let rec enumerate2 n = if n >= 0 then let xs = enumerate2 (n - 1) in fun () -> n + xs ()
                                   else fun () -> 0
\end{lstlisting}
\vspace{-1em}
\begin{lstlisting}
  let main x = enumerate2 x ()
\end{lstlisting}
and
(2) \emph{floating} the `@fun () ->@'
function binders \emph{out} the let binding and conditional expression,
resulting in the following definition of @enumerate2@:
\begin{lstlisting}
  let rec enumerate2 n () = if n >= 0 then let xs = enumerate2 (n - 1) in n + xs () else 0
\end{lstlisting}
after which we can inline @xs@, since it is used immediately and only once,
and drop the @()@ parameter:
\begin{lstlisting}
  let rec enumerate2 n = if n >= 0 then n + enumerate2 (n - 1) else 0
\end{lstlisting}
This program sums up the integers from @n@ down to @0@
without allocating any intermediate %
data structures
or function values,
achieving our goal of deforestation.

We must now slow down and carefully consider the validity of the latter transformation.
\emph{Floating} function binders \emph{out} of conditional expressions
and let bindings
is a {leap}
that cannot always be made while preserving the precise call-by-value semantics of programs.
First, it can lead to \emph{more termination} than in the original program.
For instance, notice that
@case error () of { () -> fun () -> () }@
never terminates while its floated-out version
@fun () -> case error () of { () -> () }@
terminates.
We do not consider this a serious problem
because non-termination in pure programs can be treated as
a form of undefined behavior.\!\footnote{%
  Indeed, the standard definition of contextual approximation used
  to show the soundness of a transformation
  considers that a program that terminates soundly approximates
  a program that does not terminate under an empty context.
}
Second, and more worryingly,
floating out could result in additional runtime work being performed
by the program.
Thankfully, two situations commonly arise where
runtime work duplication is guaranteed not to happen:
first, when the function is floated out across
\emph{cheap} expressions like variables and lambdas;
second, when the function being floated out is {known} to be ``one-shot'',
\ie it is always applied at most once.

For instance, it is safe to float out `@fun n ->@' in the following program:
\begin{lstlisting}
  let foo x   = case some_computation() of { () -> fun n -> n + x } in foo 1 2 + foo 3 4
\end{lstlisting}
resulting in\cprotect\footnote{We write
  \,\lstinline[basicstyle=\scriptsize\ttfamily]@let x p = e@\,
  as a shorthand for
  \,\lstinline[basicstyle=\scriptsize\ttfamily]@let x = fun p -> e@.
}
\begin{lstlisting}
  let foo x n = case some_computation() of { () ->          n + x } in foo 1 2 + foo 3 4
\end{lstlisting}
because @foo@ is always applied to two arguments,
so the results of @foo@ @1@ and @foo@ @2@
are never applied more than once.
However, %
floating out may not be safe in a program where @foo@ is used as in
`@foo 1 2 + bar (foo 3) 4@' unless we also know that 
the first argument of @bar@ is itself {one-shot}.

Determining whether it is safe to float out the function binder in the motivating example of @enumerate2@
is slightly nontrivial. Because @enumerate2@ is recursive,
this requires a simple form of inductive reasoning
on the recursive call contexts or, equivalently, on the call stack.
The base cases %
are the \emph{outer calls} to @enumerate2@.
Here, there is only one,
@enumerate2@ @x@, which is one-shot as it is immediately applied to @()@.
The inductive cases %
are the inner \emph{recursive calls} to @enumerate2@.
Here, the recursive call @enumerate2 (n - 1)@ is stored
in the @xs@ binding, which is captured
by the returned lambda and \emph{not} immediately applied,
so it is not ``obviously'' known to be one-shot.
But \emph{assuming} that the returned lambda is one-shot,
we can conclude that
@xs@ is also one-shot,
concluding the induction.
This reasoning justifies the %
safety of our
transformation
of @enumerate2@.

To determine when performing these floating-out transformations is safe,
we use a simple heuristic 
\emph{cardinality analysis} 
\cite{arity,xu2005arity,BREITNER201865,Sergey-popl14:cardinality}
following the general %
ideas described above.

\vspace{-0.6em}
\subsection{Fusing Computations with Free Variables}
\label{sec:computations-fvs}

We are now finally ready to study how the motivating example of
\Fig{fig:map-of-map}
(a double @map@ application) is fused by \System.
The first step is to duplicate the @map@ definition so that it can be fused
individually as a producer and as a consumer.
\System does this automatically when it finds fusion opportunities
through type inference, as explained later in \Sec{sub-sec:duplication-to-resolve-conflict}.
\begin{lstlisting}
  let rec map1 f1 xs1 = case xs1 of { [] -> []; x1 :: xs1  -> f1 x1 :: map1 f1 xs1 }
  let rec map2 f2 xs2 = case xs2 of { [] -> []; x2 :: xs2  -> f2 x2 :: map2 f2 xs2 }
  let main ls = map1 incr (map2 double ls)
\end{lstlisting}
Now, there is an important technicality to beware of.
Because the consumer @map1@ takes a parameter @f1@
that is used in its @case@ branches, its computations \emph{cannot} be immediately
moved to %
@map2@, lest we end up with a broken program that contains a free variable:
\begin{lstlisting}
  let rec map12 f1 xs1 = xs1 ()
  let rec map22 f2 xs2 = case xs2 of { [] -> fun () -> []; x2 :: xs2 ->
        let x = f2 x2 in let xs = map22 f2 xs2 in fun () -> $\color{red}\tt f1$ x :: map12 $\color{red}\tt f1$ xs }
\end{lstlisting}
Our solution is simple:
at the time we \emph{thunk} the computations of the consumer,
instead of the unit value @()@, we can pass the variables captured by the computations (here just @f1@),
resulting in:
\begin{lstlisting}
  let rec map12 f1 xs1 = xs1 ($\color{blue}\tt f1$)
  let rec map22 f2 xs2 = case xs2 of { [] -> fun ($\color{blue}\tt f1$) -> []; x2 :: xs2 ->
        let x = f2 x2 in let xs = map22 f2 xs2 in fun ($\color{blue}\tt f1$) -> $\color{blue}\tt f1$ x :: map12 $\color{blue}\tt f1$ xs }
\end{lstlisting}
After floating out the inner functions following the same reasoning as in \Sec{subsec:fuse-rec}
and performing inlining of trivial definitions,
we end up with the following rewritten program:
\begin{lstlisting}
  let rec map22 f2 xs2 f1 = case xs2 of { [] -> []; x2 :: xs2 -> f1 (f2 x2) :: map22 f2 xs2 f1 }
  let main ls = map22 double ls incr
\end{lstlisting}
which is %
isomorphic to the fused program we desired in 
the introduction of this paper.

\subsection{Limitations}
\label{subsec:limitations}

To conclude this presentation,
we list some of our main limitations,
to be addressed in future work.

\subsubsection{Cardinality Analysis and Floating Out}

The heuristic arity analysis  mentioned in \Sec{subsec:fuse-rec}
currently will not always find
all cases where floating out functions binders is safe
and may also too aggressively float out lambdas in cases
where this could potentially increase runtime work.
Nevertheless, 
our experiments seem to show that
this does not outweigh the benefits of our transformations.

\subsubsection{Nonfusible Programs}
\label{subsec:accum-fusion}

An interesting case is when eliminating intermediate functions by floating out
is fundamentally impossible. This happens for example in
functions building their intermediate data structures in accumulator parameters.
For instance, consider the following program, which reverses a list and then maps the result:
\begin{lstlisting}
  let rec rev xs acc = case xs of { x :: xs -> rev xs (x :: acc); [] -> acc }
  let main xs = map incr (rev xs [])
\end{lstlisting}
Our transformation turns this programs into the following one
after matching the recursive consumption of @map@ against the recursive production of @rev@:
\begin{lstlisting}
  let rec map2 f xs2 = xs2 f
  let rec rev2 xs acc2 = case xs of { x :: xs -> rev2 xs (fun f -> f x :: map2 f acc2);
                                      []      -> acc2 }
  let main xs = map2 incr (rev2 xs (fun f -> []))
\end{lstlisting}
Here we can see that the thunking `@fun f ->@' function is ``stuck''
inside a recursive accumulator argument and cannot be floated out.
Fusing such tail-recursive functions while ensuring that the number of allocations
actually decreases 
would require the accumulated computations to be associative
\cite{gibbons2021continuation}.
In future work, we could investigate %
automated associativity reasoning,
which would allow fusing expressions like @sum (rev xs [])@.
Failing that, we could at least recover a program isomorphic to the original,
which was possibly more efficient,\!\footnote{
  A concrete data structure is often more efficient than
  a series of closures, as the latter involves virtual dispatch.
}
by performing a pass of defunctionalization.
We have not yet implemented such as pass
(see also \Sec{sec:accum-cont}).

As a side note, no such problem occurs when @rev@ is a \emph{consumer},
as in @rev (map incr xs)@, which reverses a list that is being mapped and fuses just fine in \System.

\subsubsection{Misaligned and Conflicting Producer/Consumer Pairs}
\label{sub-sec:misaligned-length}

Because we \emph{import} consumer computations \emph{into} producers,
in general, we can only fuse at most one consumption strategy for any given producer expression.
So while \System still fuses a program like @map f (0 :: map g xs)@ without problems,
where the producer is wrapped inside an extra constructor,
the dual situation @case map f xs of [] -> []; a :: as -> map g as@,
where the producer is wrapped inside an extra destructor,
cannot be fused.
Indeed, we get a strategy clash: several candidate consumers of @::@ and @[]@ become
available to rewrite the producer's list constructors.\!\footnote{
  This could possibly be alleviated by fusing the \emph{producer} into the consumer,
  but we have not yet investigated this possibility.}
A similar situation arises when the periodicity (or ``recursion length'') of a producer is mismatched
with that of a consumer; for instance, consider:
\begin{lstlisting}
  let rec pair_up xs = case xs of { x :: y :: xs -> (x, y) :: pair_up xs; _ -> [] }
  let rec mk n = if n > 1 then (n - 1) :: n :: (n + 1) :: mk (n - 3)
  let main x = pair_up (mk x)
\end{lstlisting}
Here, @pair_up@ consumes two elements per recursion
while @mk@ produces three.
This leads to a strategy clash 
which prevents fusion.
There is an easy fix in both of these situations:
we could inline and unroll recursive definitions just enough to make
the producer strategies align with the consumer strategies.
From our experiments, doing this %
makes some of the benchmarked programs around 10\% faster.
However, this tends to lead to higher code duplication and longer compilation times,
so we did not include this technique in our experiments.
In the future, a better-tuned heuristic
(notably avoiding the unrolling of recursive definitions with more than one recursion path)
could lead to retaining the speedups without incurring significantly more code duplication.

\subsection{Characterization of \System's Capabilities}
\label{subsec:characterization}

All in all, we can characterize
\System's fusion capabilities
as requiring the following conditions:
\begin{itemize}
  \item \textbf{Constructors are uniquely consumed.}
    For any given constructor application to be fused away,
    it must be matched with at most one corresponding pattern matching expression.\!\footnote{
      Note that this \emph{static} requirement is stronger than the classical notion of \emph{runtime} linearity:
      fusion is hampered when there is more than one consumer in the program text,
      even when the data structure might actually be consumed only once at runtime.
      For instance, the $\mathit{xs}$ consumer `$\kws{if} x \skws{then}\id{length}\ \mathit{xs}\skws{else}\id{sum}\ \mathit{xs}$'
      will not be able to fuse with $\mathit{xs}$'s producer.
    }
    Otherwise, we cannot know statically to which expression the constructor should be rewritten.
  \item \textbf{Constructors are in tail position.}
    The (possibly nested) constructor applications to be fused
    must be returned immediately by the producer
    --- \ie they should be ``in tail position''
    (see \Sec{subsec:accum-fusion}).
    In particular, if a data structure is produced through an accumulator parameter,
    then after rewriting
    we end up with a program that allocates a shadow
    of the original data structure encoded as closures, which is likely not an improvement on its own.
  \item \textbf{Recursions align.}
    The consumer must process elements at a rate that is a divisor of the producer's rate
    (see \Sec{sub-sec:misaligned-length}).
    For instance, it is fine if four elements are produced per producer recursion
    and two are consumed per consumer recursion,
    but the converse is not.
  \item \textbf{Rewritings are acyclic.}
    In general, performing fusion can lead to new fusion opportunities arising.
    In some rare cases, there might be an infinite number of such opportunities
    arising as we transform the program.\!\footnote{
      In this case, \System detects the cyclic behavior and aborts the rewriting.
    }
    We are not aware of this occurring in any real-world setting,
    as we have only been able to elicit such behavior on artificially contrived examples.
\end{itemize}

\section{Formalization}
\label{sec:formal}

We now formalize our approach.
We rigorously define the concept of \emph{fusion strategy}
and formally describe how %
fusion strategies are used to transform programs.
Our development is in the form of a declarative type system
with \emph{elaboration},
meaning that we transform terms based on
their types.

\subsection{Syntax}
\label{sec:formal-syntax}

\begin{figure}
{\small
\begin{align*}
e,t &~::=~
  x
  ~\mid~
  t\ t
  ~\mid~
  c\ \overline{t_j}^j
  ~\mid~
  \lam x.\ t ~\mid~ 
  \kw{case}\ t\ \kw{of}\ \overline{c_i\ \overline{x_{i,j}}^j -> l_i}^i
  ~\mid~
  \kw{let rec}\ x = t\ \kw{in}\ t
\\
v &~::=~
  c\ \overline{v_j}^j
  ~\mid~
  l
  \qquad
  \textit{where}
  \quad
  l ~::=~ \lam x.\ t
\\
S &~::=~
  \top
  ~\mid~
  \bot
  ~\mid~
  S -> S
  ~\mid~
  \mu F_S
  ~\mid~
  \left\{\,\overline{
    c_i\,\langle\,\overline{\,S_{i,j}\,}^j\,\rangle
  }^i\right\}
  ~\mid~
  \left\{\,\overline{
    c_i\,\langle\,\overline{x_{i,j} |-> S_{i,j}}^j\,\rangle->l_i
  }^i\,\right\}
\\
I &~::=~
  \top
  ~\mid~
  \bot
  ~\mid~
  I -> I
  ~\mid~
  \mu F_I
  ~\mid~
  \left\{\,\overline{
    c_i\,\langle\,\overline{\,I_{i,j}\,}^j\,\rangle
  }^i\,\right\}
\\
\Gamma &~::=~
  \ep
  ~\mid~
  \Gamma \cdot (x|->S)
\\
\Xi &~::=~
  \ep
  ~\mid~
  \Xi \cdot (x|->I)
\end{align*}
}
\caption{Syntax of the formal calculus.}
\label{fig:syntax-new-new}
\end{figure}

The syntax of our formalism is given in \Fig{fig:syntax-new-new}.
Metavariable $S$ denotes \emph{full} fusion strategies guiding the transformation of terms,
while $I$ represents \textit{identity} fusion strategies,
which do not lead to any fusing transformations of the corresponding terms.

We use indexed overlines to denote the repetition of syntax constructs.
This leads to a more concise presentation of constructs like
algebraic data types, which have multiple constructors where each constructor has
multiple fields.
For instance, we write
$\left\{\overline{c_i\ \langle\overline{S_{i,j}}^j \rangle}^i\right\}$
to denote a set of constructors
$\overline{c_i}^i$ where each constructor $c_i$ has multiple fields with associated strategies $\overline{S_{i,j}}^j$.

Notice that the syntax of $I$ is a proper subset of the syntax of $S$.
One can regard $I$
as the usual \textit{types} in a system with functions,
structural sum and product types,
recursive types, bottom, and top.
Full fusion strategies $S$
simply extend $I$
with 
\emph{constructor fusion strategies} $\left\{\overline{c\ \langle\overline{x |-> S}\rangle -> l}\right\} $ that attach
the information needed to perform fusion:
$l$ is the body of the pattern-matching %
branch that needs to be imported
into the site where data constructor $c$ is called
and $x |-> S$ represents the variables bound in the matching pattern
along with their corresponding strategies.
Note that branch bodies in $\kw{case}$ expressions and
the corresponding terms %
in {constructor fusion strategies}
are \emph{lambda abstractions} $l$ instead of general terms $t$.
This is to ensure that call-by-value semantics is preserved,
as explained in \Sec{sec:preserve-cbv},
and is not a limitation of our system:
all programs can be put into this form via a simple
preprocessing pass we call \emph{thunking},
described in \Sec{sub-sec:thunking-free-vars}.

Recursive types in $S$ and $I$ are denoted as
$\mu F_S$ and $\mu F_I$,
where $F_S$ and $F_I$ are 
\emph{functions} which take a strategy as input and output another strategy.
This \emph{higher-order abstract syntax} approach \cite{pfenning88:hoas}
allows for a cleaner formal development and was also
adopted in the %
work that introduced the proof technique we use \cite{appel-equi-rec-type,amal-step-index}.
We use $F_S(S_0)$ to denote the application of
function $F_S$ to %
strategy $S_0$.

\paragraph{Shorthands and syntactic sugar} We make use of the following shorthands and syntactic sugar:
\begin{itemize}
  \item Because we support recursive types, we can directly define fixed
  point combinators, so the \kw{let rec}
  construct is simply treated as syntactic sugar used
  to clarify our examples.

  \item Non-recursive \kw{let} bindings are also
  used as a clarifying notation for lambda abstractions that are
  immediately applied.
  As we shall see,
  this helps clarify the fusion rule \textsc{F-Ctor}.

  \item In the examples, we use $::$ and $[]$ as the usual list constructors.
  And similar to \Sec{sec:preserve-cbv}, ``$()$'' is
  used as a shorthand for the value
  of a $\mathit{Unit}$ constructor taking no parameters,
  and ``$\lam().\,e$'' is a shorthand for $\lam x.\,\kw{case }x\kw{ of }\mathit{Unit}->e$
  representing a \emph{thunked} computation.
  
  \item By abuse of notation,
    in examples, we will often write fusion strategies
    of the form $S_0 = \mu(F_S)$ as just $S_0 = F_S(S_0)$,
    for example writing $S = \top -> S$
    instead of $S = \mu(\lam\al.\ \top -> \al)$.
\end{itemize}

\subsection{Fusion Rules}\label{fusionrulessubsection}
\newcommand{\dummy}{()}

The fusion rules are given in \Fig{fig:fusion-prescient}.
Judgement $\Gamma |- s ~> t <= S$
reads ``in context $\Gamma$, term $s$ rewrites into term $t$
under fusion strategy $S$'',
and $\fname{FV}(t)$ denotes the set of free variables in $t$.
Rules
\textsc{F-Var},
\textsc{F-Lam},
\textsc{F-LetRec},
and
\textsc{F-App}
are standard and simply transform subterms recursively.
\textsc{F-Top}
is obvious: any well-typed term can be typed as $\top$
(essentially the rule proposed by \citet{MacQueen-86:ideal-model-rec-poly-tys}
in their seminal work).
This is to handle terms we are not interested in transforming.
Rules \textsc{F-RecUnfold} and \textsc{F-RecFold} deal with folding and unfolding
equirecursive types; they are also standard.
\textsc{F-CtorSkip} and
\textsc{F-CaseSkip}
are used when a constructor or destructor shall not be fused;
they simply transform the subterms recursively.
\textsc{F-Ctor} and
\textsc{F-Case} 
are used to fuse a constructor and a matching $\kw{case}$ expression.
In \ruleName{F-Ctor}, constructors are rewritten to let bindings followed by the computation
imported from their corresponding $\kw{case}$ branch, which is book-kept in the strategy;
in \ruleName{F-Case}, $\kw{case}$ expressions are
rewritten into the results of transforming their scrutinees.
Since we do not wish to change the result of
the main top-level term, 
we restrict
the root of any fusion derivation tree to be of the form
$\ep |- t ~> t' <= I$, where $I$ is an \textit{identity} fusion
strategy, which is always type- and semantics-preserving.

\begin{figure*}
\mprset{sep=1.3em}
{\small
\begin{mathpar}
\boxed{\mathstrut\Gamma |- t ~> t <= S}
\hfill
\\
\inferrule[F-Var]{
  (x |-> S) \in \Gamma
}{
  \Gamma |-
  x ~> x
  <= S
}

\inferrule[F-Lam]{
  \Gamma\cat(x|->S') |-
  t
  ~>
  t'
  <= S
}{
  \Gamma |-
  \lam x.\ t
  ~>
  \lam x.\ t'
  <= S' -> S
}

\inferrule[F-RecFold]{
  \Gamma |- t ~> t' <= S(\mu S)
}{
  \Gamma |- t ~> t' <= \mu S
}

\inferrule[F-RecUnfold]{
  \Gamma |- t ~> t' <= \mu S
}{
  \Gamma |- t ~> t' <= S(\mu S)
}

\inferrule[F-LetRec]{
  \Gamma \cat (x |-> S_1) |-
  t_1 ~> t_1'
  <= S_1
  \\\\
  \Gamma \cat (x |-> S_1) |-
  t_2 ~> t_2'
  <= S_2
}{
  \Gamma |-
  \kw{let rec}\ x = t_1\ \kw{in}\ t_2
  ~>
  \kw{let rec}\ x = t_1'\ \kw{in}\ t_2'
  <=
  S_2
}

\inferrule[F-App]{
  \Gamma |- t_1 ~> t_1'
  <= (S' -> S)
  \\
  \Gamma |- t_2 ~> t_2'
  <= S'
}{
  \Gamma |-
  t_1\ t_2 ~> t_1'\ t_2'
  <= S
}

\inferrule[F-Ctor]{
  \fname{FV}(l') \subseteq \left\{\overline{x_{j}}^j\right\}
  \\\\
  \left( c\,\langle\,\overline{x_{j} |-> S_{j}}^j\,\rangle->l' \right) \in S
  \\
  \overline{\,\Gamma |- t_{j} ~> t_{j}' <= S_{j}\,}^j
}{
  \Gamma |-
  c\ \overline{t_{j}}^{j} ~>
  \kw{let } \overline{x_{j} = t_{j}'}^j \kw{ in } l'
  <= S
}

\inferrule[F-CtorSkip]{
  \left(c\,\langle \overline{S_j}^j \rangle\right) \in S
  \\
  \overline{\, \Gamma |- t_j ~> t_j' <= S_j\,}^j
}{
  \Gamma |-
  c\ \overline{\, t_j\,}^j ~> c\ \overline{\, t_j'\,}^j
  <=
  S
}

\inferrule[F-Case]{
  \Gamma |- t ~> t' <= \left\{\overline{c_i\,\langle \overline{x_{i,j} |-> S_{i,j}}^j \rangle->l_i'}^i \right\}
  \\
  \overline{\,
    \Gamma\cat
    \overline{
      (x_{i,j} |-> S_{i,j})
    }^j
    |-
    l_i ~> l_i' <= S
  \,}^i
  \\
  \overline{\fname{FV}(l_i) \cup \fname{FV}(l_i') \subseteq \left\{ \overline{x_{i,j}}^j \right\}}^i
}{
  \Gamma |-
  \kw{case}\ t\ \kw{of}\ \overline{\,
    c_i\ \overline{x_{i,j}}^j -> l_i
  \,}^i
  ~>
  t'
  <=
  S
}

\inferrule[F-CaseSkip]{
  \Gamma |- t ~> t' <= \left\{\overline{c_i\,\langle \overline{S_{i,j}}^j \rangle}^i\right\}
  \\
  \overline{\,
    \Gamma \cat \overline{(x_{i,j} |-> S_{i,j})}^j |- l_i ~> l_i' <= S
  \,}^i
}{
  \Gamma |-
  \kw{case}\ t\ \kw{of}\ \overline{\,
    c_i\ \overline{x_{i,j}}^j -> l_i
  \,}^i
  ~>
  \kw{case}\ t'\ \kw{of}\ \overline{\,
    c_i\ \overline{x_{i,j}}^j -> l_i'
  \,}^i
  <=
  S
}

\inferrule[F-Top]{
  \Gamma |- t ~> t' <= I
}{
  \Gamma |- t ~> t' <= \top
}
\end{mathpar}}
\caption{Fusion rules.
  }
\label{fig:fusion-prescient}
\end{figure*}

\subsection{Examples}

Let us explain these rules through a few formal examples.
In our examples, those identifiers that are not
in italics are considered to be top-level bindings
always in scope.
So they do not count as free variables and do not invalidate
the \emph{free variables} premises of \textsc{F-Ctor}
and \textsc{F-Case}.
Also, for %
clarity, the %
example source programs below are \emph{not}
``{thunked}'' 
---
the necessary \emph{thunking} of all relevant terms
is %
implicitly done on the fly as fusion strategies are introduced.

\subsubsection{Simplest Example}

Consider the term $t_1\ t_2$,
where
\begin{align*}
  t_1 &~=~ \lam p.\
    \id{f}\ (\kw{case}\ p\ \kw{of}\ x::xs -> \id{g}\ (\kw{case}\ xs\ \kw{of}\ y::ys -> \id{h}\ y\ ys))
  \\
  t_2 &~=~ 1::2::[]
\end{align*}
Here $t_1$ is the \emph{consumer}
of the list %
we want to fuse.
It consumes the list through two simple non-recursive strategies
--- call them $S_1$ followed by $S_2$.
On the other hand,
$t_2$ \emph{produces} the corresponding list constructors,
also non-recursively,
making the fusion quite straightforward.

Viewing $t_1$ as a function that takes an input of
strategy $S_1$ and produces something that we don't want to
transform any further,
we can formally show
using
\textsc{F-Lam},
\textsc{F-App},
\textsc{F-Case},
and
\textsc{F-Var}
that
$\ep |-
  t_1 ~> \lam p.\ \id{f}\ (p\ \dummy) <= S_1 -> \top
$
where
\SP
$S_1 =
  \left\{\,
    \left( \left(::\right)\,\langle x |-> \top,\,xs |-> S_2 \rangle -> \lam\dummy.\ \id{g}\ (xs\ \dummy)\right)
  \,\right\}
$
and
\\
\UP
$S_2 =
  \left\{\,
    \left( \left(::\right)\,\langle y |-> \top,\,ys |-> \top\rangle -> \lam\dummy.\ \id{h}\ y\ ys\right)
  \,\right\}
$.
\\ 
We see that the $S_1$ strategy
consumes the $(::)$ constructor, and then
continue to consume its second field $xs$
using strategy $S_2$,
which continues into (stops at) the %
top strategy 
$\top$.
Moreover,
we can also show
$\ep |- t_2 ~> t_2' <= S_1$
where
$t_2' = \kw{let }
x=1;\;
xs=(\kw{let }y=2;\;ys=[]\kw{ in } \lam ().\;\id{h}\ y\ ys)
\kw{ in }
\lam ().\;\id{g}\ (xs\ \dummy)$
so that
$\ep |-
  t_1\ t_2 ~> (\lam p.\ \id{f}\ (p\ \dummy))\ t_2' <=
  \top
$.

\subsubsection{Simple Recursive Example}
\label{sec:simple-rec-example-sumi-mapsqi}
Consider the %
simple problem of
fusing sum and map on infinite lists.
Although in this paper we are
mainly focused on call-by-value semantics,
this simple recursive example is helpful to illustrate our formal system
and it %
would work fine in a call-by-need setting,
where our transformations are also valid,
assuming lazy natural number addition.
\begin{align*}
  \id{sumi}\ (\id{mapsqi}\ L)
  \quad
  \kw{where}
  &\quad\id{sumi}\ (x::xs) ~=~ x + \id{sumi}\ xs 
  \\
  &\quad\id{mapsqi}\ (y::ys) ~=~ (y * y) :: \id{mapsqi}\ ys
\end{align*}

We start by analyzing the presumed \emph{consumer}
$\id{sumi}$,
whose body is
$$t_{\id{sumi}} =
  \lambda\,p.\ 
  \kw{case}\ p\ \kw{of}\ x::xs -> x+\id{sumi}\ xs$$

We can show that
$(\id{sumi}|->(S_1 -> \top)) |- t_{\id{sumi}} ~> t_{\id{sumi}}' <=
  S_1 -> \top
$
where
$t_{\id{sumi}}' = \lambda\,p.\ (p\ \dummy)$
and
$$S_1 =
  \left\{\,
    \left((::)\,\langle x |-> \top,\, xs |-> S_1 \rangle -> \lam\dummy.\ x+\id{sumi}\ xs \right)
  \,\right\}
$$

Now, we can use this strategy $S_1$
when transforming the \emph{producer} $\id{mapsqi}$,
whose body is
$$
t_{\id{mapsqi}} =
  \lambda\,p.\
  \kw{case}\ p\ \kw{of}\ y::ys -> (y * y)::\id{mapsqi}\ ys
$$

We can show that
$(\id{mapsqi}|->S_{\fname{list}}->S_1)
|-
t_{\id{mapsqi}} ~> t_{\id{mapsqi}}'
<= S_{\fname{list}}->S_1$
where
$t_{\id{mapsqi}}' =
  \lambda\,p.\ 
  \kw{case}\ p\ \kw{of}\ y::ys ->
  \kw{let } x = (y * y);\;xs = \id{mapsqi}\ ys
  \kw{ in } \lam\dummy.\,x + \id{sumi}\ xs
$, and 
$S_{\fname{list}} = \left\{
  \left((::)\,\langle \top,\,S_{\fname{list}} \rangle \right)
\right\}$, which is an
identity strategy and means that
there will be no transformation performed on the related $\kw{case}$ expression
in $t_{\textsf{mapsqi}}$ because the input list $L$ is unknown.

Then we can rewrite the original program to the fused
$\id{sumi}\ (\id{mapsqi}\ L)$ where
\begin{align*}
  &\quad\id{sumi}\ p ~=~ p\ \dummy 
  \\
  &\quad\id{mapsqi}\ (y::ys) ~=~
  \kw{let } x = (y * y);\;xs = \id{mapsqi}\ ys
  \kw{ in } \lam\dummy.\, x + \id{sumi}\ xs
\end{align*}

\subsubsection{Additional Examples}

More explanations and examples 
are shown in \App{app:advanced-examples}.

\section{Correctness}
\label{sec:correctness}

\newcommand{\prog}[1]{\xhookrightarrow{\mathit{#1}}}
\newcommand{\calE}[2]{\mathcal{E}_{#1}[|#2|]}
\newcommand{\calG}[2]{\mathcal{G}_{#1}[|#2|]}
\newcommand{\calV}[2]{\mathcal{V}_{#1}[|#2|]}
\newcommand{\calEeq}[2]{\mathcal{E}'_{#1}[|#2|]}
\newcommand{\calGeq}[2]{\mathcal{G}'_{#1}[|#2|]}
\newcommand{\calVeq}[2]{\mathcal{V}'_{#1}[|#2|]}
\newcommand{\defeq}{\stackrel{\mathit{def}}{=}}

We show the correctness of our transformation using the standard technique
of proving contextual approximation between the input and output programs via
a method based on step-indexed logical relations.
This technique allows us to give relational interpretations
of strategies as indexed sets of pairs of values or terms
related by our transformation,
going beyond a simple syntactic treatment.

We first explain several important auxiliary definitions
and then introduce the main theorems for the soundness of our transformation.
In the following paragraphs, $\prog{n}$ denotes $n$ steps of call-by-value
evaluation; $\prog{*}$ denotes zero or more steps of call-by-value evaluation;
$[\overline{x_i |-> v_i}^i]\,t$ denotes applying the substitution $\overline{x_i |-> v_i}^i$
to term $t$; $\fname{irred}(t)$ denotes that $t$ is not reducible.

\begin{definition}[Relational Interpretations of Strategies $S$ and Strategy Contexts $\Gamma$]\label{relinterpretationstrat}
  $\calV{}{S}$ and $\calE{}{S}$ are the relational interpretations of strategies
  in terms of values and computations, respectively.
  $\calV{}{S}$ is defined as a set of triples of the form $(n,\,v,\,v')$, where $n$ is
  a natural number and $v$ and $v'$ are \textit{closed values}. $\calE{}{S}$ is defined similarly
  as a set of triples, but the second and the third elements are \textit{terms}.
  Intuitively, $(n,\,v,\,v') \in \calV{}{S}$ means that in any computation
  running for no more than $n$ steps, $v$ approximates $v'$
  \textit{under strategy $S$}; 
  $(n,\,t,\,t') \in \calE{}{S}$ means that if $t$ evaluates to a value $v$ in $m < n$ steps,
  $t'$ also evaluates to a value that is related to $v$ for the remaining $n - m$ steps.
  Data structure values are related if they share the same constructor and
  their fields are also related and function values are related if they compute related
  results from related input arguments.
  This all follows closely the logical relations approach of \citet{amal-step-index}
  that uses the step-index to stratify definitions.
  In addition, our system also reasons about approximation under
  \textit{fusion strategies} that guide the \textit{transformation} of terms,
  so terms %
  are also related if
  one can be transformed into the other.
  Notably, under the \emph{constructor fusion strategy}, %
  a data structure value approximates
  a \emph{thunked} computation that matches what \System
  imports from consumer to producer according to the term attached in the
  corresponding strategy.
  Lastly, we also define $\calG{}{\Gamma}$
  such that related strategy contexts map variables to related values.
  \begin{itemize}
    \item
    $\calV{}{\top} \defeq \left\{(n, v, v') ~\mid~ n \ge 0\right\}$
    \\
    $\calV{}{\bot} \defeq \{\}$
    \\
    $\calV{}{\mu S}
    \defeq
    \bigcup_n \calV{}{S^n(\bot)}$ where $S^n(\bot)$ denotes
    $\underbrace{S(S(\cdots S(}_{n\ \textit{times}}\hspace{-0.2em}\bot)))$
    \\
    $\calV{}{S_1 -> S_2}
    \defeq
    \begin{aligned}[t]
      \left\{\left(n,\, \lam x.t_1,\, \lam x.t_2\right) ~\mid~
      \right.&\forall m < n.\ 
      \forall v_1, v_2.\ \left(
        (m,\,v_1,\,v_2) \in \calV{}{S_1}
      \right)
      \\
      &\left.\left.
      \implies
      \left(m,\ [x |-> v_1]t_1,\ [x |-> v_2]t_2\right) \in \calE{}{S_2}
      \right)
      \right\}
    \end{aligned}$
    \\
    $\calV{}{\left\{\overline{
      c_i\ \langle\overline{S_{i,j}}^j\rangle
    }^i\right\}} \defeq
    \bigcup_i \left\{\left(
      n,\ c_i\ \overline{v_{i,j}}^j,\ c_i\ \overline{v_{i,j}'}^j
    \right) ~\mid~
    \wedge_j \forall m < n.\ (m,\,v_{i,j},\,v_{i,j}') \in \calV{}{S_{i,j}}
    \right\}
    $
    \\
    $\calV{}{\left\{\overline{c_i\ \langle\overline{x_{i,j} |-> S_{i,j}}^j\rangle -> l_i}^i\right\}}
    \defeq
    \bigcup_i
    \begin{aligned}[t]
      \left\{\left(
        n,\ 
        c_i\ \overline{v_{i,j}}^j,\right.\right.
      &\left.\left[\overline{x_{i,j} |-> v_{i,j}'}^j\right]l_i
      \right)
      ~\mid~
      \fname{FV}(l_i) \subseteq \left\{\overline{x_{i,j}}^j\right\}
      \\
      &\land\left. \forall m < n.\ \overline{(m,\ v_{i,j},\ v_{i,j}') \in \calV{}{S_{i,j}}}^j
      \right\}
    \end{aligned}
    $
    \item $
    \begin{aligned}[t]
      \calE{}{S}
      \defeq
      \Big\{(n,\ t,\ t') ~\mid~
      &\forall m < n.\ \forall v.\ \left((t \prog{m} v) \land \fname{irred}(v)\right)\\
      &\implies\left.
      \left(\exists v'.\ t' \prog{*} v' \land \left(n - m,\ v,\ v'\right) \in \calV{}{S}\right)\right\}
    \end{aligned}
    $
    \item 
    $\calG{}{\ep} \defeq \{(n,\ \ep,\ \ep)\}$\\
    $\begin{aligned}
      \calG{}{\Gamma \catsp (x |-> S)} \defeq \{(n,\ \phi_1 \catsp (x|->v),\ \phi_2 \catsp (x|->v'))
      ~\mid~
      &(n,\ \phi_1,\ \phi_2) \in \calG{}{\Gamma} \\
      &\land (n,\ v,\ v') \in \calV{}{S}\}
    \end{aligned}$
  \end{itemize}
\end{definition}

\begin{definition}[Approximation Between Terms \textbf{under Strategy}]
  $$
  \Gamma |- t \lesssim^{\fname{str}} t': S
  \ \defeq \ 
  \forall n \geq 0.\ \forall
  \left(n,\,\phi,\,\phi'\right) \in \calG{}{\Gamma}.\ (n,\ [\phi]t,\ [\phi']t') \in \calE{}{S}
  $$
\end{definition}

\begin{definition}[Typing Rules for Terms and Context]
  The typing rules for terms $\Xi |- t: I$ are entirely conventional.
  
  Contexts, represented by the meta-variable $C$, are
  essentially incomplete terms with a hole.
  Their typing judgement $\Xi |- C: (\Xi' |- I) -> I'$
  reads
  ``under the typing context $\Xi$,
  $C$ has a hole that expects a term of type $I$ under $\Xi'$,
  and when that hole is filled with a term $t$ such that $\Xi' |- t: I$,
  the resulting complete term $C[t]$ has type $I'$''.
  Context typing rules are also entirely conventional and can be easily adapted
  from previous work \cite{amal-step-index}.
  They are given
  \tr{
  in appendix (\Def{syntaxForCtx}, \Def{termtyping}, \Def{ctxtyping})
  }{in \thetr}
  due to space limitations.
\end{definition}

\begin{definition}[Relational Interpretations of Types $I$ and Typing Contexts $\Xi$]
  We also define $\mathcal{V}'$, $\mathcal{E}'$ as the
  relational interpretation of \textit{types} in terms of
  values and computations, respectively. Similarly, 
  $\mathcal{G}'$ is defined for \textit{typing} contexts
  in terms of mappings from variables to values.
  
  Note that since types $I$ are a proper subset of $S$,
  these definitions are almost identical to the relational interpretations of
  strategies $S$ and strategy contexts $\Gamma$.
  Compared to \Def{relinterpretationstrat}, 
  all strategies $S$ are replaced by some $I$
  and the definitions related to
  constructor fusion strategies $\left\{\overline{
    c\ \langle\overline{x |-> S}\rangle -> l
  }\right\}$ are removed.
  Again, the
  full definitions are in
  \tr{appendix (\Def{relinterpretationypes})}{\thet}.

\end{definition}

\begin{definition}[Approximation Between Typed Terms]
  \label{def:approx-typed-term}
  $$
  \Xi |- t \lesssim^{\fname{type}} t': I
  \ \defeq \ 
  \forall n \geq 0.\ \forall
  \left(n,\ \phi,\ \phi'\right) \in \calGeq{}{\Xi}.\ 
  (n,\ [\phi]t,\ [\phi']t') \in \calEeq{}{I}
  $$
\end{definition}

\begin{definition}[Contextual Approximation]\label{def:ctxapprox}
  We use the usual definition of contextual approximation for possibly
  non-terminating programs, which says that $t_2$ contextually approximates
  $t_1$ if for all contexts $C$ that respect the types of $t_1$ and $t_2$,
  $C[t_2]$ terminates as long as $C[t_1]$ terminates.
  
  The intuition behind this
  is that two programs are considered equal if there is no way of telling them apart
  --- indeed, if there was a way of telling them apart,
  that could be used in a context that chooses to halt or loop forever as a result.
  Therefore, termination preservation in \textit{arbitrary} contexts implies that
  the results are preserved as well.
  \begin{align*}
    \Xi |- t_1 \lesssim^{\fname{ctx}} t_2: I
    \ \defeq \ 
    \forall C, I'.\ \ep |- C : (\Xi |- I) -> I' \land C[t_1]\Downarrow \implies C[t_2]\Downarrow
  \end{align*}

  We also define what
  it means for two typed \textit{contexts} to be logically related:
  \begin{align*}
    \Xi' |- C \lesssim C': (\Xi |- I) -> I'
    \ \defeq \ 
    \forall t, t'.\ \Xi |- t \lesssim^{\textcolor{black}{\fname{type}}} t': I
    \implies
    \Xi' |- C[t] \lesssim^{\fname{type}} C'[t']: I'
  \end{align*}
\end{definition}

The main soundness theorems are presented below.
\Thm{fundamentalterm} shows that the
relational interpretation of fusion strategies
is indeed compatible with the behavior of the fusion rules.
Then \Thm{soundnessoftransformation} concludes that
for identity fusion strategies, which
preserve the type and results when \System
uses them to transform top-level terms, 
their relational interpretations
respect the standard definition of contextual approximation,
which means that our transformation is sound.

At a high level, the proof goes as follows:
if term $t$ is transformed into $t'$ under an identity
strategy $I$, then $t$ approximates $t'$ under the strategy $I$;
if $t$ approximates $t'$ under the strategy $I$,
then \Def{def:approx-typed-term} holds between $t$ and $t'$.
And lastly, \Def{def:approx-typed-term} implies contextual approximation in \Def{def:ctxapprox}.
The detailed proof is presented in \App{sec:partial-proofs} and \App{app:proofs}.

\begin{theorem}[Fundamental Property for Terms]\label{fundamentalterm}
  $\Gamma |- t ~> t' <= S \implies \Gamma |- t \lesssim^{\fname{str}} t': S$
\end{theorem}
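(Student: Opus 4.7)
The plan is to proceed by induction on the derivation of $\Gamma |- t ~> t' <= S$, considering each fusion rule as a case. In every case we unfold the definition of $\Gamma |- t \lesssim^{\fname{str}} t' : S$: fix $n \geq 0$ and related context substitutions $(n,\phi,\phi') \in \calG{}{\Gamma}$, and then show $(n, [\phi]t, [\phi']t') \in \calE{}{S}$. This typically reduces, after simulating the evaluation of $[\phi]t$ for some $m<n$ steps to a value $v$, to exhibiting a value $v'$ reachable from $[\phi']t'$ with $(n-m,v,v') \in \calV{}{S}$. All the step-index bookkeeping is supported by the standard monotonicity/downward-closure lemmas on $\calV$ and $\calE$, which one first proves once and for all; a separate unrolling lemma $\calV{}{\mu F_S} = \calV{}{F_S(\mu F_S)}$ at each index, derived from $\calV{}{\mu F_S} = \bigcup_k \calV{}{F_S^k(\bot)}$, handles \textsc{F-RecFold} and \textsc{F-RecUnfold}.

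The structurally simple rules \textsc{F-Var}, \textsc{F-Top}, \textsc{F-Lam}, \textsc{F-App}, \textsc{F-CtorSkip}, and \textsc{F-CaseSkip} follow the standard pattern: \textsc{F-Var} is immediate from the definition of $\calG{}{\Gamma}$; \textsc{F-Top} needs a separate lemma showing that any well-typed term belongs to $\calE{}{\top}$; \textsc{F-Lam}, \textsc{F-App}, \textsc{F-CtorSkip}, and \textsc{F-CaseSkip} are discharged by the IH together with the definitions of $\calV$ on function and plain-constructor strategies, distributing substitutions in the usual way. For \textsc{F-LetRec}, the recursive self-reference is handled by an inner induction on the step index $n$: assuming the result for all smaller indices, we show the closures bound to $x$ on the two sides are related at $S_1$, and then apply the IH to $t_2$ with the extended context.

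The real content lives in the two fusion rules. For \textsc{F-Ctor}: by the IH we have each $(n-1,v_j,v_j') \in \calV{}{S_j}$ for the values $v_j, v_j'$ produced by $[\phi]t_j$ and $[\phi']t_j'$; the left side evaluates to $c\,\overline{v_j}^j$, and the right side, after stepping through the let-bindings, evaluates to $[\overline{x_j |-> v_j'}^j]l'$; the triple $(c\,\overline{v_j}^j,\ [\overline{x_j |-> v_j'}^j]l')$ is then in $\calV{}{S}$ by the very definition of $\calV$ on constructor fusion strategies, using the free-variable side condition on $l'$. For \textsc{F-Case}: the IH on the scrutinee gives, if $[\phi]t \prog{m} c_i\,\overline{u_{i,j}}^j$, that $[\phi']t' \prog{*} [\overline{x_{i,j} |-> u_{i,j}'}^j]\,l_i'$ with each $u_{i,j}, u_{i,j}'$ related at $S_{i,j}$. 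After pattern matching, the case expression takes one more step to $[\phi,\overline{x_{i,j} |-> u_{i,j}}^j]\,l_i$, which is a lambda (hence a value) because the syntax restricts branch bodies to lambdas $l$; the same holds on the right. The IH on the branch, applied to the extended related contexts, then delivers $(n - m - 1,\ [\phi, \overline{x_{i,j} |-> u_{i,j}}^j]\,l_i,\ [\overline{x_{i,j} |-> u_{i,j}'}^j]\,l_i') \in \calE{}{S}$, and since both sides are values this is already a $\calV{}{S}$ membership.

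The main obstacle I expect is the careful accounting of step indices in \textsc{F-Case} and \textsc{F-LetRec}, where one must align the number of steps consumed by the left side with what the right side can produce while keeping the strict-inequality constraints in the definitions of $\calV$ on function and fusion strategies. The requirement that branch bodies be lambdas is the key design choice that makes \textsc{F-Case} go through cleanly in call-by-value, since it ensures that after the pattern match both sides have reached values and no extra evaluation of the branch body is forced before the two programs' observations need to coincide.
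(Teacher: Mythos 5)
Your proposal is correct and follows essentially the same route as the paper: an induction on the fusion derivation whose cases are exactly the paper's per-rule \emph{compatibility lemmas}, with the downward-closure lemma and the unrolling equality $\calV{}{\mu F_S} = \calV{}{F_S(\mu F_S)}$ (proved via Appel--McAllester-style well-founded strategy constructors) doing the step-index bookkeeping, and with the restriction of branch bodies to lambdas being precisely what lets \textsc{F-Case} close at a value. The only cosmetic divergences are that the paper discharges \textsc{F-LetRec} by treating $\kw{let rec}$ as sugar for a fixpoint combinator rather than by an inner induction on the step index, and that your \textsc{F-Top} case should be read as weakening the induction hypothesis from $\calE{}{I}$ via $\calV{}{I} \subseteq \calV{}{\top}$ rather than as a freestanding claim about well-typed terms, since termination of $t'$ must still be inherited from the premise.
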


\begin{theorem}[Soundness of Transformation]\label{soundnessoftransformation}
  Our transformation is sound
  (see \Def{def:ctxapprox}):
  $$\Xi |- t ~> t' <= I \implies \Xi |- t \lesssim^{\fname{ctx}} t': I$$
\end{theorem}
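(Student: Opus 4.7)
}
The plan is to chain together three implications: strategy-based logical approximation (which we already have from Theorem \ref{fundamentalterm}), type-based logical approximation (\Def{def:approx-typed-term}), and contextual approximation (\Def{def:ctxapprox}). First, I would apply Theorem \ref{fundamentalterm} to the hypothesis $\Xi |- t ~> t' <= I$ to obtain $\Xi |- t \lesssim^{\fname{str}} t': I$. Note that $\Xi$ is syntactically both a typing and a strategy context (since $I$-shaped entries are a subset of $S$-shaped entries), so the premise is well-formed as a strategy judgement.

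Second, I would establish a bridging lemma stating that whenever $I$ is an identity fusion strategy, the strategy interpretation and the type interpretation coincide: $\calV{n}{I} = \calVeq{n}{I}$ and consequently $\calE{n}{I} = \calEeq{n}{I}$ and $\calG{n}{\Xi} = \calGeq{n}{\Xi}$. This is proved by induction on $I$; in each case the two definitions are literally the same clause, since the constructor-fusion-strategy case (the only place where $\mathcal{V}$ and $\mathcal{V}'$ truly diverge) is absent from the grammar of $I$. The recursive-strategy case $\mu F_I$ unfolds identically in both definitions. This step immediately yields $\Xi |- t \lesssim^{\fname{type}} t': I$.

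Third, I would prove the standard bridge from type-based logical approximation to contextual approximation. The core auxiliary is a fundamental property for \emph{contexts}: if $\Xi' |- C: (\Xi |- I) -> I'$, then $\Xi' |- C \lesssim C: (\Xi |- I) -> I'$. This is shown by induction on the context typing derivation, with one compatibility lemma per term former (application, abstraction, constructor application, \kw{case}, and \kw{let rec}). Each compatibility lemma mirrors the corresponding clause in the proof of Theorem \ref{fundamentalterm} but restricted to identity strategies (i.e.\ types). Armed with this, contextual approximation follows mechanically: given any $\ep |- C : (\Xi |- I) -> I'$ with $C[t]\Downarrow$, the congruence gives $\ep |- C[t] \lesssim^{\fname{type}} C[t']: I'$, after which picking $n$ larger than the number of steps in which $C[t]$ terminates and instantiating the definition of $\calEeq{n}{I'}$ with the empty substitution yields $C[t']\Downarrow$.

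The main obstacle will be the fundamental property for contexts. While structurally it is a straightforward induction, each compatibility lemma requires careful step-index bookkeeping, and the recursive-type (resp.\ \kw{let rec}) case typically demands a separate induction on the step index together with the unfolding identity $\calVeq{n}{\mu F_I} = \bigcup_n \calVeq{n}{F_I^n(\bot)}$. Fortunately, these calculations are closely analogous to those already carried out for Theorem \ref{fundamentalterm} (and to \citet{amal-step-index,appel-equi-rec-type} for the purely type-theoretic fragment), so the work should reduce to checking that each case of the fusion proof specializes cleanly to the identity-strategy fragment rather than introducing any genuinely new reasoning.
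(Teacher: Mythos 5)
Your proposal is correct and follows essentially the same route as the paper's own proof: apply \Thm{fundamentalterm}, pass from $\lesssim^{\fname{str}}$ to $\lesssim^{\fname{type}}$ via the coincidence of the two interpretations on identity strategies (which the paper states tersely as ``$I$ is a proper subset of $S$'' and you rightly flag as an induction on $I$), then use reflexivity of typed contexts (\Lem{reflctx}, your ``fundamental property for contexts,'' proved by the same per-rule compatibility lemmas) and instantiate the resulting relation at a step index exceeding the termination length of $C[t]$ with empty substitutions. No gaps; the only difference is presentational.
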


\section{Inference of Fusion Strategies}
\label{sec:inference}

We now formalize the main algorithmic aspects of \System,
which consist of three main phases:
\begin{enumerate}
  \item \emph{Subtype inference}:
    Traverse the source program and
    infer a corresponding set of subtyping constraints
    between \emph{fusion strategy candidates} (\aka \emph{inferred types}).
  \item \emph{Constraint solving and propagation}:
    Check the satisfiability of the inferred constraints
    by propagating the subtyping %
    relations
    through type variable bounds and type constructors,
    making sure all inferred lower bounds of a given type variable
    are consistent with all inferred upper bounds of the same type variable.
  \item \emph{Unification}:
    Solidify the propagated constraints into a substitution $\phi$
    that can be applied to turn any inferred fusion strategy candidate
    into a proper 
    fusion strategy, used to reconstruct
    a typing derivation in the declarative type-and-elaboration system
    of \Sec{sec:formal}.
\end{enumerate}

\subsection{Syntax}

\begin{figure}
{\small
\begin{minipage}{.4\linewidth}
\begin{align*}
\ty\Pos,\tz\Pos &\ ~::=~ \ 
  \al
  ~\mid~
  \al -> \al
  ~\mid~
  c\,\langle\,\overline{\al}\,\rangle
\\
\ty\Neg,\tz\Neg &\ ~::=~ \ 
  \al
  ~\mid~
  \al -> \al
  ~\mid~
  \left\{\,\overline{
    c\,\langle\,\overline{x |-> \al}\,\rangle -> l
  }\,\right\}: \al
\\
\ty%
  &\ ~::=~ \ \ty\Pos ~\mid~ \ty\Neg
\\
\phi &\ ~::=~ \ 
  [\,\overline{\al|->S}\,]
\end{align*}
\end{minipage}
\qquad
\begin{minipage}{.4\linewidth}
\begin{align*}
&\textit{Contexts:}
\\
\Gamma &\ ~::=~ \ 
  \ep
  ~\mid~
  \Gamma \cdot (x|->\al)
\\
\SCtx,\SDtx &\ ~::=~ \ 
  \ep
  ~\mid~
  \ty\Pos =<\, \ty\Neg
\\
\CCtx,\CDtx &\ ~::=~ \ 
  \ep
  ~\mid~
  \ty\Pos =<\, \al
  ~\mid~
  \al =<\, \ty\Neg
\end{align*}
\end{minipage}
}
\caption{Syntax of the fusion strategy inference system.}
\label{fig:syntax-inference}
\end{figure}

The syntax of %
fusion strategy inference %
is given in \Fig{fig:syntax-inference}.
It resembles that of
the declarative type-and-elaboration system of \Sec{sec:formal}
except that it includes type variables $\al$, is \emph{polarized} into
distinct \emph{positive} and \emph{negative} types,
and is slightly more refined/specific
to accommodate the algorithmic nature of the inference process.
We also define 
several context forms.
$\SCtx$ denotes a set of arbitrary constraints
and $\CCtx$ denotes a set of bounds on type variables.
We use $\tv{\ty}$ to represent the statement that
$\ty$ is a type variable and $\ntv{\ty}$ to represent
the statement that $\ty$ is \emph{not} a type variable.

\subsection{Algorithmic Typing Rules}

Recording static information on how
producers flow into consumers is a core part of our strategy
inference algorithm.
In the the relatively recent \emph{MLsub},
and following a long tradition
\cite{Pottier98:thesis,rehof_type-base_2001,Palsberg-popl98:from-polyvar-anal-flow-info-to-inter-union-tys,Smith00:polyvar-flow-anal-constr-tys,Faxen97:polyvar-poly-flow-anal},
\citet{dolan_polymorphism_2017}
proposed that subtyping can
serve as a powerful tool to get an approximation of the
directed flow of data in a higher-order program
(a problem known as \emph{control-flow analysis}).
\emph{Simple-sub}
\cite{Parreaux20:simple-essence-alg-subt} further distilled
the essence of MLsub 
by introducing a simplified %
algorithm.
In this paper, we adapt Simple-sub's idea
of using subtyping with bounds to track the data flows from producers to
consumers to %
infer \emph{fusion strategy candidates}.
Instead of doing ``biunification'' as in MLsub, Simple-sub addresses the
problem of type inference by collecting subtyping
constraints and solving them on the fly.
Our approach follows the same pattern except that constraint solving
is done separately.

\begin{figure*}
\mprset{sep=1.2em}
{\small\input{contents/fusion-inference-rules.tex}}
\caption{Fusion strategy candidate inference rules.}
\label{fig:fusion-inference}
\end{figure*}

\Fig{fig:fusion-inference}
presents our constraint
collection process as subtype inference rules.
During this process, each term is assigned
\emph{a type variable}
(which also serves as a unique identifier)
and subtyping
constraints are produced according to the shape of the term
and accumulated as part of a $\SCtx$ output for later constraint solving.
Again, these rules are used to accumulate constraints without solving them;
on their own, they only check for the well-scopedness of the input program:
for any term $t$,
there exists a type variable $\al$ and $\SCtx$
such that $\ep |- t : \al \conR \SCtx$
if and only if $t$ is closed.

In particular, %
\textsc{I-Ctor} generates a fresh type variable $\be$
and assign it as the type of the term, then constrains the actual
constructor type to be a subtype of it;
\textsc{I-Lam} also generates a type variable $\ga$ with a function type as its lower bound;
\textsc{I-Case} generates new type variables for identifiers introduced in patterns,
traverses the scrutinee and each branch and constrains that
the types of all branches are subtypes of
the type of representing the result of this $\kw{case}$ term; \textsc{I-LetRec} generates a fresh type variable $\al$
for the let-bound variable, traverses the right-hand-side and body with the updated
context, and constrains the type of the right-hand-side to be a subtype
of $\al$ to tie the recursive knot.
Lastly, \textsc{I-Var} and \textsc{I-App} are 
standard.

\subsection{Constraint Solving}

The constraint-solving
rules are presented in \Fig{fig:constraint-solving}.

Judgment $\CCtx |> \SCtx$
reads ``$\SCtx$ is solved under bounds context $\CCtx$''\!.
Every constraint-solving derivation ends with a single application of the leaf rule
\ruleName{C-Done} and thus contains exactly one premise of the form $\textbf{\textit{output }} \CCtx'$. This $\CCtx'$ is considered to be the output of derivation,
which we write ``$\CCtx |> \SCtx$ outputs $\CCtx'$''\!.
Constraint solving uses a %
\emph{worklist algorithm},
where the worklist is the input constraints context $\SCtx$,
which is updated by the rules
and grow more constraints as the original constraints are decomposed
and as type variable bounds are checked.
\begin{figure*}
\mprset{sep=1.2em}
{\small\input{contents/constraint-solving-rules.tex}}
\vspace{.5em}
\caption{Constraint solving and unification rules.
}
\label{fig:constraint-solving}
\end{figure*}
We write 
$\CCtx\Pos$ to denote contexts of only \emph{lower}-bounds
$\ty\Pos =<\, \al$,
$\CCtx\Neg$ to denote contexts of only \emph{upper}-bounds
$\al =<\, \ty\Neg$,
where $\ty\Pos$ and $\ty\Neg$ are not type variables, 
and $\CCtx^v$ to denote contexts of only \emph{variable-to-variable} bounds
$\al =<\, \be$.
An invariant of constraint solving is that 
in the current bounds context $\CCtx$,
for all type variable $\al$,
every lower bound $\ty\Pos =< \al$ %
that occurs on the \emph{right}
of an upper bound $\al =< \ty\Neg$ %
is assumed to already be transitively checked for consistency,
\ie an outer derivation has already placed $\ty\Pos =< \ty\Neg$ on the working list.
Conversely, if 
$\ty\Pos =< \al$ occurs on the \emph{left} of $\al =< \ty\Neg$,
these bounds still need to be checked later.

Rule 
\ruleName{C-Done} asserts that constraint solving is done when
the working list is empty
\emph{and} the current bounds context $\CCtx$
is ordered so that all upper bounds $\CCtx\Neg$
come before all lower bounds $\CCtx\Pos$,
with variable bounds $\CCtx^v$ in the middle.
\ruleName{C-VarL}
and \ruleName{C-VarR}
insert type variable constraints into %
$\CCtx$ so that they will need to commute across all existing bounds in $\CCtx$,
checking them for consistency.
\ruleName{C-Pass} commutes unrelated constraints occurring in the wrong order
while \ruleName{C-Match} ensures two opposite bounds on the same type variable
are consistent while commuting them.
\ruleName{C-Skip}
ignores constraints that have already been registered in $\CCtx$;
it is important for termination in the presence of cyclic bounds. 
\ruleName{C-Fun} and \ruleName{C-Ctor}
respectively constrain function types and constructors.
Finally, there are three ``merge'' rules: \ruleName{C-VarMrg}, \ruleName{C-FunMrg}
and \ruleName{C-CtorMrg}. They ensure that
for all $(\al =< \be) \in \CCtx$, $\al$ and $\be$ always share the
\emph{same} set of concrete upper bounds and that the types of the fields
of these upper bounds are 
all merged into the equivalent type variables.
This is needed because later
we will \emph{unify} away
all type variables
with their upper bounds
(recall that the declarative system does not have type variables),
so type variables related by subtyping must be constrained to be equivalent to each other
and upper bounds on the same type variable must be compatible.

\label{constr-solving-exhaustive}
The constraint-solving algorithm tries constraint-solving rules one by one
in the order they are presented in \Fig{fig:constraint-solving}:
a rule is considered applicable only if none of the rules before it is applicable.
This ensures that type variable bounds
from the worklist $\SCtx$ are placed in the correct order in $\CCtx$
and that by the time \ruleName{C-Done} is used,
the ``merge'' rules have been applied exhaustively.
In the absence of a matching rule for a given input of $\CCtx$ and $\SCtx$,
constraint solving fails.

\subsection{Unification}

The unification rules are also presented in \Fig{fig:constraint-solving}.
We write the identity substitution as $[]$
and the usual substitution composition operator as $\circ$.
The goal of unification is to substitute all inferred type variable placeholders
with proper declarative fusion strategies $S$.
We do this by simply unifying all the concrete upper bounds inferred for each type variable
using the $\fname{unif}$ function.
When a type variable has no inferred concrete upper bound
or does not appear in the domain of the substitution,
that means it corresponds to some data that is never consumed,
so we simply substitute it with $\top$.
We use the shorthand $\left\{\,\overline{
  c_i\,\langle\,\overline{x_{i,j} |-> \al_{i,j}}^j\,\rangle
  \;(-> l_1)
}^{i}\,\right\}\left(:\be\right)$
to denote either the identity strategy
$\left\{\,\overline{
  c_i\,\langle\,\overline{\al_{i,j}}^j\,\rangle
}^{i}\,\right\}$
or the fusion strategy
$\left\{\,\overline{
  c_i\,\langle\,\overline{x_{i,j} |-> \al_{i,j}}^j\,\rangle
  -> l_1
}^{i}\,\right\}:\be$
in the second-last equation:
if a producer has multiple upper bounds (multiple consumers),
it is given an identity strategy and is prevented from fusing
because it is impossible to import computations from \textit{both} consumers
into the same producer.
Note that
$\fname{unif}$ only has to pick one of the concrete upper
bounds of a type variable thanks to the ``merge'' constraint-solving rules
explained above, which ensure that all the field types of the concrete upper bounds
of a type variable are 
made equivalent.

Notice that the syntax of strategies that unification yields
does not exactly match that of the declarative system
presented in \Sec{sec:formal-syntax}:
the difference is that we use fusion types of the form
$\left\{\,\overline{
  c_i\,\langle\,\overline{x_{i,j} |-> S_{i,j}}^j\,\rangle
  -> l_i
}^{i}\,\right\}:S$
instead of just
$\left\{\,\overline{
  c_i\,\langle\,\overline{x_{i,j} |-> S_{i,j}}^j\,\rangle
  -> l_i
}^{i}\,\right\}$.
This syntax corresponds to an \emph{intermediate step} on the way to
producing strategies in the declarative system of  \Sec{sec:formal}.
We call this intermediate step the \emph{pseudo-algorithmic} fusion rules,
whose syntax and rules are presented in \App{app:pseudo-algorithmic},
which also explains the subtle but inessential difference between the two syntaxes.

For simplicity of the presentation,
in this paper we assume that constructors are always produced and matched consistently;
that is, a given data type always uses the same constructors,
and this exact same set of constructors is always matched in the corresponding
pattern matching expressions.
While this assumption greatly simplifies the definition of $\fname{unif}$,
and while it generally holds true of ML-family languages like OCaml and Haskell,
it is \emph{not} necessary in practice,
and our actual implementation can handle cases where arbitrary
mixes of constructors are produced and matched.

\begin{example}
Consider the unification of cyclic context
$\,\CCtx = (\al =< \ty_1 -> \be)\catsp(\be =< \ty_2 -> \al)$.
The unification algorithm proceeds as follows:
\smallskip
\begin{mathpar}
\inferrule{
  \inferrule{
    \ep \conR []
    \\
    S_\be = \mu(\lam\be.\ \ty_2 -> \al)
  }{
    (\be =< \ty_2 -> \al) \conR
    [\be|->S_\be] \circ [] = [\be|->S_\be]
  }
  \\
  S_\al = \mu(\lam\al.\ [\be|->S_\be](\ty_1 -> \be))
}{
  \CCtx \conR
  [\al|->S_\al] \circ [\be|->S_\be]
  = [\al|->S_\al,\,\be|->\mu(\lam\be.\ \ty_2 -> S_\al)]
}
\end{mathpar}

\smallskip
After expanding the $S_\al$ and $S_\be$ definitions,
the final inferred substitution is
$\CCtx \conR \phi =
  [
  \al|->\mu(\lam\al.\ \ty_1 -> \mu(\lam\be.\ \ty_2 -> \al))
  ,\,
  \be|->\mu(\lam\be.\ \ty_2 -> \mu(\lam\al.\ \ty_1 -> \mu(\lam\be.\ \ty_2 -> \al)))
  ]
  $.
\end{example}

Naturally, in the implementation,
we preserve the sharing of strategies $S$
so as to avoid the size blowup that would be incurred by substituting their
definitions as in $\phi$ above.

\subsection{Correctness}

The correctness of our fusion inference system hinges on the following theorem,
which uses the pseudo-algorithmic fusion rules of \App{app:pseudo-algorithmic}
(\Fig{fig:fusion-algo})
denoted by %
$~>_{\mathcal{A}}$.

\begin{theorem}
\label{thm:algorithm-correctness}
  If $\ep |- t : \al \conR \SCtx$,
  $\ep |> \SCtx$ outputs $\CCtx$
  and $\CCtx \conR \phi$
  then
  $\ep |- t ~>_{\mathcal{A}} t' <= \phi(\al)$
  for some $t'$.
\end{theorem}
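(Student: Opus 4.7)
The plan is to prove the theorem by structural induction on the inference derivation $\ep |- t : \al \conR \SCtx$, supported by two auxiliary lemmas about the constraint-solving and unification phases. First I would establish a \emph{soundness lemma for constraint solving}: if $\CCtx_0 |> \SCtx$ outputs $\CCtx$, then (i) $\CCtx_0 \subseteq \CCtx$; (ii) for every constraint $\ty\Pos =< \ty\Neg$ originating from $\SCtx$, its atomic consequences (after C-Fun and C-Ctor decomposition) appear in $\CCtx$; (iii) every pair of opposite bounds $\ty\Pos =< \al$ and $\al =< \ty\Neg$ that occur in $\CCtx$ has had its transitive consequence $\ty\Pos =< \ty\Neg$ placed back on the worklist at some point (this is what C-Pass and C-Match maintain); and (iv) the output context is \emph{merged} in the sense that the three rules C-VarMrg, C-FunMrg, and C-CtorMrg have been applied exhaustively, so that any two type variables $\al, \be$ with $(\al =< \be) \in \CCtx$ share pointwise-equivalent concrete upper bounds.

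Second, I would prove a \emph{correctness lemma for unification}: if $\CCtx \conR \phi$, then $\phi$ is a substitution from every variable occurring in $\CCtx$ to a closed fusion strategy, and for every bound $(\al =< \ty\Neg) \in \CCtx$ the strategy $\phi(\al)$ is an unfolding of (a $\mu$-closure over) $\phi(\ty\Neg)$. The central observation is that U-Subst uses $\fname{unif}$ on \emph{all} upper bounds of $\al$ before wrapping with $\mu$, and thanks to the merge invariant from the first lemma, $\fname{unif}$ can safely pick a single representative for each constructor position without loss of information. Dually, for $(\ty\Pos =< \al) \in \CCtx$, the shape of $\phi(\ty\Pos)$ (a function or a particular constructor) is present as a case in $\phi(\al)$, which is what the pseudo-algorithmic fusion rules need to fire.

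With these two lemmas in hand, the main induction is fairly mechanical. For I-Var the result is immediate from (the pseudo-algorithmic analogue of) F-Var. For I-Lam, I-App, I-LetRec the generated function-shaped constraints plus the unification lemma ensure that $\phi$ of the relevant variables are function strategies of the right shape, so the corresponding pseudo-algorithmic rules apply directly to the IH results. For I-Ctor, the constraint $c\,\langle\overline{\al_j}\rangle =< \be$ together with the merge invariant tell us whether $\phi(\be)$ is an identity strategy (if the producer has multiple distinct consumers, forced by C-CtorMrg combined with $\fname{unif}$'s fall-through case) or a proper fusion strategy, and we use F-CtorSkip or F-Ctor accordingly. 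For I-Case, the constraint $\al =< \{\ldots\}: \de$ analogously ensures $\phi(\al)$ contains all the matched constructors; we then apply F-Case or F-CaseSkip depending on the form $\phi(\al)$ takes after unification.

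The main obstacle will be handling \emph{recursion}. Because U-Subst closes up cyclic bounds under $\mu$, the inferred strategies are generally equirecursive, and the induction hypothesis delivers a rewriting under, say, $\phi(\al) = \mu F_S$, while the shape needed for the immediate rule application is $F_S(\mu F_S)$. Threading F-RecFold and F-RecUnfold through the induction, and proving that $\phi$ commutes with substitution in a way that makes recursive producer strategies line up with recursive consumer strategies, is where most of the technical work lies. A secondary subtlety is justifying that the \textbf{merge} rules actually achieve the invariant needed for $\fname{unif}$: one must argue that the constraint-solving rules are applied in the prescribed priority order long enough for all cycles through variable-to-variable bounds to saturate, so that by the time C-Done fires, every equivalence class of type variables has a consistent set of concrete upper bounds. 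Establishing this requires a careful termination/saturation argument for constraint solving, which I would handle separately before invoking the first lemma.
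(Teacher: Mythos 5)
Your overall architecture matches the paper's: it too factors the proof into (a) a soundness property of constraint solving (your merge/saturation invariant corresponds to its notions of \emph{fully merged} and \emph{bounds-consistent} output contexts), (b) a statement that the unification result $\phi$ respects every collected constraint up to $\mu$-unrolling (the paper packages this as ``$\phi$ is \emph{consistent} with $\SCtx$'', established via an auxiliary subtyping relation), and (c) an induction over the inference rules that reconstructs a pseudo-algorithmic derivation, threading \ruleName{P-RecFold}/\ruleName{P-RecUnfold} exactly as you anticipate.

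There is, however, one genuine gap, and it sits at the heart of what makes this elaboration non-local. In the \ruleName{I-Ctor} case you propose to ``use \ruleName{F-CtorSkip} or \ruleName{F-Ctor} accordingly,'' but \ruleName{P-Ctor} has a premise $\Gamma\cat\overline{(x_{k,j} \mapsto S_{k,j})}^j |- l_k ~>_{\mathcal{A}} l_k' <= S_{\fname{res}}$ requiring a rewriting derivation for the \emph{imported branch body} $l_k$. That term is not a subterm of the constructor application $c\ \overline{t_j}^j$; it originates in the consumer's $\kw{case}$ expression elsewhere in the program and reaches the producer only through the fusion strategy recorded in $\phi(\be)$. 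A structural induction on the inference derivation for $t$ therefore gives you no handle on $l_k$. The paper closes this hole with the notion of \emph{self-contained} constraint sets --- whenever a constructor-fusion upper bound $\left\{\overline{c_i\,\langle\ldots\rangle->l_i}^i\right\}:\be$ occurs among the constraints, the constraints arising from typing each $l_i$ (and the bound $\ga_i =< \be$ on its result) are also present --- together with a lemma that constraint collection always produces self-contained sets, and a reconstruction lemma that is parameterized over an ambient self-contained superset $\SCtx\catsp\SDtx$ so it can be re-invoked on $l_k$. Without this device (or an equivalent strengthening of the induction), your \ruleName{I-Ctor} case does not go through. Everything else in your plan, including the recursion and saturation concerns you flag, lines up with what the paper actually does.
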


To prove this, we need to introduce some important intermediate definitions.

\begin{definition}[Self-contained constraints]
\label{def:self-contained}
  A set of constraints $\SCtx$ is said to be \emph{self-contained} if
  for all $(\ty\Pos =< \ty\Neg) \in \SCtx$ where
  $\ty\Neg = \left\{\overline{c_i\,\langle\,\overline{x_{i,j}|->\al_{i,j}}^j\,\rangle->l_i}^i\right\}:\be$,
  we have $\overline{\overline{\left(x_{i,j}:\al_{i,j}\right)}^j |- l_i: \ga_i \conR \SDtx_i}^i$
  for some $\overline{\SDtx_i}^i$ and $\overline{\ga_i}^i$,
  and $\overline{\SDtx_i \catsp (\ga_i =< \be)}^i \subseteq \SCtx$.
\end{definition}

\begin{definition}[Equivalence under unrolling]
  We define the \emph{equivalence modulo recursive types} relation
  $(==)$
  as the smallest relation satisfying the following rules:
\mprset{sep=1.5em}
{\small
\begin{mathpar}
\inferrule{}{
  S == S
}

\inferrule{
  S(\mu S) == S'
}{
  \mu S == S'
}

\inferrule{
  S == \mu S'
}{
  S == S'(\mu S')
}
\end{mathpar}}
\end{definition}

\begin{definition}[Consistency]
  We say that $\phi$ is \emph{consistent}
  with constraint $\ty\Pos =< \ty\Neg$
  when:
    \begin{itemize}
      \item If
        $\ty\Pos = \ga$ and $\ty\Neg = \left\{\overline{c_i\,\langle\,\overline{x_{i,j}|->\al_{i,j}}^j\,\rangle->l_i}^i\right\}:\be$
        then
        $\phi(\ty\Pos) ==
          \left\{\overline{c_i\,\langle\,\overline{S_{i,j}}^j\,\rangle}^i\right\}$

        where $\overline{S_{i,j} == \phi(\al_{i,j})}^{i,j}$
        or
        $\phi(\ty\Pos) ==
          \left\{\overline{c_i\,\langle\,\overline{x_{i,j}|->S_{i,j}}^j\,\rangle->l_i}^i\right\}:S'$
        where
        
        $\overline{S_{i,j} == \phi(\al_{i,j})}^{i,j}$, $S' == \phi(\be)$.
      
      \item If $\ty\Pos = \ga$ and $\ty\Neg = \al -> \be$, then $\phi(\ty\Pos) == S_1 -> S_2$
      where $S_1 == \phi(\al)$ and $S_2 == \phi(\be)$.

      \item If $\ty\Pos = c_n\,\langle\,\overline{\al_{n,j}}^j\,\rangle$,
        then
        $\phi(\ty\Neg) == \top$
        or
        $\phi(\ty\Neg) ==
          \left\{\overline{c_i\,\langle\,\overline{S_{i,j}}^j\,\rangle}^i\right\}$
          where $\overline{S_{n,j} == \phi(\al_{n,j})}^{j}$
        or

        $\phi(\ty\Neg) ==
          \left\{\overline{c_i\,\langle\overline{x_{i,j}|->S_{i,j}}^j\rangle->l_i}^i\right\}:S'$
        where $\overline{S_{n,j} == \phi(\al_{n,j})}^j$.
        
      \item If $\ty\Pos = \al -> \be$, then $\phi(\ty\Neg) == \top$
      or $\phi(\ty\Neg) == S_1 -> S_2$ where $S_1 == \phi(\al)$ and $S_2 == \phi(\be)$.
      
      \item Otherwise,
        $\phi(\ty\Pos) == \phi(\ty\Neg)$ or $\phi(\ty\Neg) == \top$.
    \end{itemize}
\end{definition}

We also lift the notion of consistency to whole constraint contexts $\SCtx$.
\Thm{thm:algorithm-correctness} is then proved by combining the following two core lemmas,
whose proofs are sketched in \tr{\App{app:inference-proofs}}{\thet}:

\begin{lemma}
\label{lem:constr-implies-cons}
  If
  $\ep |> \SCtx$ outputs $\CCtx$
  and $\CCtx \conR \phi$
  then $\phi$ is consistent with $\SCtx$.
\end{lemma}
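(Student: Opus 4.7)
The plan is to proceed by induction on the constraint-solving derivation $\ep |> \SCtx$, but first strengthening the statement: I would prove that for every intermediate judgment $\CCtx_0 |> \SCtx_0$ occurring inside the derivation, whenever the eventual output $\CCtx$ admits a unifier $\CCtx \conR \phi$, then $\phi$ is consistent with both $\SCtx_0$ and $\CCtx_0$. The strengthening to $\CCtx_0$ is essential because the rules \ruleName{C-VarL} and \ruleName{C-VarR} move worklist constraints into the bounds context, so the original lemma statement would not survive the induction on its own.

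For each constraint-solving rule, the inductive step reduces to checking the head constraint of the worklist together with any newly introduced premises. The cases for \ruleName{C-Fun} and \ruleName{C-Ctor} follow by inspection: the decomposition into subterm constraints mirrors the consistency clauses for function and constructor shapes almost verbatim. \ruleName{C-Skip} is immediate from the induction hypothesis; it is needed for termination on cyclic bounds but does not affect the consistency argument. \ruleName{C-VarL} and \ruleName{C-VarR} simply relocate a bound into $\CCtx$, so consistency there will be witnessed on the unification side. The substantive work is for the commutation rules \ruleName{C-Pass} and \ruleName{C-Match}, which together enforce the invariant that every chain $\ty\Pos =< \al =< \ty\Neg$ inside the final $\CCtx$ has its transitive closure $\ty\Pos =< \ty\Neg$ decomposed somewhere by \ruleName{C-Fun} or \ruleName{C-Ctor}. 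This is what will ultimately discharge consistency for lower-bound constraints.

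For the unification side, I would induct on $\CCtx \conR \phi$. The rule \ruleName{U-Subst} maps $\al$ to $\mu(\lam\al.\,\phi_0(\fname{unif}(\overline{\tz\Neg})))$, where $\overline{\tz\Neg}$ enumerates the concrete upper bounds of $\al$ in $\CCtx$ and $\phi_0$ is the composed substitution for the remaining variables. The merge rules \ruleName{C-VarMrg}, \ruleName{C-FunMrg}, and \ruleName{C-CtorMrg} guarantee that these concrete upper bounds are pairwise pointwise compatible, so $\fname{unif}$ is well-defined and any representative it selects has the same shape modulo the equivalence $(==)$. Consistency for each recorded upper bound $\al =< \ty\Neg$ then follows directly: a single unrolling of $\phi(\al)$ exposes $\phi(\ty\Neg)$ up to $(==)$. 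Consistency for each lower bound $\ty\Pos =< \al$ follows by combining the invariant from the first part with the structural case of consistency for the decomposed $\ty\Pos =< \ty\Neg$ constraint.

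The main obstacle will be carefully tracking how the $\mu$-binders introduced by \ruleName{U-Subst} interact with $(==)$. Because each type variable's substitute wraps a $\mu$ binding references to other variables that are themselves substituted by $\mu$-expressions, a naive equivalence check will not go through; one needs a coinductive-style argument leveraging the two unfolding rules of $(==)$ to show that unrolling $\phi(\al)$ really does reveal the expected shape. A secondary subtlety is handling the ``otherwise'' clauses of consistency, in particular the case where a variable has no concrete upper bound in $\CCtx$: here $\fname{unif}() = \top$, and the proof relies on the fact that $\top$ satisfies the ``otherwise'' branch of every consistency clause.
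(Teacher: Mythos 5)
Your proposal is correct in outline but takes a genuinely different route from the paper. The paper factors the argument through an auxiliary declarative subtyping relation $\SCtx\mid\SDtx |- \ty\Pos =< \ty\Neg$ and proves three separate lemmas: (i) the output bounds context entails every collected constraint under that relation (\Lem{lem:constr-implies-subtype}); (ii) the output context satisfies a structural \emph{bounds-consistency} invariant, formalized via the auxiliary notions of \emph{fully merged} and \emph{processed} bounds (\Lem{lem:constr-implies-bounds-cons}); and (iii) any unifier of a bounds-consistent context is consistent with every constraint that the context entails, proved by cases on the subtyping derivation (\Lem{lem:bounds-cons-implies-cons}). You instead propagate the semantic property ``consistent with $\phi$'' directly through a strengthened induction on the solving derivation and analyze the final context by induction on $\CCtx \conR \phi$. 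Both decompositions should go through; the paper's buys modularity --- the subtyping relation is the single interface between solving and unification, so the unifier is inspected only once --- while yours avoids introducing the auxiliary relation at the cost of reasoning about $\phi$ in every rule case. Two places where your sketch understates the work: first, the claim that \ruleName{C-Pass}/\ruleName{C-Match} force every lower/upper bound pair on a variable (including pairs related only transitively through variable-to-variable bounds) to have its composite constraint decomposed is exactly where the paper expends most of its effort (\Lem{lem:constr-finally-solves}, \Lem{lem:constr-add-bounds}, and a four-way case analysis on whether each bound was already present), and you would need equivalents; second, your appeal to the merge rules must be sharpened to the fact that variables related by $=<$ receive $(==)$-equal substitutes under $\fname{unif}$, because the bare ``otherwise'' clause of consistency (which only yields $\phi(\ty\Pos)==\phi(\ty\Neg)$ \emph{or} $\phi(\ty\Neg)==\top$) is too weak to discharge the \ruleName{C-Fun} and \ruleName{C-Ctor} cases.
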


\begin{lemma}
\label{lem:cons-implies-wt}
  If
  $\Gamma |- t : \al \conR \SCtx$
  and $\phi$ is consistent with self-contained $\SCtx\catsp\SDtx$
  then
  there exists a $t'$ such that
  $\phi(\Gamma) |- t ~>_{\mathcal{A}} t' <= \phi(\al)$.
\end{lemma}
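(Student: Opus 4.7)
The plan is to proceed by structural induction on the inference derivation $\Gamma |- t : \al \conR \SCtx$. At each case, we apply $\phi$ to the inference context to obtain $\phi(\Gamma)$, recursively invoke the induction hypothesis on sub-derivations (which inherit the consistency and self-containment hypotheses from $\SCtx \catsp \SDtx$), and then use \emph{consistency} of $\phi$ with the constraint introduced by the inference rule to determine the exact pseudo-algorithmic fusion rule to apply. For \ruleName{I-Var}, the pseudo-algorithmic analogue of \ruleName{F-Var} applies immediately since $(x|->\phi(\al)) \in \phi(\Gamma)$. For \ruleName{I-Lam}, \ruleName{I-App}, and \ruleName{I-LetRec}, the generated subtyping constraints ($\al_1 -> \al_2 =< \ga$, $\al_1 =< \al_2 -> \be$, and $\be_1 =< \al$ respectively) translate via consistency into the function-shaped premises of the corresponding pseudo-algorithmic rules, using the IH on subterms in the appropriately extended contexts.

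The substantive cases are \ruleName{I-Ctor} and \ruleName{I-Case}, where consistency dictates whether fusion actually occurs. For \ruleName{I-Ctor}, consistency with $c\,\langle\,\overline{\al_j}^j\,\rangle =< \be$ forces $\phi(\be)$ to be equivalent, modulo unrolling, to $\top$, to an identity constructor strategy $\{\,\overline{c_i\,\langle\,\overline{S_{i,j}}^j\,\rangle}^i\,\}$, or to a fusion strategy $\{\,\overline{c_i\,\langle\,\overline{x_{i,j}|->S_{i,j}}^j\,\rangle->l_i}^i\,\}:S'$, and we select the analogue of \ruleName{F-Top}, \ruleName{F-CtorSkip}, or \ruleName{F-Ctor} accordingly; in each case the IH on the field subterms at strategies $\phi(\al_j)$ supplies the remaining premises. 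Dually, for \ruleName{I-Case}, consistency with $\al =< (\{\ldots\}:\de)$ determines whether the scrutinee carries a fusion strategy (apply \ruleName{F-Case}, discarding each branch into the producer) or not (apply \ruleName{F-CaseSkip}, recursing through the branches). The \emph{self-containment} hypothesis is crucial here: when $\phi(\be)$ is a fusion strategy carrying branch bodies $l_i$, self-containment guarantees that each $l_i$ admits a sub-derivation $\overline{(x_{i,j}|->\al_{i,j})}^j |- l_i : \ga_i \conR \SDtx_i$ appearing inside $\SCtx \catsp \SDtx$, which is exactly what the IH consumes to produce the transformed branch body demanded by the pseudo-algorithmic fusion rule.

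The main obstacle is bridging the gap between the shape of $\phi(\al)$ available from consistency --- which is determined only \emph{up to} the equivalence $(==)$ modulo recursive unrolling --- and the exact strategy shape required by each pseudo-algorithmic rule. I would address this with a short auxiliary lemma showing that the judgment $\sim>_{\mathcal{A}}$ is closed under $(==)$ on the output strategy, which follows by interleaving $\ruleName{F-RecFold}$ and $\ruleName{F-RecUnfold}$ with the main derivation; although conceptually routine, the syntactic bookkeeping is delicate because fusion strategies contain both strategy sub-terms and program sub-terms that may be affected by folding. A secondary technicality is verifying that self-containment of $\SCtx \catsp \SDtx$ is \emph{inherited} by each recursive call to the IH: since inference is syntax-directed and accumulates constraints monotonically from the subterms, the sub-derivation's constraint set is a subset of $\SCtx \catsp \SDtx$ (up to reordering and the addition of fresh scaffolding), so the same consistency and self-containment continue to hold, and the induction closes. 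Termination is immediate because we recurse only on proper sub-derivations of the inference judgment.
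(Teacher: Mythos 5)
Your overall strategy coincides with the paper's proof: induction on the inference derivation, with consistency of $\phi$ dictating which pseudo-algorithmic rule applies at each node, \ruleName{P-RecFold}/\ruleName{P-RecUnfold} absorbing the fact that consistency only determines $\phi(\al)$ up to $(==)$, and the $\SDtx$ slack in the statement letting consistency and self-containment pass down to sub-derivations (whose constraint sets are subsets of $\SCtx$). Those parts are fine.

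There is, however, one genuine gap: you have located the use of self-containment in the wrong case, and as a result your plan for \ruleName{I-Ctor} is incomplete. In \ruleName{P-Case} the branch bodies $l_i$ that must be transformed are the branches of the $\kw{case}$ expression itself; \ruleName{I-Case} already carries the premises $\Gamma\catsp\overline{(x_{i,j}|->\be_{i,j})}^j |- l_i : \ga_i \conR \SDtx_i$, so the induction hypothesis applies to them directly and self-containment is not what does the work there. Where it is indispensable is \ruleName{P-Ctor}: besides the field subterms $\overline{t_j}^j$, that rule requires the premise $\Gamma\cat\overline{(x_{k,j}|->S_{k,j})}^j |- l_k ~>_{\mathcal{A}} l_k' <= S_{\fname{res}}$, which transforms the branch body $l_k$ \emph{imported from the consumer} through the strategy. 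That $l_k$ is not a subterm of the constructor application $c\ \overline{t_j}^j$, so the structural induction hypothesis gives you nothing for it; your sentence that ``the IH on the field subterms at strategies $\phi(\al_j)$ supplies the remaining premises'' silently drops this obligation. Self-containment of $\SCtx\catsp\SDtx$ is precisely the hypothesis that rescues this case: it guarantees that $l_k$ admits an inference derivation whose constraints are already contained in $\SCtx\catsp\SDtx$, hence $\phi$ is consistent with them, and the lemma can be invoked on that derivation to produce the required $l_k'$. Apart from this misplacement, the rest of your plan matches the paper or reasonably elaborates what it leaves implicit.
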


We then show that
if $\ep |- t ~>_{\mathcal{A}} t' <= \phi(\al)$ then $\ep |- t ~> t' <= S$ for some $S$
in the original declarative system of \Sec{sec:formal}.
The idea for obtaining this mapping is presented in
\App{sec:pseudo-algo-to-prescient}.

\subsection{Other Implementation Considerations}

We conclude this section with %
some considerations
that are important for practical implementations.

\mypara{Preprocessing for Well-scopedness and Thunking}
\label{sub-sec:thunking-free-vars}
We have been silent on a particular subtlety of the transformation process:
notice that in \ruleName{F-Ctor} and \ruleName{F-Case},
there are \emph{free-variable} side conditions.
These ensure that the transformation process remains well-scoped,
as explained in \Sec{sec:computations-fvs}.
Yet, the algorithm described here does not check for well-scopedness.
Indeed, we assume that a \emph{pre-processing}
pass is performed on the program to ensure that \emph{all} 
@case@ branches are \emph{thunks}
(as in \Sec{sec:preserve-cbv})
that are \emph{closed under variables bound in their matching patterns}
(as in \Sec{sec:computations-fvs}).
This is simply done by making
every @case@ expression return functions of the free variables
of its branches and
immediately applying
the result of the @case@ with the corresponding free variables as arguments.
The formal transformation rule is presented in \Fig{fig:thunking} of \App{sec:scope-extr}.
This transformation is easy to revert after the fusion transformation
in case it was not useful (\ie in case that pattern matching expression was not fused).
In the actual implementation of \System, though,
we only do the thunking transformation on the fly as the need for it arises
during fusion.

\mypara{Duplication to Resolve Conflict}
\label{sub-sec:duplication-to-resolve-conflict}
Because each term in the source program is associated with only one
unique identifier (in the form of a type variable)
to which a single fusion strategy is attached,
the same producer may be matched
with multiple consumers across different call sites,
resulting in spurious clashes (lack of polymorphism).
Therefore, we implemented a simple process that duplicates
top-level functions 
for each of their external call sites.
For (mutually) recursive top-level functions, this process
stops as soon as a recursive call is reached to tie the knot.
This is essentially an ad-hoc implementation of %
ML-style \emph{let polymorphism} for top-level binders.
As we will see later in the %
experimental evaluation section, this rather simple and naive
expansion does not cause serious problems
for the majority of the programs we tested.
In the future, we want to investigate %
ways of duplicating
and also unrolling definitions %
only when needed by fusion.
\label{ref:better-def-dup}

\section{Experimental Evaluation}
\label{sec:bench}

We use the OCaml compiler as our backend
since: (1) it uses call-by-value semantics, which is our focus;
and (2) we can make use of its
powerful type system %
with support for recursive types
and polymorphic variants \cite{garrigue1998programming}, which are useful since
our transformation may produce programs that require
recursive types 
and may require changing the types of constructor fields.

We adapted to OCaml the standard \emph{nofib} benchmark suite \cite{Partain93:nofib},
which was originally written for Haskell.
The main reason for choosing this over existing benchmark suites for OCaml,
such as \emph{sandmark},\!\footnote{\url{https://github.com/ocaml-bench/sandmark}}
is that the latter tend to use
imperative or object-oriented features, 
while we only support purely functional programs.
There are three subsets of benchmark programs in
\emph{nofib}, namely: ``imaginary'' for toy example programs,
``spectral'' for more realistic ones,
and ``real'' for actual applications.
We tested \System on the ``spectral'' subsets.
8 benchmarks could not be ported faithfully due to their
heavy reliance on laziness or extensive use of features
like arrays, so at this point, we have not yet evaluated \System on them.
Our benchmarking %
workflow consisted of three steps.
First, for each benchmarked Haskell program, we performed
some manual adaptions, making sure the gist of the program was faithfully %
translated to call-by-value semantics~---~we notably made sure to avoid
producing unnecessary data structures, using opt-in laziness to reflect
the original program's intention.
Second, we automatically converted the ASTs of those Haskell programs to our core
language and performed deforestation.
Finally, we pretty-printed
the optimized core language ASTs into OCaml programs
and used the \textit{core\_bench}\footnote{\url{https://github.com/janestreet/core_bench}} benchmarking library for OCaml to run the code repeatedly
and collect statistical performance data.

\begin{figure}
  \Description{run time} %
  \includegraphics[width=\textwidth]{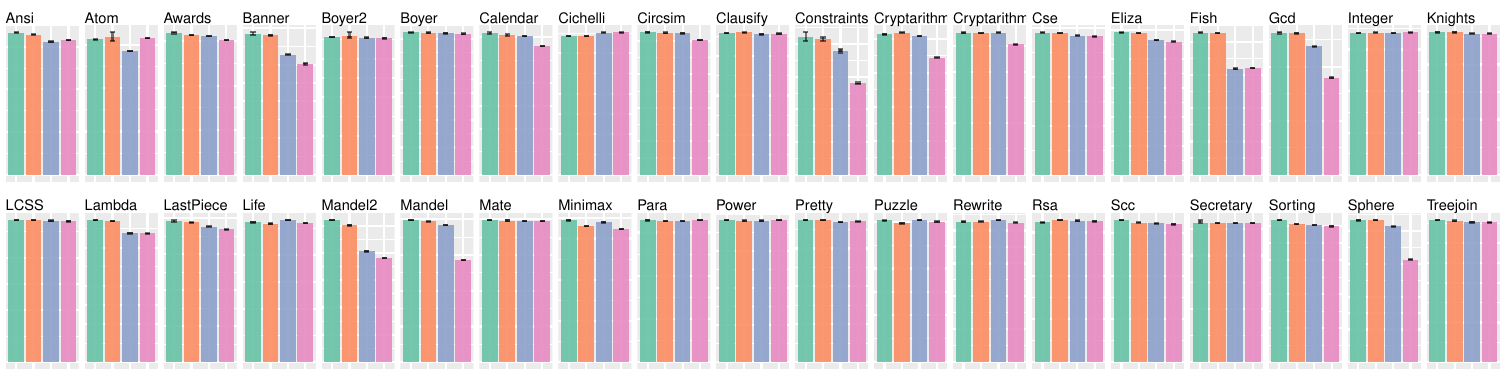}
  \centering
  \caption{Execution times of programs compiled by the 
    \emph{ocamlopt} compiler with flags \emph{-O3} and \emph{-rectypes}.
    The first (green) bar represents the original program;
    the second (orange) bar is the program after only definition duplication;
    the third (blue) bar is the program immediately after producer/consumer rewriting;
    the fourth (pink) bar is the program after producer/consumer rewriting \emph{and}
    floating out of functions.
  }
  \label{fig:runtime}
\end{figure}

\begin{figure}
  \Description{} %
  \includegraphics[width=\textwidth]{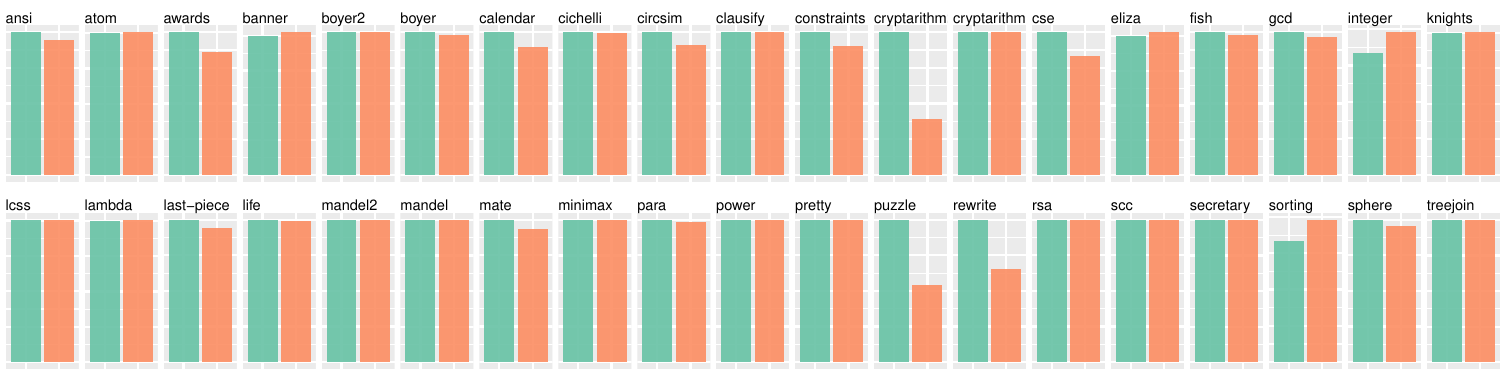}
  \centering
  \caption{Execution times of programs compiled by \emph{GHC}. The first (green) bar represents the
  program compiled 
  by the GHC without fusion;
  the second (orange) bar represents the program compiled by GHC with fusion.}
  \label{fig:hs-time}
\end{figure}

\begin{figure}
  \Description{code size} %
  \includegraphics[width=\textwidth]{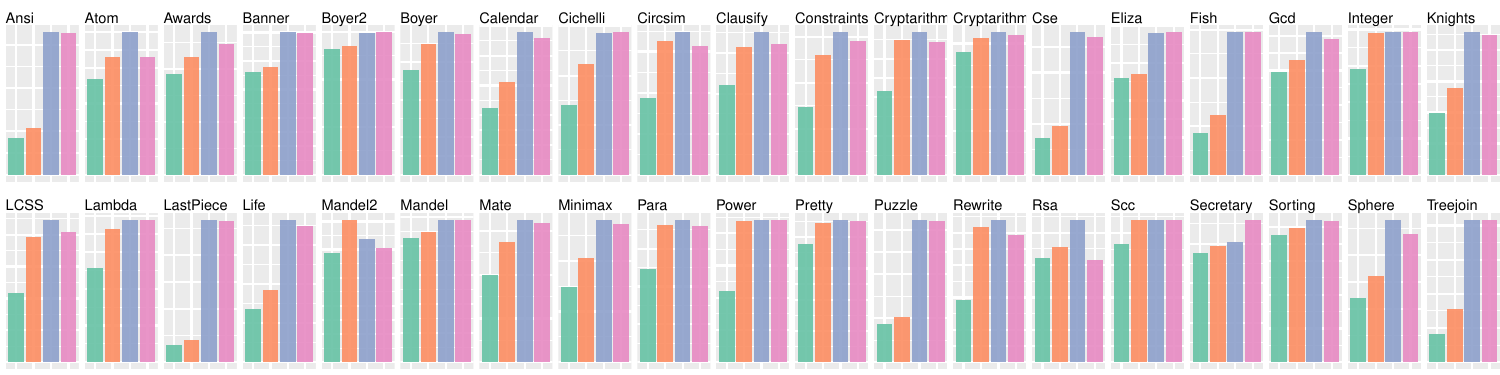}
  \centering
  \caption{Binary code sizes corresponding to the programs benchmarked in \Fig{fig:runtime}.}
  \label{fig:codesize}
\end{figure}

\subsection{Manual Adaptation of Haskell Programs}

Inevitably, some adaptions needed to be made manually to the original
Haskell programs %
to make them run in OCaml.
Our adaptations did \textit{not} aim to fully recover {every}
implementation detail of the original benchmarked programs,
but rather to %
faithfully transcribe the core logic of the program %
so as to reflect the complexity of programs users may write in
real-world applications.
Importantly, our manual
adaptions were done \emph{without deforestation
in mind},
to prevent any bias or unfairness.

The main adaptions we made
are the following:

\begin{itemize}[topsep=3pt]
  \item Appropriate calls to @lazy@ and @force@ %
  from
  OCaml's built-in @Lazy@ module
  were added to the programs.
  We carefully inspected the Haskell source programs
  to determine where to add these annotations,
  which was mostly in places where infinite large or lazy
  data structures were defined 
  but \textit{intentionally}
  consumed lazily and sometimes 
  only partially.
  As an example, a common pattern 
  in Haskell is @takeWhile@ @f@ @ls@, where @ls@
  may be a very large or infinite list, and @f@ is a function
  that determines properties such as when the list converges.
  If @ls@ was large but finite,
  the program would still terminate without laziness annotations
  (and deforestation would %
  yield huge speedups),
  but omitting these annotations would go against
  the intent of the original program,
  making such benchmark result 
  inaccurate.
  Moreover, simple patterns like @take@ @n@ @[1..]@,
  which are also common in Haskell, were adapted to
  patterns like @[1..n]@ to prevent unnecessary construction of lazy
  computations in OCaml. The reason is similar:
    a normal OCaml programmer would not write
    such patterns in real-world scenarios, so leaving them
    in would create spurious optimization opportunities that would bias the results.
  
  \item 
  Common built-in functions like @map@ or
  @filter@ were manually implemented in direct recursive
  style without the @build@/@foldr@ combinators, which Haskell
  uses to enable its shortcut fusion.

  \item Methods defined in type class instances and
  user-defined operators were converted manually to top-level
  function definitions to facilitate 
  conversion.

  \item Some original benchmarked programs consisted of several Haskell
  source files. These were combined manually since
  \System currently only supports single-file compilation.
  We expect our approach to easily
  extend to multiple source files,
  and even to separate compilation
  since any part of the programs can be marked as unknown
  using the $\top$ strategy.

  \item We generally adopted the inputs that came with most tests in the nofib benchmarks.
  However, some of the input data just specified the number of iterations @main@ should repeat
  to run the same code -- those inputs were omitted since we rely on
  \textit{core\_bench} to do the repetition.
\end{itemize}

\subsection{Automatic Conversion From Haskell Programs}
\label{subsec:conversion-from-hs}

After manual adaption, we automatically convert the Haskell
program ASTs to our internal core language.
Since that language
is rather minimal, most of the work is in
desugaring list comprehension and handling nested patterns.
Some special care is also taken for large string literal values.

List comprehensions are desugared using the efficient method
introduced in the textbook by \citet{listcomp}, which does not use
@flatMap@ or @append@ and only performs the minimum number of
list constructor operations. Although there would be more
deforestation optimization opportunities if @flatMap@ or
@append@ were used, we believe that using the efficient method
makes our benchmark results closer to what can be
expected in real use cases, since the efficient desugaring is
more likely to be adopted in real compiler implementations.

Nested patterns are handled following the same textbook \cite{listcomp}.
No further aggressive
optimizations are done for handling pattern matchings so
there may be inefficiencies like scrutinees being examined
more than once. This is less ideal in terms of approximating
real use cases of our algorithm, but we argue that this will
not substantially invalidate our experiment results because this issue
will not create spurious optimization opportunities
--- instead, it is more likely to cause
difficulties for \System since a scrutinee
being examined twice always means a potential strategy clash
and a loss of fusion opportunities.
For example, the definition
@f Nothing Nothing = 1; f (Just x) (Just y) = x@ will
be transformed to a definition @f a b = ...@, where @b@ is examined
by two pattern-matching terms, one in the branch where @a@ is @Nothing@,
and another in the other branch.

Some benchmarked programs contain large string
literals that are either read from a file at
runtime or hard-coded in the source code. Since
strings are lists of characters, large constants may cause
problems in both compilation time and output code size if
fusion is performed on them. We currently
mark these as special large string constants to avoid fusion
being performed on them.

\subsection{Analysis of Transformation Performance}

We expect \System to have reasonable running times
and to scale to practical use cases.
Our as-of-yet unoptimized implementation
takes less than 2 minutes %
to deforest our 38 benchmarked programs totalling 5000 lines of code.
This notably includes the time taken for parsing,
definition duplication, type inference, constraint solving, unification and elaboration.
We find this encouraging
because it is much better than previous general deforestation techniques
and because there are still many opportunities for optimizing our %
implementation.
\System's data flow information gathering and constraint solving algorithm
are also very simple at their core,
and quite similar to context-insensitive control-flow analysis systems (0CFA)
that have been studied since the 1990s and deployed in many real-world products \cite{midtgaard-cfa}.
In fact, languages like OCaml and Haskell use much more powerful type inference and
constraint-solving approaches that %
are still quite practical for real programs,
so it is reasonable to expect that \System will be as well.

\subsection{Analysis of Results}
\label{sub-sec:analysis-of-res}

\Fig{fig:runtime} shows the execution times of the
benchmarks in a group of bar plots with 95\% confidence
interval error bars. The y-axis stands for the execution
time and the bars 
represent the results described in the figure's caption.
Note that the actual execution time per run
varies from nano-seconds to several seconds, so we omit the
units and actual values of the y-axes here and only show the
results on a relative scale. Detailed tables containing
the exact numbers can be found in \tr{the \App{app:exp-res}}{\thetr}.

\System leads to a number of performance
improvements in the benchmarked programs.
We found that while many
fusion opportunities come from standard list
operations like @map@ and @[a..b]@, like in
@Calendar@, the fusion of
custom-defined recursive data structures (that cannot be fused by short-cut fusion)
like @MandTree@, defined in @Mandel2@, also
contributes greatly to the improvements.
More explanations of the sources of improvements
in other test cases are given in \App{app:benchmark-explanations}.

\Fig{fig:codesize} shows the code size results.
As in \Fig{fig:runtime}, the values of the y-axis are omitted,
using a relative scale.
Overall, 68.4\% of the optimized programs have a code size that is less
than 2x of the sizes of their original programs.
On average (using the geometric mean to average ratios),
the code size grows by only about 1.79x.
The main reasons for the growth 
are
  (1) the duplication of top-level definition; and
  (2) the duplication of computations from a single consumer to multiple producers.
Some programs grow considerably in size without much improvement in the runtime,
which is caused by unproductive duplication of top-level definition.
These sources of duplication could be mitigated:
unproductive duplication of top-level definitions could
be %
avoided by pre-analyzing the detected fusion opportunities
and only duplicating as needed (as suggested in \Sec{ref:better-def-dup})
and we could insert new definitions to share the code of those branches
that are duplicated into multiple consumers.
The latter should be very straightforward to implement,
but we have not gotten around to it yet.
Moreover, by combining the
results from both figures, we can see from cases like @Mandel2@ or @Gcd@
that it is not necessary for the code size to grow 
significantly to achieve %
good speedups.

Our results contain some slightly pathological or suboptimal cases.
The binary sizes of some benchmarks like @LastPiece@, @Treejoin@, @Ansi@, @Cse@, @Puzzle@ and
@Fish@ more than triple after \System's transformation.
Although the size growth comes with some runtime improvements,
this may not always be a good tradeoff.
Note that while large \textit{string} literals
are handled specially to avoid the problems mentioned in \Sec{subsec:conversion-from-hs},
some large \textit{list} literals are still fused, leading to long chains of let bindings
duplicating matching branch bodies in the resulting programs.
This could be alleviated by the use of finer heuristics
that analyze and disable the fusion of large literals uniformly.

\section{Related Work}

We now discuss related work and 
compare \System with other fusion methods.

\subsection{Deforestation to Distillation}

\mypara{Deforestation and Supercompilation}
\emph{Deforestation} was proposed by \citet{wadler-tcs90:deforestation}
to refer to the optimization aiming at eliminating the allocation of intermediate data structures.
\citeauthor{wadler-tcs90:deforestation}'s original deforestation approach follows the \emph{unfold/fold} framework by \cite{burstall-transformation},
where \emph{unfolding} inlines functions and partially evaluates programs
to eliminate intermediate data structures and uncover more deforestation opportunities,
and \emph{folding} generates new function definitions to tie recursive knots.
The original deforestation method only works on a subset of \emph{first-order} programs,
which severely limits the practical use of this technique.
There were some extensions of it
\cite{hamilton-deforestation-first-order, chin-deforestation, Chin-1994-safe-deforestation, marlow-higher-order-deforestation},
and more later works were developed under the \textit{supercompilation} framework \cite{turchin-supercompiler}.

\citet{sorensen-unifying} revealed the connection between deforestation and supercompilation.
Supercompilation is more powerful and applies to general higher-order programs because
more information is propagated during its \emph{unfolding} process,
which produces a ``process tree''
capturing all information on the program execution states;
and in addition to \emph{folding}, they typically incorporate powerful termination-checking processes
based on the \emph{homeomorphic embedding testing}
among all traversed expressions \cite{lloyd-homeomorphic,leuschel-homeo}.
Many variations of supercompilers were proposed
\cite{Srensen1996APS, sorensen-perfect-supercomp, mitchell-superohaskell,
bolingbroke-supercoompilation-benchmark, Bolingbroke-hs10:supercomp-by-eval,
Jonsson09:pos-supercomp-ho-cbv}.
Most of them work for non-strict programs, and
while all these supercompilers show significant performance improvements,
they share the problem of long compilation time and large code size growth, 
which is concerning for their scalability and inherent complexity.
\citet{mitchell-superohaskell} reported that some simple programs in the ``imaginary'' \emph{nofib} subset
can take up to 5 minutes to compile, and one test also crashed their implementation
for unknown reasons.
Also, as reported in \cite{bolingbroke-supercoompilation-benchmark},
the compilation time may get approximately 1000 times slower in some benchmark tests and there are cases
where exponential code size growth may occur.
\citet{mitchell-superohaskell} identified that the main reason for the long compilation time
is the expensive \emph{homeomorphic embedding tests},
and there are some works on alleviating the scalability issues involving various curated heuristics
\cite{jonsson-code-explosion-of-supercomp, mitchell-rethinking}.
Their preliminary results are encouraging, but
it remains %
to be seen how supercompilers scale to real use cases.

\mypara{Distillation}
\citet{sorensen1994turchin} shows that supercompilation can only lead
to linear speedups of programs while \emph{distillation},
proposed by \citet{hamilton-distillation}, can achieve
super-linear improvements for certain programs.
Notably, the na\"ive list reversal function with time complexity $O(n^2)$
defined through list appension can be transformed into
the fast list reversal function that uses an accumulator and runs
in linear time.
While distillation can fuse more patterns, it also \emph{exacerbates} the
problems of supercompilation,
which is already very expensive, notably due to the requirement
of detecting and tying recursive knots on the fly while
expanding ``process trees'',
which requires the use of large sets of patterns to be compared
as the process progresses.
This patten-matching process is linear in the
number of patterns, so it grows linearly with the number
of traversed expressions,
and some expressions are traversed many times,
which likely makes the approach
unusable in practice on real-sized programs. Distillation
makes this much worse by having the pattern-matching process
itself defined on ``process trees'', a more complicated data
structure.

On the other hand, we believe that improving the complexity of user
programs is a non-goal and could lead to very brittle performance characteristics,
leading to unpredictable asymptotic performance changes,
depending on whether the optimizer succeeds or not, which
could make the difference between a program with acceptable
performance and a program that is
practically non-terminating.
This is somewhat
similar to the problems of SQL query engines, where it is
well-known that updates to the query engine software,
changes in the data distribution, or slight changes in the
query can make an efficient query impractically slow simply
because the optimizer stops using a plan with the right
complexity.
All in all, distillation seems 
generally impractical.

\mypara{Practicality concerns}
\label{related-implementation}
We now discuss 
the few available supercompilation implementations we could find
\!\footnote{\url{https://github.com/poitin/Higher-Order-Deforester}}
\!\footnote{\url{https://github.com/poitin/Higher-Order-Supercompiler}}
\!\footnote{\url{https://github.com/poitin/Distiller}}
and how they compare with \System.
These implementations range from higher-order deforestation
to supercompiler %
to distiller. They share
the same basic infrastructure of a small language parser and
its corresponding evaluator which can summarize the number
of reductions and allocations after each run.

In terms of \textit{optimization power}, these methods
are more powerful than \System.
As mentioned above, distillation can
produce super-linear improvements. 
Moreover, because of the \textit{unfolding} process which extensively unrolls the definitions
and statically evaluate programs,
they inherently do not have the limitation of misaligned recursive length
mentioned in \ref{sub-sec:misaligned-length}.

In terms of \textit{scalability},
the overall results we get agree with our above practicality concerns.
Among 66 small programs with an average file size
of just 16 lines of code (much smaller than the \emph{nofib} spectral programs),
3 programs make higher
order deforestation and supercompilation unable to terminate
within a reasonable time. For the distiller, this number is 7.
The ``msort'' program also crashes the distiller.
All three implementations also produce a slower output
for the ``maprev'' program,
according to the number of reductions reported by the
language evaluator. Lastly, for the distiller, the output
program of ``isort'' contains free variables that do
not exist in the input program.

Binary code size growth is harder to compare
because their language used is not compiled.
Due to the complicated unfolding/folding process involving homeomorphic embedding testing,
it is generally difficult to %
prevent code explosion when it occurs, which it does.
By contrast, code size growth in \System only comes from
one of the two cases explained in \Sec{sub-sec:analysis-of-res}.
This makes the output code more predictable, and
in practice, the above two
causes can more easily be controlled%
by fine-tuning 
the behavior of \System to strike a balance between
speedup and size growth.

We also tried the system presented as \emph{supercompilation by evaluation} by
\citet{Bolingbroke-hs10:supercomp-by-eval}
and quickly managed to find a closed program input for which it generated
an output program with free variables.
Moreover, according to the second author of this paper (from a private conversation),
they could never achieve supercompilation results where the speedup
outweighed the extreme
incurred code duplication.

\subsection{Lightweight Fusion}

\emph{Lightweight fusion} \cite{Ohori-popl07:light-fusion} also follows the \emph{unfold/fold} framework,
but it deserves special attention because 
it is much simpler and was evaluated against
a set of medium to large benchmarks.
This technique focuses on the very specific but common
pattern of two adjacent recursive function calls, of the form @f(g($\ldots$))@,
which is often a strong indication of a possible
fusion opportunity, whereby @f@ and @g@ could potentially be fused together into a new
definition @f_g@ equivalent to their composition @f@ @.@ @g@.
Lightweight fusion simply inlines the candidate
recursive functions only \textit{once}.
After this inlining, the calls to @f@ are distributed into the branches
branches of @g@ in the hope of getting the same pattern of @f(g($\ldots$))@.
If so, a new equivalent definition
@f_g@ is generated and \textit{promoted out} from the fixed
point combinator to replace all @f(g($\ldots$))@ call sites.
\citeauthor{Ohori-popl07:light-fusion} show that this lightweight approach
works well for fusing common examples;
however, their benchmark results show that the real-world performance improvements are limited.
Lightweight fusion also has severe additional limitations:
it cannot fuse mutual recursive functions or
successive calls to the same function, such as @map@ @.@ @map@.

It remains unclear how
much fusion opportunities can be discovered by this method
due to its heavy reliance on \textit{syntactic patterns} for discovering
fusion opportunities. We believe this makes it less
powerful in practice than \System, which uses
subtype inference to collect data flow information in the program.
This notably allows \System to see through helper definitions and efficiently
find matching pairs of producers and consumers.
As one concrete example, consider
the list returned from @flatten@ and consumed by @sum@ below.
Interestingly, both @flatten@ and
@append@ are producers of lists that are consumed
by @sum@ in this example.
\begin{lstlisting}
  sum (flatten list)  where  append [] ys = ys
                             append (x :: xs) ys = x :: (append xs ys)
      flatten [] = []                                          sum [] = 0
      flatten (x :: xs) = append x (flatten xs)                sum (x :: xs) = x + (sum xs)
\end{lstlisting}
Lightweight fusion can see
that after the body of @flatten@ is inlined once and the
call of @sum@ is distributed to the tail positions, we get
@sum@ @[]@ and @sum@ @(append@ @x@ @(flatten@ @xs))@ in the two
matching arm bodies of @flatten@. After this, it is clear
that another occurrence of @sum@ @(flatten@ @($\ldots$))@ will not appear
even if the definition of @sum@ is further inlined.
On the other hand, our approach can
directly discover that the @[]@ in the body of
@flatten@ and the @::@ in the body of @append@ can be
fused with @sum@.

\citeauthor{Ohori-popl07:light-fusion} also anticipate
extensions of Lightweight fusion to impure call-by-value semantics,
but the correctness theorem of their
transformation currently depends on \emph{call-by-name} semantics.
By contrast, we proved the correctness of \System
in \emph{call-by-value} semantics.

\subsection{Shortcut Fusion and Its Extensions}

\emph{Shortcut fusion} methods follow 
a different philosophy and focus on solving only
a subset of the problem by designing dedicated combinators for
commonly used data structures like lists and fusing those.
Those combinators typically capture the essence of constructing and
consuming the data structure and most commonly used
functions manipulating the data structure can
be rewritten using these combinators,
although the rewritten form may be \emph{less efficient} on its own.
The original @build/foldr@ \emph{list fusion} proposed by
\citet{shortcut-deforest} handles many common functions,
such as @map@, @filter@ and @repeat@,
but has various limitations, such as being unable to fully fuse the list appension function @(++)@ or @zip@.
Later, @augment/foldr@ \cite{Gill-96:cheap-deforestation} handles @(++)@;
@unfoldr/destroy@ \cite{Svenningsson-zipshortcut} handles @zip@ but not @filter@;
and \emph{stream fusion} due to \citet{coutts-icfp07:stream-fusion} handles all functions above,
but has difficulties with @concatMap@ and list comprehension.
The obvious shortcoming of these approaches is that they are not
\emph{fully automatic} --- they require programmers to manually rewrite
functions for specific data structures using these combinators to obtain optimization possibilities.

Previous work like warm fusion, introduced by \citet{warmfusion},
attempted to increase
the degree of automation by designing algorithmic
transformations to derive the combinator style of functions
from their original recursive definitions.
Later, \citet{Chitil-ifp99} focused mainly on the list data
structure and improved upon warm fusion to get
a more elegant method leveraging the idea of type inference \cite{lee-algoM}.
Some more complicated mechanisms were also
proposed using deeper algebraic properties of general data
constructors.
Structural hylomorphism, proposed by
\citet{hu-hylo}, and its later implementation HYLO
\cite{Onoue-hylo-haskell} for core Haskell seems to give
promising results on the three test cases reported by their
authors. Results from larger test cases are yet to be seen, and we also look
forward to more related progress in the future.

\mypara{More detailed comparison with GHC's shortcut fusion}
\label{comparison-with-ghc}
Considering that the list fusion implemented in GHC \cite{peytonjones2001playing}
appears to be the only practical fusion system used by an industry-strength compiler,
it is interesting to compare \System with it in detail, despite the latter one not being fully automatic.
Note that this is \emph{not}
a rigorous \emph{quantitative} comparison:
\System generates OCaml programs while GHC works on Haskell, so
the differences between the two evaluation strategies (call-by-need and call-by-value)
themselves may lead to additional differences in the speedup ratios.

To evaluate the speedup from GHC's list fusion,
we ran the original Haskell versions of the \emph{nofib} benchmark programs
using two versions of GHC:
the original version 9.4.8 of GHC,
and a modified version of GHC 
containing the following changes:
\begin{itemize}
  \item The list fusion rewrite rules like @foldr/build@ or @foldr/augment@ were deleted.
  \item The library functions manipulating lists that were implemented using @foldr@ and @build@
    to enable list fusion through the rewrite rules
    (which usually made them slower in the absence of actual fusion)
    were reimplemented using their obvious recursive implementations.
  \item The desugaring of list comprehension is modified to use the efficient method
    introduced in the textbook by \citet{listcomp} instead of @foldr@ and @build@.
\end{itemize}

The results are presented in \Fig{fig:hs-time}. The overall geometric mean of the ratio between
the runtime using original GHC vs modified GHC is 0.935, which is coincidentally close to \System.
However, for individual programs, \System and GHC's list fusion do not always
give a similar improvement ratio.
There are many reasons for this, including
the vast differences between their fundamental evaluation models and
the interaction between list fusion with later optimization passes performed by
GHC.
Notably, two programs run significantly slower (by an average of 17.5\%) when
compiled by unmodified GHC. As expected, this is mainly
because the @build@ and @foldr@ combinators used to implement most basic list functions
turn out not to be fused in this case.
As an example, for @sorting@, unmodified GHC generates a program that runs
about 17.7\% \emph{slower}
because @sorting@ uses the @partition@ function, which is implemented in
unmodified GHC using @foldr@ and does not get fused.
On the other hand, \System does not suffer from this efficiency penalty because
it does not generally require functions to be 
rewritten into a less efficient and non-idiomatic style for fusion to happen.

\section{Conclusion \& Future Work}
\label{sec:conclusion}

We demonstrated \System, a new general deforestation technique
based on a subtype inference analysis followed by an elaboration pass.
This architecture allows us to find producer/consumer pairs and rewrite them
by importing the relevant consumer computations into the producers
while preserving pure call-by-value semantics.
We showed that further local simplifications then allowed
to remove the creation of intermediate data structures,
resulting in significant speedups for a variety of programs
from the standard \emph{nofib} benchmark suite.
One major advantage of our approach is that it is relatively
simple to specify and implement. We formally proved its soundness
using logical relations, which only required less than 6 pages of proofs
following a standard structure.

In the future, we would like to generalize \System to the
\emph{impure} functional programming setting,
whereby programs may perform computations with side effects
like mutation and I/O.
We envision that this can be handled by threading an abstract ``state of the world''
parameter through these computations
while leaving pure computations unaffected,
so that the pure parts of the program are still fused as before.

\begin{acks}                            %
  This research was partially funded by Hong Kong Research Grant Council project number 26208821.
\end{acks}

\section*{Data-Availability Statement}
The artifact of this paper, which was granted the ``Functional'' badge, is available
on Zenodo at \url{https://zenodo.org/records/11500626} \cite{artifact-lumberhack}.
The corresponding Github
repository is available at \url{https://github.com/hkust-taco/lumberhack}.

\appendix

\section{Additional Explanations and Examples}
\label{app:advanced-examples}

We first provide a formal explanation for the \emph{thunking}
transformation, and then demonstrate more examples on
\emph{chained fusion} and fusing \emph{accumulator-passing} producers.

\subsection{The Thunking Transformation}
\label{sec:scope-extr}

As we have described in \Sec{sec:preserve-cbv}, \Sec{sec:computations-fvs}
and \Sec{sub-sec:thunking-free-vars},
we can preprocess the input program to perform the \emph{thunking} transformation
to preserve call-by-value semantics and handle the problem of scope extrusion.
We now explain it formally in \Fig{fig:thunking} below.
We denote a repetition of variables as $\overline{a}$
and posit the special case that an \emph{empty} repetition of variables
yields a \emph{unit} literal $()$.

\begin{figure*}[h]
  \mprset{sep=1.5em}
  {\small
  \begin{mathpar}
  \boxed{t => t}

  \inferrule[Thunk-Var]{
  }{
    x => x
  }

  \inferrule[Thunk-App]{
    t_1 => t_1'
    \\
    t_2 => t_2'
  }{
    t_1\, t_2 => t_1'\, t_2'
  }

  \text{\textcolor{gray}{(other rules are similar except for the cases below)}}

  \inferrule[Thunk-Case]{
    \overline{t_i => t_i'}^i
    \\
    s => s'
    \\
    \overline{a} = \bigcup_i \left(\fname{FV}(t_i')\backslash \left\{\overline{x_{i,j}}^j\right\}\right)
  }{
    \left(\kw{case}\ s\ \kw{of}\ \left(\,\overline{\,
    c_i\ \overline{x_{i,j}}^j -> t_i
    \,}^i\,\right)\right) =>
    \left(\kw{case}\ s'\ \kw{of}\ \left(\,\overline{\,
    c_i\ \overline{x_{i,j}}^j -> \lambda \overline{a}.\,t_i'
    \,}^i\,\right)\right)\,\overline{a}
  }
  \end{mathpar}}
  \caption{Thunking rules}
  \label{fig:thunking}
\end{figure*}

Notice that since the computation performed by
a matching branch can potentially be imported into an
arbitrary place where the producer exists, we need to
capture all the variables referred to by the computation
(except for those introduced by its matching pattern) to make it a
standalone ``thunk'' of computation in order to freely move it around.
And when there is no free variable, a dummy one is created to
form a thunk for preserving call-by-value semantics.
This can be achieved by a local program transformation as a
pre-process before the main deforestation procedure of \System.

This transformation ``thunk''s pattern-matching branch bodies
by wrapping them inside a series
of lambda abstractions, where the parameters are the variables that are not 
in the matching patterns, but referred to in the branch bodies.
Then, to retain the original semantics of the programs, those
lambda abstractions in the branches are applied immediately by the
same set of variables.

\subsection{Chained Fusion}

As an example of what we call
\emph{chained fusion},
consider 
`$\id{sumi}\ (\id{mapi}\ F\ (\id{mapi}\ G\ L))$'
where:
\begin{align*}
  \quad\id{sumi}\ (x::xs) &~=~ x + \id{sumi}\ xs 
  \\
  \quad\id{mapi}\ f\ (y::ys) &~=~ f\ y :: \id{mapi}\ ys
\end{align*}

Notice that when considered as a \emph{consumer}, $\id{mapi}$ is problematic,
as its body captures a parameter $f$ defined outside of the case expression of interest.
Moreover, $\id{mapi}$ is used twice with different strategies here
--- once as a pure producer and once as a producer \emph{and} consumer.
Therefore, we first apply a pass of 
definition duplication refactoring,
as explained in
\Sec{sec:preserve-struct-defs},
and then a pass of 
\emph{thunking} (for the consumer version of $\id{mapi}$,
which we choose to name $\id{mapi}_1$),
yielding:
\begin{align*}
  &\kw{let rec}\ \id{sumi}\ (x::xs) ~=~ x + \id{sumi}\ xs\ \kw{in}\ 
  \\
  &\kw{let rec}\ \id{mapi}_1\ f_1\ ys ~=~ (\kw{case}\ ys\ \kw{of}\ (y_1::ys_1)\ \to \lam f_1.\ (f_1\ y_1 :: \id{mapi}_1\ f_1\ ys_1))\ f_1\ \kw{in}\ 
  \\
  &\kw{let rec}\ \id{mapi}_2\ f_2\ (y_2::ys_2) ~=~ f_2\ y_2 :: \id{mapi}_2\ f_2\ ys_2\ \kw{in}\ 
  \\
  &\id{sumi}\ (\id{mapi}_1\ (\id{mapi}_2\ G\ L)\ F)
\end{align*}

As before, we can show that
$(\id{sumi}|->(S_1 -> \top)) |- t_{\id{sumi}} ~> t_{\id{sumi}}' <=
  S_1 -> \top
$, with the same $S_1$ and $t_{\id{sumi}}'$ as in \Sec{sec:simple-rec-example-sumi-mapsqi}.
Moreover, we can show that
$(\id{mapi}_1|->S_2->\top->S_1) |- t_{\id{mapi}_1} ~> t_{\id{mapi}_1}' <= S_2->\top->S_1
$
where
$t_{\id{mapi}_1}' =
  \lam f_1.\ \lam p_1.\ 
  p_1\ f_1
$
and
\begin{align*}
S_2 =
  \left\{\left(
    (::)\,\langle y_1 |-> \top, ys_1 |-> S_2\rangle ->
      \lam f_1.\ \kw{let } x = f_1\ y_1;\;xs = \id{mapi}_1\ f_1\ ys_1 \kw{ in } \lam\dummy.\ x+\id{sumi}\ xs
  \right)\right\}
\end{align*}
and
$(\id{mapi}_2|->\top->S_{\fname{list}}->S_2) |- t_{\id{mapi}_2} ~> t_{\id{mapi}_2}' <= \top->S_{\fname{list}}->S_2$,
where
$t_{\id{mapi}_2}' =
  \lambda\,f_2\ p_2.\ 
  \kw{case}\ p_2\ \kw{of}\ y_2::ys_2 ->
  \kw{let } y_1 = f_2\ y_2;\;ys_1 = \id{mapi}_2\ f_2\ ys_2
  \kw{ in }
  \lam f_1.\ 
  \kw{let } x = f_1\ y_1;\;xs = \id{mapi}_1\ f_1\ ys_1
  \kw{ in }
  \lam\dummy.\ 
  x+\id{sumi}\ xs
$.

So we can rewrite the original program to the fused one:
\begin{align*}
  &\kw{let rec}\ \id{sumi}\ p ~=~ p\ \dummy\ \kw{in}\ 
  \\
  &\kw{let rec}\ \id{mapi}_1\ f_1\ p_1 ~=~ p_1\ f_1\ \kw{in}\ 
  \\
  &\kw{let rec}\ \id{mapi}_2\ f_2\ (y_2::ys_2) ~=~
  \\
  &\quad
  \kw{let } y_1 = f_2\ y_2;\;
  ys_1 = \id{mapi}_2\ f_2\ ys_2
  \kw{ in }
  \\
  &\qquad
  \lam f_1.\ \kw{let } x = f_1\ y_1;\;xs = \id{mapi}_1\ f_1\ ys_1
  \kw{ in }
  \lam\dummy.\ x+\id{sumi}\ xs
  \\
  &\kw{in }
  \id{sumi}\ (\id{mapi}_1\ (\id{mapi}_2\ G\ L)\ F)
\end{align*}
which can be simplified to just:
\begin{align*}
  &\kw{let rec}\ \id{mapi}_2\ f_2\ (y::ys)\ f_1\ \dummy ~=~
  f_1\ (f_2\ y) + \id{mapi}_2\ f_2\ ys\ f_1\ \dummy
  \\
  &\kw{in } \id{mapi}_2\ G\ L\ F\ \dummy
\end{align*}

\subsection{When Fusion Accumulates Continuations}
\label{sec:accum-cont}

As explained in \Sec{subsec:accum-fusion},
fusing an accumulator-passing producer with a parameterized consumer
will result in a fused function that accumulates a continuation.

For example, consider:
\begin{align*}
  &\kw{let rec}\ \id{idxSum}\ i\ xs ~=~
    \kw{case}\ xs\ \kw{of}\ 
    x::xs -> i + \id{idxSum}\ (i + 1)\ xs,\ [] -> 0
  \\
  &
  \kw{in}\ 
  \kw{let rec}\ \id{rev}\ a\ ys ~=~
    \kw{case}\ ys\ \kw{of}\ y::ys -> \id{rev}\ (y::a)\ ys,\ [] -> a
  \\
  &\kw{in }
  \id{idxSum}\ 1\ (\id{rev}\ []\ L)
\end{align*}

Refactoring produces:
\begin{align*}
  &\kw{let rec}\ \id{idxSum}\ i\ xs ~=~
    (\kw{case}\ xs\ \kw{of}\ 
    x::xs -> \lam i.\ i + \id{idxSum}\ (i + 1)\ xs,\ [] -> \lam i.\ 0)\ i
  \\
  &
  \kw{in}\ 
  \kw{let rec}\ \id{rev}\ a\ ys ~=~
    \textit{(same as before)}
  \\
  &\kw{in }
  \id{idxSum}\ 1\ (\id{rev}\ []\ L)
\end{align*}

Then fusion and simplification produce:
\begin{align*}
  &\kw{let rec}\ \id{rev}\ a\ ys ~=~
    \kw{case}\ ys\ \kw{of}\ 
    y::ys -> \id{rev}\ (\lam i.\ i + a\ (i + 1))\ ys,\ [] -> a
  \\
  &\kw{in }
  \id{rev}\ (\lam i.\ 0)\ L\ 1
\end{align*}

On the one hand, doing such transformations may lead to further fusion opportunities
and therefore be important to apply.

On the other hand,
these new `$\lam i.\ \ldots$' accumulated continuations
could actually be counter-productive in terms of performance
if no further fusion happens, as in our example:
we have essentially gone from allocating first-order
concrete constructors and pattern matching on them, in the original program,
to allocating higher-order functions
and calling them (which will use virtual dispatch),
in the rewritten program.
The number of allocations, which is usually a dominating measure in the runtime
of pure functional language, has remained the same.
But we have traded pattern matching for virtual calls,
and the latter is often slower.

A standard optimization that helps in that case is defunctionalization,
which replaces each distinct lambda by a new constructor
and replaces applications of such lambdas by a case analysis on these constructors.
In our example, %
defunctionalization results in:
\begin{align*}
  &\kw{let rec}\ \id{call}\ a\ i ~=~
    \kw{case}\ a\ \kw{of}\ 
    \id{C}_0 -> 0,\ \id{C}_1\ a -> i + \id{call}\ a\ (i + 1)
  \\
  &\kw{in }
  \kw{let rec}\ \id{rev}\ a\ ys ~=~
    \kw{case}\ ys\ \kw{of}\ 
    y::ys ->
    \id{rev}\ (\id{C}_1\ a)\ ys
      ,\ 
    [] -> a
  \\
  &\kw{in }
  \id{call}\ (\id{rev}\ \id{C}_0\ L)\ 1
\end{align*}
...which is isomorphic to the original program!
The only difference is that we no longer uselessly store the elements
of the list of which we want to sum the indices.
This seems like a hint that the program in question is fundamentally not really fusible
(at last not with a fusion approach like ours).

Therefore, the following strategy emerges:
perform fusion as much as possible even if it introduces accumulated continuations,
and then perform a pass of defunctionalization
to make sure we recover at least the performance of the original program.

We reserve investigating this approach for future work.

\section{Excerpts from Full Proofs}
\label{sec:partial-proofs}

The poofs of the theorems in this paper are
given in full in \tr{}{}.
We only provide excerpts of the proofs here
due to space limitations.

\begin{proof}[Proof of \Thm{fundamentalterm}]
  This proof relies on the corresponding \textit{compatibility lemmas}.
  There is one compatibility lemma for each of the fusion rules
  defined in \Fig{fig:fusion-prescient}.
  As their names suggest, \textit{compatibility lemmas} ensure that the logical relation
  interpretation of strategies as sets of triples given above are
  compatible with the fusion rules we defined. For example,
  the compatibility lemma corresponding to \textsc{F-Lam} is
  $
  \Gamma \cdot (x |-> S_1) |- t_1 \lesssim^{\fname{str}} t_2: S_2
  \implies \Gamma |- \lam x.\ t_1 \lesssim^{\fname{str}} \lam x.\ t_2: S_1 -> S_2
  $,
  which has the same structure as the \textsc{F-Lam} rule.
  Their proofs are given in \tr{\App{compatifusion}}{\thetr} and mainly
  just involve the expansion of definitions. %
  With the compatibility lemmas,
  this theorem is proven by induction on the fusion rules.
  Each case is immediately handled by its corresponding compatibility lemma.
\end{proof}

\begin{lemma}[Reflexitivity for Typed Contexts]\label{reflctx}
  $$
  \Xi |- C: (\Xi' |- I) -> I' => \Xi |- C \lesssim C: (\Xi' |- I) -> I'
  $$
\end{lemma}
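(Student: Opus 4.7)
The plan is to prove this by structural induction on the typing derivation $\Xi |- C: (\Xi' |- I) -> I'$, which matches the syntactic structure of the context $C$. By unfolding the definition of $\Xi' |- C \lesssim C': (\Xi |- I) -> I'$, what we need to establish is: for any two typed terms $t, t'$ with $\Xi' |- t \lesssim^{\fname{type}} t': I$, the filled contexts satisfy $\Xi |- C[t] \lesssim^{\fname{type}} C[t']: I'$.

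The base case is when $C$ is simply the hole; then $C[t] = t$ and $C[t] = t'$, and the conclusion is immediately the hypothesis. For each inductive case, the context $C$ has the hole placed inside one of the term-forming constructs (under a $\lam$, on either side of an application, as one of the arguments of a constructor application, inside the scrutinee or a branch of a $\kw{case}$, or on either side of a $\kw{let rec}$), and the hypothesis $\Xi |- C: (\Xi' |- I) -> I'$ decomposes into typing judgements for the subcontext containing the hole and ordinary typing judgements for the surrounding subterms. The approach is to invoke, for each construct, the corresponding \emph{compatibility lemma} for the typed logical relation $\lesssim^{\fname{type}}$, where the reflexive case (non-hole subterms related to themselves) is supplied by a companion \emph{reflexivity for typed terms} lemma $\Xi |- t: I \implies \Xi |- t \lesssim^{\fname{type}} t: I$, proven in the usual way by induction on the typing derivation of $t$ using the same compatibility lemmas.

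The compatibility lemmas at the typed level have exactly the same shapes as those used in the proof of \Thm{fundamentalterm}, but instantiated to identity strategies $I$ and using $\mathcal{V}'$, $\mathcal{E}'$, $\mathcal{G}'$ instead of $\mathcal{V}$, $\mathcal{E}$, $\mathcal{G}$. Since the relational interpretation of types is obtained from that of strategies by dropping the constructor-fusion clauses and replacing $S$ by $I$ throughout, each typed compatibility lemma can either be read off directly from its strategy-level counterpart or proven by an essentially identical calculation (expanding the definitions of $\mathcal{E}'$ and $\mathcal{V}'$, threading the step index, and using the inductive hypotheses on the sub-derivations). The structure of these proofs is standard for step-indexed logical relations in the style of \citet{amal-step-index}.

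The main obstacle is purely administrative: spelling out each of the compatibility lemmas for contexts, which requires carefully handling the step-index decrement at each reduction step and ensuring that related substitutions in $\mathcal{G}'$ continue to witness the relation when extended under binders (as in \textsc{F-Lam}, \textsc{F-Case}, and \textsc{F-LetRec}). In particular, the $\kw{let rec}$ case involves tying a recursive knot via L\"ob-style reasoning on the step index, and the $\kw{case}$ case requires routing the branch logical relation through all the bound pattern variables simultaneously. None of these steps is conceptually novel, but each must be checked; once all the typed compatibility lemmas are in place, the induction delivering \Lem{reflctx} is immediate.
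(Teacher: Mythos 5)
Your proof follows essentially the same route as the paper: induction on the context typing derivation, discharging each case with a compatibility lemma for typed contexts, which in turn reduces to the term-level compatibility lemmas (obtained from the strategy-level ones by restricting to identity strategies) together with reflexivity for typed terms for the non-hole subterms. The only cosmetic difference is that the paper treats $\kw{let rec}$ as syntactic sugar, so no separate context case (and no L\"ob-style knot-tying) is needed for it.
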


\begin{proof}[Proof of \Thm{soundnessoftransformation}]

  By theorem \ref{fundamentalterm}, we have
  $\Xi |- t ~> t' <= I
  \implies
  \Xi |- t \lesssim^{\fname{str}} t': I$.
  Then, because $I$ is a proper subset of $S$, we have
  $\Xi |- t \lesssim^{\fname{str}} t': I \implies \Xi |- t \lesssim^{\fname{type}} t': I$.
  After that, we show
  $\Xi |- t \lesssim^{\fname{type}} t': I \implies \Xi |- t \lesssim^{\fname{ctx}} t': I$ as follows.
  By expanding the definition of $\lesssim^{\fname{ctx}}$ to
  $\forall C, I'.\ \varepsilon |- C: (\Xi |- I) -> I' \land C[t] \Downarrow \implies C[t'] \Downarrow$,
  we have $\varepsilon |- C: (\Xi |- I) -> I'$ and $C[t] \Downarrow$ as premises along with
  $\Xi |- t \lesssim^{\fname{type}} t' : I$.
  Now we need to show that $C[t'] \Downarrow$.
  
  By \Lem{reflctx}, we have
  $\varepsilon |- C \lesssim C: (\Xi |- I) -> I'$, which means that the context $C$
  is logically related to itself. Instantiate this with $t$ and $t'$, note that this
  instantiation is valid because we have $\Xi |- t \lesssim^{\fname{type}} t' : I$
  in the premises. Now we have $\varepsilon |- C[t] \lesssim^{\fname{type}} C[t']: I'$.

  Since $C[t] \Downarrow$, we have that there exits $v, k$ such that $C[t] \prog{k} v$. Then
  we expand the definition of $\lesssim^{\fname{type}}$ and $\calEeq{}{I'}$ 
  in $\varepsilon |- C[t] \lesssim^{\fname{type}} C[t']: I'$
  and instantiate it
  by $k + 1$ and two empty substitutions $\varepsilon$ and $\varepsilon$.
  This instantiation is valid because
  $(k + 1) \ge 0$, $(k+1, \varepsilon, \varepsilon) \in \calGeq{}{\varepsilon}$, and
  $\fname{irred}(v)$. So now we have
  $\exists v'.\,C[t'] \prog{*} v' \land (k + 1 - k,\ v,\ v') \in \calVeq{}{I'}$. Then by
  definition of $\calVeq{}{I'}$, we conclude that $C[t'] \Downarrow v'$.

\end{proof}

\section{Pseudo-Algorithmic Fusion Rules}
\label{app:pseudo-algorithmic}

\newcommand{\calA}{\mathcal{A}}
\newcommand{\squigtoa}{\rightsquigarrow_{\calA}}

\newcommand{\calM}{\mathcal{M}}%
\newcommand{\calT}{\mathcal{T}}%

The simple and intuitive declarative system presented in \Sec{sec:formal} is ``prescient'',
in the sense that in \textsc{F-Case} and \ruleName{F-Ctor}, it uses \emph{rewritten}
$\kw{case}$ branch terms $l_i'$
in constructor fusion strategies,
while the type inference rules, on the other hand,
does not yet know how to rewrite these terms and uses the untransformed
$\kw{case}$ branch terms $l_i$ in the inferred scrutinee strategy.

Thankfully, this difference is benign:
we can reformulate the simple declarative rules of \Sec{sec:formal}
into ``pseudo-algorithmic'' rules that use $l_i$ instead of $l_i'$.
And ``pseudo-algorithmic'' derivations can always be translated to
derivations in the ``prescient'' declarative system
(see \Sec{sec:pseudo-algo-to-prescient}).

\subsection{Syntax and Rules}

The syntax of pseudo-algorithmic fusion strategies, which are
produced by the unification process, is presented in \Fig{fig:pseudo-algo-syntax}.
It is mostly the same
as the fusion strategies we presented in \Sec{sec:formal}.
The only difference is
the addition of a ``$:S$'' at the end of constructor fusion strategies.

\begin{figure}
  {\small
  \begin{align*}
  S &~::=~
  \top
  ~\mid~
  \bot
  ~\mid~
  S -> S
  ~\mid~
  \mu F_S
  ~\mid~
  \left\{\,\overline{
    c_i\,\langle\,\overline{\,S_{i,j}\,}^j\,\rangle
  }^i\right\}
  ~\mid~
  \left\{\,\overline{
    c_i\,\langle\,\overline{x_{i,j} |-> S_{i,j}}^j\,\rangle->l_i
  }^i\,\right\}: S
  \end{align*}
  }
  \caption{Syntax of pseudo-algorithmic fusion strategies}
  \label{fig:pseudo-algo-syntax}
\end{figure}

The pseudo-algorithmic fusion rules are presented in \Fig{fig:fusion-algo}.
In these rules,
the newly added ``$:S$'' at the end of constructor fusion strategies
represents the result of $\kw{case}$ expressions,
and they are used to transform the attached $\kw{case}$ branch terms
in \ruleName{P-Ctor} and \ruleName{P-Case}.

The pseudo-algorithmic rules are more complicated,
which is why the main body of the paper only presents the simple declarative rules
to help intuition and understanding.

\begin{figure*}
  \mprset{sep=1.3em}
  {\footnotesize
  \begin{mathpar}
  \boxed{\mathstrut\Gamma |- t ~>_{\mathcal{A}} t <= S}
  
  \inferrule[P-Var]{
    (x |-> S) \in \Gamma
  }{
    \Gamma |-
    x ~>_{\mathcal{A}} x
    <= S
  }
  
  \inferrule[P-Lam]{
    \Gamma\cat(x|->S') |-
    t
    ~>_{\mathcal{A}}
    t'
    <= S
  }{
    \Gamma |-
    \lam x.\ t
    ~>_{\mathcal{A}}
    \lam x.\ t'
    <= S' -> S
  }
  
  \inferrule[P-Top]{
    \Gamma |- t ~>_{\mathcal{A}} t' <= I
  }{
    \Gamma |- t ~>_{\mathcal{A}} t' <= \top
  }
  
  \inferrule[P-LetRec]{
    \Gamma \cat (x |-> S_1) |-
    t_1 ~>_{\mathcal{A}} t_1'
    <= S_1
    \\
    \Gamma \cat (x |-> S_1) |-
    t_2 ~>_{\mathcal{A}} t_2'
    <= S_2
  }{
    \Gamma |-
    \kw{let rec}\ x = t_1\ \kw{in}\ t_2
    ~>_{\mathcal{A}}
    \kw{let rec}\ x = t_1'\ \kw{in}\ t_2'
    <=
    S_2
  }
  
  \inferrule[P-App]{
    \Gamma |- t_1 ~>_{\mathcal{A}} t_1'
    <= (S' -> S)
    \\
    \Gamma |- t_2 ~>_{\mathcal{A}} t_2'
    <= S'
  }{
    \Gamma |-
    t_1\ t_2 ~>_{\mathcal{A}} t_1'\ t_2'
    <= S
  }
  
  \inferrule[P-Ctor]{
    \left( c_k\,\langle\,\overline{x_{k,j} |-> S_{k,j}}^j\,\rangle->l_k \right): S_{\fname{res}} \in S_{\mathcal{A}}
    \\
    \overline{\,\Gamma |- t_{k,j} ~>_{\mathcal{A}} t_{k,j}' <= S_{k,j}}^j
    \\\\
    \Gamma\cat
      \overline{
        (x_{k,j} |-> S_{k,j})
      }^j
      |-
      l_k ~>_{\mathcal{A}} l_k' <= S_{\fname{res}}
    \\
    \fname{FV}(l_k) \cup \fname{FV}(l_k') \subseteq \left\{\overline{x_{k,j}}^j\right\}
  }{
    \Gamma |-
    c_k\ \overline{t_{k,j}}^j ~>_{\mathcal{A}}
    \kw{let } \overline{x_{k,j} = t_{k,j}'}^j \kw{ in } l_k'
    <= S_{\mathcal{A}}
  }
  
  \inferrule[P-CtorSkip]{
    \left(c\,\langle \overline{S_j}^j \rangle\right) \in S
    \\
    \overline{\, \Gamma |- t_j ~>_{\mathcal{A}} t_j' <= S_j}^j
  }{
    \Gamma |-
    c\ \overline{\, t_j\,}^j ~>_{\mathcal{A}} c\ \overline{\, t_j'\,}^j
    <=
    S
  }
  
  \inferrule[P-Case]{
    \Gamma |- t ~>_{\mathcal{A}} t' <= \left\{\overline{c_i\,\langle \overline{x_{i,j} |-> S_{i,j}}^j \rangle->l_i}^i \right\}: S
    \\
    \overline{\Gamma\cat
      \overline{
        (x_{i,j} |-> S_{i,j})
      }^j
      |-
      l_i ~>_{\mathcal{A}} l_i' <= S}^i
    \\
    \overline{\fname{FV}(l_i) \cup \fname{FV}(l_i') \subseteq \left\{ \overline{x_{i,j}}^j \right\}}^i
  }{
    \Gamma |-
    \kw{case}\ t\ \kw{of}\ \overline{\,
      c_i\ \overline{x_{i,j}}^j -> l_i
    \,}^i
    ~>_{\mathcal{A}}
    t'
    <=
    S
  }
  
  \inferrule[P-CaseSkip]{
    \Gamma |- t ~>_{\mathcal{A}} t' <= \left\{\overline{c_i\,\langle \overline{S_{i,j}}^j \rangle}^i\right\}
    \\
    \overline{\,
      \Gamma \cat \overline{(x_{i,j} |-> S_{i,j})}^j |- l_i ~>_{\mathcal{A}} l_i' <= S
    \,}^i
  }{
    \Gamma |-
    \kw{case}\ t\ \kw{of}\ \overline{\,
      c_i\ \overline{x_{i,j}}^j -> l_i
    \,}^i
    ~>_{\mathcal{A}}
    \kw{case}\ t'\ \kw{of}\ \overline{\,
      c_i\ \overline{x_{i,j}}^j -> l_i'
    \,}^i
    <=
    S
  }
  \quad
  \inferrule[P-RecFold]{
    \Gamma |- t ~>_{\mathcal{A}} t' <= S(\mu S)
  }{
    \Gamma |- t ~>_{\mathcal{A}} t' <= \mu S
  }
  \quad
  \inferrule[P-RecUnfold]{
    \Gamma |- t ~>_{\mathcal{A}} t' <= \mu S
  }{
    \Gamma |- t ~>_{\mathcal{A}} t' <= S(\mu S)
  }
  \end{mathpar}}
  \caption{Pseudo-algorithmic fusion rules.}
  \label{fig:fusion-algo}
\end{figure*}

\subsection{Translation}
\label{sec:pseudo-algo-to-prescient}

The pseudo-algorithmic derivation tree is traversed twice for performing the translation.
The first traversal produces
a mapping from pseudo-algorithmic constructor fusion strategies
to their corresponding \emph{transformed} $\kw{case}$ branch terms.
The mapping is easily accumulated every time
the traversing process encounters a derivation of \ruleName{P-Ctor} or \ruleName{P-Case},
which contains the related transformation of $\kw{case}$ branch terms.
Then the second traversal performs the actual translation:
the mapping accumulated during the first traversal is applied to every
constructor fusion pseudo-algorithmic strategy in the derivation so that
their $\kw{case}$ branch terms are replaced with the transformed ones
and the ``$:S$'' is dropped. Additionally,
the sub-derivations in \ruleName{P-Ctor} related to the
transformation of $\kw{case}$ branch terms are also
dropped since they are no longer necessary in the corresponding \ruleName{F-Ctor} rule.
A more detailed description is presented in \App{sec:pseudo-algo-to-prescient-full}.

\section{Detailed Benchmark Result Analysis}
\label{app:benchmark-explanations}

We ran the benchmarks on a MacBook Pro 2021 (M1 Max 32GB).
The command line argument passed to the optimizing OCaml
compiler @ocamlopt@ are: @-O3@ and @-rectypes@. @-O3@ ensures that
@ocamlopt@ can perform extensive inlinings to simplify
our output programs; @-rectypes@ enables the support
of recursive types in OCaml.

\begin{itemize}
  \item \textbf{Ansi}: In this program, many string constants get consumed immediately. Since smaller string constants are treated as lists, they get fused, leading to improvements.
  \item \textbf{Banner}: The program extensively uses many standard functions on lists. The main fusion site that contributes to the efficiency improvement is that the @concat@ fuses with @map@ in the definition of @unline@, which is called on large lists in the original program.
  \item \textbf{Calendar}: The program extensively uses many standard functions on lists. Many of them get fused, which collectively contributes to the efficiency improvement.
  \item \textbf{Constraints}: Besides lists that get fused like those in @fillTable@ or @emptyTable@, this program also extensively uses a recursively defined @Tree@ data structure: @data Tree a = Node a [Tree a]@, and traverses them during the computation. \System manages to fuse some of them like those @mapTree@s in @bm@ and @lookupCache@, which contributes significantly to the improvements.
  \item \textbf{Fish}: The fusion site in the entry point function @testFish_nofib@ where the result of @fmt@ flowing into @length@ greatly contributes to the efficiency improvement.
  \item \textbf{Gcd}: Two @[a..b]@ lists in the @test@ function get immediately consumed by a list comprehension, and this fusion site greatly contributes to the efficiency improvement.
  \item \textbf{Mandel2}: The tree built by @build_tree@ is directly consumed by @finite@, and our fusion totally fuses it.
  \item \textbf{Mandel}: Two intermediate lists created in @mandelset@, @[1..screenY]@ and @[1..screenX]@ are directly consumed by the list comprehension, and the result of the list comprehension is immediately mapped over in @parallelMandel@ and then returned as part of the return value. All the intermediate lists above are fused. 
\end{itemize}

\clearpage

\bibliography{paper}


\begin{thebibliography}{57}


\ifx \showCODEN    \undefined \def \showCODEN     #1{\unskip}     \fi
\ifx \showDOI      \undefined \def \showDOI       #1{#1}\fi
\ifx \showISBNx    \undefined \def \showISBNx     #1{\unskip}     \fi
\ifx \showISBNxiii \undefined \def \showISBNxiii  #1{\unskip}     \fi
\ifx \showISSN     \undefined \def \showISSN      #1{\unskip}     \fi
\ifx \showLCCN     \undefined \def \showLCCN      #1{\unskip}     \fi
\ifx \shownote     \undefined \def \shownote      #1{#1}          \fi
\ifx \showarticletitle \undefined \def \showarticletitle #1{#1}   \fi
\ifx \showURL      \undefined \def \showURL       {\relax}        \fi
\providecommand\bibfield[2]{#2}
\providecommand\bibinfo[2]{#2}
\providecommand\natexlab[1]{#1}
\providecommand\showeprint[2][]{arXiv:#2}

\bibitem[Ahmed(2006)]%
        {amal-step-index}
\bibfield{author}{\bibinfo{person}{Amal Ahmed}.} \bibinfo{year}{2006}\natexlab{}.
\newblock \showarticletitle{Step-Indexed Syntactic Logical Relations for Recursive and Quantified Types}. In \bibinfo{booktitle}{\emph{Programming Languages and Systems}}, \bibfield{editor}{\bibinfo{person}{Peter Sestoft}} (Ed.). \bibinfo{publisher}{Springer Berlin Heidelberg}, \bibinfo{address}{Berlin, Heidelberg}, \bibinfo{pages}{69--83}.
\newblock
\showISBNx{978-3-540-33096-7}


\bibitem[Appel and McAllester(2001)]%
        {appel-equi-rec-type}
\bibfield{author}{\bibinfo{person}{Andrew~W. Appel} {and} \bibinfo{person}{David McAllester}.} \bibinfo{year}{2001}\natexlab{}.
\newblock \showarticletitle{An Indexed Model of Recursive Types for Foundational Proof-Carrying Code}.
\newblock \bibinfo{journal}{\emph{ACM Trans. Program. Lang. Syst.}} \bibinfo{volume}{23}, \bibinfo{number}{5} (\bibinfo{date}{sep} \bibinfo{year}{2001}), \bibinfo{pages}{657–683}.
\newblock
\showISSN{0164-0925}
\urldef\tempurl%
\url{https://doi.org/10.1145/504709.504712}
\showDOI{\tempurl}


\bibitem[Bolingbroke and Peyton~Jones(2010)]%
        {Bolingbroke-hs10:supercomp-by-eval}
\bibfield{author}{\bibinfo{person}{Maximilian Bolingbroke} {and} \bibinfo{person}{Simon Peyton~Jones}.} \bibinfo{year}{2010}\natexlab{}.
\newblock \showarticletitle{Supercompilation by Evaluation}. In \bibinfo{booktitle}{\emph{Proceedings of the Third ACM Haskell Symposium on Haskell}} (Baltimore, Maryland, USA) \emph{(\bibinfo{series}{Haskell '10})}. \bibinfo{publisher}{Association for Computing Machinery}, \bibinfo{address}{New York, NY, USA}, \bibinfo{pages}{135–146}.
\newblock
\showISBNx{9781450302524}
\urldef\tempurl%
\url{https://doi.org/10.1145/1863523.1863540}
\showDOI{\tempurl}


\bibitem[Bolingbroke(2013)]%
        {bolingbroke-supercoompilation-benchmark}
\bibfield{author}{\bibinfo{person}{Maximilian~C. Bolingbroke}.} \bibinfo{year}{2013}\natexlab{}.
\newblock \bibinfo{booktitle}{\emph{{Call-by-need supercompilation}}}.
\newblock \bibinfo{type}{{T}echnical {R}eport} UCAM-CL-TR-835. \bibinfo{institution}{University of Cambridge, Computer Laboratory}.
\newblock
\showISSN{1476-2986}
\urldef\tempurl%
\url{https://doi.org/10.48456/tr-835}
\showDOI{\tempurl}


\bibitem[Breitner(2018)]%
        {BREITNER201865}
\bibfield{author}{\bibinfo{person}{Joachim Breitner}.} \bibinfo{year}{2018}\natexlab{}.
\newblock \showarticletitle{Call Arity}.
\newblock \bibinfo{journal}{\emph{Computer Languages, Systems \& Structures}}  \bibinfo{volume}{52} (\bibinfo{year}{2018}), \bibinfo{pages}{65--91}.
\newblock
\showISSN{1477-8424}
\urldef\tempurl%
\url{https://doi.org/10.1016/j.cl.2017.03.001}
\showDOI{\tempurl}


\bibitem[Burstall and Darlington(1977)]%
        {burstall-transformation}
\bibfield{author}{\bibinfo{person}{R.~M. Burstall} {and} \bibinfo{person}{John Darlington}.} \bibinfo{year}{1977}\natexlab{}.
\newblock \showarticletitle{A Transformation System for Developing Recursive Programs}.
\newblock \bibinfo{journal}{\emph{J. ACM}} \bibinfo{volume}{24}, \bibinfo{number}{1} (\bibinfo{date}{jan} \bibinfo{year}{1977}), \bibinfo{pages}{44–67}.
\newblock
\showISSN{0004-5411}
\urldef\tempurl%
\url{https://doi.org/10.1145/321992.321996}
\showDOI{\tempurl}


\bibitem[Chen and Parreaux(2024a)]%
        {Chen-icfp22:deforestation}
\bibfield{author}{\bibinfo{person}{Yijia Chen} {and} \bibinfo{person}{Lionel Parreaux}.} \bibinfo{year}{2024}\natexlab{a}.
\newblock \showarticletitle{The Long Way to Deforestation: A Type Inference and Elaboration Technique for Removing Intermediate Data Structures}.
\newblock \bibinfo{journal}{\emph{Proc. ACM Program. Lang.}} \bibinfo{volume}{8}, \bibinfo{number}{ICFP}, Article \bibinfo{articleno}{245} (\bibinfo{date}{Aug.} \bibinfo{year}{2024}), \bibinfo{numpages}{35}~pages.
\newblock
\urldef\tempurl%
\url{https://doi.org/10.1145/3674634}
\showDOI{\tempurl}


\bibitem[Chen and Parreaux(2024b)]%
        {artifact-lumberhack}
\bibfield{author}{\bibinfo{person}{Yijia Chen} {and} \bibinfo{person}{Lionel Parreaux}.} \bibinfo{year}{2024}\natexlab{b}.
\newblock \bibinfo{booktitle}{\emph{{The Long Way to Deforestation: A Type Inference and Elaboration Technique for Removing Intermediate Data Structures (Artifact)}}}.
\newblock
\urldef\tempurl%
\url{https://doi.org/10.5281/zenodo.11500626}
\showDOI{\tempurl}


\bibitem[Chin(1992)]%
        {chin-deforestation}
\bibfield{author}{\bibinfo{person}{Wei-Ngan Chin}.} \bibinfo{year}{1992}\natexlab{}.
\newblock \showarticletitle{Safe Fusion of Functional Expressions}.
\newblock \bibinfo{journal}{\emph{SIGPLAN Lisp Pointers}} \bibinfo{volume}{V}, \bibinfo{number}{1} (\bibinfo{date}{jan} \bibinfo{year}{1992}), \bibinfo{pages}{11–20}.
\newblock
\showISSN{1045-3563}
\urldef\tempurl%
\url{https://doi.org/10.1145/141478.141494}
\showDOI{\tempurl}


\bibitem[Chin(1994)]%
        {Chin-1994-safe-deforestation}
\bibfield{author}{\bibinfo{person}{Wei-Ngan Chin}.} \bibinfo{year}{1994}\natexlab{}.
\newblock \showarticletitle{Safe fusion of functional expressions II: Further improvements}.
\newblock \bibinfo{journal}{\emph{Journal of Functional Programming}} \bibinfo{volume}{4}, \bibinfo{number}{4} (\bibinfo{year}{1994}), \bibinfo{pages}{515–555}.
\newblock
\urldef\tempurl%
\url{https://doi.org/10.1017/S0956796800001179}
\showDOI{\tempurl}


\bibitem[Chitil(1999)]%
        {Chitil-ifp99}
\bibfield{author}{\bibinfo{person}{Olaf Chitil}.} \bibinfo{year}{1999}\natexlab{}.
\newblock \showarticletitle{Type-Inference Based Short Cut Deforestation (Nearly) without Inlining}. In \bibinfo{booktitle}{\emph{Selected Papers from the 11th International Workshop on Implementation of Functional Languages}} \emph{(\bibinfo{series}{IFL '99})}. \bibinfo{publisher}{Springer-Verlag}, \bibinfo{address}{Berlin, Heidelberg}, \bibinfo{pages}{19–35}.
\newblock
\showISBNx{3540678646}


\bibitem[Coutts et~al\mbox{.}(2007)]%
        {coutts-icfp07:stream-fusion}
\bibfield{author}{\bibinfo{person}{Duncan Coutts}, \bibinfo{person}{Roman Leshchinskiy}, {and} \bibinfo{person}{Don Stewart}.} \bibinfo{year}{2007}\natexlab{}.
\newblock \showarticletitle{Stream Fusion: From Lists to Streams to Nothing at All}. In \bibinfo{booktitle}{\emph{Proceedings of the 12th ACM SIGPLAN International Conference on Functional Programming}} (Freiburg, Germany) \emph{(\bibinfo{series}{ICFP '07})}. \bibinfo{publisher}{Association for Computing Machinery}, \bibinfo{address}{New York, NY, USA}, \bibinfo{pages}{315–326}.
\newblock
\showISBNx{9781595938152}
\urldef\tempurl%
\url{https://doi.org/10.1145/1291151.1291199}
\showDOI{\tempurl}


\bibitem[Dolan(2017)]%
        {dolan_algebraic_2017}
\bibfield{author}{\bibinfo{person}{Stephen Dolan}.} \bibinfo{year}{2017}\natexlab{}.
\newblock \emph{\bibinfo{title}{Algebraic subtyping}}.
\newblock \bibinfo{thesistype}{Ph.\,D. Dissertation}.
\newblock
\showISBNx{978-1-78017-415-0}


\bibitem[Dolan and Mycroft(2017)]%
        {dolan_polymorphism_2017}
\bibfield{author}{\bibinfo{person}{Stephen Dolan} {and} \bibinfo{person}{Alan Mycroft}.} \bibinfo{year}{2017}\natexlab{}.
\newblock \showarticletitle{Polymorphism, subtyping, and type inference in {MLsub}}.
\newblock \bibinfo{journal}{\emph{ACM SIGPLAN Notices}} \bibinfo{volume}{52}, \bibinfo{number}{1} (\bibinfo{date}{Jan.} \bibinfo{year}{2017}), \bibinfo{pages}{60--72}.
\newblock
\showISSN{0362-1340}
\urldef\tempurl%
\url{https://doi.org/10.1145/3093333.3009882}
\showDOI{\tempurl}


\bibitem[Downen et~al\mbox{.}(2020)]%
        {arity}
\bibfield{author}{\bibinfo{person}{Paul Downen}, \bibinfo{person}{Zena~M. Ariola}, \bibinfo{person}{Simon Peyton~Jones}, {and} \bibinfo{person}{Richard~A. Eisenberg}.} \bibinfo{year}{2020}\natexlab{}.
\newblock \showarticletitle{Kinds Are Calling Conventions}.
\newblock \bibinfo{journal}{\emph{Proc. ACM Program. Lang.}} \bibinfo{volume}{4}, \bibinfo{number}{ICFP}, Article \bibinfo{articleno}{104} (\bibinfo{date}{aug} \bibinfo{year}{2020}), \bibinfo{numpages}{29}~pages.
\newblock
\urldef\tempurl%
\url{https://doi.org/10.1145/3408986}
\showDOI{\tempurl}


\bibitem[Fax{\'e}n(1997)]%
        {Faxen97:polyvar-poly-flow-anal}
\bibfield{author}{\bibinfo{person}{Karl-Filip Fax{\'e}n}.} \bibinfo{year}{1997}\natexlab{}.
\newblock \showarticletitle{Polyvariance, polymorphism and flow analysis}. In \bibinfo{booktitle}{\emph{Analysis and Verification of Multiple-Agent Languages}}, \bibfield{editor}{\bibinfo{person}{Mads Dam}} (Ed.). \bibinfo{publisher}{Springer Berlin Heidelberg}, \bibinfo{address}{Berlin, Heidelberg}, \bibinfo{pages}{260--278}.
\newblock
\showISBNx{978-3-540-68052-9}


\bibitem[Garrigue(1998)]%
        {garrigue1998programming}
\bibfield{author}{\bibinfo{person}{Jacques Garrigue}.} \bibinfo{year}{1998}\natexlab{}.
\newblock \showarticletitle{Programming with polymorphic variants}.
\newblock \bibinfo{journal}{\emph{ACM Workshop on ML}}.
\newblock
\urldef\tempurl%
\url{https://caml.inria.fr/pub/papers/garrigue-polymorphic_variants-ml98.pdf}
\showURL{%
\tempurl}


\bibitem[Gibbons(2021)]%
        {gibbons2021continuation}
\bibfield{author}{\bibinfo{person}{Jeremy Gibbons}.} \bibinfo{year}{2021}\natexlab{}.
\newblock \showarticletitle{Continuation-Passing Style, Defunctionalization, Accumulations, and Associativity}.
\newblock \bibinfo{journal}{\emph{arXiv preprint arXiv:2111.10413}} (\bibinfo{year}{2021}).
\newblock


\bibitem[Gill et~al\mbox{.}(1993)]%
        {shortcut-deforest}
\bibfield{author}{\bibinfo{person}{Andrew Gill}, \bibinfo{person}{John Launchbury}, {and} \bibinfo{person}{Simon~L. Peyton~Jones}.} \bibinfo{year}{1993}\natexlab{}.
\newblock \showarticletitle{A Short Cut to Deforestation}. In \bibinfo{booktitle}{\emph{Proceedings of the Conference on Functional Programming Languages and Computer Architecture}} (Copenhagen, Denmark) \emph{(\bibinfo{series}{FPCA '93})}. \bibinfo{publisher}{Association for Computing Machinery}, \bibinfo{address}{New York, NY, USA}, \bibinfo{pages}{223–232}.
\newblock
\showISBNx{089791595X}
\urldef\tempurl%
\url{https://doi.org/10.1145/165180.165214}
\showDOI{\tempurl}


\bibitem[Gill(1996)]%
        {Gill-96:cheap-deforestation}
\bibfield{author}{\bibinfo{person}{Andrew~John Gill}.} \bibinfo{year}{1996}\natexlab{}.
\newblock \emph{\bibinfo{title}{Cheap deforestation for non-strict functional languages}}.
\newblock \bibinfo{thesistype}{Ph.\,D. Dissertation}. \bibinfo{school}{University of Glasgow}.
\newblock


\bibitem[Hamilton(1996)]%
        {hamilton-higher-order}
\bibfield{author}{\bibinfo{person}{G.~W. Hamilton}.} \bibinfo{year}{1996}\natexlab{}.
\newblock \showarticletitle{Higher Order Deforestation}. In \bibinfo{booktitle}{\emph{Proceedings of the 8th International Symposium on Programming Languages: Implementations, Logics, and Programs}} \emph{(\bibinfo{series}{PLILP '96})}. \bibinfo{publisher}{Springer-Verlag}, \bibinfo{address}{Berlin, Heidelberg}, \bibinfo{pages}{213–227}.
\newblock
\showISBNx{3540617566}


\bibitem[Hamilton(2007)]%
        {hamilton-distillation}
\bibfield{author}{\bibinfo{person}{G.~W. Hamilton}.} \bibinfo{year}{2007}\natexlab{}.
\newblock \showarticletitle{Distillation: Extracting the Essence of Programs}. In \bibinfo{booktitle}{\emph{Proceedings of the 2007 ACM SIGPLAN Symposium on Partial Evaluation and Semantics-Based Program Manipulation}} (Nice, France) \emph{(\bibinfo{series}{PEPM '07})}. \bibinfo{publisher}{Association for Computing Machinery}, \bibinfo{address}{New York, NY, USA}, \bibinfo{pages}{61–70}.
\newblock
\showISBNx{9781595936202}
\urldef\tempurl%
\url{https://doi.org/10.1145/1244381.1244391}
\showDOI{\tempurl}


\bibitem[Hamilton and Jones(1992)]%
        {hamilton-deforestation-first-order}
\bibfield{author}{\bibinfo{person}{G.~W. Hamilton} {and} \bibinfo{person}{S.~B. Jones}.} \bibinfo{year}{1992}\natexlab{}.
\newblock \showarticletitle{Extending Deforestation for First Order Functional Programs}. In \bibinfo{booktitle}{\emph{Functional Programming, Glasgow 1991}}, \bibfield{editor}{\bibinfo{person}{Rogardt Heldal}, \bibinfo{person}{Carsten~Kehler Holst}, {and} \bibinfo{person}{Philip Wadler}} (Eds.). \bibinfo{publisher}{Springer London}, \bibinfo{address}{London}, \bibinfo{pages}{134--145}.
\newblock
\showISBNx{978-1-4471-3196-0}


\bibitem[Hu et~al\mbox{.}(1996)]%
        {hu-hylo}
\bibfield{author}{\bibinfo{person}{Zhenjiang Hu}, \bibinfo{person}{Hideya Iwasaki}, {and} \bibinfo{person}{Masato Takeichi}.} \bibinfo{year}{1996}\natexlab{}.
\newblock \showarticletitle{Deriving Structural Hylomorphisms from Recursive Definitions}.
\newblock \bibinfo{journal}{\emph{SIGPLAN Not.}} \bibinfo{volume}{31}, \bibinfo{number}{6} (\bibinfo{date}{jun} \bibinfo{year}{1996}), \bibinfo{pages}{73–82}.
\newblock
\showISSN{0362-1340}
\urldef\tempurl%
\url{https://doi.org/10.1145/232629.232637}
\showDOI{\tempurl}


\bibitem[Jonsson and Nordlander(2009)]%
        {Jonsson09:pos-supercomp-ho-cbv}
\bibfield{author}{\bibinfo{person}{Peter~A. Jonsson} {and} \bibinfo{person}{Johan Nordlander}.} \bibinfo{year}{2009}\natexlab{}.
\newblock \showarticletitle{Positive Supercompilation for a Higher Order Call-by-Value Language}. In \bibinfo{booktitle}{\emph{Proceedings of the 36th Annual ACM SIGPLAN-SIGACT Symposium on Principles of Programming Languages}} (Savannah, GA, USA) \emph{(\bibinfo{series}{POPL '09})}. \bibinfo{publisher}{Association for Computing Machinery}, \bibinfo{address}{New York, NY, USA}, \bibinfo{pages}{277–288}.
\newblock
\showISBNx{9781605583792}
\urldef\tempurl%
\url{https://doi.org/10.1145/1480881.1480916}
\showDOI{\tempurl}


\bibitem[Jonsson and Nordlander(2011)]%
        {jonsson-code-explosion-of-supercomp}
\bibfield{author}{\bibinfo{person}{Peter~A. Jonsson} {and} \bibinfo{person}{Johan Nordlander}.} \bibinfo{year}{2011}\natexlab{}.
\newblock \showarticletitle{Taming code explosion in supercompilation}. In \bibinfo{booktitle}{\emph{Proceedings of the 20th ACM SIGPLAN Workshop on Partial Evaluation and Program Manipulation}} (Austin, Texas, USA) \emph{(\bibinfo{series}{PEPM '11})}. \bibinfo{publisher}{Association for Computing Machinery}, \bibinfo{address}{New York, NY, USA}, \bibinfo{pages}{33–42}.
\newblock
\showISBNx{9781450304856}
\urldef\tempurl%
\url{https://doi.org/10.1145/1929501.1929507}
\showDOI{\tempurl}


\bibitem[Launchbury and Sheard(1995)]%
        {warmfusion}
\bibfield{author}{\bibinfo{person}{John Launchbury} {and} \bibinfo{person}{Tim Sheard}.} \bibinfo{year}{1995}\natexlab{}.
\newblock \showarticletitle{Warm Fusion: Deriving Build-Catas from Recursive Definitions}. In \bibinfo{booktitle}{\emph{Proceedings of the Seventh International Conference on Functional Programming Languages and Computer Architecture}} (La Jolla, California, USA) \emph{(\bibinfo{series}{FPCA '95})}. \bibinfo{publisher}{Association for Computing Machinery}, \bibinfo{address}{New York, NY, USA}, \bibinfo{pages}{314–323}.
\newblock
\showISBNx{0897917197}
\urldef\tempurl%
\url{https://doi.org/10.1145/224164.224223}
\showDOI{\tempurl}


\bibitem[Lee and Yi(1998)]%
        {lee-algoM}
\bibfield{author}{\bibinfo{person}{Oukseh Lee} {and} \bibinfo{person}{Kwangkeun Yi}.} \bibinfo{year}{1998}\natexlab{}.
\newblock \showarticletitle{Proofs about a Folklore Let-Polymorphic Type Inference Algorithm}.
\newblock \bibinfo{journal}{\emph{ACM Trans. Program. Lang. Syst.}} \bibinfo{volume}{20}, \bibinfo{number}{4} (\bibinfo{date}{jul} \bibinfo{year}{1998}), \bibinfo{pages}{707–723}.
\newblock
\showISSN{0164-0925}
\urldef\tempurl%
\url{https://doi.org/10.1145/291891.291892}
\showDOI{\tempurl}


\bibitem[Leuschel(1998)]%
        {leuschel-homeo}
\bibfield{author}{\bibinfo{person}{Michael Leuschel}.} \bibinfo{year}{1998}\natexlab{}.
\newblock \showarticletitle{On the Power of Homeomorphic Embedding for Online Termination}. In \bibinfo{booktitle}{\emph{Static Analysis}}, \bibfield{editor}{\bibinfo{person}{Giorgio Levi}} (Ed.). \bibinfo{publisher}{Springer Berlin Heidelberg}, \bibinfo{address}{Berlin, Heidelberg}, \bibinfo{pages}{230--245}.
\newblock
\showISBNx{978-3-540-49727-1}


\bibitem[Lloyd(1995)]%
        {lloyd-homeomorphic}
\bibfield{author}{\bibinfo{person}{John~W. Lloyd}.} \bibinfo{year}{1995}\natexlab{}.
\newblock \bibinfo{booktitle}{\emph{An Algorithm of Generalization in Positive Supercompilation}}.
\newblock \bibinfo{pages}{465--479}.
\newblock


\bibitem[MacQueen et~al\mbox{.}(1986)]%
        {MacQueen-86:ideal-model-rec-poly-tys}
\bibfield{author}{\bibinfo{person}{David MacQueen}, \bibinfo{person}{Gordon Plotkin}, {and} \bibinfo{person}{Ravi Sethi}.} \bibinfo{year}{1986}\natexlab{}.
\newblock \showarticletitle{An ideal model for recursive polymorphic types}.
\newblock \bibinfo{journal}{\emph{Information and Control}} \bibinfo{volume}{71}, \bibinfo{number}{1} (\bibinfo{year}{1986}), \bibinfo{pages}{95--130}.
\newblock
\showISSN{0019-9958}
\urldef\tempurl%
\url{https://doi.org/10.1016/S0019-9958(86)80019-5}
\showDOI{\tempurl}


\bibitem[Marlow and Wadler(1993)]%
        {marlow-higher-order-deforestation}
\bibfield{author}{\bibinfo{person}{Simon Marlow} {and} \bibinfo{person}{Philip Wadler}.} \bibinfo{year}{1993}\natexlab{}.
\newblock \showarticletitle{Deforestation for Higher-Order Functions}. In \bibinfo{booktitle}{\emph{Functional Programming, Glasgow 1992}}, \bibfield{editor}{\bibinfo{person}{John Launchbury} {and} \bibinfo{person}{Patrick Sansom}} (Eds.). \bibinfo{publisher}{Springer London}, \bibinfo{address}{London}, \bibinfo{pages}{154--165}.
\newblock
\showISBNx{978-1-4471-3215-8}


\bibitem[Midtgaard(2012)]%
        {midtgaard-cfa}
\bibfield{author}{\bibinfo{person}{Jan Midtgaard}.} \bibinfo{year}{2012}\natexlab{}.
\newblock \showarticletitle{Control-flow analysis of functional programs}.
\newblock \bibinfo{journal}{\emph{ACM Comput. Surv.}} \bibinfo{volume}{44}, \bibinfo{number}{3}, Article \bibinfo{articleno}{10} (\bibinfo{date}{jun} \bibinfo{year}{2012}), \bibinfo{numpages}{33}~pages.
\newblock
\showISSN{0360-0300}
\urldef\tempurl%
\url{https://doi.org/10.1145/2187671.2187672}
\showDOI{\tempurl}


\bibitem[Mitchell(2010)]%
        {mitchell-rethinking}
\bibfield{author}{\bibinfo{person}{Neil Mitchell}.} \bibinfo{year}{2010}\natexlab{}.
\newblock \showarticletitle{Rethinking supercompilation}. In \bibinfo{booktitle}{\emph{Proceedings of the 15th ACM SIGPLAN International Conference on Functional Programming}} (Baltimore, Maryland, USA) \emph{(\bibinfo{series}{ICFP '10})}. \bibinfo{publisher}{Association for Computing Machinery}, \bibinfo{address}{New York, NY, USA}, \bibinfo{pages}{309–320}.
\newblock
\showISBNx{9781605587943}
\urldef\tempurl%
\url{https://doi.org/10.1145/1863543.1863588}
\showDOI{\tempurl}


\bibitem[Mitchell and Runciman(2008)]%
        {mitchell-superohaskell}
\bibfield{author}{\bibinfo{person}{Neil Mitchell} {and} \bibinfo{person}{Colin Runciman}.} \bibinfo{year}{2008}\natexlab{}.
\newblock \showarticletitle{A Supercompiler for Core Haskell}. In \bibinfo{booktitle}{\emph{Implementation and Application of Functional Languages}}, \bibfield{editor}{\bibinfo{person}{Olaf Chitil}, \bibinfo{person}{Zolt{\'a}n Horv{\'a}th}, {and} \bibinfo{person}{Vikt{\'o}ria Zs{\'o}k}} (Eds.). \bibinfo{publisher}{Springer Berlin Heidelberg}, \bibinfo{address}{Berlin, Heidelberg}, \bibinfo{pages}{147--164}.
\newblock
\showISBNx{978-3-540-85373-2}


\bibitem[Ohori and Sasano(2007)]%
        {Ohori-popl07:light-fusion}
\bibfield{author}{\bibinfo{person}{Atsushi Ohori} {and} \bibinfo{person}{Isao Sasano}.} \bibinfo{year}{2007}\natexlab{}.
\newblock \showarticletitle{Lightweight Fusion by Fixed Point Promotion}. In \bibinfo{booktitle}{\emph{Proceedings of the 34th Annual ACM SIGPLAN-SIGACT Symposium on Principles of Programming Languages}} (Nice, France) \emph{(\bibinfo{series}{POPL '07})}. \bibinfo{publisher}{Association for Computing Machinery}, \bibinfo{address}{New York, NY, USA}, \bibinfo{pages}{143–154}.
\newblock
\showISBNx{1595935754}
\urldef\tempurl%
\url{https://doi.org/10.1145/1190216.1190241}
\showDOI{\tempurl}


\bibitem[Onoue et~al\mbox{.}(1997)]%
        {Onoue-hylo-haskell}
\bibfield{author}{\bibinfo{person}{Y. Onoue}, \bibinfo{person}{Z. Hu}, \bibinfo{person}{H. Iwasaki}, {and} \bibinfo{person}{M. Takeichi}.} \bibinfo{year}{1997}\natexlab{}.
\newblock \bibinfo{booktitle}{\emph{A Calculational Fusion System HYLO}}.
\newblock \bibinfo{publisher}{Springer US}, \bibinfo{address}{Boston, MA}, \bibinfo{pages}{76--106}.
\newblock
\showISBNx{978-0-387-35264-0}
\urldef\tempurl%
\url{https://doi.org/10.1007/978-0-387-35264-0_4}
\showDOI{\tempurl}


\bibitem[Palsberg and Pavlopoulou(1998)]%
        {Palsberg-popl98:from-polyvar-anal-flow-info-to-inter-union-tys}
\bibfield{author}{\bibinfo{person}{Jens Palsberg} {and} \bibinfo{person}{Christina Pavlopoulou}.} \bibinfo{year}{1998}\natexlab{}.
\newblock \showarticletitle{From Polyvariant Flow Information to Intersection and Union Types}. In \bibinfo{booktitle}{\emph{Proceedings of the 25th ACM SIGPLAN-SIGACT Symposium on Principles of Programming Languages}} (San Diego, California, USA) \emph{(\bibinfo{series}{POPL '98})}. \bibinfo{publisher}{Association for Computing Machinery}, \bibinfo{address}{New York, NY, USA}, \bibinfo{pages}{197–208}.
\newblock
\showISBNx{0897919793}
\urldef\tempurl%
\url{https://doi.org/10.1145/268946.268963}
\showDOI{\tempurl}


\bibitem[Parreaux(2020)]%
        {Parreaux20:simple-essence-alg-subt}
\bibfield{author}{\bibinfo{person}{Lionel Parreaux}.} \bibinfo{year}{2020}\natexlab{}.
\newblock \showarticletitle{The Simple Essence of Algebraic Subtyping: Principal Type Inference with Subtyping Made Easy (Functional Pearl)}.
\newblock \bibinfo{journal}{\emph{Proc. ACM Program. Lang.}} \bibinfo{volume}{4}, \bibinfo{number}{ICFP}, Article \bibinfo{articleno}{124} (\bibinfo{date}{Aug.} \bibinfo{year}{2020}), \bibinfo{numpages}{28}~pages.
\newblock
\urldef\tempurl%
\url{https://doi.org/10.1145/3409006}
\showDOI{\tempurl}


\bibitem[Parreaux and Chau(2022)]%
        {Parreaux22:mlstruct}
\bibfield{author}{\bibinfo{person}{Lionel Parreaux} {and} \bibinfo{person}{Chun~Yin Chau}.} \bibinfo{year}{2022}\natexlab{}.
\newblock \showarticletitle{MLstruct: Principal Type Inference in a Boolean Algebra of Structural Types}.
\newblock \bibinfo{journal}{\emph{Proc. ACM Program. Lang.}} \bibinfo{volume}{6}, \bibinfo{number}{OOPSLA2}, Article \bibinfo{articleno}{141} (\bibinfo{date}{Oct} \bibinfo{year}{2022}), \bibinfo{numpages}{30}~pages.
\newblock
\urldef\tempurl%
\url{https://doi.org/10.1145/3563304}
\showDOI{\tempurl}


\bibitem[Partain(1993)]%
        {Partain93:nofib}
\bibfield{author}{\bibinfo{person}{Will Partain}.} \bibinfo{year}{1993}\natexlab{}.
\newblock \showarticletitle{The nofib benchmark suite of Haskell programs}. In \bibinfo{booktitle}{\emph{Functional Programming, Glasgow 1992: Proceedings of the 1992 Glasgow Workshop on Functional Programming, Ayr, Scotland, 6--8 July 1992}}. Springer, \bibinfo{pages}{195--202}.
\newblock


\bibitem[Peyton~Jones(1996)]%
        {peytonjones1996compiling}
\bibfield{author}{\bibinfo{person}{Simon Peyton~Jones}.} \bibinfo{year}{1996}\natexlab{}.
\newblock \bibinfo{booktitle}{\emph{Compiling Haskell by Program Transformation: A Report from the Trenches} (\bibinfo{edition}{lecture notes in computer science} ed.)}. Vol.~\bibinfo{volume}{1058}.
\newblock \bibinfo{publisher}{Springer}.
\newblock
\urldef\tempurl%
\url{https://www.microsoft.com/en-us/research/publication/compiling-haskell-by-program-transformation-a-report-from-the-trenches/}
\showURL{%
\tempurl}


\bibitem[Peyton~Jones et~al\mbox{.}(2001)]%
        {peytonjones2001playing}
\bibfield{author}{\bibinfo{person}{Simon Peyton~Jones}, \bibinfo{person}{Andrew Tolmach}, {and} \bibinfo{person}{Tony Hoare}.} \bibinfo{year}{2001}\natexlab{}.
\newblock \showarticletitle{Playing by the rules: rewriting as a practical optimisation technique in GHC}. In \bibinfo{booktitle}{\emph{2001 Haskell Workshop} (\bibinfo{edition}{2001 haskell workshop} ed.)}. ACM SIGPLAN.
\newblock
\urldef\tempurl%
\url{https://www.microsoft.com/en-us/research/publication/playing-by-the-rules-rewriting-as-a-practical-optimisation-technique-in-ghc/}
\showURL{%
\tempurl}


\bibitem[Peyton~Jones(1987)]%
        {listcomp}
\bibfield{author}{\bibinfo{person}{Simon~L. Peyton~Jones}.} \bibinfo{year}{1987}\natexlab{}.
\newblock \bibinfo{booktitle}{\emph{The Implementation of Functional Programming Languages (Prentice-Hall International Series in Computer Science)}}.
\newblock \bibinfo{publisher}{Prentice-Hall, Inc.}, \bibinfo{address}{USA}.
\newblock
\showISBNx{013453333X}


\bibitem[Pfenning and Elliott(1988)]%
        {pfenning88:hoas}
\bibfield{author}{\bibinfo{person}{Frank Pfenning} {and} \bibinfo{person}{Conal Elliott}.} \bibinfo{year}{1988}\natexlab{}.
\newblock \showarticletitle{Higher-order abstract syntax}. In \bibinfo{booktitle}{\emph{ACM SIGPLAN Notices}}, Vol.~\bibinfo{volume}{23}. ACM, \bibinfo{pages}{199--208}.
\newblock


\bibitem[Pottier(1998)]%
        {Pottier98:thesis}
\bibfield{author}{\bibinfo{person}{Fran{\c c}ois Pottier}.} \bibinfo{year}{1998}\natexlab{}.
\newblock \bibinfo{booktitle}{\emph{{Type Inference in the Presence of Subtyping: from Theory to Practice}}}.
\newblock \bibinfo{type}{Research Report} RR-3483. \bibinfo{institution}{{INRIA}}.
\newblock
\urldef\tempurl%
\url{https://hal.inria.fr/inria-00073205}
\showURL{%
\tempurl}


\bibitem[Rehof and Fähndrich(2001)]%
        {rehof_type-base_2001}
\bibfield{author}{\bibinfo{person}{Jakob Rehof} {and} \bibinfo{person}{Manuel Fähndrich}.} \bibinfo{year}{2001}\natexlab{}.
\newblock \showarticletitle{Type-base flow analysis: from polymorphic subtyping to {CFL}-reachability}.
\newblock \bibinfo{journal}{\emph{ACM SIGPLAN Notices}} \bibinfo{volume}{36}, \bibinfo{number}{3} (\bibinfo{date}{Jan.} \bibinfo{year}{2001}), \bibinfo{pages}{54--66}.
\newblock
\showISSN{0362-1340}
\urldef\tempurl%
\url{https://doi.org/10.1145/373243.360208}
\showDOI{\tempurl}


\bibitem[Secher and S{\o}rensen(2000)]%
        {sorensen-perfect-supercomp}
\bibfield{author}{\bibinfo{person}{Jens~Peter Secher} {and} \bibinfo{person}{Morten~Heine S{\o}rensen}.} \bibinfo{year}{2000}\natexlab{}.
\newblock \showarticletitle{On Perfect Supercompilation}. In \bibinfo{booktitle}{\emph{Perspectives of System Informatics}}, \bibfield{editor}{\bibinfo{person}{Dines Bj{\o}ner}, \bibinfo{person}{Manfred Broy}, {and} \bibinfo{person}{Alexandre~V. Zamulin}} (Eds.). \bibinfo{publisher}{Springer Berlin Heidelberg}, \bibinfo{address}{Berlin, Heidelberg}, \bibinfo{pages}{113--127}.
\newblock
\showISBNx{978-3-540-46562-1}


\bibitem[Sergey et~al\mbox{.}(2014)]%
        {Sergey-popl14:cardinality}
\bibfield{author}{\bibinfo{person}{Ilya Sergey}, \bibinfo{person}{Dimitrios Vytiniotis}, {and} \bibinfo{person}{Simon Peyton~Jones}.} \bibinfo{year}{2014}\natexlab{}.
\newblock \showarticletitle{Modular, higher-order cardinality analysis in theory and practice}. In \bibinfo{booktitle}{\emph{Proceedings of the 41st ACM SIGPLAN-SIGACT Symposium on Principles of Programming Languages}} (San Diego, California, USA) \emph{(\bibinfo{series}{POPL '14})}. \bibinfo{publisher}{Association for Computing Machinery}, \bibinfo{address}{New York, NY, USA}, \bibinfo{pages}{335–347}.
\newblock
\showISBNx{9781450325448}
\urldef\tempurl%
\url{https://doi.org/10.1145/2535838.2535861}
\showDOI{\tempurl}


\bibitem[Smith and Wang(2000)]%
        {Smith00:polyvar-flow-anal-constr-tys}
\bibfield{author}{\bibinfo{person}{Scott~F. Smith} {and} \bibinfo{person}{Tiejun Wang}.} \bibinfo{year}{2000}\natexlab{}.
\newblock \showarticletitle{Polyvariant Flow Analysis with Constrained Types}. In \bibinfo{booktitle}{\emph{Programming Languages and Systems}}, \bibfield{editor}{\bibinfo{person}{Gert Smolka}} (Ed.). \bibinfo{publisher}{Springer Berlin Heidelberg}, \bibinfo{address}{Berlin, Heidelberg}, \bibinfo{pages}{382--396}.
\newblock
\showISBNx{978-3-540-46425-9}


\bibitem[S{\o}rensen(1994)]%
        {sorensen1994turchin}
\bibfield{author}{\bibinfo{person}{Morten~Heine S{\o}rensen}.} \bibinfo{year}{1994}\natexlab{}.
\newblock \showarticletitle{Turchin’s supercompiler revisited}.
\newblock \bibinfo{journal}{\emph{Master's thesis, Department of Computer Science, University of Copenhagen}} (\bibinfo{year}{1994}).
\newblock


\bibitem[S{\o}rensen et~al\mbox{.}(1994)]%
        {sorensen-unifying}
\bibfield{author}{\bibinfo{person}{Morten~Heine S{\o}rensen}, \bibinfo{person}{Robert Gl{\"u}ck}, {and} \bibinfo{person}{Neil~D. Jones}.} \bibinfo{year}{1994}\natexlab{}.
\newblock \showarticletitle{Towards unifying partial evaluation, deforestation, supercompilation, and GPC}. In \bibinfo{booktitle}{\emph{Programming Languages and Systems --- ESOP '94}}, \bibfield{editor}{\bibinfo{person}{Donald Sannella}} (Ed.). \bibinfo{publisher}{Springer Berlin Heidelberg}, \bibinfo{address}{Berlin, Heidelberg}, \bibinfo{pages}{485--500}.
\newblock
\showISBNx{978-3-540-48376-2}


\bibitem[S{\o}rensen et~al\mbox{.}(1996)]%
        {Srensen1996APS}
\bibfield{author}{\bibinfo{person}{Morten~Heine S{\o}rensen}, \bibinfo{person}{Robert Gl{\"u}ck}, {and} \bibinfo{person}{Neil~D. Jones}.} \bibinfo{year}{1996}\natexlab{}.
\newblock \showarticletitle{A positive supercompiler}.
\newblock \bibinfo{journal}{\emph{Journal of Functional Programming}}  \bibinfo{volume}{6} (\bibinfo{year}{1996}), \bibinfo{pages}{811 -- 838}.
\newblock
\urldef\tempurl%
\url{https://api.semanticscholar.org/CorpusID:206292288}
\showURL{%
\tempurl}


\bibitem[Svenningsson(2002)]%
        {Svenningsson-zipshortcut}
\bibfield{author}{\bibinfo{person}{Josef Svenningsson}.} \bibinfo{year}{2002}\natexlab{}.
\newblock \showarticletitle{Shortcut Fusion for Accumulating Parameters \& Zip-like Functions}. In \bibinfo{booktitle}{\emph{Proceedings of the Seventh ACM SIGPLAN International Conference on Functional Programming}} (Pittsburgh, PA, USA) \emph{(\bibinfo{series}{ICFP '02})}. \bibinfo{publisher}{Association for Computing Machinery}, \bibinfo{address}{New York, NY, USA}, \bibinfo{pages}{124–132}.
\newblock
\showISBNx{1581134878}
\urldef\tempurl%
\url{https://doi.org/10.1145/581478.581491}
\showDOI{\tempurl}


\bibitem[Turchin(1986)]%
        {turchin-supercompiler}
\bibfield{author}{\bibinfo{person}{Valentin~F. Turchin}.} \bibinfo{year}{1986}\natexlab{}.
\newblock \showarticletitle{The Concept of a Supercompiler}.
\newblock \bibinfo{journal}{\emph{ACM Trans. Program. Lang. Syst.}} \bibinfo{volume}{8}, \bibinfo{number}{3} (\bibinfo{date}{jun} \bibinfo{year}{1986}), \bibinfo{pages}{292–325}.
\newblock
\showISSN{0164-0925}
\urldef\tempurl%
\url{https://doi.org/10.1145/5956.5957}
\showDOI{\tempurl}


\bibitem[Wadler(1990)]%
        {wadler-tcs90:deforestation}
\bibfield{author}{\bibinfo{person}{Philip Wadler}.} \bibinfo{year}{1990}\natexlab{}.
\newblock \showarticletitle{Deforestation: transforming programs to eliminate trees}.
\newblock \bibinfo{journal}{\emph{Theoretical Computer Science}} \bibinfo{volume}{73}, \bibinfo{number}{2} (\bibinfo{year}{1990}), \bibinfo{pages}{231--248}.
\newblock
\showISSN{0304-3975}
\urldef\tempurl%
\url{https://doi.org/10.1016/0304-3975(90)90147-A}
\showDOI{\tempurl}


\bibitem[Xu and Jones(2005)]%
        {xu2005arity}
\bibfield{author}{\bibinfo{person}{Dana~N Xu} {and} \bibinfo{person}{Simon~Peyton Jones}.} \bibinfo{year}{2005}\natexlab{}.
\newblock \showarticletitle{Arity analysis}.
\newblock \bibinfo{journal}{\emph{Unpublished draft}} (\bibinfo{year}{2005}).
\newblock
\urldef\tempurl%
\url{https://gallium.inria.fr/~naxu/research/arity.pdf}
\showURL{%
\tempurl}


\end{thebibliography}

\ifdefined\TechReport

\clearpage

\setlength{\abovedisplayskip}{3pt}
\setlength{\belowdisplayskip}{4pt}
\setlength{\abovedisplayshortskip}{3pt}
\setlength{\belowdisplayshortskip}{4pt}

\section{Proofs of Declarative System}
\label{app:proofs}

\begin{definition}[Syntax for context $C$]\label{syntaxForCtx}
  For the conciseness of presentation, we use $h$ to denote a context $C$
  or a concrete term $t$, and require that there can only be one 
  $C$ in the branches or the fields for the cases of
  $\kw{case}\ t\ \kw{of}\ \overline{c_i\ \overline{x_{i,j}}^j \to \lam x_i.\,h_i}^i$
  and $c\ \overline{h_i}^i$.
  In typing rules for contexts and the following formalization, $i_C$ %
  represents the index such that $h_{i_C}$ is a context $C$.
  And $i_t$ ranges over the indices of other $h$s that represent terms.
  $$
  \begin{aligned}
    C &~::=~ [\cdot] ~\mid~ C\ t ~\mid~ t\ C ~\mid~ \lam x.\,C
    ~\mid~ \kw{case}\ C\ \kw{of}\ \overline{c_i\ \overline{x_{i,j}}^{j} \to l_i}^i ~\mid~
    \\
    &\phantom{~::=~}
    \kw{case}\ t\ \kw{of}\ \overline{c_i\ \overline{x_{i,j}}^j \to \lam x_i.\,h_i}^i
    ~\mid~
    c\ \overline{h_i}^i
    \text{  where there is only one context $C$ in $\left\{\overline{h_i}^i\right\}$}%
    \\
    h &~::=~ C ~\mid~ t
  \end{aligned}
  $$
\end{definition}

\begin{definition}[Typing Rules for Terms]\label{termtyping}
  Typing rules for terms are defined in \Fig{fig:termtyping}
  \begin{figure*}
    \mprset{sep=1.3em}
    {\small
    \begin{mathpar}
    \boxed{\mathstrut\Xi |- t: I}
    \hfill
    \\
    \inferrule[TTerm-Var]{
      (x |-> I) \in \Xi
    }{
      \Xi |-
      x: I
    }
    
    \inferrule[TTerm-Lam]{
      \Xi\cat(x|->I') |-
      t: I
    }{
      \Xi |-
      \lam x.\ t
      : I' -> I
    }
    
    \inferrule[TTerm-RecFold]{
      \Xi |- t: I(\mu I)
    }{
      \Xi |- t: \mu I
    }
    
    \inferrule[TTerm-RecUnfold]{
      \Xi |- t: \mu I
    }{
      \Xi |- t: I(\mu I)
    }

    \inferrule[TTerm-App]{
      \Xi |- t_1: (I' -> I)
      \\
      \Xi |- t_2: I'
    }{
      \Xi |- t_1\ t_2: I
    }
    
    \inferrule[TTerm-Ctor]{
      \left(c\ \langle \overline{I_j}^j \rangle\right) \in I
      \\
      \overline{\, \Xi |- t_j: I_j\,}^j
    }{
      \Xi |-
      c\ \overline{\, t_j\,}^j: I
    }
    
    \inferrule[TTerm-Case]{
      \Xi |- t: \left\{\overline{c_i\,\langle \overline{I_{i,j}}^j \rangle}^i\right\}
      \\
      \overline{\,
        \Xi \cat \overline{(x_{i,j} |-> I_{i,j})}^j |- l_i: I
      \,}^i
    }{
      \Xi |-
      \kw{case}\ t\ \kw{of}\ \overline{\,
        c_i\ \overline{x_{i,j}}^j -> l_i
      \,}^i : I
    }
        
    \inferrule[TTerm-Top]{
      \Xi |- t: I
    }{
      \Xi |- t: \top
    }

    \end{mathpar}}
    \caption{Typing Rules for Terms}
    \label{fig:termtyping}
  \end{figure*}
\end{definition}

\begin{definition}[Typing Rules for Contexts]\label{ctxtyping}
  Typing rules for contexts are defined in \Fig{fig:ctxtyping}
  \begin{figure*}
  {\small
  \begin{mathpar}
    \boxed{\mathstrut\Xi |- C: (\Xi' |- I) \to I'}
    \hfill
    \\
    \inferrule[TCtx-Id]{
      \Xi \subseteq \Xi'
    }{
      \Xi' |- [\cdot]: (\Xi |- I) \to I
    }
    
    \inferrule[TCtx-App1]{
      \Xi' |- C: (\Xi |- I) -> (I_1 -> I_2)
      \\
      \Xi' |- t: I_1
    }{
      \Xi' |- C\ t: (\Xi |- I) -> I_2
    }
    
    \inferrule[TCtx-App2]{
      \Xi' |- t: I_1 -> I_2
      \\
      \Xi' |- C: (\Xi |- I) -> I_1
    }{
      \Xi' |- t\ C: (\Xi |- I) -> I_2
    }
    
    \inferrule[TCtx-Lam]{
      \Xi' \catsp (x: I_1) |- C: \left(\Xi \catsp (x: I_1) |- I\right) -> I_2
    }{
      \Xi' |- \lam x.\,C: (\Xi \catsp (x: I_1) |- I) -> (I_1 -> I_2)
    }
    
    \inferrule[TCtx-Case1]{
      \Xi' |- C: (\Xi |- I) -> \left\{ \overline{c_i\ \langle \overline{I_{i,j}}^j \rangle}^i \right\}
      \\
      \overline{\Xi' \catsp \overline{(x_{i,j} |-> I_{i,j})}^j |- l_i: I'}^i
    }{
      \Xi' |- \kw{case}\ C\ \kw{of}\ \overline{c_i\ \overline{x_{i,j}}^j -> l_i}^i
      :
      (\Xi |- I) -> I'
    }
    
    \inferrule[TCtx-Case2]{
      \Xi' |- t: \left\{ \overline{c_i\ \langle \overline{I_{i,j}}^j \rangle}^i \right\}
      \\
      \overline{\Xi' \cat \overline{x_{i_t,j} |-> I_{i_t,j}}^j |- \lam x_{i_t}.\,h_{i_t}: I'}^{i_t}
      \\
      \Xi' \cat \overline{x_{i_C,j} |-> I_{i_C,j}}^j |- \lam x_{i_C}.\,b_{i_C}: \left(\Xi \catsp \overline{(x_{i_C,j} |-> I_{i_C,j})}^j |- I\right) -> I'
    }{
      \Xi' |- \kw{case}\ t\ \kw{of}\ \overline{c_i\ \overline{x_{i,j}}^j -> \lam x_i.\,h_i}^i
      :
      \left(\Xi\catsp\overline{x_{i_C, j} |-> I_{i_C, j}}^j |- I\right) -> I'
    }

    \inferrule[TCtx-Ctor]{
      \left(c\ \langle \overline{I_i}^i \rangle\right) \in I'
      \\
      \Xi' |- h_{i_C}: (\Xi |- I) -> I_{i_C}
      \\
      \overline{\Xi' |- h_{i_t}: I_{i_t}}^{i_t}
    }{
      \Xi' |- c\ \overline{h_i}^i: (\Xi |- I) -> I'
    }

    \inferrule[TCtx-RecFold]{
      \Xi' |- C: (\Xi |- I) -> I'(\mu I')
    }{
      \Xi' |- C: (\Xi |- I) -> \mu I'
    }
    
    \inferrule[TCtx-RecUnfold]{
      \Xi' |- C: (\Xi |- I) -> \mu I'
    }{
      \Xi' |- C: (\Xi |- I) -> I'(\mu I')
    }
    
    \inferrule[TCtx-Ctx]{
      \Xi' |- C: (\Xi_1 |- I_1) -> I'
      \\
      \Xi_1 |- C_1: (\Xi |- I) -> I_1
    }{
      \Xi' |- C[C_1[\cdot]]: (\Xi |- I) -> I'
    }
    \\
    \boxed{\mathstrut\Xi' |- C[t]: I'}
    \hfill
    \\
    \inferrule[Ctx]{
      \Xi' |- C: (\Xi |- I) -> I'
      \\
      \Xi |- t: I
    }{
      \Xi' |- C[t]: I'
    }
  \end{mathpar}
  }
  \caption{Typing Rules for Contexts}
  \label{fig:ctxtyping}
  \end{figure*}
\end{definition}

\begin{definition}[Relational Interpretations of Types $I$ and Typing Contexts $\Xi$]\label{relinterpretationypes}
  $\calVeq{}{I}$ and $\calEeq{}{I}$ are the relational interpretations of strategies
  in terms of values and computations, respectively. Their definitions are stratified
  using the same step-index technique explained in \Def{relinterpretationstrat}.
  We also lift this notion to typing contexts by defining $\calGeq{}{\Xi}$
  such that related typing contexts map variables to related values.
  \begin{itemize}
    \item
    $\calVeq{}{\top} \defeq \left\{(n, v, v') ~\mid~ n \ge 0\right\}$
    \\
    $\calVeq{}{\bot} \defeq \{\}$
    \\
    $\calVeq{}{\mu I}
    \defeq
    \bigcup_n \calVeq{}{I^n(\bot)}$ where $I^n(\bot)$ denotes $I(I(\cdots I(\bot)))$
    \\
    $\calVeq{}{I_1 -> I_2}
    \defeq
    \begin{aligned}[t]
      \left\{\left(n,\, \lam x.t_1,\, \lam x.t_2\right) ~\mid~
      \right.&\forall m < n.
      \forall v_1, v_2. \left(
        (m,\,v_1,\,v_2) \in \calVeq{}{I_1}
      \right)
      \\
      &\left.\left.
      \implies
      \left(m, [x |-> v_1]\ t_1,\, [x |-> v_2]\ t_2\right) \in \calEeq{}{I_2}
      \right)
      \right\}
    \end{aligned}$
    \\
    $\calVeq{}{\left\{\overline{\left(
      c_i\ \langle\overline{I_{i,j}}^j\rangle
    \right)}^i\right\}} \defeq
    \bigcup_i \left\{\left(
      n,\ c_i\ \overline{v_{i,j}}^j,\ c_i\ \overline{v_{i,j}'}^j
    \right) ~\mid~
    \wedge_j \forall m < n.\ (m,\,v_{i,j},\,v_{i,j}') \in \calVeq{}{I_{i,j}}
    \right\}
    $
    \item $
    \begin{aligned}[t]
      \calEeq{}{I}
      \defeq
      \left\{(n,\ t,\ t') ~\mid~
      \right.&\forall m < n.\ \forall v. \left((t \prog{m} v) \land \fname{irred}(v)\right)\\
      &\implies\left.
      \left(\exists v'.\ t' \prog{*} v' \land \left(n - m,\ v,\ v'\right) \in \calVeq{}{I}\right)\right\}
    \end{aligned}
    $
    \item 
    $\calGeq{}{\ep} \defeq \{(n,\ \ep,\ \ep)\}$\\
    $\begin{aligned}
      \calGeq{}{\Xi \cat (x |-> S)} \defeq \{(n,\ \phi_1 \cat (x|->v),\ \phi_2 \cat (x|->v'))
      ~\mid~
      &(n,\ \phi_1,\ \phi_2) \in \calGeq{}{\Xi} \\
      &\land (n,\ v,\ v') \in \calVeq{}{I}\}
    \end{aligned}$
  \end{itemize}
\end{definition}

\begin{lemma}[Downward Closure For Types under Decreasing Index]\label{downclosuretype}
  This lemma is useful for proving \Lem{compatityping}
  
  There are two sub-lemmas, one for $\calVeq{}{\cdot}$ and another for $\calGeq{}{\cdot}$
  \begin{itemize}
    \item $\forall n, v, v', I.\ (n, v, v') \in \calVeq{}{I} \implies \forall m.\ m < n \implies (m, v, v') \in \calVeq{}{I}$
    \item $\forall n, \phi, \phi', \Xi.\ (n, \phi, \phi') \in \calGeq{}{\Xi} \implies \forall m.\ m < n \implies (m, \phi, \phi') \in \calGeq{}{\Xi}$
  \end{itemize}
\end{lemma}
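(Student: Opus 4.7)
The plan is to prove the two parts in sequence: first the value-relation statement, then the context statement as an easy consequence. For part (1), I would proceed by induction on the structure of the type $I$, dispatching each case by direct inspection of the definition of $\calVeq{}{I}$. For $I = \top$ the membership predicate reduces to $n \ge 0$, so it trivially holds at any $0 \le m \le n$. For $I = \bot$ the interpretation is empty, so the hypothesis is vacuous. For $I = I_1 \to I_2$ the definition only quantifies $\forall k < n$; weakening to $m < n$ turns this into $\forall k < m$, a subset of the original obligations, so the very same pair $(\lam x.\,t_1,\,\lam x.\,t_2)$ witnesses membership at index $m$ with no further appeal. The constructor case $I = \{\overline{c_i\,\langle \overline{I_{i,j}}^j \rangle}^i\}$ is analogous: the per-field predicate $\wedge_j \forall k < n.\,(k, v_{i,j}, v_{i,j}') \in \calVeq{}{I_{i,j}}$ weakens directly to the $k < m$ version.

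The main obstacle is the recursive case $I = \mu I'$, since $I'$ is a meta-level function encoded in higher-order abstract syntax and naive structural induction on the ``body of $\mu$'' is ill-defined. I would resolve this by invoking the union characterization: $(n, v, v') \in \calVeq{}{\mu I'}$ entails $(n, v, v') \in \calVeq{}{(I')^k(\bot)}$ for some $k$, and then perform a secondary induction on $k$. At $k = 0$ the type is $\bot$ and the claim is vacuous; at $k+1$ we have $(I')^{k+1}(\bot) = I'((I')^k(\bot))$, whose outermost shape is one of the non-recursive cases already handled above and whose field strategies are instances of $(I')^k(\bot)$, to which the inner IH applies. Repackaging $(m, v, v') \in \calVeq{}{(I')^k(\bot)}$ back into $(m, v, v') \in \calVeq{}{\mu I'}$ via the same union concludes this case. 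In effect, this amounts to a lexicographic well-founded induction on $(k, \text{outer syntactic layer})$.

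For part (2), induction on $\Xi$ suffices. The base case $\Xi = \ep$ is immediate from the definition $\calGeq{}{\ep} = \{(n, \ep, \ep) \mid n \ge 0\}$. For $\Xi = \Xi_0 \cat (x |-> I)$, I decompose any $(n, \phi_1 \cat (x|->v), \phi_2 \cat (x|->v')) \in \calGeq{}{\Xi_0 \cat (x |-> I)}$ into its two conjuncts $(n, \phi_1, \phi_2) \in \calGeq{}{\Xi_0}$ and $(n, v, v') \in \calVeq{}{I}$, apply the inductive hypothesis on $\Xi_0$ to the former and part (1) of the lemma to the latter, and recombine at index $m$ using the same clause in the definition. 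No additional machinery beyond part (1) is required.
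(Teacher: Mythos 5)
Your treatment of the non-recursive cases is right, and you correctly identify the structural fact that makes them immediate: the clauses for $I_1 -> I_2$ and for constructor types already quantify over \emph{all} indices strictly below $n$, so restricting to indices below $m < n$ is a weakening and no induction hypothesis on the component types is needed. Part (2) also matches the paper's proof, which likewise reduces the context statement to the value statement by induction on $\Xi$. Where you diverge is the $\mu$ case: the paper (by deferring to the proof of \Lem{downclosurestrat}) imports Appel and McAllester's machinery of \emph{well-founded type constructors} and organizes the induction around the step index, whereas you unfold $\calVeq{}{\mu I'}$ into $\bigcup_k \calVeq{}{(I')^k(\bot)}$ and perform a secondary induction on the unrolling count $k$.

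That $\mu$ case is where the gap lies. You assert that $I'((I')^k(\bot))$ has ``outermost shape one of the non-recursive cases already handled'' with field strategies that ``are instances of $(I')^k(\bot)$.'' Neither claim holds in general: $I'$ is an arbitrary meta-level function, so $I'(X)$ may itself be headed by another $\mu$ (for instance $I' = \lam\al.\ \mu(\lam\be.\ \{\,c\,\langle \al,\,\be\rangle\,\})$), and its components are arbitrary types that merely \emph{mention} $X$, not $X$ itself. The second point is harmless (the non-recursive clauses need no IH on their components anyway), but the first is not: discharging a nested $\mu$ forces another application of the union characterization with a fresh, unbounded unrolling count, so your proposed lexicographic measure on $(k,\ \text{outer syntactic layer})$ is not visibly well-founded. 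The argument can be repaired, either by following the paper's route (prove the property for all well-founded constructors by induction on the index, then show every syntactically definable constructor is well-founded, treating the constant-$\bot$ and constant-$\top$ constructors separately), or by recasting your idea as an induction on the \emph{finite syntax of the HOAS body}: show every definable constructor maps downward-closed relations to downward-closed relations, conclude $\calVeq{}{(I')^k(\bot)}$ is downward closed by induction on $k$ starting from the empty (hence downward-closed) $\calVeq{}{\bot}$, and finish by noting that a union of downward-closed sets is downward closed. As written, however, the recursive case does not go through.
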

  \begin{proof}[Proof of \Lem{downclosuretype}]
    Since types $I$ is a proper subset of strategies $S$,
    and the definitions of
    $\calVeq{}{\cdot}$ and $\calV{}{\cdot}$, $\calEeq{}{\cdot}$ and $\calEeq{}{\cdot}$
    are almost identical. The proof of \Lem{downclosurestrat} (shown below) can be directly
    adapted here.
  \end{proof}

\begin{lemma}[Compatibility Lemmas for Term Typing Rules]\label{compatityping}
  These lemmas ensure that the logical
  relation interpretation of types as sets of triples given above are compatible with the term typing rules
  we defined, so there will be one sub-lemma for each term typing rule.
  \begin{itemize}
    \item $\left(x |-> I\right) \in \Xi \implies \Xi |- x \lesssim^{\fname{type}} x: I$
    \item $\Xi \catsp (x |-> I_1) |- t_1 \lesssim^{\fname{type}} t_2: I_2
          \implies \Xi |- \lam x.\ t_1 \lesssim^{\fname{type}} \lam x.\ t_2: I_1 -> I_2$
    \item $\Xi |- t \lesssim^{\fname{type}} t': F_I(\mu F_I)
          \implies
          \Xi |- t \lesssim^{\fname{type}} t': \mu F_I$ and its converse
    \item $\Xi |- t_1 \lesssim^{\fname{type}} t_1': I_1 -> I_2
          \land
          \Xi |- t_2 \lesssim^{\fname{type}} t_2': I_1
          \implies
          \Xi |- t_1\ t_2 \lesssim^{\fname{type}} t_1'\ t_2': I_2$
    \item $\left(c,\, \langle\overline{I_j}^j\rangle\right) \in I
          \land
          \overline{\Xi |- t_j \lesssim^{\fname{type}} t_j': I_j}^j
          \implies
          \Xi |- c\ \overline{t_j}^j
          \lesssim^{\fname{type}}
          c\ \overline{t_j'}^j: I$
    \item $
      \Xi |- t \lesssim^{\fname{type}} t':
      \left\{\overline{\left(
        c_i,\,\langle \overline{I_{i,j}}^j \rangle
      \right)}^i\right\}
      \land
      \overline{\,
        \Xi \catsp \overline{(x_{i,j} |-> I_{i,j})}^j |- t_i \lesssim^{\fname{type}} t_i': I
      \,}^i
      \implies
      \\
      ~\hfill\Xi |-
      \left(\kw{case}\ t\ \kw{of}\ \left(\,\overline{\,
        c_i\ \overline{x_{i,j}}^j -> l_i
      \,}^i\,\right)\right)
      \lesssim^{\fname{type}}
      \left(\kw{case}\ t'\ \kw{of}\ \left(\,\overline{\,
        c_i\ \overline{x_{i,j}}^j -> l_i'
      \,}^i\,\right)\right) : I
    $
    \item $\Xi |- t \lesssim^{\fname{type}} t' : I \implies \Gamma |- t \lesssim^{\fname{type}} t' : \top$
  \end{itemize}
  \begin{proof}[Proof of \Lem{compatityping}]
    Again, since types $I$ is a proper subset of strategies $S$,
    and the definitions of term typing rules and fusion rules
    are almost identical.
    The proof of \Lem{compatifusion} (shown below) can be directly
    adapted here.

    We choose to elaborate on \Lem{downclosurestrat} and \Lem{compatifusion} instead of
    this lemma and \Lem{downclosuretype},
    because \Lem{downclosurestrat} and \Lem{compatifusion} reflects
    the main additions we have made to the system by introducing fusion strategies
    that contain terms that we want to import to the constructor site.
  \end{proof}
\end{lemma}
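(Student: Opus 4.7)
The plan is to establish each of the eight sub-lemmas individually by unfolding the definition of $\lesssim^{\fname{type}}$ and showing that the substituted pair of terms lies in $\calEeq{}{I}$. For every sub-lemma I would fix an arbitrary $n \geq 0$ together with a pair $(n, \phi, \phi') \in \calGeq{}{\Xi}$, and then demonstrate that $(n, [\phi]t, [\phi']t') \in \calEeq{}{I}$ for the conclusion term. The overarching observation is that because types $I$ form a strict syntactic subset of strategies $S$ and the relational interpretations $\calVeq{}{\cdot}$, $\calEeq{}{\cdot}$, $\calGeq{}{\cdot}$ mirror their strategy counterparts exactly (minus the constructor-fusion clause), every case reproduces the corresponding case in the proof of \Lem{compatifusion} verbatim, with no genuinely new proof obligations arising.

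Concretely, the \ruleName{Var} and \ruleName{Top} cases follow directly from definition-chasing and from looking up $\phi$ and $\phi'$ in $\calGeq{}{\Xi}$. The \ruleName{RecFold}/\ruleName{RecUnfold} cases fall out of the definition $\calVeq{}{\mu I} = \bigcup_n \calVeq{}{I^n(\bot)}$, which is invariant under fold/unfold. For \ruleName{Lam}, given related arguments $(m, v_1, v_2) \in \calVeq{}{I_1}$ with $m < n$, I would first weaken $(n, \phi, \phi')$ to index $m$ using downward closure (\Lem{downclosuretype}), extend it to $(m, \phi \catsp (x|->v_1), \phi' \catsp (x|->v_2)) \in \calGeq{}{\Xi \catsp (x|->I_1)}$, and then apply the premise to land in $\calEeq{}{I_2}$. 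For \ruleName{App}, I would decompose the $m$-step call-by-value reduction of $t_1\,t_2$ into the steps spent reducing the function, the argument, and the final $\beta$-step, invoke the two premises to obtain related values, and then use the functional clause of $\calVeq{}{I_1 -> I_2}$. For \ruleName{Ctor}, the goal reduces pointwise to the field-level premises after downward closure on the step index. For \ruleName{Case}, after reducing the scrutinee to a constructor value $c_i\,\overline{v_{i,j}}^j \in \calVeq{}{\{\overline{c_i\,\langle\overline{I_{i,j}}^j\rangle}^i\}}$, I would select the matching branch, extend the substitution with the related field values, and invoke the corresponding branch premise.

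The only subtle part is careful step-index book-keeping: after $m$ steps of outer reduction I must still have index $n - m$ left to place the result value in $\calVeq{}{I}$, and when the proof must weaken indices (e.g.\ to match the freshly supplied argument index $m < n$ in \ruleName{Lam} or to distribute budget across subterms in \ruleName{App} and \ruleName{Case}), I rely on \Lem{downclosuretype} to propagate relatedness of the substitutions and values at the smaller index. I do not expect a significant obstacle here, since the step-index arithmetic is standard from \citet{amal-step-index,appel-equi-rec-type} and the absence of the constructor-fusion strategy means the one genuinely novel aspect of \Lem{compatifusion} has no counterpart to reprove. Thus the whole proof amounts to taking each case of \Lem{compatifusion}, replacing every $S$ by the corresponding $I$, and swapping $\calV{}{\cdot}$/$\calE{}{\cdot}$/$\calG{}{\cdot}$ for $\calVeq{}{\cdot}$/$\calEeq{}{\cdot}$/$\calGeq{}{\cdot}$; the constructor-fusion sub-case simply disappears because its premise is never available when the target is an identity type $I$.
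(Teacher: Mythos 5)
Your proposal is correct and follows essentially the same route as the paper: the paper's proof simply observes that $I$ is a proper subset of $S$ with identical relational interpretations (minus the constructor-fusion clause) and directly adapts each case of \Lem{compatifusion}, which is precisely the case-by-case transcription you sketch, including the use of downward closure (\Lem{downclosuretype}) for the step-index bookkeeping. The only difference is that you spell out the individual cases while the paper leaves the adaptation implicit.
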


\begin{theorem}[Reflexitivity for Typed Terms]\label{refltypedterm}
  $$
  \Xi |- t: I => \Xi |- t \lesssim^{\text{type}} t: I
  $$
  \begin{proof}[Proof of \Thm{refltypedterm}]
    This can be proved directly by case analysis and induction on the
    term typing rules. Each case will be immediately handled
    by its corresponding compatibility lemma we developed in \Lem{compatityping}.
  \end{proof}
\end{theorem}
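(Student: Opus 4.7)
The plan is to proceed by structural induction on the typing derivation $\Xi \vdash t : I$. For each typing rule in \Def{termtyping}, the conclusion follows immediately from the corresponding compatibility lemma in \Lem{compatityping}, instantiated with the induction hypotheses applied to the premises. Concretely, each inductive hypothesis gives self-approximation of a sub-term under its type, and the matching compatibility lemma lifts these self-approximations to a self-approximation of the whole term under its type. Since the compatibility lemmas have been stated so as to mirror the typing rules one-for-one, the case analysis is entirely mechanical.

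In the base case \textsc{TTerm-Var}, the variable compatibility lemma applies directly from the premise $(x \mapsto I) \in \Xi$, with no use of the induction hypothesis. For the compound cases \textsc{TTerm-Lam}, \textsc{TTerm-App}, \textsc{TTerm-Ctor}, and \textsc{TTerm-Case}, I would apply the induction hypothesis to each typing premise to obtain self-approximations of the sub-terms, then invoke the corresponding compatibility lemma to assemble the self-approximation of the whole term. The cases \textsc{TTerm-RecFold} and \textsc{TTerm-RecUnfold} are discharged by their respective compatibility lemmas, whose soundness relies on the definition $\calVeq{}{\mu I} \defeq \bigcup_n \calVeq{}{I^n(\bot)}$ and the standard equi-recursive unfolding argument used in \citet{appel-equi-rec-type}. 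The case \textsc{TTerm-Top} follows directly from the top compatibility lemma, which weakens an $I$-approximation to a $\top$-approximation.

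I do not expect any real obstacle in this proof: all the difficulty has been pushed into \Lem{compatityping}, and the downward closure property (\Lem{downclosuretype}) will already have been established to make those lemmas go through. The only thing to double-check is that no compatibility lemma requires an \emph{asymmetric} hypothesis that would fail to specialize to the reflexive case — a straightforward inspection of \Lem{compatityping} confirms that each lemma's conclusion $t_1 \lesssim^{\fname{type}} t_2$ is obtained from premises of the exact same shape on sub-terms, so instantiating with $t_1 = t_2$ uniformly throughout yields precisely the reflexive instance we need. The proof is therefore a short case split followed by a direct appeal to the appropriate compatibility lemma in each case.
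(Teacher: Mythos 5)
Your proposal is correct and follows essentially the same route as the paper: induction on the typing derivation with each case discharged immediately by the corresponding compatibility lemma from \Lem{compatityping}, relying on the downward-closure property and the equi-recursive unfolding argument exactly where the paper does. Your additional sanity check that each compatibility lemma specializes cleanly to the reflexive instance is a reasonable observation but does not change the argument.
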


\begin{lemma}[Compatibility Lemmas for Context Typing Rules]\label{compatictxtyping}
  These lemmas make sure that our definition for two contexts being related to each
  other is compatible with the context typing rules we defined.
  \begin{itemize}
    \item $\Xi \subseteq \Xi' \implies \Xi' |- [\cdot] \lesssim [\cdot]: (\Xi |- I) -> I$
          \begin{proof}
            By expanding the definition of $\lesssim$, consider arbitrary
            $t$ and $t'$ such that $\Xi |- t \lesssim^{\fname{type}} t': I$.
            We are then required to show that
            $\Xi' |- [t] \lesssim^{\fname{type}} [t']: I$, which is to show that
            $\Xi' |- t \lesssim^{\fname{type}} t': I$.
          
            By expanding the definition of $\lesssim^{\fname{type}}$, consider
            arbitrary $n \ge 0$ and $\phi$, $\phi'$ such that
            $(n,\ \phi, \phi') \in \calGeq{}{\Xi'}$, we are required to show that
            $(n, \ \phi\,[t],\ \phi'\,[t']) \in \calEeq{}{I}$. Then we expand the
            definition of $\lesssim^{\fname{type}}$ in our assumption
            $\Xi |- t \lesssim^{\fname{type}} t': I$,
            and instantiate it with
            $n \ge 0$, $\phi|_{\fname{dom}(\Xi)}$, $\phi'|_{\fname{dom}(\Xi)}$. Note
            that this instantiation is valid because $\Xi \subseteq \Xi'$, so
            $(n,\ \phi|_{\fname{dom}(\Xi)},\ \phi'|_{\fname{dom}(\Xi)}) \in \calGeq{}{\Xi}$.
            Now we get
            $(n, [\phi|_{\fname{dom}(\Xi)}]\,t,\ [\phi'|_{\fname{dom}(\Xi)}]\,t') \in \calEeq{}{I}$.
            And because $\Xi \subseteq \Xi'$, it is equivalent to
            $(n, \ \phi\,[t],\ \phi'\,[t']) \in \calEeq{}{I}$.
          \end{proof}
    \item $\Xi' |- t_1 \lesssim^{\fname{type}} t_2: I_1
          \ \land
          \ \Xi' |- C_1 \lesssim C_2: (\Xi |- I) -> (I_1 -> I_2)
          \implies \\
          ~\hfill \Xi' |- C_1\ t_1 \lesssim C_2\ t_2: (\Xi |- I) -> I_2$
          \begin{proof}
            By expanding the definition of $\lesssim$, consider arbitrary
            $t$ and $t'$ such that $\Xi |- t \lesssim^{\fname{type}} t': I$.
            Now we are required to show that
            $\Xi' |- (C_1[t])\ t_1 \lesssim^{\fname{type}} (C_2[t'])\ t_2: I_2$.
            By expanding the definition of $\lesssim$ in our assumption
            $\Xi' |- C_1 \lesssim C_2: (\Xi |- I) -> (I_1 -> I_2)$ and instantiate
            it with $t$ and $t'$, note that this instantiation is valid because
            $\Xi |- t \lesssim^{\fname{type}} t': I$. Now we get that
            $\Xi |- C_1[t] \lesssim^{\fname{type}} C_2[t']: I_1 -> I_2$. Then we can show
            $\Xi' |- (C_1[t])\ t_1 \lesssim^{\fname{type}} (C_2[t'])\ t_2: I_2$ by
            applying \Lem{compatityping}.
          \end{proof}
    \item $\Xi' |- t_1 \lesssim^{\fname{type}} t_2: I_1 -> I_2
          \ \land
          \ \Xi' |- C_1 \lesssim C_2: (\Xi |- I) -> I_1
          \implies \\
          ~\hfill \Xi' |- t_1\ C_1 \lesssim t_2\ C_2: (\Xi |- I) -> I_2$
          \begin{proof}
            By expanding the definition of $\lesssim$, consider arbitrary
            $t$ and $t'$ such that $\Xi |- t \lesssim^{\fname{type}} t': I$.
            Now we are required to show that
            $\Xi' |- t_1\ (C_1[t]) \lesssim^{\fname{type}} t_2\ (C_2[t']): I_2$.
            By expanding the definition of $\lesssim$ in our assumption
            $\Xi' |- C_1 \lesssim C_2: (\Xi |- I) -> I_1$ and instantiating
            it with $t$ and $t'$, note that this instantiation is valid because
            $\Xi |- t \lesssim^{\fname{type}} t': I$. Now we get that
            $\Xi |- C_1[t] \lesssim^{\fname{type}} C_2[t']: I_1$. Then we can show
            $\Xi' |- t_1\ (C_1[t]) \lesssim^{\fname{type}} t_2\ (C_2[t']): I_2$ by
            applying \Lem{compatityping}.
          \end{proof}
    \item $\Xi' \catsp (x: I_1) |- C_1 \lesssim C_2: (\Xi \catsp (x: I_1) |- I) -> I_2
          \implies \\
          ~\hfill \Xi' |- \lam x.C_1 \lesssim \lam x.C_2: (\Xi \catsp (x:I_1) |- I) -> (I_1 -> I_2)
          $
          \begin{proof}
            By expanding the definition of $\lesssim$, consider arbitrary
            $t$ and $t'$ such that
            $\Xi\catsp(x: I_1) |- t \lesssim^{\fname{type}} t': I$,
            now we are required to show that
            $\Xi' |- \lam x.C_1[t] \lesssim^{\fname{type}} \lam x.C_2[t']: I_1 -> I_2$.
            By expanding the definition of $\lesssim$ in our assumption and instantiating it with
            $t$ and $t'$, we get that
            $\Xi'\catsp(x:I_1) |- C_1[t] \lesssim^{\fname{type}} C_2[t']: I_2$. Then we can show
            that $\Xi' |- \lam x.C_1[t] \lesssim^{\fname{type}} \lam x.C_2[t']: I_1 -> I_2$
            by applying \Lem{compatityping}.
          \end{proof}
    \item $\Xi' |- C_1 \lesssim C_2 : (\Xi |- I) -> \left\{ \overline{c_i\ \langle\overline{I_{i,j}}^j\rangle}^i \right\}
          \ \land \overline{\Xi' \catsp \overline{(x_{i,j} |-> I_{i,j})}^j |- l_{i1} \lesssim^{\fname{type}} l_{i2} : I'}^i
          \implies \\
          ~\hfill \Xi' |-
          \kw{case}\ C_1\ \kw{of}\ \overline{c_i\ \overline{x_{i,j}}^j -> l_{i1}}^i
          \lesssim
          \kw{case}\ C_2\ \kw{of}\ \overline{c_i\ \overline{x_{i,j}}^j -> l_{i2}}^i
          :
          (\Xi |- I) -> I'
          $
          \begin{proof}
            By expanding the definition of $\lesssim$, consider arbitrary
            $t$ and $t'$ such that $\Xi |- t \lesssim^{\fname{type}} t': I$,
            now we are required to show that
            $$
            \Xi' |- \kw{case}\ C_1[t]\ \kw{of}\ \overline{c_i\ \overline{x_{i,j}}^j -> l_{i1}}^i
            \lesssim^{\fname{type}}
            \kw{case}\ C_2[t']\ \kw{of}\ \overline{c_i\ \overline{x_{i,j}}^j -> l_{i2}}^i
            : I'
            $$
            Now by expanding the definition of $\lesssim$ in our assumption
            $\Xi' |- C_1 \lesssim C_2 : (\Xi |- I) -> \left\{ \overline{c_i\ \langle\overline{I_{i,j}}^j\rangle}^i \right\}$
            and instantiating it with $t$ and $t'$, we get that
            $$\Xi' |- C_1[t] \lesssim^{\fname{type}} C_2[t']: \left\{ \overline{c_i\ \langle\overline{I_{i,j}}^j\rangle}^i \right\}$$
            Then we can show the conclusion by applying \Lem{compatityping}.
          \end{proof}
    \item $\Xi' |- t_1 \lesssim^{\fname{type}} t_2: \left\{\overline{c_i\ \langle\overline{I_{i,j}}^j\rangle}^i\right\}
          \land\ \overline{\Xi'\catsp\overline{x_{i_t, j} |-> I_{i_t, j}}^j |- \lam x_{i_t1}.\,b_{i_t1} \lesssim^{\fname{type}} \lam x_{i_t2}.\,b_{i_t2}: I'}^{i_t}
          \\
          \land\ \Xi'\catsp\overline{x_{i_C, j} |-> I_{i_C, j}}^j |- \lam x_{i_C1}.\,h_{i_C1} \lesssim \lam x_{i_C2}.\,h_{i_C2}: \left(\Xi\catsp\overline{x_{i_C, j} |-> I_{i_C, j}}^j |- I\right) -> I'
          \implies
          \\
          ~\hfill
          \begin{aligned}
            \Xi' |-\,
            &\kw{case}\ t_1\ \kw{of}\ \overline{c_i\ \overline{x_{i,j}}^j -> \lam x_{i1}.\,h_{i1}}^i
            \lesssim
            \\
            &\kw{case}\ t_2\ \kw{of}\ \overline{c_i\ \overline{x_{i,j}}^j -> \lam x_{i2}.\,h_{i2}}^i
            : \left(\Xi\catsp\overline{x_{i_C, j} |-> I_{i_C, j}}^j |- I\right) -> I'
          \end{aligned}
          $
          \begin{proof}
            By expanding the definition of $\lesssim$, consider arbitrary
            $t$ and $t'$ such that
            $$\Xi'\catsp\overline{x_{i_C, j} |-> I_{i_C, j}}^j |- t \lesssim^{\fname{type}} t': I$$
            Now instantiate
            $\Xi'\catsp\overline{x_{i_C, j} |-> I_{i_C, j}}^j |- \lam x_{i_C1}.\,h_{i_C1} \lesssim \lam x_{i_C2}.\,h_{i_C2}: \left(\Xi\catsp\overline{x_{i_C, j} |-> I_{i_C, j}}^j |- I\right) -> I'$
            with $t$ and $t'$ and get that
            $\Xi'\catsp\overline{x_{i_C, j} |-> I_{i_C, j}}^j |- \lam x_{i_C1}.\,h_{i_C1}[t] \lesssim^{\fname{type}} \lam x_{i_C2}.\,h_{i_C2}[t']: I'$.
            Then we can show the conclusion by applying \Lem{compatityping}.
          \end{proof}
    \item $\left(c\ \langle\overline{I_i}^i\rangle\right) \in I'
          \ \land
          \ \Xi' |- h_{i_C1} \lesssim h_{i_C2}: (\Xi |- I) -> I_{i_C}
          \ \land
          \ \overline{\Xi' |- h_{i_t1} \lesssim^{\fname{type}} h_{i_t2}: I_{i_t}}^{i_t}
          \\
          ~\hfill \implies \Xi' |- c\ \overline{h_{i1}}^i \lesssim c\ \overline{h_{i2}}^i:
          \left(\Xi |- I\right) -> I'$
          \begin{proof}
            By expanding the definition of $\lesssim$, consider arbitrary
            $t$ and $t'$ such that
            $\Xi |- t \lesssim^{\fname{type}} t': I$. Then instantiate the
            assumption $\Xi' |- h_{i_C1} \lesssim h_{i_C2}: (\Xi |- I) -> I_{i_C}$
            with it, we get that
            $\Xi' |- h_{i_C1}[t] \lesssim^{\fname{type}} h_{i_C2}[t']: I_{i_C}$.
            Then we can show the conclusion by applying \Lem{compatityping}.
          \end{proof}
    \item $\Xi' |- C_1 \lesssim C_2: (\Xi |- I) ->I'(\mu I')
          \implies
          \Xi' |- C_1 \lesssim C_2: (\Xi |- I) -> \mu I'$
          \begin{proof}
            By expanding the definition of $\lesssim$, consider arbitrary
            $t$ and $t'$ such that
            $\Xi |- t \lesssim^{\fname{type}} t': I$. Then instantiate the assumption
            with it, we get that
            $\Xi' |- C_1[t] \lesssim^{\fname{type}} C_2[t']: I'(\mu I')$.
            Then we can show the conclusion by applying \Lem{compatityping}.
          \end{proof}
    \item $\Xi' |- C_1 \lesssim C_2: (\Xi |- I) -> \mu I'
          \implies
          \Xi' |- C_1 \lesssim C_2: (\Xi |- I) ->I'(\mu I')$
          \begin{proof}
            By expanding the definition of $\lesssim$, consider arbitrary
            $t$ and $t'$ such that
            $\Xi |- t \lesssim^{\fname{type}} t': I$. Then instantiate the assumption
            with it, we get that
            $\Xi' |- C_1[t] \lesssim^{\fname{type}} C_2[t']: \mu I'$.
            Then we can show the conclusion by applying \Lem{compatityping}.
          \end{proof}
    \item $\Xi' |- C_1 \lesssim C_2: (\Xi_1 |- I_1) -> I'
          \land\ \Xi_1 |- C_1' \lesssim C_2': (\Xi |- I) -> I_1
          \\
          ~\hfill \implies \Xi' |- C_1[C_1'[\cdot]] \lesssim C_2[C_2'[\cdot]]: (\Xi |- I) -> I'
          $
          \begin{proof}
            By expanding the definition of $\lesssim$, consider arbitrary
            $t$ and $t'$ such that
            $\Xi |- t \lesssim^{\fname{type}} t': I$. Then instantiate
            the assumption $\Xi_1 |- C_1' \lesssim C_2': (\Xi |- I) -> I_1$ with
            it and get
            $\Xi_1 |- C_1'[t] \lesssim^{\fname{type}} C_2'[t']: I_1$.
            Then use this to instantiate the first assumption
            ($\Xi' |- C_1 \lesssim C_2: (\Xi_1 |- I_1) -> I'$),
            and get that
            $\Xi' |- C_1[C_1'[t]] \lesssim^{\fname{type}} C_2[C_2'[t']]: I'$,
            which concludes this proof.
          \end{proof}
  \end{itemize}
\end{lemma}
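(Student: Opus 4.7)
The plan is to prove each of the ten sub-claims by reducing it to the corresponding compatibility lemma for terms from \Lem{compatityping}. The common recipe is uniform across all cases: (i) unfold the definition of the context relation $\Xi' |- C_1 \lesssim C_2 : (\Xi |- I) -> I'$, which quantifies over arbitrary related terms $\Xi |- t \lesssim^{\fname{type}} t' : I$; (ii) use each hypothesis of the form ``sub-contexts are related'' by instantiating it at these $t, t'$ to obtain a term-level approximation of the form $\Xi_* |- C_k[t] \lesssim^{\fname{type}} C_k[t'] : I_*$; (iii) combine these term-level approximations using the matching clause of \Lem{compatityping} to conclude that the larger filled contexts are approximate at the term level.

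Concretely, for the elimination/introduction forms \ruleName{TCtx-App1}, \ruleName{TCtx-App2}, \ruleName{TCtx-Lam}, \ruleName{TCtx-Case1}, \ruleName{TCtx-Ctor}, and the fold/unfold rules, the recipe directly matches the shape of the corresponding term compatibility clause, so after unfolding definitions one simply feeds the obtained pointwise approximations into \Lem{compatityping}. For the mixed case \ruleName{TCtx-Case2} (where one branch is a context and the others are terms), the reflexivity \Thm{refltypedterm} additionally supplies the needed $\lesssim^{\fname{type}}$ for the already-typed non-hole branches, so that the term-level case compatibility can be invoked uniformly over all branches. For \ruleName{TCtx-Ctx} (context composition), the proof chains two instantiations: instantiate the inner context relation at $t, t'$ to obtain $\Xi_1 |- C_1'[t] \lesssim^{\fname{type}} C_2'[t'] : I_1$, then feed this result into the outer context relation to obtain $\Xi' |- C_1[C_1'[t]] \lesssim^{\fname{type}} C_2[C_2'[t']] : I'$.

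The one case that does \emph{not} fit the pattern above is \ruleName{TCtx-Id}, which requires contextual weakening because it allows $\Xi \subseteq \Xi'$ strictly. Here, after unfolding, we are given $\Xi |- t \lesssim^{\fname{type}} t' : I$ and must show $\Xi' |- t \lesssim^{\fname{type}} t' : I$. The plan is to unfold $\lesssim^{\fname{type}}$ explicitly: given $n$ and $(n, \phi, \phi') \in \calGeq{}{\Xi'}$, restrict the substitutions to $\fname{dom}(\Xi)$, observe that $(n, \phi|_{\fname{dom}(\Xi)}, \phi'|_{\fname{dom}(\Xi)}) \in \calGeq{}{\Xi}$, instantiate the hypothesis, and note that because $t$ and $t'$ can only contain variables in $\fname{dom}(\Xi)$ (by a standard scoping argument implicit in the typing), the restricted and full substitutions act identically on them.

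The main obstacle I anticipate is not any individual case but rather keeping the bookkeeping honest across the many sub-cases, especially for \ruleName{TCtx-Lam} and the case rules where typing contexts get extended by pattern-bound variables: one must verify that extending $\phi, \phi'$ by corresponding related values yields an element of $\calGeq{}{\Xi' \catsp (x : I_1)}$, which in turn requires downward closure (\Lem{downclosuretype}) to line up step indices when the compatibility lemma for terms is applied. Beyond that, once the reduction to \Lem{compatityping} and \Thm{refltypedterm} is set up cleanly, each case collapses to a few mechanical lines.
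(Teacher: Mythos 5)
Your proposal matches the paper's proof: each case unfolds the context relation, instantiates the sub-context hypotheses at the given related terms $t, t'$, and closes with the corresponding clause of \Lem{compatityping}, with the \ruleName{TCtx-Id} case handled by restricting the substitutions to $\fname{dom}(\Xi)$ exactly as you describe. The only superfluous element is your appeal to \Thm{refltypedterm} in the \ruleName{TCtx-Case2} case --- the lemma's hypotheses already supply the non-hole branches as $\lesssim^{\fname{type}}$-related pairs, so no reflexivity is needed there (it enters only later, when these compatibility lemmas are assembled into \Lem{reflctx}).
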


\begin{proof}[Proof of \Lem{reflctx}]
  This can be proved directly by case analysis and induction on the
  context typing rules. Each case will be immediately handled
  by its corresponding compatibility lemma we developed in \Lem{compatictxtyping}.
\end{proof}

\begin{lemma}[Downward Closure For Fusion Strategies under Decreasing Index]\label{downclosurestrat}
  This lemma is useful for proving \Lem{compatifusion}.
  
  There are two sub-lemmas, one for $\calV{}{\cdot}$ and another for $\calG{}{\cdot}$
  \begin{itemize}
    \item $\forall n, v, v', S.\ (n, v, v') \in \calV{}{S} \implies \forall m.\ m < n \implies (m, v, v') \in \calV{}{S}$
    \item $\forall n, \phi, \phi', \Gamma.\ (n, \phi, \phi') \in \calG{}{\Gamma} \implies \forall m.\ m < n \implies (m, \phi, \phi') \in \calG{}{\Gamma}$
  \end{itemize}
\end{lemma}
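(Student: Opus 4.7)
The plan is to prove the two sub-lemmas in the stated order: the first, for $\calV{}{S}$, contains the technical content; the second, for $\calG{}{\Gamma}$, follows from it by a routine structural induction on $\Gamma$.

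For the first sub-lemma, I would proceed by structural induction on the strategy $S$. The cases $\top$ and $\bot$ are immediate: the interpretation of $\top$ depends on $n$ only through the side condition $n \ge 0$, and $\calV{}{\bot}$ is empty so the implication is vacuous. For the function strategy $S_1 \to S_2$, the constructor-skip strategy, and the constructor-fusion strategy, the relevant interpretations are defined with a premise of the shape $\forall m < n.\,P(m)$. Hence, if $(n, v, v') \in \calV{}{S}$ and we pick any $m' < n$, the hypothesis $\forall m < n.\,P(m)$ entails $\forall m < m'.\,P(m)$ immediately, which is exactly the condition required for $(m', v, v') \in \calV{}{S}$. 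No inductive appeal is needed in these cases; the downward closure is built directly into the $\forall m < n$ quantifier, and the auxiliary use of $\calE{}{S}$ in the function case poses no problem because $\calE{}{\cdot}$ itself is defined with the same $\forall m < n$ pattern.

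The recursive case $\mu S$ is the most delicate. Here $\calV{}{\mu S} = \bigcup_k \calV{}{S^k(\bot)}$, so given $(n, v, v') \in \calV{}{\mu S}$ there exists some $k$ with $(n, v, v') \in \calV{}{S^k(\bot)}$. Since $S^k(\bot)$ replaces the recursive reference by $\bot$ at depth $k$, it has strictly smaller recursive complexity than $\mu S$ itself, so the inductive hypothesis applies and transfers the triple down to any $m < n$, yielding $(m, v, v') \in \calV{}{S^k(\bot)} \subseteq \calV{}{\mu S}$. The main obstacle I anticipate is making this structural induction rigorous in the presence of higher-order abstract syntax: $S$ here is a meta-level function, so "structural complexity of $S^k(\bot)$" is not literally a syntactic measure. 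I would resolve this by setting up the induction over a well-founded ordering on strategies that combines the unfolding index $k$ with the outer syntactic shape, following the approach of \citet{appel-equi-rec-type} and \citet{amal-step-index}, whose step-indexed logical relations over equirecursive types face essentially the same issue.

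With the first sub-lemma in hand, the second follows by an easy induction on $\Gamma$. The empty case $\calG{}{\ep}$ is trivially downward closed since it contains only triples of the form $(n, \ep, \ep)$. For the extension $\Gamma \catsp (x |-> S)$, an element of $\calG{}{\Gamma \catsp (x |-> S)}$ decomposes into a pair in $\calG{}{\Gamma}$ and a value pair in $\calV{}{S}$; the inductive hypothesis lowers the first component and the first sub-lemma lowers the second, reassembling into a triple in $\calG{}{\Gamma \catsp (x |-> S)}$ at any $m < n$.
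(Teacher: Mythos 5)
Your proposal is correct and follows essentially the same route as the paper: the non-recursive cases are discharged directly by unfolding the definitions (the $\forall m < n$ quantifier makes downward closure immediate), the $\mu$ case is reduced to the well-founded strategy-constructor machinery of \citet{appel-equi-rec-type} and \citet{amal-step-index} to cope with higher-order abstract syntax, and the $\calG{}{\cdot}$ sub-lemma follows by a routine induction on $\Gamma$. The only cosmetic difference is that the paper singles out the constant-$\bot$ and constant-$\top$ constructors for separate treatment under $\mu$, whereas your union-over-$k$ argument absorbs them uniformly.
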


  \begin{proof}[Proof of \Lem{downclosurestrat}]
    The lemma for $\calV{}{\cdot}$ can be proved by
    induction and case analysis on
    strategy expressions. Cases other than
    recursive strategies can be proved by
    expanding definitions. For the case of
    $\mu F_S$, The proofs by
    \citeauthor{appel-equi-rec-type}
    which use the notion of \textit{well-founded strategy constructors}
    are very similar and can easily be adapted to our system, 
    so we only give an overview of the proof below.

    Intuitively, the \textit{well-foundedness} of a
    strategy constructor $F$ means that for any
    $k \ge 0$ and strategy $s$, in order to
    determine whether $(k, t, t') \in \calE{}{F(s)}$,
    it suffices to know the set
    $\{(n, t, t') \mid n < k \land (n, t, t') \in \calV{}{s}\}$.

    We can then firstly show that the case for recursive types $\mu F_S$ holds
    for all \textit{well-founded strategy constructors}
    $F_S$, and then show
    that all strategy constructors that can be built by the
    syntax, except for the constant bottom and top strategy constructor,
    are well-founded. In our system, the case for
    function strategies $S_1 -> S_2$ can be proved
    exactly like the proof by \citeauthor{appel-equi-rec-type}.
    And the two cases for data constructors,
    namely $\calV{}{\left\{\overline{\left(
      c_i\ \langle\overline{S_{i,j}}^j\rangle
    \right)}^i\right\}}$ and
    $\calV{}{\left\{\overline{\left(
      c_i\ \langle\overline{x_{i,j} |-> S_{i,j}}^j\rangle -> l_i
    \right)}^i\right\}}$
    can be proved
    easily by extending their proofs of product types.

    For the constant-bottom strategy
    constructor $F_{\bot}$, the above conclusion trivially
    holds because $\calV{}{\mu F_{\bot}} = \{\}$;
    For the constant-top strategy
    constructor $F_{\top}$, the above conclusion trivially
    holds because $\calV{}{\mu F_{\top}} = \{(n, v, v') ~\mid~ n \ge 0\}$,
    according to the definition of $\calV{}{\top}$.

    The lemma for $\calG{}{\cdot}$ can then be easily proved by
    induction on $\Gamma$ once we have the lemma for
    $\calV{}{\cdot}$.
  \end{proof}

\begin{lemma}[Compatibility Lemmas for Fusion Rules]\label{compatifusion}
  These lemmas ensure that the logical relation
  interpretation of strategies as sets of triples given above are
  compatible with the fusion rules we defined, so there
  will be one sub-lemma for each fusion rule. Note that
  there is no need to handle F-LetRec since
  \kw{let rec} is only a syntax sugar in our language.
  \begin{itemize}
    \item $\left(x |-> S\right) \in \Gamma \implies \Gamma |- x \lesssim^{\fname{str}} x: S$
    \begin{proof}
      By the definition of $\lesssim^{\fname{str}}$,
      consider arbitrary $n \ge 0, \phi, \phi'$
      such that $(n,\,\phi,\,\phi') \in \calG{}{\Gamma}$.
      We are required to show that $(n, [\phi]\ x, [\phi']\ x) \in \calE{}{S}$.
      
      By the definition of $\calE{}{S}$,
      consider arbitrary $m < n$ and $v$
      such that $[\phi]\ x \prog{m} v \land \fname{irred}(v)$.
      We are required to show that
      $\exists v'. [\phi']\ x \prog{*} v' \land (n - m,\,v,v') \in \calV{}{S}$.

      By the definition of $\calG{}{\Gamma}$, we have
      $(n, [\phi]\ x, [\phi']\ x) \in \calV{}{S}$.
      By the definition of $\calV{}{S}$, $([\phi]\ x)$ is a value,
      hence $m = 0$ and $v \equiv [\phi]\ x$.
      So $\exists v' = [\phi']\ x$, such that
      $[\phi']\ x \prog{0} v' \land (n - 0, v, v') \in \calV{}{S}$.
    \end{proof}

    \item $\Gamma \cdot (x |-> S_1) |- t_1 \lesssim^{\fname{str}} t_2: S_2
    \implies \Gamma |- \lam x.\ t_1 \lesssim^{\fname{str}} \lam x.\ t_2: S_1 -> S_2$
    \begin{proof}
      By the definition of $\lesssim^{\fname{str}}$,
      consider arbitrary $n \ge 0, \phi_1, \phi_2$
      such that $(n,\,\phi_1,\,\phi_2) \in \calG{}{\Gamma}$.
      We are required to show that
      $(n,\ \lam x. [\phi_1]\ t_1,\ \lam x. [\phi_2]\ t_2) \in \calE{}{S_1 -> S_2}$.

      By the definition of $\calE{}{S_1 -> S_2}$,
      consider arbitrary $m < n$ and $v$
      such that $\lam x. [\phi_1]\ t_1 \prog{m} v \land \fname{irred}(v)$.
      So $m$ can only be $0$ and $v \equiv \lam x. [\phi_1]\ t_1$, and we are
      required to show that
      $(n - 0,\ \lam x. [\phi_1]\ t_1,\ \lam x. [\phi_2]\ t_2) \in \calV{}{S_1 -> S_2}$.

      By the definition of $\calV{}{S_1 -> S_2}$,
      consider arbitrary $0 \le m < n$ and $v_1, v_2$ such that
      $(m,v_1,v_2) \in \calV{}{S_1}$.
      We are required to show that
      $(m,\ [x |-> v_1][\phi_1]t_1,\ [x |-> v_2][\phi_2]t_2) \in \calE{}{S_2}$.

      By the definition of $\lesssim^{\fname{str}}$ and the premise, we have that
      $\forall n' \ge 0.
      \forall \phi_1', \phi_2'.
      (n',\ \phi_1',\ \phi_2') \in \calG{}{\Gamma\cdot(x |-> S_1)}
      \implies (n',\ [\phi_1']\,t_1,\ [\phi_2']\,t_2) \in \calE{}{S_2}$.
      
      Instantiate this with $m$ as $n'$, $\phi_1 \cdot (x|->v_1)$ as $\phi_1'$ and
      $\phi_2 \cdot (x|->v_2)$ as $\phi_2'$. Note that this instantiation is valid,
      because $m \ge 0$, and from $m < n$, $(m,v_1,v_2) \in \calV{}{S_1}$,
      $(n,\,\phi_1,\,\phi_2) \in \calG{}{\Gamma}$ and lemma \ref{downclosurestrat},
      it can be derived that
      $(m,\ \phi_1\cdot(x|->v_1),\ \phi_2\cdot(x|->v_2)) \in \calG{}{\Gamma\cdot(x |-> S_1)}$.
      So we get $(m,\ [\phi_1\cdot(x|->v_1)]\,t_1,\ [\phi_2\cdot(x|->v_2)]\,t_2) \in \calE{}{S_2}$,
      which is equivalent to
      $(m,\ [x |-> v_1][\phi_1]t_1,\ [x |-> v_2][\phi_2]t_2) \in \calE{}{S_2}$.
    \end{proof}
    
    \item $\Gamma |- t \lesssim^{\fname{str}} t': F_S(\mu F_S)
    \implies
    \Gamma |- t \lesssim^{\fname{str}} t': \mu F_S$ and its converse
    \begin{proof}
      By the definition of $\lesssim^{\fname{str}}$,
      It suffices to show that $\calV{}{\mu F_S} = \calV{}{F_S(\mu F_S)}$.
      
      Again, the proofs by Appel and McAllester \cite{appel-equi-rec-type}
      which use the notion of \textit{well-founded strategy constructors}
      can be easily adapted to our case. They prove that
      the above equation holds for all \textit{well-founded strategy constructors}
      and show that all strategy constructors that can be built by the
      syntax, except for the constant bottom, are
      well-founded.

      Then for the constant-bottom strategy
      constructor $F_{\bot}$, the above conclusion trivially
      holds because $\calV{}{\mu F_{\bot}} = \{\}$. The same is true
      for constant-top strategy constructor $F_{\top}$, because
      $\calV{}{\mu F_{\top}} = \{(n, v, v') ~\mid~ n \ge 0\}$.
    \end{proof}

    \item $\Gamma |- t_1 \lesssim^{\fname{str}} t_1': S_1 -> S_2
    \land
    \Gamma |- t_2 \lesssim^{\fname{str}} t_2': S_1
    \implies
    \Gamma |- t_1\ t_2 \lesssim^{\fname{str}} t_1'\ t_2': S_2$
    \begin{proof}
      By the definition of $\lesssim^{\fname{str}}$, consider
      arbitrary $n \ge 0$, $\phi_1, \phi_2$ such that
      $(n, \phi_1, \phi_2) \in \calG{}{\Gamma}$.
      We are required to show that
      $(n, [\phi_1] t_1\, [\phi_1] t_2,\ [\phi_2] t_1'\, [\phi_2] t_2') \in \calE{}{S_2}$.

      By the definition of $\calE{}{S_2}$, consider arbitrary $m < n$ and $v$
      such that $[\phi_1]t_1\,[\phi_1]t_2 \prog{m} v \land \fname{irred}(v)$.
      We are required to show that
      $\exists v'.\ [\phi_2]t_1'\,[\phi_2]t_2'\prog{*}v' \land (n-m,\ v,\ v') \in \calV{}{S_2}$.

      By the operational semantics and
      $[\phi_1]t_1\,[\phi_1]t_2 \prog{m} v \land \fname{irred}(v)$,
      we get that
      $\exists p \le m, t_f$ such that $[\phi_1]t_1 \prog{p} t_f \land \fname{irred}(t_f)$.

      At the same time, by the definition of $\lesssim^{\fname{str}}$,
      we instantiate the first premise
      $\Gamma |- t_1 \lesssim^{\fname{str}} t_1': S_1 -> S_2$ with $n \ge 0$ and
      $(n, \phi_1, \phi_2) \in \calG{}{\Gamma}$, and get
      $(n, [\phi_1]t_1,\ [\phi_2]t_1') \in \calE{}{S_1 -> S_2}$.
      
      By the definition of $\calE{}{S_1 -> S_2}$, we get
      $\forall m' < n.\ \forall t_f''.
      ([\phi_1]t_1 \prog{m'} t_f'' \land \fname{irred}(t_f''))
      \implies (
        \exists t_f'. [\phi_2]t_1' \prog{*} t_f'
        \land
        (n - m', t_f'', t_f') \in \calV{}{S_1 -> S_2}
      )$. Then instantiate this with $p$ as $m'$, $t_f$ as $t_f''$. Note that this
      instantiation is valid because
      $p \le m < n$ and $[\phi_1]t_1 \prog{p} t_f \land \fname{irred}(t_f)$.
      By the definition of
      $\calV{}{S_1 -> S_2}$, we denote $t_f$ as $\lam x.\ t_f$ and
      $t_f'$ as $\lam x.\ t_f'$ below.
      So we get
      \begin{equation}\tag{1}\label{compatapp1}
        \exists t_f'.\ [\phi_2]t_1' \prog{*} \lam x.\ t_f' \land
        (n - p,\ \lam x.\,t_f,\ \lam x.\,t_f') \in \calV{}{S_1 -> S_2}
      \end{equation}

      Now that we have
      $$
      [\phi_1]t_1\,[\phi_1]t_2
      \prog{p}
      (\lam x.\,t_f)\ [\phi_1]t_2
      \prog{m - p}
      v
      \land \fname{irred}(v)
      $$
      By the operational semantics,
      we get that there exists a $q < m - p$ and an $a$ such that
      $[\phi_1]t_2 \prog{q} a \land \fname{irred}(a)$.

      At the same time, by the definition of $\lesssim^{\fname{str}}$,
      we instantiate the second premise
      $\Gamma |- t_2 \lesssim^{\fname{str}} t_2': S_1$ with $n - p$,
      and $\phi_1, \phi_2$. Note that this instantiation is also valid since
      $n - p \ge 0$ and $(n - p,\ \phi_1,\ \phi_2) \in \calG{}{\Gamma}$ by
      lemma \ref{downclosurestrat}. So we get that
      $(n - p,\ [\phi_1]t_2,\ [\phi_2]t_2') \in \calE{}{S_2}$.

      By the definition of $\calE{}{S_2}$, we get
      $\forall q' < n-p.\ \forall a''.\ 
      ([\phi_1]t_2 \prog{q'} a'' \land \fname{irred}(a''))
      \implies
      (\exists a'.\ [\phi_2]t_2' \prog{*} a' \land (n-p-q', a'', a') \in \calV{}{S_2})$
      Then instantiate this with $q$ as $q'$, and $a$ as $a''$. Note that
      this instantiation is valid because $q < m-p < n-p$ and
      $[\phi_1]t_2 \prog{q} a \land \fname{irred}(a)$. So we get
      \begin{equation}\tag{2}\label{compatapp2}
        \exists a'.\ [\phi_2]t_2' \prog{*} a' \land (n-p-q,\ a,\ a') \in \calV{}{S_2}
      \end{equation}

      Now that we have
      $$
      [\phi_1]t_1\ [\phi_1]t_2
      \prog{p}
      (\lam x.\,t_f)\ [\phi_1]t_2
      \prog{q}
      (\lam x.\,t_f)\ a
      \prog{1}
      [x|->a]\,t_f
      \prog{m - p - q - 1}
      v \land \fname{irred}(v)
      $$
      Since $(n-p,\ \lam x.\,t_f,\ \lam x.\,t_f')\in\calV{}{S_1 -> S_2}$, by the definition of
      $\calV{}{S_1 -> S_2}$, we have
      $\forall m' < n - p.\ \forall v,v'.\ (m',\ v,\ v') \in \calV{}{S_1}
      \implies (m',\ [x|->v]t_f,\ [x|->v']t_f') \in \calE{}{S_2}$.
      Instantiate this with $n-p-q-1$ as $m'$, $a$ as $v$ and $a'$ as $v'$.
      This instantiation is easily verified valid, and we get
      $(n-p-q-1, [x|->a]t_f,\ [x|->a']t_f') \in \calE{}{S_2}$.

      By definition of $\calE{}{S_2}$, we get
      $\forall k < n-p-q-1.\ \forall v''.\
      ([x|->a]t_f \prog{k} v'' \land \fname{irred}(v''))
      \implies
      (
        \exists v'. [x|->a']t_f' \prog{*} v'
        \land (n-p-q-1-k,\ v'',\ v') \in \calV{}{S_2})$.
      Instantiate this with $m-p-q-1$ as $k$ and $v$ as $v''$.
      Note that this instantiation is easily verified, and we get
      \begin{equation}\tag{3}\label{compatapp3}
        \exists v', [x|->a']t_f' \prog{*} v'
        \land (n-m,\ v,\ v') \in \calV{}{S_2}
      \end{equation}

      Recall that we are required to show that
      $\exists v'.\ [\phi_2]t_1'\,[\phi_2]t_2'\prog{*}v' \land (n-m,\ v,\ v') \in \calV{}{S_2}$.
      By (1), (2) and (3) and the fact that
      $(n-p-q,\ a,\ a') \in \calV{}{S_2} \implies a'\text{ is a value}$, we have:
      $\exists t_f',\,a',\,v'$ such that
      $[\phi_2]t_1'\ [\phi_2]t_2'
      \prog{*}
      (\lam x.t_f')\ [\phi_2]t_2'
      \prog{*}
      (\lam x.t_f)\ a'
      \prog{1}
      [x|->a']\,t_f
      \prog{*}
      v'
      \land (n-m,\ v,\ v') \in \calV{}{S_2}$
    \end{proof}
    
    \item $
    \left(c\ \langle \overline{x_{j} |-> S_{j}}^j \rangle -> l'\right) \in S
    \land
    \overline{\Gamma |- t_{j} \lesssim^{\fname{str}} t_{j}': S_{j}}^j
    \land \fname{FV}(l') \subseteq \left\{\overline{x_{j}}^j\right\}
    \\
    ~\hfill
    \implies
    \Gamma |- c\ \overline{t_{j}}^j \lesssim^{\fname{str}}
    \left(\lam \overline{x_{j}}^j.\ l'\right)\ \overline{t_{j}'}^j: S
    $
    \begin{proof}
      By the definition of $\lesssim^{\fname{str}}$, consider arbitrary
      $n \ge 0, \phi, \phi'$ such that $(n,\phi,\phi') \in \calG{}{\Gamma}$.
      We are required to show that
      $\left(n,\ 
      c\,\overline{[\phi]t_{j}}^j,\ 
      \left(\lam \overline{x_{j}}^j.\,[\phi']l'\right)\,
      \overline{[\phi']t_{j}'}^j
      \right)
      \in
      \calE{}{S}$.

      Since $\fname{FV}(l') \subseteq \left\{\overline{x_{j}}^j\right\}$,
      $[\phi']l' \equiv l'$, we are required to show that
      $$\left(n,\ 
      c\,\overline{[\phi]t_{j}}^j,\ 
      \left(\lam \overline{x_{j}}^j.\,l'\right)\,
      \overline{[\phi']t_{j}'}^j
      \right)
      \in
      \calE{}{S}$$

      By the definition of $\calE{}{S}$, consider arbitrary
      $m < n$ and $\overline{v_j}^j$ , such that 
      $(c\,\overline{[\phi]t_{j}}^j \prog{m} c\,\overline{v_j}^j
      \land \overline{\fname{irred}(v_j)}^j
      )$, we are then required to prove that
      there exists a $v'$ such that
      $\left(\lam \overline{x_{j}}^j.\,l'\right)\,
      \overline{[\phi']t_{j}'}^j
      \prog{*} v'
      \land
      (n-m,\ c\ \overline{v_j}^j,\ v') \in \calV{}{S}$

      By the operational semantics, for each $j$ there exists
      an $m_j$ such that $[\phi]t_{j} \prog{m_j} v_j \land \fname{irred}(v_j)$,
      and $\Sigma_j m_j = m$.

      By the definitions of $\lesssim^{\fname{str}}$,
      for each $j$, instantiate the premises
      $\Gamma |- t_{j} \lesssim^{\fname{str}} t_{j}': S_{j}$
      with $n\ge0,\ \phi$ and $\phi'$. We get for each $j$ that
      $(n,\ [\phi]t_{j},\ [\phi']t_{j}') \in \calE{}{S_{j}}$. By the definition
      of $\calE{}{S_{j}}$ we get that for each $j$,
      $\forall m < n.\ \forall v.\ ([\phi]t_{j} \prog{m} v \land \fname{irred}(v))
      \implies (\exists v'.\ [\phi']t_{j}' \prog{*} v'
      \land (n - m,\ v,\ v')\in \calV{}{S_{`1j}})$.
      For each $j$, instantiate this with $m_j$ as $m$ and $v_j$ as $v$.
      Note that the validity of the instantiation is easily verified.
      So we get that for each $j$, $\exists v_j'.\ [\phi']t_{j}' \prog{*} v_j'
      \land (n - m_j,\ v_j,\ v_j')\in \calV{}{S_j}$. Note that by the definition
      of $\calV{}{S_j}$, we have for each $j$, $v_j'$ is a value.

      So $\exists v' = \left[\overline{x_{j} |-> v_j'}^j\right]l'$,
      such that
      $
      \left(\lam \overline{x_{j}}^j.\,l'\right)\,
      \overline{[\phi']t_{j}'}^j
      \prog{*}
      v'$.
      
      We then also need to verify that
      $(n - m,\, c\ \overline{v_j}^j,\, v') \in \calV{}{S}$.
      $\fname{FV}(l') \subseteq \left\{\overline{x_{j}}^j\right\}$ directly
      follows from the premise. And since we have for each $j$,
      $(n - m_j,\ v_j,\ v_j') \in \calV{}{S_{j}}$ and $n - m_j > n - m$,
      by lemma \ref{downclosurestrat},
      $\forall m' < n - m.\ \land_j (m', v_j, v_j') \in \calV{}{S_{j}}$.

    \end{proof}
    
    \item $\left(c\, \langle\overline{S_j}^j\rangle\right) \in S
    \land
    \overline{\Gamma |- t_j \lesssim^{\fname{str}} t_j': S_j}^j
    \implies
    \Gamma |- c\ \overline{t_j}^j
    \lesssim^{\fname{str}}
    c\ \overline{t_j'}^j: S$
    \begin{proof}
      By the definition of $\lesssim^{\fname{str}}$,
      consider arbitrary $n \ge 0, \phi_1, \phi_2$
      such that $(n,\,\phi_1,\,\phi_2) \in \calG{}{\Gamma}$.
      We are required to show that
      $(n,\ 
      c\,\overline{[\phi_1]t_j}^j,\ 
      c\,\overline{[\phi_2]t_j'}^j)
      \in \calE{}{S}$.
      By the definition of $\calE{}{S}$, consider arbitrary
      $m < n$ and $v$ such that
      $c\,\overline{[\phi_1]t_j}^j \prog{m} v \land \fname{irred}(v)$,
      then we are required to show that
      $\exists v'.\ c\,\overline{[\phi_2]t_j'}^j \prog{*} v'
      \land
      (n-m,\,v,\,v') \in \calV{}{S}$.

      By operational semantics, there exists $\overline{m_j}^j$ such that
      for each $j$, $[\phi_1]t_j \prog{m_j} v_j \land \fname{irred}(v_j)$.
      Also, $\Sigma_j m_j = m$, which implies $\forall j.\ m_j < m$.

      At the same time, for each $j$, we instantiate the premises
      $\overline{\Gamma |- t_j \lesssim^{\fname{str}} t_j': S_j}^j$
      with $n, \phi_1, \phi_2$ and get that for each $j$,
      $(n,\,[\phi_1]t_j,\,[\phi_2]t_j') \in \calE{}{S_j}$.
      Then by the definition of $\calE{}{S_j}$, we get for each
      $j$,
      $\forall m_j' < n.\ 
      \forall v_j''.\ 
      [\phi_1]t_j \prog{m_j'} v_j'' \land \fname{irred}(v_j'')
      \implies
      (\exists v_j'. [\phi_2]t_j' \prog{*} v_j'
      \land
      (n - m_j',\,v_j'',\,v_j') \in \calV{}{S_j})$. Instantiate
      this with $m_j$ as $m_j'$ and $v_j$ as $v_j''$, we get that
      for each $j$, $\exists v_j'. [\phi_2]t_j' \prog{*} v_j'
      \land
      (n - m_j,\,v_j,\,v_j') \in \calV{}{S_j}$.

      So $\exists v' = c\,\overline{v_j'}^j$ such that
      $c\,\overline{[\phi_2]t_j'}^j \prog{*} v'$. And
      $(n - m,\,c\,\overline{[\phi_1]t_j}^j,\,v') \in \calV{}{S}$
      can be easily verified by the definition of
      $\calV{}{S}$ and lemma \ref{downclosurestrat} since
      $\forall j.\ m_j < m$, and thus $n - m < n - m_j$.
    \end{proof}

    \item $
      \Gamma |- t \lesssim^{\fname{str}} t':
      \left\{\overline{\left(
        c_i\,
        \langle\overline{x_{i,j} |-> S_{i,j}}^j\rangle ->
        l_i'
      \right)}^i\right\}
      \land
      \overline{
        \Gamma \cdot \overline{x_{i,j} |-> S_{i,j}}^j |-
        l_i \lesssim^{\fname{str}} l_i' : S
      }^i
      \land
      \\
      ~\hfill
      \overline{\fname{FV}(l_i) \cup \fname{FV}(l_i') \subseteq \left\{
        \overline{x_{i,j}}^j
      \right\}}^i
      \implies
      \Gamma |- \left(\kw{case}\ t\ \kw{of}\ \left(\,\overline{\,
        c_i\ \overline{x_{i,j}}^j -> l_i
      \,}^i\,\right)\right)
      \lesssim^{\fname{str}}
      t'
      : S
    $
    \begin{proof}
      For the conciseness of writing, we use $S_c$ to denote
      $\left\{\overline{\left(
        c_i\,
        \langle\overline{x_{i,j} |-> S_{i,j}}^j\rangle ->
        l_i'
      \right)}^i\right\}$ in the following proof.

      By the definition of $\lesssim^{\fname{str}}$, consider
      arbitrary $n \ge 0, \phi_1, \phi_2$ such that
      $(n, \phi_1, \phi_2) \in \calG{}{\Gamma}$. We are required
      to show that
      $$\left(
        n,\ 
        \left(\kw{case}\ [\phi_1]t\ \kw{of}\ \left(\,\overline{\,
          c_i\ \overline{x_{i,j}}^j -> [\phi_1]l_i
        \,}^i\,\right)\right),\ 
        [\phi_2]t'
      \right)
      \in
      \calE{}{S}$$
      
      We have for each $i$ that
      $\fname{FV}(l_i) \subseteq \left\{\overline{x_{i,j}}^j\right\}$,
      so $[\phi_1]l_i \equiv l_i$, and
      by the definition of $\calE{}{S}$, consider arbitrary
      $m < n, v$ such that
      \begin{equation}\tag{1}
        \left(\kw{case}\ [\phi_1]t\ \kw{of}\ \left(\,\overline{\,
          c_i\ \overline{x_{i,j}}^j -> l_i
        \,}^i\,\right)\right)
        \prog{m}
        v
        \land
        \fname{irred}(v)
      \end{equation}
      We are then required to show that
      $\exists v'.\ [\phi_2]t' \prog{*} v'
      \land (n - m,\ v,\ v') \in \calV{}{S}$

      By (1) and the operational semantics,
      there exists $p < m$ and $v_c$ such that
      $[\phi_1]t \prog{p} v_c \land \fname{irred}(v_c)$.

      Then instantiate
      $\Gamma |- t \lesssim^{\fname{str}} t': S_c$
      with $n,\phi_1,\phi_2$,
      we get that
      $(n,\,[\phi_1]t,\,[\phi_2]t') \in \calE{}{S_c}$. By the
      definition of $\calE{}{S_c}$, we instantiate this again with
      $p$ and $v_c$ and get
      \begin{equation*}\tag{2}
        \exists v_c'.\ [\phi_2]t' \prog{*} v_c'
        \land (n-p,\ v_c,\ v_c') \in \calV{}{S_c}
      \end{equation*}

      By the definition of $\calV{}{S_c}$, we get that
      there exists an $i$, $\overline{v_{i,j}}^j$ and 
      $\overline{v_{i,j}'}^j$ such that
      $v_c \equiv c_i\ \overline{v_{i,j}}^j$,
      $v_c' \equiv 
      \left[\overline{x_{i,j} |-> v_{i,j}'}^j\right]l_i'$,
      $\forall m' < n - p$, $\forall j.$
      $(m',\ v_{i,j},\ v_{i,j}') \in \calV{}{S_{i,j}}$,
      and by \Lem{downclosurestrat}, we have
      \begin{equation*}\tag{3}
      \left(n - p - 1, \left[\overline{x_{i,j} |-> v_{i,j}}^j\right] l_i, \left[\overline{x_{i,j} |-> v_{i,j}'}^j\right] l_i'\right) \in \calE{}{S}
      \end{equation*}
      Then the conclusion holds since $l_i$ and $l_i'$ are lambda abstractions.

    \end{proof}

    \item $
      \Gamma |- t \lesssim^{\fname{str}} t':
      \left\{\overline{\left(
        c_i\,\langle \overline{S_{i,j}}^j \rangle
      \right)}^i\right\}
      \land
      \overline{\,
        \Gamma \cat \overline{(x_{i,j} |-> S_{i,j})}^j |- t_i \lesssim^{\fname{str}} t_i': S
      \,}^i
      \implies
      \\
      ~\hfill\Gamma |-
      \left(\kw{case}\ t\ \kw{of}\ \left(\,\overline{\,
        c_i\ \overline{x_{i,j}}^j -> t_i
      \,}^i\,\right)\right)
      \lesssim^{\fname{str}}
      \left(\kw{case}\ t'\ \kw{of}\ \left(\,\overline{\,
        c_i\ \overline{x_{i,j}}^j -> t_i'
      \,}^i\,\right)\right) : S
    $
    \begin{proof}
      \newcommand{\CASE}[2]{\kw{CASE}\langle#1,\,\overline{#2_i}^i\rangle}
      For the conciseness of writing, we define a shorthand
      $\CASE{s}{b}$ with two parameters
      $s$ and $\overline{b_i}^i$ to denote the case terms
      involed in the proof:
      $\kw{case } s \kw{ of }
      \left(
        \overline{c_i\ \overline{x_{i,j}}^j -> b_i}^i
      \right)$. And we use $S_c$ to denote
      the strategy for the scrutinee:
      $\left\{\overline{\left(
        c_i\,\langle \overline{S_{i,j}}^j \rangle
      \right)}^i\right\}$.

      By the definition of $\lesssim^{\fname{str}}$,
      consider arbitrary $n \ge 0,\phi_1,\phi_2$ such
      that $(n, \phi_1, \phi_2) \in \calG{}{\Gamma}$.
      We are required to prove that
      $\left(n,\ 
      \CASE{[\phi_1]t}{[\phi_1]t},\ 
      \CASE{[\phi_2]t'}{[\phi_2]t'} \right)
      \in
      \calE{}{S}$. Then by the definition of
      $\calE{}{S}$, consider arbitrary $m < n$ and 
      $v$ such that
      $\CASE{[\phi_1]t}{[\phi_1]t} \prog{m} v
      \land
      \fname{irred}(v)$. We are now required to prove
      that
      $\exists v'.\ 
      \CASE{[\phi_2]t'}{[\phi_2]t'} \prog{*} v'
      \land (n - m,\,v,\,v') \in \calV{}{S}$.

      By the operational semantics and
      $\CASE{[\phi_1]t}{[\phi_1]t} \prog{m} v
      \land \fname{irred}(v)$,
      we have that there exists a $p < m$ and
      $v_c$ such that
      $[\phi_1]t \prog{p} v_c \land \fname{irred}(v_c)$.

      Instantiate the first premise with $n>0$, $\phi_1$
      and $\phi_2$ and we get
      $\left(n, [\phi_1]t, [\phi_2]t'\right)
      \in \calE{}{S_c}$. Then by the definition of
      $\calE{}{S_c}$, we instantiate this with $p<n$
      and $v_c$ and get
      \begin{equation*}\tag{1}
        \exists v_c'.\ 
        [\phi_2]t' \prog{*} v_c' \land
        (n - p,\,v_c,\,v_c') \in \calV{}{S_c}
      \end{equation*}
      Then by the definition of $\calV{}{S_c}$,
      there exist an $i$, $\overline{v_{i,j}}^j$
      and $\overline{v_{i,j}'}^j$ such that
      $v_c \equiv c_i\ \overline{v_{i,j}}^j$,
      $v_c' \equiv c_i\ \overline{v_{i,j}'}^j$ and
      $\forall m' < n - p.\ \forall j.\ 
      (m', v_{i,j}, v_{i,j}') \in \calV{}{S_{i,j}}$.

      By the operational semantics, we now have
      \begin{align*}\tag{2}
        \CASE{[\phi_1]t}{[\phi_1]t}
        \prog{p}
        \CASE{c_i\ \overline{v_{i,j}}^j}{[\phi_1]t}
        &\prog{1}
        [\overline{x_{i,j} |-> v_{i,j}}^j][\phi_1]t
        \\
        &\prog{m - p - 1}
        v
        \land
        \fname{irred}(v)
      \end{align*}
      
      Then by definition of $\lesssim^{\fname{str}}$,
      we instantiate the premise
      $\Gamma \cat \overline{(x_{i,j} |-> S_{i,j})}^j
      |-
      t_i \lesssim^{\fname{str}} t_i': S$ with
      $n - p - 1$,
      $\phi_1\cdot\overline{x_{i,j}|->v_{i,j}}^j$
      and
      $\phi_2\cdot\overline{x_{i,j}|->v_{i,j}'}^j$.
      Note that this instantiation is valid because
      \begin{enumerate}
        \item $n - p - 1 \ge 0$, and
        \item $\begin{aligned}[t]
          &(n,\,\phi_1,\,\phi_2) \in \calG{}{\Gamma}
          \land
          \overline{(n - p - 1,\ v_{i,j},\ v_{i,j}') \in \calV{}{S_{i,j}}}^j
          \implies
          \\
          &(n-p-1,\ 
          \phi_1\cdot\overline{x_{i,j} |-> v_{i,j}}^j\ 
          \phi_2\cdot\overline{x_{i,j} |-> v_{i,j}'}^j) \in
          \calG{}{\Gamma \cdot \overline{x_{i,j} |-> S_{i,j}}^j}
          \text{ by lemma \ref{downclosurestrat}}
        \end{aligned}$
      \end{enumerate}
      So we have $(n-p-1,\ 
      \left[\overline{x_{i,j} |-> v_{i,j}}^j\right][\phi_1]t_i,\ 
      \left[\overline{x_{i,j} |-> v_{i,j}'}^j\right][\phi_2]t_i'
      ) \in \calE{}{S}$. 
      Then by (2) and the definition of $\calE{}{S}$,
      we instantiate this with $m - p - 1 < n - p - 1$
      and $v$, which can be reduced from
      $\left[\overline{x_{i,j} |-> v_{i,j}}^j\right][\phi_1]t_i$
      with $m - p - 1$ steps,
      and get
      \begin{equation*}\tag{3}
        \exists v'.\ 
        \left[\overline{x_{i,j} |-> v_{i,j}'}^j\right][\phi_2]t_i'
        \prog{*}
        v'
        \land
        (n - p - 1 - (m - p - 1),\ v,\ v') \in \calV{}{S}
      \end{equation*}
      
      Recall that we need to prove
      $\exists v'.\ 
      \CASE{[\phi_2]t'}{[\phi_2]t'} \prog{*} v'
      \land (n - m,\,v,\,v') \in \calV{}{S}$.
      So by (1) and (3),
      $\exists v'.\ 
      \CASE{[\phi_2]t'}{[\phi_2]t'}
      \prog{*}
      \CASE{c_i\ \overline{v_{i,j}'}^j}{[\phi_2]t'}
      \prog{1}
      \left[\overline{x_{i,j} |-> v_{i,j}'}^j\right][\phi_2]t_i'
      \prog{*}
      v'
      \land
      (n - m,\,v,\,v') \in \calV{}{S}$.
    \end{proof}
  
    \item $\Gamma |- t \lesssim^{\fname{str}} t' : I \implies \Gamma |- t \lesssim^{\fname{str}} t' : \top$
    \begin{proof}
      By the definition of $\lesssim^{\fname{str}}$, consider arbitrary $n \ge 0$ and $\phi$, $\phi'$
      such that $(n,\ \phi,\ \phi' \in \calG{}{\Gamma})$, now we are required to show that
      $(n,\ \phi[t],\ \phi[t']) \in \calV{}{\top}$. Instantiate the premise with
      $n, \phi$, and $\phi'$, then we have that $(n,\ \phi[t],\ \phi[t']) \in \calV{}{I}$.
      Then since for all strategy $S$ that is not $\top$,
      the definition of $\calV{}{I}$ contains triples
      of an integer and two values, according to the definition of $\calV{}{\top}$,
      it can be shown that $(n,\ \phi[t],\ \phi[t']) \in \calV{}{\top}$.
    \end{proof}
  \end{itemize}

\end{lemma}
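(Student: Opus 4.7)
The plan is to prove each clause of \Lem{compatifusion} separately by unfolding the definition of $\lesssim^{\fname{str}}$ and the step-indexed relational interpretations, then relating the two sides step-for-step. In every case I fix an arbitrary $n \ge 0$ and $(n,\phi_1,\phi_2) \in \calG{}{\Gamma}$, reduce to showing that the elaborated source and target lie in $\calE{}{S}$, and then unfold $\calE{}{S}$ to fix a source reduction $\prog{m} v$ (with $m<n$) and produce a matching target reduction with a related value. Throughout I will rely on the downward-closure lemma (\Lem{downclosurestrat}) to lower indices whenever an inductive hypothesis must be applied at a smaller step budget.

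For the purely structural rules (F-Var, F-Lam, F-App, F-CtorSkip, F-CaseSkip, F-Top) the argument is essentially standard step-indexed bookkeeping. F-Var reduces to looking up a value in $\calG{}{\Gamma}$; F-Lam uses that a lambda is already a value and feeds related arguments into the IH via an extended $\calG$; F-App decomposes the source's $m$ steps into function evaluation, argument evaluation, and body reduction, applying the two IHs at appropriate indices and then the function clause of $\calV{}{S_1 -> S_2}$; F-CtorSkip and F-CaseSkip just propagate related field/branch results; F-Top is immediate since $\calV{}{\top}$ contains every pair. F-RecFold/F-RecUnfold reduce to the identity $\calV{}{\mu F_S} = \calV{}{F_S(\mu F_S)}$, which I would obtain by adapting \citeauthor{appel-equi-rec-type}'s well-foundedness argument: show each strategy constructor built from $->$, sum-of-product, and fusion-strategy formers is well-founded, so the fixpoint characterization gives the desired equality (the constant-$\top$ and constant-$\bot$ cases being trivial).

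The interesting new cases are F-Ctor and F-Case, where the constructor fusion strategy is actually used. For F-Ctor, the source $c\,\overline{t_j}$ evaluates to $c\,\overline{v_j}$ in $\Sigma_j m_j = m$ steps, and by applying each field IH at index $n$ I obtain target values $\overline{v_j'}$ with $(n-m_j,v_j,v_j')\in\calV{}{S_j}$. The free-variable side condition $\fname{FV}(l') \subseteq \{\overline{x_j}\}$ ensures $[\phi_2]l' \equiv l'$, so the target let-binding reduces to $[\overline{x_j |-> v_j'}]l'$. To place this pair in $\calV{}{S}$ under the constructor fusion clause I need the field relations at \emph{every} index below $n-m$, which follows from $n-m \le n-m_j$ together with \Lem{downclosurestrat}. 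For F-Case, I first apply the scrutinee IH to obtain an $i$, values $\overline{v_{i,j}},\overline{v_{i,j}'}$, and a target reduction $[\phi_2]t' \prog{*} [\overline{x_{i,j} |-> v_{i,j}'}]l_i'$ with all field pairs related under $S_{i,j}$. The source then consumes one extra step to enter the branch $[\overline{x_{i,j} |-> v_{i,j}}]l_i$, and the branch premise, instantiated at $n-p-1$ with the extended substitutions (justified using downward closure of $\calG$), delivers the remaining target execution and the final $\calV{}{S}$ membership at index $n-m$.

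The main obstacle will be threading the step indices correctly through F-Case. Two subtleties interact: (i) the scrutinee's constructor fusion strategy exposes values $v_{i,j}'$ substituted into $l_i'$, but the branch premise demands $l_i$ related to $l_i'$ under the \emph{strategy-level} pattern bindings, so I must rely on the scrutinee clause also certifying $(m',v_{i,j},v_{i,j}')\in\calV{}{S_{i,j}}$ for every $m'<n-p$ and then pick the right $m'$; (ii) once the source enters its branch, the remaining step budget is $m-p-1$ and the target budget is $n-p-1$, and I must check these line up with what the branch IH produces, which is only possible because the thunking invariant guarantees that branch bodies are lambda-abstractions, giving me one free step of ``slack'' on the target side. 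After these indices are reconciled, the F-Ctor case goes through by a symmetric but simpler argument, and all remaining rules reduce to pure structural bookkeeping.
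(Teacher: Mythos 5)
Your proposal follows essentially the same route as the paper's proof: one compatibility sub-lemma per fusion rule, each proved by unfolding $\lesssim^{\fname{str}}$, $\calE{}{\cdot}$, and $\calV{}{\cdot}$, decomposing the source reduction, invoking the inductive premises at lowered indices via \Lem{downclosurestrat}, and handling \ruleName{F-RecFold}/\ruleName{F-RecUnfold} through the Appel--McAllester well-founded-constructor argument. Your treatment of the two nonstandard cases (\ruleName{F-Ctor} and \ruleName{F-Case}) matches the paper's, including the use of the free-variable side condition to discharge the substitution on $l'$, the index accounting $n-m \le n-m_j$ for the fusion-strategy value clause, and the observation that thunked branch bodies being lambda abstractions reconciles the step budgets after the branch is entered.
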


\section{From the Algorithmic Fusion Rules to the Prescient Fusion Rules}

\label{sec:pseudo-algo-to-prescient-full}

We now introduce
two functions $\calM$ and $\calT$ which helps to transform the pseudo-algorithmic fusion derivations
to the fusion derivations we introduced in \Fig{fig:fusion-prescient}.

The transformation
is rather straightforward. Notice that the only difference between the pseudo-algorithmic fusion rules
and the prescient fusion rules is located in the \textsc{P-Ctor} and the \textsc{P-Case} rules:
the pseudo-algorithmic rules transform the terms of matching branches
(book-kept in the fusion strategies) in the rule \textsc{P-Ctor},
while the prescient \ruleName{F-Case} rule checks that
the terms attached in the prescient strategies indeed conform to the transformed terms.

\begin{definition}
  The function $\calM$ traverses pseudo-algorithmic fusion derivation trees to
  accumulate a mapping from \emph{constructor fusion strategies} to their corresponding \emph{transformed} terms:
  The output of $\calM$ is a mapping from constructor fusion strategies to
  another mapping from natural numbers to terms -- since a constructor fusion strategy
  has more than one $\kw{case}$ branch terms attached to it, the inner
  mapping maps from the branch ids to their corresponding terms.
\begin{itemize}
  \item
  $\calM[| \inferrule[P-Var]{(x |-> S) \in \Gamma}{\Gamma |- x ~>_{} x <= S} |] = \ep$
  
  \item
  $\calM[| \inferrule[P-Lam]{\Gamma \cat (x |-> S') |- t ~> t' <= S}{\Gamma |- \lam x.t ~> \lam x.t' <= S' -> S} |]
  =
  \calM[|\Gamma \cat (x |-> S') |- t ~> t' <= S|]$
  
  \item
  $\calM[| \inferrule[P-Ctor]{
    \left( c_k\,\langle\,\overline{x_{k,j} |-> S_{k,j}}^j\,\rangle->l_k \right): S_{\fname{res}} \in S_{\mathcal{A}}
    \\
    \overline{\Gamma |- t_{k,j} ~> t_{k,j}' <= S_{k,j}}^j
    \\
    \Gamma\cat
      \overline{
        (x_{k,j} |-> S_{k,j})
      }^j
      |-
      l_k ~> l_k' <= S_{\fname{res}}
    \\
    \fname{FV}(l_k) \cup \fname{FV}(l_k') \subseteq \left\{\overline{x_{k,j}}^j\right\}
  }{
    \Gamma |-
    c_k\ \overline{t_{k,j}}^j ~>
    \kw{let } \overline{x_{k,j} = t_{k,j}'}^j \kw{ in } l_k'
    <= S_{\mathcal{A}}
  } |] = \\
  ~\hfill \begin{aligned}
    &\calM[|\Gamma\cat\overline{(x_{k,j} |-> S_{k,j})}^j |- t_k ~> t_k' <= S_{\fname{res}}|] \\
    &\cup_j \calM[|\Gamma |- t_{k,j} ~> t_{k,j}' <= S_{k,j}|] \\
    &\cup \left\{S_{\mathcal{A}} |-> \left\{k |-> l_k'\right\}\right\}
  \end{aligned}$

  \item
  $\calM[|\inferrule[P-Case]{
    S_{\mathcal{A}} = \left\{\overline{c_i\,\langle \overline{x_{i,j} |-> S_{i,j}}^j \rangle->l_i}^i \right\}: S_{\fname{res}}
    \\
    \Gamma |- t ~> t' <= S_{\mathcal{A}}
    \\\\
    \overline{\Gamma\cat
      \overline{
        (x_{i,j} |-> S_{i,j})
      }^j
      |-
      l_i ~> l_i' <= S_{\fname{res}}}^i
    \\
    \overline{\fname{FV}(l_i) \cup \fname{FV}(l_i') \subseteq \left\{ \overline{x_{i,j}}^j \right\}}^i
  }{
    \Gamma |-
    \kw{case}\ t\ \kw{of}\ \overline{\,
      c_i\ \overline{x_{i,j}}^j -> t_i
    \,}^i
    ~>
    t'
    <=
    S_{\fname{res}}
  }|] =
  \\
  ~\hfill \begin{aligned}
    &\cup_i \calM[|\Gamma\cat \overline{ (x_{i,j} |-> S_{i,j})}^j |- l_i ~> l_i' <= S_{\fname{res}}|] \\
    &\cup \calM[|\Gamma |- t ~> t' <= S_{\mathcal{A}}|] \\
    &\cup \left\{S_{\mathcal{A}} |-> \left\{\overline{i |-> l_i'}^i\right\}\right\}
  \end{aligned}
  $
  \item \textcolor{gray}{All the other rules just traverse sub-derivation trees and accumulate substitutions}
\end{itemize}
\end{definition}

\begin{definition}
  The function $\calT_\phi$ traverses pseudo-algorithmic fusion derivation trees again to
  transform them into prescient fusion derivation trees according to the mapping $\phi$
  that is accumulated in the previous traversal of pseudo-algorithmic fusion derivation trees
  by $\calM$.

  There are several notations related to how the mapping $\phi$ is used:
  \begin{itemize}
    \item $\phi[S_{\mathcal{A}}][k]$ means the $k^{\text{th}}$ \emph{transformed} branch body term recorded in the mapping $\phi$ of the strategy $S_{\mathcal{A}}$.
    \item $\phi(S_{\mathcal{A}})$ means to apply the mapping to a pseudo-algorithmic strategy, this is similar to applying a substitution:
          \\
          \hspace*{10em}
          $\begin{aligned}
            \phi(\top) &= \top \\
            \phi(\bot) &= \bot \\
            \phi(S_1 -> S_2) &= \phi(S_1) -> \phi(S_2) \\
            \phi(\mu F_S) &= \mu (\phi(F_S)) \\
            \phi\left(\left\{\overline{c \langle \overline{S} \rangle}\right\}\right) &= \left\{\overline{c \langle \overline{\phi(S)} \rangle}\right\} \\
            \phi\left(S_{\mathcal{A}}\right) &= \left\{\overline{
              c_i\,\langle\overline{x_{i,j} |-> \phi(S_{i,j})}^j\rangle->\phi[S_{\mathcal{A}}][i]
            }^i\right\} \\
            &\text{where } S_{\calA} = \left\{\overline{
              c_i\,\langle\overline{x_{i,j} |-> S_{i,j}}^j\rangle->l_i
            }^i\right\}: S
          \end{aligned}$
  \end{itemize}

  Then the actual transformation $\mathcal{T}_{\phi}$ is defined as:
  \begin{itemize}
    \item 
    $\calT_\phi[|\inferrule[P-Ctor]{
      \left( c_k\,\langle\,\overline{x_{k,j} |-> S_{k,j}}^j\,\rangle->l_k \right): S_{\fname{res}} \in S_{\mathcal{A}}
      \\
      \overline{\Gamma |- t_{k,j} ~> t_{k,j}' <= S_{k,j}}^j
      \\\\
      \Gamma\cat
        \overline{
          (x_{k,j} |-> S_{k,j})
        }^j
        |-
        l_k ~> l_k' <= S_{\fname{res}}
      \\
      \fname{FV}(t_k) \cup \fname{FV}(t_k') \subseteq \left\{\overline{x_{k,j}}^j\right\}
    }{
      \Gamma |-
      c_k\ \overline{t_{k,j}}^j ~>
      \kw{let } \overline{x_{k,j} = t_{k,j}'}^j \kw{ in } l_k'
      <= S_{\mathcal{A}}
    }|] = \\
    ~\hfill \inferrule[]{
      \fname{FV}(\phi[S_{\mathcal{A}}][k]) \subseteq \left\{\overline{x_{k,j}}^j\right\}
      \\
      \hspace{-1em}\left( c_k\,\langle\,\overline{x_{k,j} |-> \phi(S_{k,j})}^j\,\rangle->\phi[S_{\mathcal{A}}][k] \right) \in \phi(S_{\mathcal{A}})
      \\
      \hspace{-1em}\overline{\calT_\phi[|\Gamma |- t_{k,j} ~> t_{k,j'} <= S_{k,j}|]}^j
    }{
      \phi(\Gamma) |-
      c_k\ \overline{t_{k,j}}^j ~>
      \kw{let } \overline{x_{k,j} = t_{k,j}'}^j \kw{ in } \phi[S_{\mathcal{A}}][k]
      <= \phi(S_{\mathcal{A}})
    }$

    \item \textcolor{gray}{All the other cases just recursively traverse sub-derivation trees without changing the shape of rules.}
  \end{itemize}
\end{definition}

With the definition of $\calM$ and $\calT$, we have the following \Thm{thm:algo-to-precsient}.
\begin{theorem}
  \label{thm:algo-to-precsient}
  if $\Gamma |- t ~>_{\mathcal{A}} t' <= S_{\mathcal{A}}$,
  then for all $\phi$, such that
  $\calM [|\Gamma |- t ~>_{\mathcal{A}} t' <= S_{\mathcal{A}} |] \subseteq \phi$,
  we have
  $\calT_{\phi} [|\Gamma |- t ~>_{\mathcal{A}} t' <= S_{\mathcal{A}}|] = \phi(\Gamma) |- t ~> t' <= \phi(S_{\mathcal{A}})$.
\end{theorem}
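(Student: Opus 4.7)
The plan is to proceed by structural induction on the pseudo-algorithmic derivation $\mathcal{D} :: \Gamma |- t ~>_{\mathcal{A}} t' <= S_{\mathcal{A}}$. The first preliminary observation is a monotonicity property of $\calM$: by inspection of its definition, $\calM[|\mathcal{D}|]$ is always formed as the union of the contribution at the root node (empty, except at \textsc{P-Ctor} and \textsc{P-Case}) together with the contributions from every sub-derivation. Hence whenever $\calM[|\mathcal{D}|] \subseteq \phi$, the same $\phi$ also contains $\calM[|\mathcal{D}'|]$ for every sub-derivation $\mathcal{D}'$, so the induction hypothesis can be invoked on sub-derivations with the very same $\phi$. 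A similar easy lemma will be needed stating that $\phi$ commutes with the strategy constructors, so that $\phi(S' -> S) = \phi(S') -> \phi(S)$, $\phi(\mu F) = \mu(\phi \circ F)$, etc.

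For the structural cases (\textsc{P-Var}, \textsc{P-Lam}, \textsc{P-App}, \textsc{P-LetRec}, \textsc{P-Top}, \textsc{P-RecFold}, \textsc{P-RecUnfold}, \textsc{P-CtorSkip}, \textsc{P-CaseSkip}), the definition of $\calT_\phi$ recurses into sub-derivations without altering the rule shape. In each such case the proof is a two-liner: apply the induction hypothesis to each sub-derivation to obtain the corresponding prescient premise, then assemble the matching \textsc{F-}-rule, using the commutation of $\phi$ with strategy constructors to reconcile the strategies.

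The only interesting cases are \textsc{P-Case} and \textsc{P-Ctor}. For \textsc{P-Case}, $\calM$ adds the binding $\{S_{\calA} |-> \{\overline{i |-> l_i'}^i\}\}$, so by assumption $\phi[S_{\calA}][i] = l_i'$ for every branch $i$. Therefore $\phi(S_{\calA})$ unfolds, by definition, to exactly $\{\overline{c_i\,\langle\overline{x_{i,j} |-> \phi(S_{i,j})}^j\rangle -> l_i'}^i\}$, which is precisely the shape of the scrutinee strategy required by \textsc{F-Case}. Combining this with the induction hypotheses on the scrutinee derivation and on each branch derivation yields a valid \textsc{F-Case} derivation; the free-variable side conditions carry over verbatim. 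The \textsc{P-Ctor} case is analogous: $\phi[S_{\calA}][k] = l_k'$ by construction, so $\phi(S_{\calA})$ contains $c_k\,\langle\overline{x_{k,j}|->\phi(S_{k,j})}^j\rangle -> l_k'$, matching \textsc{F-Ctor}. Note that the premise of \textsc{P-Ctor} that transforms the branch body $l_k ~> l_k'$ does not appear in \textsc{F-Ctor} and is simply discarded by $\calT_\phi$; its content has already been absorbed into $\phi$.

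The main obstacle is establishing that $\phi$ is a well-defined partial function on keys of the form $S_{\calA}$, given that the same constructor fusion strategy $S_{\calA}$ may occur at several distinct \textsc{P-Ctor} and \textsc{P-Case} sites of $\mathcal{D}$. If two such sites contribute \emph{different} transformed bodies for the same branch index, then $\calM[|\mathcal{D}|]$ is not a function and the theorem statement is vacuous or ill-typed. I expect this to be discharged by a separate coherence lemma stating that within a single derivation $\mathcal{D}$, all branch sub-derivations whose input strategy is the same $S_{\calA}$ necessarily produce the same transformed bodies --- essentially because $S_{\calA}$ syntactically records the untransformed $l_i$ together with the strategies $S_{i,j}$ of the pattern variables, and the transformation rules are deterministic in those inputs (at least up to reuse of shared strategy nodes, which is why in practice one maintains sharing during inference rather than treelike duplication). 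Making this coherence precise in the presence of recursive strategies $\mu F_S$, where $S_{\calA}$ may recur inside its own unfolding, is likely to be the technically delicate step and may require strengthening the statement to apply only to derivations generated by the algorithm of \Sec{sec:inference}, where such sharing is guaranteed.
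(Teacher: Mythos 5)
Your proof takes essentially the same route as the paper's: induction on the pseudo-algorithmic derivation, with the structural cases discharged directly by the induction hypothesis and the \textsc{P-Ctor}/\textsc{P-Case} cases handled by reading the transformed branch bodies back out of $\phi$ via the bindings that $\mathcal{M}$ records, while checking the free-variable side conditions. The coherence concern you raise at the end (functionality of $\mathcal{M}$'s output when the same constructor fusion strategy occurs at several sites) is not addressed in the paper's proof either --- the theorem is stated conditionally on the existence of a containing mapping $\phi$, so it is vacuous in any conflicting case --- hence flagging it is reasonable but is not part of the argument the paper actually gives.
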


\begin{proof}[Proof of \Thm{thm:algo-to-precsient}]
  By induction on $\Gamma |- t \squigtoa t' <= S_\calA$.
  \begin{description}
    \item[Case \ruleName{P-Ctor}.]
    Let $\phi = \calM[|\Gamma |- t \squigtoa t' <= S_{\calA}|]$,
    then by the definition of $\calM$, for all $j$ we have
    $\calM[|\Gamma |- t_{k,j} \squigtoa t_{k,j}' <= S_{k,j}|] \subseteq \phi$.
    By IH, for all $j$ we have
    $\calT_{\phi}[|\Gamma |- t_{k,j} \squigtoa t_{k,j}' <= S_{k,j}|] = \phi(\Gamma) |- t_{k,j} ~> t'_{k,j} <= \phi(S_{k,j})$.
    Also, by definition of $\calM$ %
    we have $\left\{S_{\calA} |-> \left\{k |-> l_k'\right\}\right\} \in \phi$;
    and by the premise, $\fname{FV}(l_k') \subseteq \left\{\overline{x_{k,j}}^j\right\}$,
    so we can verify that $\fname{FV}(\phi[S_{\calA}][k]) \subseteq \left\{\overline{x_{k,j}}^j\right\}$.
    And by the definition of $\phi(S_{\calA})$, we can verify that
    $\left( c_k\,\langle\,\overline{x_{k,j} |-> \phi(S_{k,j})}^j\,\rangle->\phi[S_{\mathcal{A}}][k] \right) \in \phi(S_{\mathcal{A}})$.

    \item[Case \ruleName{P-Case}.] Sub-derivations follow directly from IH.
    And the free variable check directly follows from the assumption.

    \item[Other cases.] By direct IH.
  \end{description}
\end{proof}

\section{Proof for Inference System}
\label{app:inference-proofs}

To prove \Lem{lem:constr-implies-cons}, we first want to express the fact that
$\ep |> \SCtx\ \fname{outputs}\ \CCtx$
produces a context $\CCtx$ of \emph{subtype-consistent} bounds,
which can be defined via an auxiliary \emph{subtyping} relation,
shown in \Fig{fig:subtyping-rules}.
In that relation, we now have \emph{two} bounds context
on the left of the $|-$.
The first one is the usual bounds context,
containing known bounds on type variables which can be leveraged
to show subtyping,
and the other is a \emph{delayed} bounds context,
containing those assumptions %
to be leveraged later as the derivation
goes through at least a function type or a constructor type
-- notice that in \ruleName{S-Fun} and \ruleName{S-Ctor},
the premises \emph{move} the delayed context into the main context.
This more elegantly achieves what \citet{Parreaux20:simple-essence-alg-subt}
achieves with the $\rhd$ modality marker.

\begin{figure*}
\mprset{sep=1.3em}
{\small\input{contents/subtyping-rules.tex}}
\caption{Subtyping rules for fusion strategy candidates.
}
\label{fig:subtyping-rules}
\end{figure*}

\begin{definition}[Fully merged]
  $\CDtx$ is \emph{fully merged} if 
  \label{def:fully-merged}
  \begin{enumerate}
    \item For all $\al =< \be \in \CDtx$, we have $\be =< \al \in \CDtx$, and
    \item For all $\al,\be_1,\be_2,\ga_1,\ga_2$,
    if $\al =< \be_1 -> \be_2 \in \CDtx$ and $\al =< \ga_1 -> \ga_2 \in \CDtx$,
    then we have $\be_1 =< \ga_1 \in \CDtx$ and $\be_2 =< \ga_2 \in \CDtx$
    \item For all $\al,\overline{\be_{i,j}}^{i,j},\overline{\ga_{i,j}}^{i,j},\al_{\fname{res}},\al_{\fname{res}}',
    \overline{x}^{i,j},\overline{y}^{i,j},\overline{t_i}^i$ and $\overline{e_i}^i$,
    if $\left(\al =< \left\{\,\overline{
      c_i\langle\overline{x_{i,j} |-> \be_{i,j}}^{j}\rangle -> t_i
    }^{i}\right\}: \al_{\fname{res}}\right) \in \CDtx$
    and
    $\left(\al =< \left\{\overline{
      c_i\langle\overline{y_{i,j} |-> \ga_{i,j}}^{j}\rangle -> e_i
    }^{i}\right\}: \al'_{\fname{res}}\right) \in \CDtx$,
    then $\overline{\be_{i,j} =< \ga_{i,j} \in \CDtx}^{i,j}$.
  \end{enumerate}
  \NOTE{this property does not need to be parametrized,
  because they are irrelevant to how we put things from the worklist to the left or right of the ctx;
  in other words, the C-Mrg rules will handle all the ctx, regardless of whether they are garbage or not
  }
\end{definition}

\newcommand{\disjunion}{\mathbin{\mathaccent\cdot\cup}}

\textbf{Notation:}
We write $\CDtx |- \ty\Pos=<\ty\Neg$ as a shorthand for
$\CDtx ~|~ \ep |- \ty\Pos =< \ty\Neg$ and
$\CDtx |- \overline{\ty\Pos=<\ty\Neg}$
as a shorthand for $\overline{\CDtx ~|~ \ep |- \ty\Pos=<\ty\Neg}$.
We use $\CCtx \disjunion \CDtx$ to denote the \textit{disjoint union}
between two bounds contexts $\CCtx$ and $\CDtx$.

\begin{definition}[Bounds consistency]
  \label{def:bounds-consistency}
  $\CDtx$ is \emph{bounds-consistent} if
  for all $\al$, $\CDtx_{0}$, $\ty\Pos$ and $\ty\Neg$
  such that $\ntv{\ty\Neg}$,
  $\CDtx = \CDtx_0 \disjunion \{(\al =< \ty\Neg), (\ty\Pos =< \al)\}$:
  \begin{description}
    \item[If $\ntv{\ty\Pos}$]

    We have
    $\ep \mid \CDtx |- \ty\Pos=<\ty\Neg$.
    \NOTE{guard all the $\CDtx$}

    \item[If $\ty\Pos$ is a type variable $\be \neq \al$.] %
    We have
    $(\be=<\ty\Neg) \in \CDtx_0$.%
    \NOTE{all upper bounds of $\al$ are also upper bounds of $\be$}

  \end{description}

\end{definition}

\newcommand{\proc}[1]{\fname{processed}(#1)}
\newcommand{\unproc}[1]{\fname{unprocessed}(#1)}

\newcommand{\openbr}{\textcolor{gray}{\{\,}}
\newcommand{\closebr}{\textcolor{gray}{\,\}}}

\begin{lemma}
  \label{lem:constr-collect-self-contained}
  If $\Gamma |- t:\al \conR \SCtx$, then $\SCtx$ is self-contained
  (according to \Def{def:self-contained}).
\end{lemma}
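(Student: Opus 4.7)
The plan is to proceed by induction on the derivation of $\Gamma |- t : \al \conR \SCtx$, with case analysis on the final rule. The crucial structural observation is that inspecting the inference rules of \Fig{fig:fusion-inference}, the \emph{only} rule whose output contributes a constraint of the form $\ty\Pos =< \{...\}:\be$ (i.e.\ with a fusion-type right-hand side) is \ruleName{I-Case}; every other rule (\ruleName{I-Var}, \ruleName{I-Lam}, \ruleName{I-App}, \ruleName{I-Ctor}, \ruleName{I-LetRec}) introduces only constraints whose right-hand side is a type variable, a function type, or a pure constructor type, none of which trigger the self-containedness condition. Hence for all cases except I-Case, self-containedness of the full output $\SCtx$ reduces immediately to self-containedness of the sub-derivations' outputs by the induction hypothesis, together with the monotonicity of the containment requirement $\SDtx_i \catsp (\ga_i =< \be) \subseteq \SCtx$ under enlargement of $\SCtx$.

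For the \ruleName{I-Case} case, let the output be
$\SCtx = \SCtx_0 \catsp \overline{\SDtx_i}^i \catsp \bigl(\al =< \{\overline{c_i\,\langle\overline{x_{i,j}|->\be_{i,j}}^j\rangle -> l_i}^i\}: \de\bigr) \catsp \overline{(\ga_i =< \de)}^i$. The inherited fusion-typed constraints living inside $\SCtx_0$ and each $\SDtx_i$ are handled as above. What remains is to verify the freshly introduced constraint $\al =< \{\overline{c_i\,\langle\overline{x_{i,j}|->\be_{i,j}}^j\rangle -> l_i}^i\}: \de$. The premises of I-Case give, for every $i$, a derivation $\Gamma \catsp \overline{(x_{i,j}|->\be_{i,j})}^j |- l_i : \ga_i \conR \SDtx_i$. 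Combining this with the thunking invariant $\fname{FV}(l_i) \subseteq \{\overline{x_{i,j}}^j\}$ from \Sec{sub-sec:thunking-free-vars} and a standard context-strengthening lemma (stated below), we obtain $\overline{(x_{i,j}|->\be_{i,j})}^j |- l_i : \ga_i \conR \SDtx_i$, which supplies precisely the existence witness required by the definition of self-containedness. The inclusion $\overline{\SDtx_i \catsp (\ga_i =< \de)}^i \subseteq \SCtx$ is then immediate from the syntactic shape of I-Case's output.

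The main obstacle is therefore the context-strengthening lemma: if $\Gamma' |- t : \al \conR \SCtx$ with $\Gamma \subseteq \Gamma'$ and $\fname{FV}(t) \subseteq \fname{dom}(\Gamma)$, then $\Gamma |- t : \al \conR \SCtx$ (with \emph{the same} $\SCtx$ and $\al$). This is a routine induction on the inference derivation: \ruleName{I-Var} is the only rule that consults the context, and the free-variable hypothesis guarantees the relevant binding is already in $\Gamma$; every other rule merely threads and extends the context, so fresh-variable choices and collected constraints carry over verbatim. The only conceptual care needed is to preserve the thunking invariant as the induction descends under \ruleName{I-Lam} and \ruleName{I-Case}, which is straightforward because inference does not rewrite the term.
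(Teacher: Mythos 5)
Your proof is correct and follows essentially the same route as the paper's: induction on the inference rules, observing that only \ruleName{I-Case} emits a constraint with a fusion-typed right-hand side, and discharging that case from the premise derivations for the branch bodies $l_i$. You are in fact slightly more careful than the paper, which silently passes from the premise $\Gamma \catsp \overline{(x_{i,j}|->\be_{i,j})}^j |- l_i : \ga_i \conR \SDtx_i$ to the definition's requirement that $l_i$ be derivable under the pattern bindings alone; your explicit appeal to the thunking invariant and a context-strengthening lemma is exactly the right way to close that elided step.
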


\begin{proof}[Proof of \Lem{lem:constr-collect-self-contained}]
  By induction on the fusion strategy candidate inference rules.
  \begin{description}
    \item[Case \ruleName{I-Var}, \ruleName{I-Lam}, \ruleName{I-App}, \ruleName{I-Ctor}, \ruleName{I-LetRec}.]
    Direct by induction hypothesis since they do not introduce constraints
    containing $\left\{\,\overline{c_i\,\langle\,\overline{x_{i,j} |-> \be_{i,j}}^j\,\rangle -> l_i}^i\,\right\}: \de$
    \item[Case \ruleName{I-Case}.]
    By IH, $\SCtx_0$ and $\overline{\SDtx_i}^i$ are self-contained.
    Since we have $\overline{\Gamma \catsp \overline{(x_{i,j} |-> \be_{i,j})}^j |- l_i:\ga_i \conR \SDtx_i}^i$
    in the premise, the output $\SCtx$ containing all the constraints is self-contained.
  \end{description}
\end{proof}

\begin{lemma}
  \label{lem:constr-output-contains-constr-ctx}
  If $\CCtx |> \SCtx$ outputs $\CDtx$,
  then for all $\ty\Pos =< \ty\Neg \in \CCtx$,
  we have $\ty\Pos =< \ty\Neg \in \CDtx$,
  \ie
  $\CCtx \subseteq \CDtx$.
\end{lemma}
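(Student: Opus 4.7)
The plan is to proceed by straightforward structural induction on the derivation of $\CCtx |> \SCtx$. The lemma is a monotonicity statement: the bounds context only grows (or is reordered) as constraint solving progresses, never losing information. So each constraint-solving rule must be checked to ensure it either preserves $\CCtx$ as a set, adds to it, or reorders its elements.

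Concretely, I would case-split on the last rule applied. The rule \ruleName{C-Done} is the base case: the output is literally $\CCtx$ itself (after reordering into $\CCtx\Neg \catsp \CCtx^v \catsp \CCtx\Pos$), which trivially contains every $\ty\Pos =< \ty\Neg$ of the input. For \ruleName{C-Skip}, \ruleName{C-Fun}, \ruleName{C-Ctor}, and the three ``merge'' rules \ruleName{C-VarMrg}, \ruleName{C-FunMrg}, \ruleName{C-CtorMrg}, the premise's bounds context equals the conclusion's bounds context, so the induction hypothesis immediately gives the result. For \ruleName{C-VarR} and \ruleName{C-VarL}, the premise's bounds context is a strict superset of the conclusion's bounds context (a new type-variable bound has been prepended/appended), so IH again suffices. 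Finally, \ruleName{C-Pass} and \ruleName{C-Match} are reorderings: the premise's context consists of the same multiset of bounds as the conclusion's, just in a different order, so membership in the output inherited from IH carries over directly.

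I do not expect any genuine obstacle here, since every rule is manifestly non-deleting on $\CCtx$. The only mild care is in \ruleName{C-Match}, where one must notice that although the worklist gains the new constraint $\ty\Pos =< \ty\Neg$, the context itself merely has $(\ty\Pos=<\al)$ and $(\al =< \ty\Neg)$ commuted past each other, so no bound is discarded. Treating $\CCtx$ as a set (or multiset with position annotations) throughout makes this explicit. The resulting proof would amount to about one line per rule and is essentially the ``context growth'' invariant needed by later lemmas such as \Lem{lem:constr-implies-cons}.
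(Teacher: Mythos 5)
Your proposal is correct and matches the paper's own proof: both proceed by induction on the constraint-solving derivation, observing that \ruleName{C-Done} outputs the context itself, that \ruleName{C-VarL}/\ruleName{C-VarR} only enlarge it, and that the remaining rules leave it unchanged as a set. Your explicit note that \ruleName{C-Pass} and \ruleName{C-Match} merely reorder the context is a slightly more careful phrasing of what the paper states as "the bounds contexts are not changed," but the argument is the same.
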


\begin{proof}[Proof of \Lem{lem:constr-output-contains-constr-ctx}]
  By induction on the constraint solving rules.
  \begin{description}
    \item[Case \ruleName{C-Skip}, \ruleName{C-Pass}, \ruleName{C-Match}, \ruleName{C-Fun}, \ruleName{C-Ctor}, \ruleName{C-VarMrg}, \ruleName{C-FunMrg}, \ruleName{C-CtorMrg}.]
    By direct IH because the bounds contexts $\CCtx$ are not changed in these cases.
    \item[Case \ruleName{C-Done}.] This case holds because $\CDtx = \CCtx = \CCtx\Neg \catsp \CCtx^v \catsp \CCtx\Pos$.
    \item[Case \ruleName{C-VarL}.] By IH $\CCtx \catsp (\al =< \ty\Neg) \subseteq \CDtx$, so $\CCtx \subseteq \CDtx$.
    \item[Case \ruleName{C-VarR}.] By IH $(\ty\Pos =< \al) \catsp \CCtx \subseteq \CDtx$, so $\CCtx \subseteq \CDtx$.
  \end{description}
\end{proof}

\begin{lemma}
  \label{lem:constr-implies-subtype}
  If $\CCtx |> \SCtx$ outputs $\CDtx$,
  then $\CDtx |- \SCtx$ and
  for all $(\ty\Pos =< \ty\Neg) \in \SCtx$,
  if $\ntv{\ty\Pos}$ and $\ntv{\ty\Neg}$, then
  $\ep ~|~ \CDtx |- \ty\Pos =< \ty\Neg$; otherwise
  $\CDtx |- \ty\Pos =< \ty\Neg$.
\end{lemma}

\begin{proof}[Proof of \Lem{lem:constr-implies-subtype}]
  By induction on the constraint solving rules.
  \begin{description}    
    \item[Case \ruleName{C-Skip}.]
    Since $\ty\Pos =< \ty\Neg \in \CCtx$, at least one of $\ty\Pos$ or $\ty\Neg$ is
    a type variable, we need to prove $\CDtx |- \ty\Pos =< \ty\Neg$.
    By \Lem{lem:constr-output-contains-constr-ctx}
    we have $\ty\Pos =< \ty\Neg \in \CDtx$,
    so $\CDtx |- \ty\Pos =< \ty\Neg$ by \ruleName{S-Hyp}.

    \item[Case \ruleName{C-VarL}.]
    We need to prove $\CDtx |- \al =< \ty\Neg$.
    By \Lem{lem:constr-output-contains-constr-ctx}
    we have $\al =< \ty\Neg \in \CDtx$,
    so $\CDtx |- \al =< \ty\Neg$ by \ruleName{S-Hyp}.

    \item[Case \ruleName{C-VarR}.]
    We need to prove $\CDtx |- \ty\Pos =< \al$.
    By \Lem{lem:constr-output-contains-constr-ctx}
    we have $\ty\Pos =< \al \in \CDtx$,
    so $\CDtx |- \ty\Pos =< \al$ by \ruleName{S-Hyp}.

    \item[Case \ruleName{C-Fun}.]
    We need to prove $\ep ~|~ \CDtx |- \al_1 -> \al_2 =< \be_1 -> \be_2$. By IH we have
    $\CDtx |- \be_1 =< \al_1$ and
    $\CDtx |- \al_2 =< \be_2$.
    By \ruleName{S-Fun}, $\ep ~|~ \CDtx |- \al_1 -> \al_2 =< \be_1 -> \be_2$.

    \item[Case \ruleName{C-Ctor}.]
    By IH we have
    $\forall j \in M_n.\ \CDtx |- \ga_j =< \al_{n,j}$.
    Then by \ruleName{S-Ctor}, we have 
    $\ep ~|~ \CDtx |- \left(c_n\ \langle\overline{\ga_j}^{j \in M_n}\rangle =<
    \left\{\,\overline{
      c_i\,\langle\,\overline{x |-> \al_{i,j}}^{j \in M_i}\,\rangle -> l_i
    }^{i\in N}\,\right\}: \be\right)$.

    \item[Case \ruleName{C-Pass}, \ruleName{C-Match}, \ruleName{C-VarMrg}, \ruleName{C-FunMrg}, \ruleName{C-CtorMrg}, \ruleName{C-Done}.] $\CDtx |- \ep$ holds.
  \end{description}
\end{proof}

\begin{lemma}
  \label{lem:constr-finally-solves}
  For any $\ty\Pos$, $\ty\Neg$, $\CCtx_{0,1,2}$ and $\SCtx$, where
  $\ty\Pos$ and $\ty\Neg$ are not both type variables,
  if $\CCtx_0 \cat (\ty\Pos =< \al) \cat \CCtx_1 \cat (\al =< \ty\Neg) \cat \CCtx_2 |> \SCtx$ outputs $\CDtx$,
  then there exists a sub-derivation
  $\CCtx_3 |> \ty\Pos =< \ty\Neg$ for some $\CCtx_3$.
\end{lemma}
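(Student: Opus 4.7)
I will proceed by induction on the constraint-solving derivation whose conclusion is $\CCtx_0 \cat (\ty\Pos =< \al) \cat \CCtx_1 \cat (\al =< \ty\Neg) \cat \CCtx_2 |> \SCtx$. The key observation driving the proof is that \ruleName{C-Done} requires its bounds context to be in the sorted form $\CCtx\Neg \cat \CCtx^v \cat \CCtx\Pos$: since the hypothesis stipulates that at least one of $\ty\Pos$ and $\ty\Neg$ is non-variable, the two bounds $(\ty\Pos =< \al)$ and $(\al =< \ty\Neg)$ occurring in the given order cannot coexist inside a sorted context (a simple case analysis on whether each of $\ty\Pos$ and $\ty\Neg$ is concrete or a type variable shows that their relative positions would have to be reversed in every admissible sorted decomposition). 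Hence the derivation must swap them via \ruleName{C-Match} before it can close with \ruleName{C-Done}, and that very swap produces the desired sub-derivation.

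Case analysis on the root rule. For the rules that do not modify $\CCtx$ — \ruleName{C-Skip}, \ruleName{C-Fun}, \ruleName{C-Ctor}, \ruleName{C-VarMrg}, \ruleName{C-FunMrg}, \ruleName{C-CtorMrg} — and for \ruleName{C-VarL} and \ruleName{C-VarR}, which only extend $\CCtx$ at one of its endpoints, our two bounds persist in the same relative order inside the (possibly extended) context, so the IH applies to the relevant sub-derivation premise and yields the required $\CCtx_3 |> \ty\Pos =< \ty\Neg$. For \ruleName{C-Done} the sorted-form requirement is incompatible with the given ordering of our bounds, so this case is vacuous. For \ruleName{C-Pass} whose swap does not touch either of our bounds, or for \ruleName{C-Match} applied to a pair distinct from ours, our bounds are preserved in position and the IH applies. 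If \ruleName{C-Pass} swaps our $(\ty\Pos =< \al)$ with an immediately-adjacent $(\be =< \ty\Neg')$ on its right (with $\be \ne \al$), or swaps our $(\al =< \ty\Neg)$ with a bound on its left, then the bound in question merely migrates one slot toward the other; the premise still has the shape required by the lemma (with updated $\CCtx_0$, $\CCtx_1$, or $\CCtx_2$), and the IH applies.

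The critical case is \ruleName{C-Match} applied to the pair $(\ty\Pos =< \al),(\al =< \ty\Neg)$ itself, which can only happen once $\CCtx_1$ has been emptied so that they have become adjacent. At that point the premise of \ruleName{C-Match} is exactly $\CCtx_0 \cat (\al =< \ty\Neg) \cat (\ty\Pos =< \al) \cat \CCtx_2 |> \ty\Pos =< \ty\Neg$, which is a sub-derivation of the form $\CCtx_3 |> \ty\Pos =< \ty\Neg$ as required. Note that whenever this adjacency holds, rule priority forces \ruleName{C-Match} (not \ruleName{C-Pass}) to fire on this pair, since the two bounds share the variable $\al$.

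The main obstacle will be the bookkeeping in the \ruleName{C-Pass} case: establishing that whenever the swap involves one of our bounds, the resulting premise can indeed be re-cast in the form required by the lemma statement so that the IH applies. This is not deep, but requires a careful enumeration of sub-cases according to which side of each of our bounds the swapped neighbour sits on, and verifying that the premise's worklist (which is $\ep$ for \ruleName{C-Pass}) poses no issue for the inductive invocation.
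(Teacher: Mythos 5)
Your proof follows essentially the same route as the paper's: induction on the constraint-solving derivation, with \ruleName{C-Match} applied to the pair $(\ty\Pos =< \al),(\al =< \ty\Neg)$ supplying the required sub-derivation and every other rule preserving the shape needed to invoke the induction hypothesis. Your explicit argument that \ruleName{C-Done} cannot apply (the sorted form $\CCtx\Neg\cat\CCtx^v\cat\CCtx\Pos$ being incompatible with the given ordering once one of $\ty\Pos,\ty\Neg$ is concrete) is a detail the paper's terse ``Otherwise'' case leaves implicit, but the underlying argument is the same.
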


\begin{proof}[Proof of \Lem{lem:constr-finally-solves}]
  By induction on $\CCtx_0 \cat (\ty\Pos =< \al) \cat \CCtx_1 \cat (\al =< \ty\Neg) \cat \CCtx_2 |> \SCtx$ outputs $\CDtx$.
  \begin{description}
    \item[Case \ruleName{C-Match}.] The immediate sub-derivation is of the form $\CCtx_3 |> \ty\Pos =< \ty\Neg$. 
    \item[Otherwise.]
    Their immediate sub-derivations are still of the form
    $\CCtx'_0 \cat (\ty\Pos =< \al) \cat \CCtx'_1 \cat (\al =< \ty\Neg) \cat \CCtx'_2 |> \SCtx'$,
    so the conclusion holds by induction hypothesis.
  \end{description}
\end{proof}

\begin{lemma}
  \label{lem:constr-add-bounds}
  If $\CCtx |> \SCtx$ outputs $\CDtx$ and
  $(\ty\Pos =< \ty\Neg) \in \CDtx$ but $(\ty\Pos =< \ty\Neg) \notin \CCtx$,
  then there exists $\CCtx'$ and $\SCtx'$ and a sub-derivation
  $\CCtx' |> (\ty\Pos =< \ty\Neg) \catsp \SCtx'$
  such that
  $(\ty\Pos =< \ty\Neg) \notin \CCtx'$
  and $\CCtx \subseteq \CCtx'$.
  
\end{lemma}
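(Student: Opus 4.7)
The proof will proceed by induction on the derivation $\CCtx \mathrel{|\!\!>} \SCtx$ outputs $\CDtx$, following the same case structure as Lemmas~\ref{lem:constr-output-contains-constr-ctx} and \ref{lem:constr-implies-subtype}. The core observation is that the only rules that actually grow the bounds context are \ruleName{C-VarL} and \ruleName{C-VarR}; every other rule either keeps $\CCtx$ fixed or merely reshuffles the worklist. Hence the ``action'' happens precisely where a fresh type variable bound is moved from the worklist into $\CCtx$, and at that very step the target constraint is sitting at the head of the worklist if it was the one being added.

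The base case \ruleName{C-Done} is vacuous: it forces $\CDtx = \CCtx$, contradicting the hypothesis $(\ty\Pos =<\, \ty\Neg) \in \CDtx \setminus \CCtx$. For the transit rules that leave the bounds context unchanged~---~\ruleName{C-Skip}, \ruleName{C-Fun}, \ruleName{C-Ctor}, \ruleName{C-Pass}, \ruleName{C-Match}, \ruleName{C-VarMrg}, \ruleName{C-FunMrg}, \ruleName{C-CtorMrg}~---~I would simply apply the induction hypothesis to the immediate sub-derivation: the bounds context of that sub-derivation still lacks $(\ty\Pos =<\, \ty\Neg)$ (either identically, or up to the benign reordering/insertion that \ruleName{C-Pass} and \ruleName{C-Match} perform, which do not add new pairs to the underlying set).

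The two interesting cases are \ruleName{C-VarR} and \ruleName{C-VarL}, which are dual. In \ruleName{C-VarR}, the derivation has the shape $\CCtx \mathrel{|\!\!>} (\tz\Pos =<\, \al)\,\catsp\,\SCtx_0$ with immediate premise $(\tz\Pos =<\, \al)\,\catsp\,\CCtx \mathrel{|\!\!>} \SCtx_0$ producing $\CDtx$. I split on whether the newly inserted bound $(\tz\Pos =<\, \al)$ coincides with the target pair $(\ty\Pos =<\, \ty\Neg)$. If it does, the current derivation itself is the required witness: take $\CCtx' = \CCtx$ and $\SCtx' = \SCtx_0$, and note that $(\ty\Pos =<\, \ty\Neg) \notin \CCtx'$ by the assumption of the lemma. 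If it does not, then $(\ty\Pos =<\, \ty\Neg) \notin (\tz\Pos =<\, \al)\,\catsp\,\CCtx$, so the induction hypothesis applies to the sub-derivation and delivers the desired $\CCtx', \SCtx'$. The \ruleName{C-VarL} case is entirely symmetric.

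I do not anticipate a serious obstacle here; the only mildly delicate point is handling the reordering rules \ruleName{C-Pass} and \ruleName{C-Match} correctly. These rules do permute positions of entries inside $\CCtx$, but they preserve the underlying set of pairs, so membership~---~and non-membership~---~of $(\ty\Pos =<\, \ty\Neg)$ is unaffected, which is all the induction hypothesis needs. A small bookkeeping remark I would include is that the implicit equality used throughout is equality of typed pairs, not of syntactic positions in the list, which matches the way \Def{def:proc-unproc} and \Lem{lem:constr-output-contains-constr-ctx} treat $\CCtx$ as a set of bounds.
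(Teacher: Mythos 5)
Your proof is correct and takes essentially the same route as the paper's: induction on the constraint-solving derivation, with \ruleName{C-Done} ruled out, the induction hypothesis handling all rules that leave the bounds context (as a set) unchanged, and the witness sub-derivation produced at the \ruleName{C-VarL}/\ruleName{C-VarR} step by case-splitting on whether the inserted bound is the target pair. Your explicit remark that \ruleName{C-Pass} and \ruleName{C-Match} only permute $\CCtx$ without adding new pairs is a slightly more careful phrasing of the paper's ``the content of $\CCtx$ does not change'' observation, but the argument is the same.
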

\begin{proof}[Proof of \Lem{lem:constr-add-bounds}]
  By induction on $\CCtx |> \SCtx$ outputs $\CDtx$.
  \begin{description}
    \item[Case \ruleName{C-VarR}]
    If $(\ty\Pos =< \ty\Neg)$ does not equal to $(\ty\Pos =< \al)$, then by induction hypothesis,
    there exists $\CCtx'$ and $\SCtx'$ and a sub-derivation
    $\CCtx' |> (\ty\Pos =< \ty\Neg) \catsp \SCtx'$ such that
    $(\ty\Pos =< \ty\Neg) \notin \CCtx'$ and $((\ty\Pos =< \al) \catsp \CCtx) \subseteq \CCtx'$.
    By $((\ty\Pos =< \al) \catsp \CCtx) \subseteq \CCtx'$, we also have
    $\CCtx \subseteq \CCtx'$, so the conclusion holds.
    If $(\ty\Pos =< \ty\Neg) = (\ty\Pos =< \al)$ then the
    sub-derivation is the current derivation itself and the conclusion holds.
    \item[Case \ruleName{C-VarL}.]
    Similar to case \ruleName{C-VarR}.
    
    \item[Case \ruleName{C-Done}.]
    Does not apply.
    
    \item[Otherwise.]
    By induction hypothesis because the content of $\CCtx$ does not change in their premises.
  \end{description}
\end{proof}

\begin{lemma}
  \label{lem:constr-implies-fully-merged}
  If $\CCtx |> \SCtx$ outputs $\CDtx$ then $\CDtx$ is fully merged.
\end{lemma}
\begin{proof}[Proof of \Lem{lem:constr-implies-fully-merged}]
  As explained in \Sec{constr-solving-exhaustive},
  a rule is considered applicable only if none of the rules before it is applicable.
  So by the time \ruleName{C-Done} is used,
  rules \ruleName{C-VarMrg}, \ruleName{C-FunMrg} and \ruleName{C-CtorMrg}
  have been applied exhaustively and the output $\CDtx$ is fully merged.
\end{proof}

\begin{lemma}
  \label{lem:constr-implies-bounds-cons-basic-2}
  If $\CCtx |> \SCtx$ outputs $\CDtx$,
  then for arbitrary $\CCtx_{0,1,2}, \CDtx_0, \al, \ty\Pos, \ty\Neg$ such that
  $\CDtx = \CDtx_0 \disjunion \{(\al =< \ty\Neg),\ (\ty\Pos =< \al)\}$,
  $\ntv{\ty\Neg}$ and
  $\CCtx = \CCtx_0 \catsp (\ty\Pos =< \al) \catsp \CCtx_1 \catsp (\al =< \ty\Neg) \catsp \CCtx_2$,
  if $\ntv{\ty\Pos}$ then $\ep ~|~ \CDtx |- \ty\Pos =< \ty\Neg$; or
  if $\ty\Pos$ is a type variable $\be$ then $(\be =< \ty\Neg) \in \CDtx_0$.
\end{lemma}

\begin{proof}[Proof of \Lem{lem:constr-implies-bounds-cons-basic-2}]
  By \Lem{lem:constr-finally-solves}, there exists a sub-derivation
  $\CCtx_3 |> \ty\Pos =< \ty\Neg$ for some $\CCtx_3$.
  \begin{description}
    \item[Case $\ntv{\ty\Pos}$.]
    By \Lem{lem:constr-implies-subtype} we have $\ep ~|~ \CDtx |- \ty\Pos =< \ty\Neg$.

    \item[Case $\ty\Pos$ is a type variable $\be$.]
    The sub-derivation can proceed by either \ruleName{C-Skip} or \ruleName{C-VarL}.
    In both cases, we have $(\be =< \ty\Neg) \in \CDtx$.
    Since $\CDtx = \CDtx_0 \disjunion \{(\al =< \ty\Neg),\ (\ty\Pos =< \al)\}$, we have
    $(\be =< \ty\Neg) \in \CDtx_0$.
  \end{description}
\end{proof}

\begin{lemma}
  \label{lem:constr-implies-bounds-cons}
  If $\ep |> \SCtx$ outputs $\CDtx$
  then $\CDtx$ is bounds-consistent.
\end{lemma}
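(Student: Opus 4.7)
The plan is to proceed by induction on the constraint-solving derivation $\CCtx |> \SCtx$ outputs $\CDtx$, proving simultaneously that (i) $\CDtx$ is fully merged and (ii) for every pair of opposite concrete-or-variable bounds on the same variable in $\CDtx$, the required subtyping judgement holds over $\proc{\CCtx}\mid\CDtx$. The base case is \ruleName{C-Done}, where $\CDtx = \CCtx\Neg\cat\CCtx^v\cat\CCtx\Pos$ with all concrete lower bounds sitting to the right of their corresponding upper bounds; the full-merge obligations reduce to the observation that if any of \ruleName{C-VarMrg}, \ruleName{C-FunMrg}, \ruleName{C-CtorMrg} were still applicable, rule priority would have selected one of them instead of \ruleName{C-Done}, so their negative side-conditions $(\cdots)\notin\CCtx$ must already be false; and the consistency obligation for a pair $(\ty\Pos=<\al),(\al=<\ty\Neg)\in\CDtx$ with $\ntv{\ty\Pos},\ntv{\ty\Neg}$ is exactly the situation in which $(\al=<\ty\Neg)$ and $(\ty\Pos=<\al)$ belong to $\proc{\CCtx}$, so one derives $\ty\Pos=<\ty\Neg$ by \ruleName{S-Trans} from two applications of \ruleName{S-Hyp}.

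For the inductive step I would split the cases along the rules. The ``structural'' rules \ruleName{C-Fun} and \ruleName{C-Ctor} leave $\CCtx$ untouched, so $\proc{\CCtx}$ is unchanged and consistency is inherited from the IH. Rules \ruleName{C-Skip}, \ruleName{C-VarMrg}, \ruleName{C-FunMrg}, \ruleName{C-CtorMrg} also do not modify $\CCtx$, giving the result directly by IH (and, for the merge rules, witnessing that the corresponding clauses of \Def{def:fully-merged} hold by inspection of the discharged premises combined with \Lem{lem:constr-output-contains-constr-ctx}). The rule \ruleName{C-Match} is the critical one: it commutes a pair $(\ty\Pos=<\al)\cat(\al=<\ty\Neg)$ but emits $\ty\Pos=<\ty\Neg$ into the worklist, so by \Lem{lem:constr-implies-subtype} the required subtyping derivation is available in $\CDtx$; the relocation of the pair is exactly what moves it into $\proc{\CCtx}$.

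The remaining cases \ruleName{C-VarL}, \ruleName{C-VarR}, and \ruleName{C-Pass} are the subtlest, because they reshape $\CCtx$ and hence $\proc{\CCtx}$. The strategy here is to establish two book-keeping sub-lemmas: first, that inserting a fresh bound at the far left/right of $\CCtx$ in \ruleName{C-VarL}/\ruleName{C-VarR} only enlarges $\proc{}$ by adding the newly-inserted bound itself (once it is in the proper ``after'' position with respect to any previously-inserted opposite bounds), so consistency conditions involving old pairs are unchanged and conditions involving the new bound must eventually be discharged by future applications of \ruleName{C-Match} guaranteed by \Lem{lem:constr-finally-solves}; second, that \ruleName{C-Pass}, which swaps two bounds whose opposite-sided counterparts are known to commute, moves a pair from the ``unprocessed'' side to the ``processed'' side, and the corresponding subtyping obligation is exactly the premise discharged by the immediate sub-derivation $\CCtx' |> \ep$. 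The main obstacle I anticipate is the case analysis for the variable-to-variable clauses in \Def{def:bounds-consistency} and \Def{def:fully-merged}, since a bound $\al=<\be$ interacts with both the upper bounds of $\be$ and the lower bounds of $\al$; keeping track of which of these interactions have been ``merged through'' by \ruleName{C-VarMrg} versus discharged by \ruleName{C-Match} will require a careful inductive invariant that distinguishes the three regions of $\CCtx$ (upper-only, variable-only, lower-only) relative to the partial order induced by the constraint-solving process.
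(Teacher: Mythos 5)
Your overall strategy---induction on the constraint-solving derivation, carrying full-mergedness and the pairwise consistency conditions as a simultaneous invariant---is genuinely different from the paper's proof, which is not an induction on the derivation at all: it unfolds \Def{def:bounds-consistency} directly, fixes an arbitrary opposite pair $(\ty\Pos =< \al), (\al =< \ty\Neg) \in \CDtx$, and splits into four cases according to whether each of the two bounds already occurs in the input $\CCtx$ or was added during solving. Each case is then discharged by locating a witnessing sub-derivation (\Lem{lem:constr-finally-solves} when both bounds sit in $\CCtx$ in the unprocessed order, \Lem{lem:constr-add-bounds} when a bound is new) and reading off the required subtyping derivation from the shape of that sub-derivation (\ruleName{C-Match}, \ruleName{C-Skip}, \ruleName{C-VarL}/\ruleName{C-VarR}) together with \Lem{lem:constr-implies-subtype} and \ruleName{S-Fun}/\ruleName{S-Ctor}/\ruleName{S-Trans}.

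The genuine gap in your plan sits exactly in the cases you flag as subtle. The statement is parametrized by $\proc{\CCtx}$, and for \ruleName{C-VarL}, \ruleName{C-VarR} and \ruleName{C-Pass} the premise's bounds context differs from the conclusion's, so the induction hypothesis speaks about bounds-consistency in $\proc{\CCtx'}$ for a \emph{different} $\CCtx'$. This would be harmless if $\proc{\CCtx'} \subseteq \proc{\CCtx}$, since weakening of \ruleName{S-Hyp} would then close the case, but $\proc{\cdot}$ is not monotone under these manipulations: in \ruleName{C-VarR}, prepending $(\ty\Pos =< \al)$ simultaneously demotes any upper bound of $\al$ that was processed only via the ``no lower bound before it'' clause of \Def{def:proc-unproc}, and may promote the freshly inserted bound itself (which is not even in $\CCtx$), so neither inclusion holds in general; moreover a hypothesis available in the main position of the IH's judgement is only available in the \emph{delayed} position of the goal's judgement, which is not a weakening. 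Your proposed repair---that obligations created by the new bound ``must eventually be discharged by future applications of \ruleName{C-Match}''---is a forward-looking, whole-derivation argument that cannot be invoked from inside a single induction step; making it precise requires exactly the sub-derivation-existence lemmas and the membership case analysis that constitute the paper's non-inductive proof, at which point the outer induction buys nothing. A smaller but real slip: in the \ruleName{C-Match} case you conclude from \Lem{lem:constr-implies-subtype} that the needed judgement ``is available in $\CDtx$'', but that lemma yields $\CDtx \mid \ep |- \ty\Pos =< \ty\Neg$, whereas \Def{def:bounds-consistency} demands $\proc{\CCtx} \mid \CDtx |- \ty\Pos =< \ty\Neg$ with $\CDtx$ in the delayed position; bridging the two forces you through an explicit \ruleName{S-Fun}/\ruleName{S-Ctor} step, as the paper does. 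The approach could likely be salvaged by strengthening the invariant to distinguish processed from unprocessed pairs (roughly the prose invariant stated below Figure~\ref{fig:constraint-solving}), but as written the induction does not go through.
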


\begin{proof}[Proof of \Lem{lem:constr-implies-bounds-cons}]
  
  According to \Def{def:bounds-consistency}, we need to prove that for arbitrary
  $\CDtx_{0}, \al, \ty\Pos$, $\ty\Neg$, %
  such that 
  $\CDtx = \CDtx_0 \disjunion \{(\al =< \ty\Neg), (\ty\Pos =< \al)\}$ and $\ntv{\ty\Neg}$,
  if $\ntv{\ty\Pos}$ then $\ep ~|~ \CDtx |- \ty\Pos =< \ty\Neg$;
  if $\ty\Pos$ is a type variable $\be \neq \al$ then $(\be =< \ty\Neg) \in \CDtx_0$.%

  By \Lem{lem:constr-add-bounds}, there exists two
  sub-derivations
  $\CCtx_1' |> (\ty\Pos =< \al) \catsp \SCtx_1'$
  and
  $\CCtx_2' |> (\al =< \ty\Neg) \catsp \SCtx_2'$
  for some $\CCtx_1', \SCtx_1', \CCtx_2'$ and $\SCtx_2'$,
  such that $(\ty\Pos =< \al) \notin \CCtx_1'$ and
  $(\al =< \ty\Neg) \notin \CCtx_2'$.
  Since $(\ty\Pos =< \al) \neq (\al =< \ty\Neg)$, these two
  sub-derivations cannot be the same, and one
  of them is again a sub-derivation of the other.
  In both cases, the derivation proceeds by \ruleName{C-VarR} and \ruleName{C-VarL}
  and gives $\CCtx_0 \catsp (\ty\Pos =< \al) \catsp \CCtx_1 \catsp (\al =< \ty\Neg) \catsp \CCtx_2 |> \SCtx'$
  for some $\CCtx_{0,1,2}$ and $\SCtx'$.
  So by \Lem{lem:constr-implies-bounds-cons-basic-2} we have that
  if $\ntv{\ty\Pos}$ then $\ep \mid \CDtx |- \ty\Pos =< \ty\Neg$, so $\CCtx_p ~|~ \CDtx |- \ty\Pos =< \ty\Neg$ holds;
  or if $\ty\Pos$ is a type variable $\be$, we have $(\be =< \ty\Neg) \in \CDtx_0$.

\end{proof}

\begin{lemma}
  \label{lem:bounds-cons-implies-merged-var}
  For any bounds-consistent and fully merged $\CDtx$, type variables $\al$, $\be$, if $\CDtx |- \al =< \be$,
  then for all
  $\ty\Neg$,
  if $\ntv{\ty\Neg}$ and $(\al =< \ty\Neg) \in \CDtx$,
  then we have $(\be =< \ty\Neg) \in \CDtx$;
  and for all
  $\ty\Neg$,
  if $\ntv{\ty\Neg}$ and $(\be =< \ty\Neg) \in \CDtx$,
  then we have $(\al =< \ty\Neg) \in \CDtx$.
\end{lemma}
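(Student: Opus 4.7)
The plan is to proceed by induction on the derivation of the subtyping judgment $\CDtx\mid\ep |- \al =< \be$. Since both $\al$ and $\be$ are type variables, only three rules can conclude such a judgment: \ruleName{S-Var}, \ruleName{S-Hyp}, and \ruleName{S-Trans} (the structural rules \ruleName{S-Fun} and \ruleName{S-Ctor} both require non-variable types on at least one side). The \ruleName{S-Var} case is immediate since $\al = \be$. The \ruleName{S-Trans} case passes through an intermediate type variable $\ga$, yielding $\CDtx |- \al =< \ga$ and $\CDtx |- \ga =< \be$; two applications of the induction hypothesis then propagate upper-bound membership along the chain $\al \rightsquigarrow \ga \rightsquigarrow \be$.

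The substantive case is therefore \ruleName{S-Hyp}, where $(\al =< \be) \in \CDtx$. By clause (1) of \emph{fully merged} we immediately obtain the symmetric bound $(\be =< \al) \in \CDtx$. To prove the forward direction, suppose $(\al =< \ty\Neg) \in \CDtx$ with $\ntv{\ty\Neg}$; setting $\CDtx_0 = \CDtx \setminus \{(\al =< \ty\Neg), (\be =< \al)\}$, I would instantiate the variable-lower-bound clause of bounds-consistency with $\ty\Pos = \be$ (which is distinct from $\al$; the case $\be = \al$ being trivial). This yields either $(\be =< \ty\Neg) \in \CDtx_0 \subseteq \CDtx$, which directly gives the conclusion, or a derivation $\CCtx |- \be =< \ty\Neg$ in the surrounding bounds context.

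The reverse direction of the lemma, from $(\be =< \ty\Neg) \in \CDtx$ to $(\al =< \ty\Neg) \in \CDtx$, follows from the same argument applied symmetrically: clause (1) of fully-merged turns any hypothesis-level variable bound around, so $\CDtx |- \be =< \al$ is derivable whenever $\CDtx |- \al =< \be$ is, and the forward direction re-run with the roles of $\al$ and $\be$ swapped finishes the proof.

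The main obstacle will be discharging the second disjunct of the bounds-consistency clause in the \ruleName{S-Hyp} case, where we only learn $\CCtx |- \be =< \ty\Neg$ rather than directly obtaining membership $(\be =< \ty\Neg) \in \CDtx$. Closing this gap will require an auxiliary structural lemma: whenever $\CCtx |- \be =< \ty\Neg$ holds with $\ntv{\ty\Neg}$, full-mergedness of $\CDtx$ forces this bound to be materialized in $\CDtx$. Intuitively, any such derivation must bottom out via \ruleName{S-Hyp} at some concrete bound $(\ga =< \ty\Neg)$ for a variable $\ga$ reachable from $\be$ through a chain of variable-to-variable bounds in $\CDtx$, and clauses (2) and (3) of fully-merged --- which align upper bounds of variables connected by $=<$ at function and constructor types --- then propagate $(\be =< \ty\Neg)$ into $\CDtx$. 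This structural propagation argument, which essentially re-plays the behavior of the \ruleName{C-VarMrg}, \ruleName{C-FunMrg}, and \ruleName{C-CtorMrg} rules at the semantic level of fully-merged contexts, is where most of the technical work of the proof will lie.
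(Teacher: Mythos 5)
Your proof takes essentially the same route as the paper's: induction on the derivation of $\CDtx |- \al =< \be$, with \ruleName{S-Var} immediate, \ruleName{S-Trans} handled by applying the induction hypothesis to the two variable-to-variable premises, and \ruleName{S-Hyp} discharged by combining clause~(1) of full-mergedness (to obtain the symmetric bound $(\be =< \al) \in \CDtx$) with the type-variable clause of \Def{def:bounds-consistency}. The only difference is that you explicitly flag the second disjunct of that clause --- where one obtains only a derivation $\CCtx |- \be =< \ty\Neg$ rather than membership $(\be =< \ty\Neg) \in \CDtx$ --- as an obstacle needing an auxiliary materialization argument; the paper's own proof silently elides this point, asserting the upper-bound propagation directly by appeal to the informal gloss attached to the definition, so your proposal is if anything more candid about where the residual technical work lies.
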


\begin{proof}[Proof of \Lem{lem:bounds-cons-implies-merged-var}]
  By induction on $\CDtx |- \al =< \be$, there are only three possible rules as follows:
  \begin{description}
    \item[Case \ruleName{S-Var}.]
    The conclusion follows immediately since $\al = \be$.
    
    \item[Case \ruleName{S-Hyp}.]
    We have $(\al =< \be) \in \CDtx$.
    
    According to \Def{def:bounds-consistency},
    since $\CDtx$ is bounds-consistent,
    we have that for all $\ty\Neg$ such that $\ntv{\ty\Neg}$, if $(\be =< \ty\Neg) \in \CDtx$
    then $(\al =< \ty\Neg) \in \CDtx$.
    
    According to \Def{def:fully-merged},
    since $\CDtx$ is fully merged,
    $(\be =< \al) \in \CDtx$, so similarly by \Def{def:bounds-consistency},
    since $\CDtx$ is bounds-consistent,
    we have that for all $\ty\Neg$ such that $\ntv{\ty\Neg}$, if $(\al =< \ty\Neg) \in \CDtx$
    then $(\be =< \ty\Neg) \in \CDtx$.

    \item[Case \ruleName{S-Trans}.]
    Direct by IH.
  \end{description}
\end{proof}

\begin{lemma}
  \label{lem:bounds-cons-implies-fun-inclusion}
  For any bounds-consistent $\CDtx$, type variables $\al,\be_1,\be_2$, %
  if $\CDtx |- \al =< \be_1 -> \be_2$
  then $\al =< \be_1 -> \be_2 \in \CDtx$.
\end{lemma}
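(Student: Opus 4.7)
}

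The plan is to proceed by induction on the derivation of $\CDtx \mid \ep |- \al =< \be_1 -> \be_2$, immediately ruling out most cases by the syntactic shapes of the endpoints. The rule \ruleName{S-Var} cannot apply because $\be_1 -> \be_2$ is not a type variable, and the rules \ruleName{S-Fun} and \ruleName{S-Ctor} cannot apply because their conclusions require a non-variable positive type on the left, whereas here the left side is the type variable $\al$. This leaves only two cases to consider.

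In the \ruleName{S-Hyp} case we immediately have $(\al =< \be_1 -> \be_2) \in \CDtx$ and are done. In the \ruleName{S-Trans} case the derivation is
\[
\inferrule{
  \CDtx \mid \ep |- \al =< \ga
  \\
  \CDtx \mid \ep |- \ga =< \be_1 -> \be_2
}{
  \CDtx \mid \ep |- \al =< \be_1 -> \be_2
}
\]
where, crucially, the intermediate term $\ga$ must be a type variable by the very statement of \ruleName{S-Trans}. So I can apply the induction hypothesis to $\CDtx \mid \ep |- \ga =< \be_1 -> \be_2$ (the hypothesis fires because $\ga$ is a variable and $\be_1 -> \be_2$ is not), obtaining $(\ga =< \be_1 -> \be_2) \in \CDtx$. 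Then, using $\CDtx |- \al =< \ga$ together with \Lem{lem:bounds-cons-implies-merged-var} (applied with $\ty\Neg = \be_1 -> \be_2$, which is not a type variable), I conclude $(\al =< \be_1 -> \be_2) \in \CDtx$, as required.

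I expect no real obstacle here: the key structural observation is simply that \ruleName{S-Trans} always routes through a type variable, so the induction reduces cleanly to the base case via the already-established merging lemma. The only thing one might need to be careful about is stating the induction hypothesis in a form general enough over the second argument, but since we only ever recurse on an identical right-hand side $\be_1 -> \be_2$, even that is trivial.
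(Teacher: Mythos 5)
Your proposal is correct and matches the paper's own proof essentially verbatim: induction on the subtyping derivation, with \ruleName{S-Hyp} immediate and \ruleName{S-Trans} handled by applying the induction hypothesis to the right premise and then \Lem{lem:bounds-cons-implies-merged-var} to transfer the concrete upper bound from $\ga$ to $\al$. Your extra remarks ruling out \ruleName{S-Var}, \ruleName{S-Fun}, and \ruleName{S-Ctor} by the syntactic shapes of the endpoints are a harmless elaboration of what the paper leaves implicit.
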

\begin{proof}[Proof of \Lem{lem:bounds-cons-implies-fun-inclusion}]
  By induction on $\CDtx |- \al =< \be_1 -> \be_2$. The possible cases
  are as follows:
  \begin{description}
    \item[Case \ruleName{S-Hyp}.] Immediate.
    \item[Case \ruleName{S-Trans}.]
    We have $\CDtx |- \al =< \ga$ and $\CDtx |- \ga =< \be_1 -> \be_2$.
    By IH we have $(\ga =< \be_1 -> \be_2) \in \CDtx$.
    Then by \Lem{lem:bounds-cons-implies-merged-var},
    $(\al =< \be_1 -> \be_2) \in \CDtx$.
  \end{description}

\end{proof}

\begin{lemma}
  \label{lem:bounds-cons-implies-ctor-inclusion}
  For any bounds-consistent $\CDtx$, type variables
  $\al,\overline{\be_{i,j}}^{i,j},\overline{\ga_{i,j}}^{i,j},\al_{\fname{res}},
  \overline{x}^{i,j}$ and $\overline{l_i}^i$,
  \

  if $\CDtx |- \al =< \left\{\overline{
    c_i\,\langle\overline{x_{i,j} |-> \be_{i,j}}^{j}\rangle -> l_i
  }^{i}\right\}: \al_{\fname{res}}$
  then $\al =< \left\{\overline{
    c_i\,\langle\overline{x_{i,j} |-> \be_{i,j}}^{j}\rangle -> l_i
  }^{i}\right\}: \al_{\fname{res}} \in \CDtx$.
\end{lemma}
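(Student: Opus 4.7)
The plan is to mirror the proof of \Lem{lem:bounds-cons-implies-fun-inclusion} essentially verbatim, since the structural situation is identical: we want to push a derived subtyping bound on a type variable back into the context, relying on bounds-consistency to rule out any ``new'' bound appearing only through transitivity.

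I would proceed by induction on the derivation of $\CDtx |- \al =< \left\{\overline{c_i\,\langle\overline{x_{i,j} |-> \be_{i,j}}^{j}\rangle -> l_i}^{i}\right\}: \al_{\fname{res}}$. Because the left-hand side is a type variable, only two rules from \Fig{fig:subtyping-rules} can conclude this judgment: \ruleName{S-Hyp} and \ruleName{S-Trans}. The rules \ruleName{S-Var}, \ruleName{S-Fun}, and \ruleName{S-Ctor} are ruled out syntactically (wrong shapes on the left or right). In the \ruleName{S-Hyp} case the conclusion is exactly the hypothesis, so membership in $\CDtx$ is immediate.

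In the \ruleName{S-Trans} case we have some $\ga$ with $\CDtx |- \al =< \ga$ and $\CDtx |- \ga =< \left\{\overline{c_i\,\langle\overline{x_{i,j} |-> \be_{i,j}}^{j}\rangle -> l_i}^{i}\right\}: \al_{\fname{res}}$. Applying the induction hypothesis to the second premise yields $\bigl(\ga =< \left\{\overline{c_i\,\langle\overline{x_{i,j} |-> \be_{i,j}}^{j}\rangle -> l_i}^{i}\right\}: \al_{\fname{res}}\bigr) \in \CDtx$. Then \Lem{lem:bounds-cons-implies-merged-var}, instantiated with the constructor-fusion upper bound in the role of $\ty\Neg$, transports this upper bound from $\ga$ to $\al$, giving exactly the required membership in $\CDtx$.

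The main (small) obstacle is checking that \Lem{lem:bounds-cons-implies-merged-var} as stated does apply: its statement requires $\ty\Neg$ to be a non-variable, and constructor-fusion shapes $\left\{\overline{c_i\,\langle\ldots\rangle -> l_i}^i\right\}: \al_{\fname{res}}$ are indeed non-variables, so the instantiation is legitimate. One should also note that the $\al =< \ga$ premise of \ruleName{S-Trans} is itself a subtyping judgment between two variables, which is precisely the input form expected by \Lem{lem:bounds-cons-implies-merged-var}; no further induction is needed since that lemma is already proved. Thus the proof reduces to a two-case induction with no new machinery beyond the two preceding lemmas.
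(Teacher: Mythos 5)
Your proof is correct and follows essentially the same route as the paper's: induction on the subtyping derivation, with \ruleName{S-Hyp} immediate and \ruleName{S-Trans} handled by applying the induction hypothesis to the second premise and then transporting the concrete upper bound along the variable-to-variable bound via \Lem{lem:bounds-cons-implies-merged-var}. Your additional checks (that only \ruleName{S-Hyp} and \ruleName{S-Trans} can apply, and that the constructor-fusion type is a legitimate non-variable instantiation of $\ty\Neg$) are sound and merely make explicit what the paper leaves implicit.
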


\begin{proof}[Proof of \Lem{lem:bounds-cons-implies-ctor-inclusion}]
  By induction on
  $\CDtx |- \al =< \left\{\overline{
    c_i\,\langle\overline{x_{i,j} |-> \be_{i,j}}^{j}\rangle -> l_i
  }^{i}\right\}: \al_{\fname{res}}$. The possible cases are as follows:
  \begin{description}
    \item[Case \ruleName{S-Hyp}.] Immediate.
    \item[Case \ruleName{S-Trans}.]
    For conciseness, we denote $\left\{\overline{
      c_i\,\langle\overline{x_{i,j} |-> \be_{i,j}}^{j}\rangle -> l_i
    }^{i}\right\}: \al_{\fname{res}}$ as $S$ in the proof below. 
    We have $\CDtx |- \al =< \be$ and $\CDtx |- \be =< S$. By IH we have
    $\left(\be =< S\right) \in \CDtx$.
    Then we have
    $\left(\al =< \left\{\overline{
      c_i\,\langle\overline{x_{i,j} |-> \be_{i,j}}^{j}\rangle -> l_i
    }^{i}\right\}: \al_{\fname{res}}\right) \in \CDtx$ by \Lem{lem:bounds-cons-implies-merged-var}.
  \end{description}
\end{proof}

\begin{lemma}
  \label{lem:bounds-cons-implies-cons}
  If
  $\CCtx \conR \phi$,
  $\CCtx$ is fully merged,
  $\CCtx$ is bounds-consistent
  and $\CCtx |- \SCtx$,
  then $\phi$ is consistent with $\SCtx$.
\end{lemma}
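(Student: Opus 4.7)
The plan is to prove this by structural induction on the sequence of constraints in $\SCtx$, reducing to the single-constraint case, and then performing induction on the subtyping derivation $\CCtx |- \ty\Pos =< \ty\Neg$. Because $\CCtx \conR \phi$ uses \ruleName{U-Subst} to set $\phi(\al) = \mu(\lam\al.\ \phi(\fname{unif}(\overline{\tz\Neg}^{(\al=<\tz\Neg)\in\CCtx})))$, the key observation is that knowing the shape of the concrete upper bounds of a variable in $\CCtx$ already pins down $\phi(\al)$ up to the $(==)$ relation.

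The main inductive step will do case analysis on the last subtyping rule used. For \ruleName{S-Var}, consistency is trivial since $\phi(\al) == \phi(\al)$. For \ruleName{S-Hyp}, we read the constraint directly out of $\CCtx$ and use its shape: if $\ty\Pos$ is a type variable and $\ty\Neg$ is a concrete constructor fusion type (resp.\ function type), then $(\ty\Pos =< \ty\Neg)$ is one of the upper bounds of $\ty\Pos$ that \ruleName{U-Subst} feeds into $\fname{unif}$, so after unfolding one layer of the $\mu$, $\phi(\ty\Pos)$ unifies with $\phi(\ty\Neg)$; the ``merge'' rules \ruleName{C-VarMrg}/\ruleName{C-FunMrg}/\ruleName{C-CtorMrg} (baked into bounds-consistency via \Def{def:fully-merged}) guarantee that all of $\ty\Pos$'s other upper bounds share the same outer shape and field-equivalent variables, so $\fname{unif}$ does not fail and the result is consistent with the required clause of \Def{def:fully-merged} adapted to consistency. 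The symmetric cases (where $\ty\Pos$ is concrete and $\ty\Neg$ is a variable) are handled dually, and the ``both variables'' case is immediate because \ruleName{C-VarMrg} ensures $\al =< \be$ and $\be =< \al$ both sit in $\CCtx$, so $\fname{unif}$ yields the same thing (modulo $(==)$) for both.

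The \ruleName{S-Trans} case is where we use the inclusion lemmas \Lem{lem:bounds-cons-implies-merged-var}, \Lem{lem:bounds-cons-implies-fun-inclusion}, and \Lem{lem:bounds-cons-implies-ctor-inclusion} to collapse a derivation $\CDtx |- \ty\Pos =< \ty\Neg$ that passes through intermediate variables into direct membership in $\CDtx$, which reduces it back to the \ruleName{S-Hyp} case. For \ruleName{S-Fun} and \ruleName{S-Ctor}, the two sides of the constraint are non-variable, so we must show, for example in the function case, that $\phi(\al_1 -> \al_2) == \phi(\be_1 -> \be_2)$-style equivalence reduces to the field-wise consistency given by the premises; the induction hypothesis on the premises' subtyping derivations (in the enlarged context $\SCtx\catsp\SDtx$) does the work, with the $\mu$-unrolling rules for $(==)$ absorbing any recursive folding introduced by \ruleName{U-Subst}.

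The main obstacle will be carefully justifying the $(==)$ equivalences when $\fname{unif}$ is invoked with several upper bounds at once. The argument relies critically on the merge rules having been applied exhaustively by the time \ruleName{C-Done} fires, which bounds-consistency encodes: without the guarantee that, e.g., all upper bounds of a variable that are function types have pairwise-equivalent domain and codomain variables, $\fname{unif}$ could silently pick different witnesses for different constraints and break consistency. I expect that the bookkeeping of equivalence under substitution composition in \ruleName{U-Subst} (each later-bound variable being substituted into the unification of the earlier one) is what makes the final $\phi$ globally consistent, and spelling this out by a subsidiary induction on the order in which variables are eliminated from $\CCtx$ by \ruleName{U-Subst} will be the technical heart of the proof.
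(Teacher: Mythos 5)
Your proposal is correct and follows essentially the same route as the paper's proof: the same three inclusion/merging lemmas (\Lem{lem:bounds-cons-implies-merged-var}, \Lem{lem:bounds-cons-implies-fun-inclusion}, \Lem{lem:bounds-cons-implies-ctor-inclusion}), the same reliance on bounds-consistency (i.e.\ the exhaustively applied merge rules) to guarantee that $\fname{unif}$ sees shape-compatible, field-equivalent upper bounds, and the same use of $(==)$ to absorb the $\mu$-folding introduced by \ruleName{U-Subst}. The only organizational difference is that you induct on the subtyping derivation with a case split on the last rule, whereas the paper splits first on the shapes of $\ty\Pos$ and $\ty\Neg$ and only inducts on the derivation inside the cases where the right-hand side is a variable; the two decompositions cover the same ground. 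One subcase you should not gloss over: your \ruleName{S-Trans} treatment, as stated, collapses the derivation to a direct membership in $\CDtx$ via the inclusion lemmas, but those lemmas only apply when the \emph{left}-hand side is a type variable. For a derivation of $\ty\Pos =< \be$ with $\ty\Pos$ concrete (say $c_n\,\langle\overline{\al_{n,j}}^j\rangle$) passing through an intermediate variable $\ga$, no such membership need exist in $\CDtx$; there the paper instead applies the induction hypothesis to $\ty\Pos =< \ga$ and combines it with the variable-to-variable case for $\ga =< \be$. Since you are inducting on the derivation anyway, the induction hypothesis is available and this is a one-line fix, but it is a distinct argument from the collapse-to-membership one you describe.
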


\begin{proof}[Proof of \Lem{lem:bounds-cons-implies-cons}]

  For all $\ty\Pos =< \ty\Neg \in \SCtx$, we prove by case analysis on the
  possible forms of $\ty\Pos$ and $\ty\Neg$.
  \begin{description}
    \item[Case $\ty\Pos = \al$, $\ty\Neg = \be$.]
    If $\al = \be$, then $\phi(\al) = \phi(\be)$ holds.
    Otherwise, since $\CCtx |- (\al =< \be)$, by \Lem{lem:bounds-cons-implies-merged-var}
    $\al$ and $\be$ share the same set of upper bounds.
    So by the definition of $\fname{unif}$, we have $\phi(\al) == \phi(\be)$.
    
    \item[Case $\ty\Pos = \al$, $\ty\Neg = \be_1 -> \be_2$.]
    Since $\CCtx |- \al =< (\be_1 -> \be_2)$ and $\CCtx$ is bounds-consistent,
    By \Lem{lem:bounds-cons-implies-fun-inclusion}
    we have $\al =< \be_1 -> \be_2 \in \CCtx$.
    Then by \Def{def:fully-merged}, \Def{def:bounds-consistency},
    \Lem{lem:bounds-cons-implies-merged-var} and
    the definition of $\fname{unif}$, we have $\phi(\al) == \phi(\be_1) -> \phi(\be_2)$ or
    $\phi(\al) == \phi(\ga_1) -> \phi(\ga_2)$ where $\phi(\ga_1) == \phi(\be_1)$
    and $\phi(\ga_2) == \phi(\be_2)$.

    \item[Case $\ty\Pos = \al$, $\ty\Neg = \left\{\,\overline{c_i\,\langle\,\overline{x_{i,j} |-> \be_{i,j}}^j\,\rangle -> l_i}^i\,\right\}: \de$.]
    Since $\CCtx |- \al =< \left\{\,\overline{c_i\,\langle\,\overline{x_{i,j} |-> \be_{i,j}}^j\,\rangle -> l_i}^i\,\right\}: \de$
    and $\CCtx$ is bounds-consistent and fully merged,
    By \Lem{lem:bounds-cons-implies-ctor-inclusion}, \Lem{lem:bounds-cons-implies-merged-var} and
    the definition of $\fname{unif}$,
    if $\al$ only has one upper bound, then we have
    $\phi(\al) == \left\{\,\overline{c_i\,\langle\,\overline{x_{i,j} |-> \phi(\be_{i,j})}^j\,\rangle -> l_i}^i\,\right\}: \phi(\de)$ %
    If $\al$ has more than one upper bound, then we have
    $\phi(\al) == \left\{\overline{c_i\langle\overline{\phi(\ga_{i,j})}^j\rangle}^i\right\}$ where $\overline{\phi(\ga_{i,j}) == \phi(\be_{i,j})}^{i,j}$.

    \item[Case $\ty\Pos = \al_1 -> \al_2$, $\ty\Neg = \be$.]
    By induction on $\CCtx |- \al_1 -> \al_2 =< \be$. The only applicable cases are as follows:
    \begin{description}
      \item[Case \ruleName{S-Hyp}.] By the definition of $\fname{unif}$, if $\be$ has
      no concrete upper bounds then $\phi(\be) = \top$.
      Otherwise, since $\CCtx$ is bounds-consistent and fully merged,
      all the concrete upper bounds of $\be$ in $\CCtx$
      will be checked with $\ty\Pos$ for consistency.
      Then by the definition of $\fname{unif}$,
      $\phi(\be) == S_1 -> S_2$ where $S_1 == \phi(\al_1)$ and $S_2 == \phi(\al_2)$.

      \item[Case \ruleName{S-Trans}.]
      In this case we have $\CCtx |- \al_1 -> \al_2 =< \ga$ and $\CCtx |- \ga =< \be$
      by assumption. By IH, we have that $\phi$ is consistent with $\al_1 -> \al_2 =< \ga$.
      Then by the first case we have $\phi(\ga) == \phi(\be)$. So $\phi$ is consistent with
      $\al_1 -> \al_2 =< \be$.
    \end{description}

    \item[Case $\ty\Pos = \al_1 -> \al_2$, $\ty\Neg = \be_1 -> \be_2$.]
    Since $\CCtx |- (\al_1 -> \al_2) =< (\be_1 -> \be_2)$,
    the only applicable case is \ruleName{S-Fun}, then by
    the first case, we have $\phi(\al_1) == \phi(\be_1)$ and $\phi(\al_2) == \phi(\be_2)$,
    so $\phi(\ty\Pos) == \phi(\ty\Neg)$.
    
    \item[Case $\ty\Pos = c_n\,\langle\overline{\al_{n,j}}^j\rangle$, $\ty\Neg = \be$.]
    By induction on $\CCtx |- c_n\,\langle\overline{\al_{n,j}}^j\rangle =< \be$.
    The only applicable cases are as follows:
    \begin{description}
      \item[Case \ruleName{S-Hyp}.] By the definition of $\fname{unif}$, if $\be$ has
      no concrete upper bounds then $\phi(\be) = \top$.
      Otherwise, since $\CCtx$ is bounds-consistent and fully merged, 
      all the concrete upper bounds of $\be$ in $\CCtx$
      will be checked with $\ty\Pos$ for consistency.
      Then by the definition of $\fname{unif}$,
      $\phi(\be) ==
        \left\{\overline{c_i\,\langle\overline{x_{i,j}|->S_{i,j}}^j\rangle->l_i}^i\right\}:S'$
      where $\overline{S_{n,j} == \phi(\al_{n,j})}^j$ if it only has one concrete upper bound.
      Or $\phi(\be) ==
      \left\{\overline{c_i\,\langle\,\overline{S_{i,j}}^j\,\rangle}^i\right\}$
      where $\overline{S_{n,j} == \phi(\al_{n,j})}^{j}$
      if $\be$ has multiple concrete upper bounds. %
      
      \item[Case \ruleName{S-Trans}.] 
      In this case we have $\CCtx |- c_n\,\langle\overline{\al_{n,j}}^j\rangle =< \ga$
      and $\CCtx |- \ga =< \be$ by assumption.
      By IH, we have $\phi$ is consistent with $c_n\,\langle\overline{\al_{n,j}}^j\rangle =< \ga$.
      Then by the first case we have $\phi(\ga) == \phi(\be)$.
      $\phi$ is consistent with $c_n\,\langle\overline{\al_{n,j}}^j\rangle =< \be$.
    \end{description}
    
    \item[Case $\ty\Pos = c_n\,\langle\overline{\al_{n,j}}^j\rangle$, $\ty\Neg = \left\{\,\overline{c_i\,\langle\,\overline{x_{i,j} |-> \be_{i,j}}^j\,\rangle -> l_i}^i\,\right\}: \de$.]
    \

    Since $\CCtx |- c_n\,\langle\overline{\al_{n,j}}^j\rangle =< \left\{\,\overline{c_i\,\langle\,\overline{x_{i,j} |-> \be_{i,j}}^j\,\rangle -> l_i}^i\,\right\}: \de$,
    the only applicable case is \ruleName{S-Ctor},
    then by the first case, we have $\overline{\phi(\al_{n,j}) == \phi(\be_{n,j})}^j$. %
  \end{description}

\end{proof}

\begin{proof}[Proof of \Lem{lem:constr-implies-cons}]
  By
  \Lem{lem:constr-implies-subtype},
  \Lem{lem:constr-implies-fully-merged},
  \Lem{lem:constr-implies-bounds-cons} and
  \Lem{lem:bounds-cons-implies-cons}.
\end{proof}

\begin{proof}[Proof of \Lem{lem:cons-implies-wt}]
  By induction on the fusion inference rules.%
  \begin{description}
    \item[Case \ruleName{I-Var}.] 
      By direct application of \ruleName{P-Var}.
    \item[Case \ruleName{I-Lam}.] 
      By IH we have
      $\phi(\Gamma\catsp(x|->\al)) = \phi(\Gamma)\catsp(x|->\phi(\al))
        |- t ~> t' <= \phi(\be)$
      so
      by \ruleName{P-Lam}
      we have
      $\phi(\Gamma) |- \lam x.\ t ~> \lam x.\ t' <= \phi(\al) -> \phi(\be)$.
    \item[Case \ruleName{I-App}.] 
    By IH we have
      $\phi(\Gamma) |- t_1 ~> t_1' <= \phi(\al_1)$
      and
      $\phi(\Gamma) |- t_2 ~> t_2' <= \phi(\al_2)$.
      By consistency of $\phi$ with $\al_1 =< \al_2 -> \be$,
      we have $\phi(\al_1) == S_1 -> S_2$ where $S_1 == \phi(\al_2)$ and $S_2 == \phi(\be)$.
      So
      by \ruleName{P-RecUnfold}
      and \ruleName{P-App}
      we have
      $\phi(\Gamma) |- t_1\ t_2 ~> t_1'\ t_2' <= \phi(\be)$.
    \item[Case \ruleName{I-Ctor}.] 
      By IH we have
      $\overline{\,\phi(\Gamma) |- t_j ~> t_j' <= \phi(\al_j)\,}^j$.
      By definition of consistency,
      we can proceed by \ruleName{P-RecUnfold}
      and either \ruleName{P-Ctor}, \ruleName{P-CtorSkip}, or \ruleName{P-Top}.
      For the case of $\ruleName{P-CtorSkip}$, the induction hypothesis directly applies.
      For the case of $\ruleName{P-Top}$, take
      $S = \left\{\overline{c_i\,\langle\,\overline{\phi(\al_j)}^j\,\rangle}^i\right\}$
      and proceed by \ruleName{P-CtorSkip}.
      For the case of $\ruleName{P-Ctor}$, by self-containedness of $\SCtx$,
      $\overline{x_j |-> S_j}^j |- l ~>_{\mathcal{A}} l' <= S_{\fname{res}}$ holds,
      where $S_j == \phi(\al_j)$ for all $j$.
      So all premises for \ruleName{P-Ctor} hold and we can proceed by \ruleName{P-Ctor}.

    \item[Case \ruleName{I-Case}.] 
      By IH we have
      $\phi(\Gamma) |- t ~> t' <= \phi(\al)$
      and for every $i$, we have $
        \phi(\Gamma\catsp
        \overline{
          (x_{i,j} |-> \be_{i,j})
        }^j)
        =
        \phi(\Gamma)\catsp
        \overline{
          (x_{i,j} |-> \phi(\be_{i,j}))
        }^j
        |-
        t_i ~> t_i' <= \phi(\ga_i)
      $.
      By consistency of $\phi$ with
        $\ga_i=<\de$,
        and by the fact that $\ga_i$ cannot be a constructor type
        (easily demonstrated by induction on fusion inference rules),
        we have $\phi(\ga_i) == \phi(\al)$ for all $i$,
        so we can use \ruleName{P-RecUnfold} to make all the branches align to the
        same result strategy $S$.
      By definition of consistency of $\phi$ with
        $\al =<
          \left\{\,\overline{c_i\,\langle\,\overline{x_{i,j} |-> \be_{i,j}}^j\,\rangle -> l_i}^i\,\right\}:\de
        $,
      we can then proceed by %
      either \ruleName{P-Case} or \ruleName{P-CaseSkip}.
    \item[Case \ruleName{I-LetRec}.] 
      By straightforward induction and \ruleName{P-RecFold}.
  \end{description}
\end{proof}

\begin{proof}[Proof of \Thm{thm:algorithm-correctness}]
  By \Lem{lem:constr-collect-self-contained},
  \Lem{lem:constr-implies-cons},
  and \Lem{lem:cons-implies-wt}.
\end{proof}

\newpage
\section{Detailed Benchmark Results}
\label{app:exp-res}

\setlength{\belowcaptionskip}{0pt}

\begin{table}[!ht]
  \small\centering
  \begin{tabular}{lllll}
  \hline
      \multirow{2}{*}{\textbf{name}} & \multicolumn{2}{c}{\textbf{time}} & \multicolumn{2}{c}{\textbf{size}} \\ \cline{2-5}
      & \textbf{original} & \textbf{optimized} & \textbf{original} & \textbf{optimized} \\ \hline
      Ansi\_lh & 16.63 ms & 15.78 ms & 85216 & 327480 \\
      Atom\_lh & 156.6 ms & 158.45 ms & 26920 & 32952 \\
      Awards\_lh & 31.84 ms & 30.22 ms & 37944 & 49200 \\
      Banner\_lh & 11.52 ms & 9.06 ms & 181352 & 248720 \\
      Boyer2\_lh & 35.59 ms & 35.22 ms & 1323672 & 1496600 \\
      Boyer\_lh & 52.31 ms & 51.78 ms & 83792 & 113120 \\
      Calendar\_lh & 12.36 ms & 11.23 ms & 55920 & 114744 \\
      Cichelli\_lh & 34.31 ms & 35.17 ms & 69920 & 143888 \\
      Circsim\_lh & 21.28 ms & 20.14 ms & 93472 & 156680 \\
      Clausify\_lh & 35.77 ms & 35.62 ms & 44344 & 64536 \\
      Constraints\_lh & 210.82 ms & 140.22 ms & 58544 & 116520 \\
      Cryptarithm2\_lh & 10.33 ms & 8.62 ms & 54368 & 85920 \\
      Cryptarithm\_lh & 13.79 s & 12.66 s & 13992 & 15960 \\
      Cse\_lh & 2.91 ms & 2.83 ms & 35632 & 134864 \\
      Eliza\_lh & 2.46 ms & 2.3 ms & 781792 & 1151232 \\
      Fish\_lh & 427.68 ms & 320.36 ms & 85544 & 294896 \\
      Gcd\_lh & 131.47 ms & 89.88 ms & 14760 & 19528 \\
      Integer\_lh & 170.04 ms & 170.54 ms & 29248 & 39280 \\
      Knights\_lh & 1.29 ms & 1.28 ms & 99904 & 225368 \\
      LCSS\_lh & 136.47 us & 135.04 us & 21568 & 40928 \\
      Lambda\_lh & 932.69 us & 844.37 us & 64272 & 97288 \\
      LastPiece\_lh & 391.15 ms & 368.61 ms & 157040 & 1318064 \\
      Life\_lh & 13.09 ms & 13.02 ms & 67672 & 173520 \\
      Mandel2\_lh & 5.21 ms & 3.82 ms & 19480 & 20360 \\
      Mandel\_lh & 3.98 ms & 2.86 ms & 21352 & 24400 \\
      Mate\_lh & 797.64 ms & 791.88 ms & 198504 & 318544 \\
      Minimax\_lh & 17.24 ms & 16.17 ms & 74280 & 136608 \\
      Para\_lh & 65.61 us & 65.85 us & 49056 & 71992 \\
      Power\_lh & 768.02 us & 768.54 us & 110904 & 221016 \\
      Pretty\_lh & 7.49 us & 7.42 us & 37760 & 45208 \\
      Puzzle\_lh & 60.63 ms & 60.03 ms & 85560 & 319528 \\
      Rewrite\_lh & 42.67 ms & 42.39 ms & 191560 & 392848 \\
      Rsa\_lh & 64.76 ms & 65.19 ms & 16024 & 15840 \\
      Scc\_lh & 1.65 us & 1.6 us & 18320 & 22064 \\
      Secretary\_lh & 5.24 s & 5.22 s & 27728 & 36056 \\
      Sorting\_lh & 41.81 us & 39.88 us & 187680 & 209696 \\
      Sphere\_lh & 433.16 us & 313.32 us & 59744 & 119576 \\
      Treejoin\_lh & 895.49 us & 880.25 us & 16336 & 83344 \\ \hline
  \end{tabular}
  \caption{Runtime and object file size (in byte) between original and optimized programs.}
\end{table}

\begin{table}[!ht]
  \small\centering
  \begin{tabular}{lllllll}
  \hline
      \multirow{2}{*}{\textbf{name}} & \multicolumn{2}{c}{\textbf{minor}} & \multicolumn{2}{c}{\textbf{major}} & \multicolumn{2}{c}{\textbf{promotion}} \\ \cline{2-7}
      & \textbf{original} & \textbf{optimized} & \textbf{original} & \textbf{optimized} & \textbf{original} & \textbf{optimized} \\ \hline
      Ansi\_lh & 14.16Mw & 13.70Mw & 297.95kw & 297.78kw & 297.95kw & 297.78kw \\
      Atom\_lh & 41.02Mw & 39.48Mw & 22.43Mw & 24.84Mw & 22.43Mw & 24.84Mw \\
      Awards\_lh & 30.13Mw & 28.64Mw & 488.35kw & 490.06kw & 488.35kw & 490.06kw \\
      Banner\_lh & 2.55Mw & 2.37Mw & 1.70Mw & 1.44Mw & 1.70Mw & 1.44Mw \\
      Boyer2\_lh & 7.69Mw & 7.68Mw & 1.06Mw & 1.06Mw & 1.06Mw & 1.06Mw \\
      Boyer\_lh & 24.39Mw & 23.96Mw & 5.67Mw & 5.65Mw & 5.67Mw & 5.65Mw \\
      Calendar\_lh & 8.82Mw & 8.39Mw & 366.59kw & 288.87kw & 366.59kw & 288.87kw \\
      Cichelli\_lh & 12.78Mw & 13.50Mw & 155.52kw & 154.20kw & 155.52kw & 154.20kw \\
      Circsim\_lh & 13.61Mw & 12.31Mw & 329.37kw & 300.39kw & 329.37kw & 300.39kw \\
      Clausify\_lh & 11.68Mw & 11.68Mw & 3.55Mw & 3.55Mw & 3.55Mw & 3.55Mw \\
      Constraints\_lh & 56.92Mw & 51.21Mw & 22.51Mw & 15.77Mw & 22.51Mw & 15.77Mw \\
      Cryptarithm2\_lh & 16.21Mw & 13.85Mw & 183.44kw & 126.94kw & 183.44kw & 126.94kw \\
      Cryptarithm\_lh & 326.30Mw & 325.93Mw & 200.74Mw & 200.77Mw & 200.74Mw & 200.77Mw \\
      Cse\_lh & 2.01Mw & 1.81Mw & 164.61kw & 162.94kw & 164.61kw & 162.94kw \\
      Eliza\_lh & 1.87Mw & 1.83Mw & 16.75kw & 13.09kw & 16.75kw & 13.09kw \\
      Fish\_lh & 194.08Mw & 177.23Mw & 35.93Mw & 24.50Mw & 35.93Mw & 24.50Mw \\
      Gcd\_lh & 33.14Mw & 25.81Mw & 6.05Mw & 2.66Mw & 6.05Mw & 2.66Mw \\
      Integer\_lh & 21.68Mw & 21.68Mw & 18.88Mw & 18.88Mw & 18.88Mw & 18.88Mw \\
      Knights\_lh & 198.53kw & 200.27kw & 3.40kw & 3.43kw & 3.40kw & 3.43kw \\
      LCSS\_lh & 113.13kw & 112.11kw & 563.68w & 517.40w & 563.68w & 517.40w \\
      Lambda\_lh & 1.31Mw & 1.08Mw & 10.56kw & 9.17kw & 10.56kw & 9.17kw \\
      LastPiece\_lh & 132.06Mw & 137.81Mw & 14.53Mw & 14.60Mw & 14.53Mw & 14.60Mw \\
      Life\_lh & 12.74Mw & 12.96Mw & 1.03Mw & 1.03Mw & 1.03Mw & 1.03Mw \\
      Mandel2\_lh & 3.37Mw & 3.33Mw & 35\_044.75w & 978.62w & 35\_044.75w & 978.62w \\
      Mandel\_lh & 3.27Mw & 2.70Mw & 168.37kw & 19.49kw & 168.37kw & 19.49kw \\
      Mate\_lh & 287.92Mw & 287.58Mw & 5.85Mw & 5.86Mw & 5.85Mw & 5.86Mw \\
      Minimax\_lh & 13.84Mw & 12.80Mw & 915.08kw & 893.26kw & 915.08kw & 893.26kw \\
      Para\_lh & 57.77kw & 57.46kw & 965.84w & 899.30w & 965.84w & 899.30w \\
      Power\_lh & 388.83kw & 388.74kw & 50.72kw & 50.71kw & 50.72kw & 50.71kw \\
      Pretty\_lh & 10.09kw & 9.46kw & 61.48w & 56.26w & 61.48w & 56.26w \\
      Puzzle\_lh & 18.16Mw & 21.79Mw & 1.76Mw & 1.81Mw & 1.76Mw & 1.81Mw \\
      Rewrite\_lh & 35.76Mw & 34.94Mw & 468.32kw & 461.60kw & 468.32kw & 461.60kw \\
      Rsa\_lh & 8.87Mw & 8.78Mw & 34.07Mw & 33.88Mw & 413.62kw & 222.88kw \\
      Scc\_lh & 844.02w & 844.01w & 0.37w & 0.37w & 0.37w & 0.37w \\
      Secretary\_lh & 2.05Gw & 2.05Gw & 115.45Mw & 115.00Mw & 115.45Mw & 115.00Mw \\
      Sorting\_lh & 42.92kw & 42.66kw & 611.40w & 554.28w & 611.40w & 554.28w \\
      Sphere\_lh & 568.63kw & 469.16kw & 13.96kw & 2.17kw & 13.96kw & 2.17kw \\
      Treejoin\_lh & 435.34kw & 435.34kw & 45.80kw & 45.80kw & 45.80kw & 45.80kw \\ \hline
  \end{tabular}
  \caption{OCaml memory allocation report between original and optimized programs.
  Minor stands for the number of words allocated on the minor heap; major stands for the
  number of words allocated on the major heap; promotion stands for the number of words
  that are promoted from the minor heap to the major heap.}
\end{table}

\fi

\end{document}